\definecolor{ocre}{rgb}{0.72,0,0} 
\definecolor{MyMango}{rgb}{1.00, 0.47, 0.20}
\definecolor{brickred}{rgb}{0.8, 0.25, 0.33}
\definecolor{newblue}{rgb}{0.2,0.2,0.6} 
\newenvironment{fminipage}%
  {\begin{Sbox}\begin{minipage}}%
  {\end{minipage}\end{Sbox}\fbox{\TheSbox}}
\definecolor{ocre}{rgb}{0.72,0,0} 
\definecolor{babyblueeyes}{rgb}{0.63, 0.79, 0.95}
\definecolor{newgreen}{rgb}{0.53,0.66,0.42} 
\definecolor{newred}{rgb}{0.67,0.16,0}
\newcommand{\Pro}{\mathbf{P}}
\newcommand{\Ex}{\mathbf{E}}
\newcommand{\Tr}{\mathrm{tr}}
\newcommand{\tr}{\mathrm{tr}}
\newcommand{\eps}{\varepsilon}
\newcommand{\OPT}{\textrm{OPT}}
\newcommand{\1}{\textbf{1}}
\newcommand{\poly}{\ensuremath{\mathrm{poly}}}
\newcommand{\rot}{\intercal}
\newcommand{\barr}{\overline}
\newcommand{\lPot}[2]{\Tr \lp \PV(#2 - #1 I)\PV \rp ^{\dag q}}
\newcommand{\uRes}[2]{\left(#1 I - #2\right)^{-1}}
\newcommand{\lRes}[2]{Z \lp \PV(#2 - #1 I)\PV \rp ^{\dag} Z}
\newcommand{\PL}[1]{P_{L(#1)}}
\newcommand{\PV}{P_{\mathcal{V}}}
\newcommand{\lp}{\left (}
\newcommand{\rp}{\right )}
\newcommand{\lsp}{\left [}
\newcommand{\rsp}{\right ]}
\newcommand{\Diag}{\ensuremath{\textbf{\textrm{Diag}}}}
\newcommand{\PSDP}{\ensuremath{\textsf{P-SDP}}}
\newcommand{\DSDP}{\ensuremath{\textsf{D-SDP}}}
\newcommand{\lambdaMin}{\lambda_{\mathrm{min}}}
\newcommand{\lambdaMax}{\lambda_{\mathrm{max}}}
\newtheorem{definition}{Definition}[section] 
\newtheorem{remark}[definition]{Remark}
\newtheorem{fact}[definition]{Fact}
\newtheorem{lemma}[definition]{Lemma}
\newtheorem{claim}[definition]{Claim}
\newtheorem{theorem}[definition]{Theorem}
\newtheorem{assumption}[definition]{Assumption}
 \providecommand{\norm}[1]{\lVert#1\rVert}
\providecommand{\mat}[1]{{#1}}
\renewcommand{\tilde}{\widetilde}
\renewcommand{\hat}{\widehat}
\NewDocumentCommand{\lplabel}{o m}{%
	\makebox[0pt][r]{#2\hspace*{2em}}%
	\IfNoValueF{#1}
	{\def\@currentlabel{#2}\ltx@label{#1}}
}
\title{Augmenting the Algebraic Connectivity of Graphs}
\author{ Bogdan-Adrian Manghiuc\footnote{School of Informatics,   University of Edinburgh, UK. \url{b.a.manghiuc@sms.ed.ac.uk}. This work is supported by an EPSRC Doctoral Training Studentship (EP/R513209/1)} \and Pan Peng\footnote{Department of Computer Science, University of Sheffield, UK. \url{p.peng@sheffield.ac.uk}. \href{https://orcid.org/0000-0003-2700-5699}{ORCID~iD: 0000-0003-2700-5699}.
} \and He Sun\footnote{School of Informatics,   University of Edinburgh, UK. \url{h.sun@ed.ac.uk}. Part of the work is supported by an EPSRC Early Career Fellowship~(EP/T00729X/1).}  }
\date{}
\begin{document}
\maketitle
 
\thispagestyle{empty}

\setcounter{page}{0}

\begin{abstract}

For any undirected graph $G=(V,E)$ and a set $E_W$ of candidate edges with $E\cap E_W=\emptyset$, the $(k,\gamma)$-spectral augmentability problem is to find a set $F$ of $k$ edges from $E_W$ with appropriate weighting, such that the algebraic connectivity of the resulting graph $H=(V,E\cup F)$ is least $\gamma$. Because of a tight connection between the algebraic connectivity and many other graph parameters, including the graph's conductance and the mixing time  of random walks in a graph, maximising the resulting graph's algebraic connectivity by adding a small number of edges
has been studied 
over the past 15 years,   and has many practical applications in network optimisation.

In this work we present an approximate and  efficient algorithm for the   $(k,\gamma)$-spectral augmentability  problem, and our algorithm runs in almost-linear time under a wide regime of parameters. Our main algorithm is based on   the following two novel techniques developed in the paper, which might have applications beyond   the   $(k,\gamma)$-spectral augmentability problem:
\begin{itemize}
    \item We present a fast algorithm for solving a feasibility version of an \textsf{SDP} for the algebraic connectivity maximisation problem from \cite{GB06:growing}. Our algorithm is based on the classic primal-dual framework for solving \textsf{SDP}, which in turn uses the multiplicative weight update algorithm. We present a novel approach of unifying \textsf{SDP} constraints of different matrix and vector variables and give a good separation oracle accordingly.
    \item  We present an efficient algorithm for the subgraph sparsification problem, and for a wide range of parameters our algorithm runs in almost-linear time, in contrast to the previously best known algorithm running in at least $\Omega(n^2mk)$ time~\cite{KMST10:subsparsification}.  
    Our analysis shows  how the randomised BSS framework can be generalised in the setting of subgraph sparsification, and how the potential functions can be applied to approximately  keep track of different subspaces.
\end{itemize}
  
\end{abstract}
\newpage
\thispagestyle{empty}
\setcounter{page}{0}
\tableofcontents
\newpage

\section{Introduction}\label{sec:introduction}
\normalem

Graph expansion is the metric quantifying   how well vertices are connected in a graph, and
has applications in many important problems of  computer science:  in complexity theory, graphs with good expansion  are   used to construct error-correcting codes~\cite{SipserS96,Zemor01} and pseudorandom generators~\cite{ImpagliazzoNW94}; in network design, expander graphs have been applied   in constructing super concentrators~\cite{Valiant76}; in probability theory,  graph expansion is closely related to the behaviours of random walks in a graph~\cite{mihail1989conductance,SinclairJ89}. On the other side, as most graphs occurring in practice might not be expander graphs and a subset of vertices of low expansion is usually viewed as the bottleneck of a graph, finding the set of vertices with minimum expansion  has many practical applications including image segmentation~\cite{MeilaShiNIPS01}, community detection~\cite{CSWZ,nips02,PSZ}, ranking web pages, among many others. Because of these, both the  approximation  algorithms for   the graph expansion  problem and the computational complexity of the problem itself  have  been extensively studied over the past three decades. 

In this paper we study the following graph expansion optimisation problem: given    an undirected and weighted graph $G=(V,E,w)$, a set $E_W$ of candidate edges, and a parameter $k\in\mathbb{N}$ as input, we are  interested in (i) finding a set $F\subseteq E_W$ of $k$ edges and their weights such that the resulting  graph $H=(V,E\cup F,w')$ with weight function $w': E\cup F\rightarrow \mathbb{R}_{\geq 0}$ has good expansion, or (ii) showing that it's impossible to significantly improve the graph's expansion by adding $k$ edges from $E_W$. Despite sharing many similarities with the sparest cut problem, our proposed problem has many of its own applications: for example,  assume that  the underlying graph $G$ is a  practical traffic or communication  network and, due to physical constraints, only certain links can be used to improve the network's connectivity. For any given $k$ and a set of feasible links, finding the best  $k$  links to optimise the  network's connectivity is   exactly the objective of our graph expansion optimisation problem.

To formalise the problem,  we follow the work of \cite{fiedler1973algebraic,GB06:growing,KMST10:subsparsification} and define the \emph{algebraic connectivity} of $G$ by the second smallest eigenvalue $\lambda_2(L_G)$ of the Laplacian matrix $L_G$ of $G$ defined by $L_G\triangleq D_G - A_G$, where $D_G$ is the diagonal matrix consisting of the degrees of the vertices and $A_G$ is the adjacency matrix of $G$.  Given   an undirected and weighted graph $G=(V,E,w)$ with $n$ vertices,  $O(n)$ edges\footnote{Since a spectral sparsifier of $G$ with $O(n)$ edges preserves the eigenvalues of the Laplacian matrix of $G$, we   assume that $G$ has $O(n)$ edges throughout the paper. Otherwise one can always run the algorithm in \cite{LS17} to get a spectral sparsifier of $G$ with $O(n)$ edges and use this spectral sparsifier as  the input of our problem. Because of this, the number of edges in $G$ will not be mentioned in our paper to simplify the notation.},
a set $E_W$ of candidate edges defined on $V$    satisfying  $E_W\cap E=\emptyset$ and a parameter $k$,  we say that $G$ is $(k,\gamma)$-\emph{spectrally-augmentable with respect to} $W=(V,E_W)$, if there is $F\subseteq E_W$ with $|F|=k$ together with  edge weights $\{w_e\}_{e\in F}$ such that  $H=(V, E\cup F, w)$ satisfies $\lambda_2(L_H)\geq \gamma$.  The main result of our work is an almost-linear time\footnote{We say that a graph  algorithm runs in almost-linear time  if the algorithm's runtime is $O((m+n)^{1+c})$ for an arbitrary small constant $c$, where $n$ and $m$ are the number of vertices and edges of $G$, respectively. Similarly, we say that a graph algorithm runs in nearly-linear time if the algorithm's runtime is $O((m+n)\cdot \log^c(n))$ for some constant $c$.} algorithm that either (i) finds a set of $O(k)$ edges from $E_W$ if $G$ is $(k,\gamma)$-spectrally augmentable for some  $\gamma\geq \Delta\cdot n^{-1/q}$, or (ii) returns ``no'' if $G$ is not $(O(kq), O(\Delta\cdot n^{-2/q}))$-spectrally augmentable, where $\Delta$ is an upper bound of both the maximum degree of $G$ and $W$. The formal description of our result is as follows:

\begin{theorem}\label{thm:mainAlgConn}
Let $q\geq 10$ be an integer. Let $G=(V,E,w)$ be a base graph with $n$ vertices, $O(n)$ edges, and weight function $w:E\rightarrow \mathbb{R}_{\geq 0}$, and let $W=(V, E_W)$ be the candidate graph of $m$ edges such that the maximum degrees of $G$ and $W$ is at most $\Delta$.  
Then, there is an algorithm such that for any integer $k\geq 1$, the following statements hold: 
\begin{itemize}
    \item if $G$ is $\left(k,  \Delta\cdot n^{-1/q}\right)$-spectrally-augmentable with respect to $W$, then 
    the algorithm finds a set $F\subseteq E_W$ of edges and a set of edge weights $\{w(e):e\in F\}$ such that $|F|=O(qk)$, $\sum_{e\in F} w(e)\leq O(k)$, and the resulting graph $H=(V,E \cup F)$ satisfies that $\lambda_2(L_{H})\geq c \lambda_{\star}^2 \Delta$, for some constant $c>0$, where $\lambda_{\star}\cdot \Delta$ is the optimum solution\footnote{Note that since $G$ is $(k,n^{-1/q}\cdot \Delta)$-spectrally-augmentable with respect to $W$, it always holds that $\lambda_{\star}\geq n^{-1/q}$.}.
    \item if $G$ is not $\left(O(kq), O(\Delta\cdot n^{-2/q})\right)$-spectrally-augmentable with respect to $W$, then the algorithm  rejects the input $G, W$.
\end{itemize}
Moreover, the algorithm runs in 
$\widetilde{O} \left(\min \left\{ qn^{\omega+O(1/q)}, q(m+n)n^{O(1/q)}k^2\right\}\right)$ time. Here, the $\widetilde{O}(.)$ notation hides $\mathrm{poly}\log n$  factors, and $\omega$ is the constant for matrix multiplication. 
\end{theorem}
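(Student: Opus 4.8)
The plan is to combine the two ingredients advertised above: an approximate solver for a feasibility relaxation of the augmentation problem written as a semidefinite program, followed by a subgraph sparsification step that rounds the fractional solution down to $O(qk)$ edges while essentially preserving the algebraic connectivity. Following~\cite{GB06:growing}, introduce a weight $x_e\ge 0$ for each candidate edge $e\in E_W$ and, for a target level $\gamma$, consider the feasibility question of whether there is $x$ with $\sum_e x_e\le k$ and $P\bigl(L_G+\sum_e x_e L_e\bigr)P\succeq\gamma P$, where $P=I-\tfrac1n\1\1^\rot$ projects onto $\1^\perp$ and $L_e$ is the unit-weight edge Laplacian. Any integral $(k,\gamma)$-augmenting set yields a feasible point, so if $G$ is $(k,\gamma)$-spectrally-augmentable this SDP is feasible; one also needs a short structural observation that the augmenting weights may be assumed to sum to $O(k)$, since once a rank-one edge term carries enough weight, increasing it further cannot raise $\lambda_2$. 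Because the optimum $\lambda_\star\Delta$ can be far above the guaranteed threshold $\Delta n^{-1/q}$, the solver is run over a geometric grid $\gamma\in\{\Delta n^{-1/q},2\Delta n^{-1/q},\dots,\Delta\}$ and the largest successful level is kept.

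To make the solver run in almost-linear time I would not impose the spectral constraint directly but replace it by a resolvent/barrier surrogate built from inverse $q$-th powers of the shifted Laplacian, an upper barrier in the spirit of $\uPot{s}{M}$ and a lower barrier in the spirit of $\lPot{s}{M}$; such a surrogate, certified through a polynomially bounded potential, is coarse and attains $\lambda_2$ only up to a multiplicative factor proportional to $\gamma/\Delta$, which is the source of the quadratically weaker guarantee $c\lambda_\star^2\Delta$ in the positive branch and of the threshold $O(\Delta n^{-2/q})$ in the rejection branch. I would then solve the feasibility SDP with the classical primal--dual framework based on matrix multiplicative weights. The novelty (the ``unification'' of constraints) is to treat the scalar budget constraint $\sum_e x_e\le k$ on the same footing as the PSD constraint, so that a single width and a single Gibbs density govern the analysis. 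The separation oracle, given the current density matrix $X$ maintained implicitly as a matrix exponential and dimension-reduced by a Johnson--Lindenstrauss sketch so that each $\langle X,PL_eP\rangle$ becomes an $\ell_2$ length computable by a near-linear-time Laplacian solve, reduces to picking the candidate edge maximising $\langle X,PL_eP\rangle$ and placing the whole budget $k$ on it; that weights as large as $k$ arise makes the width scale with $k$ and contributes the $k^2$ (width-squared) factor to the iteration count, while the $n^{O(1/q)}$ factor comes from working at the $n^{-1/q}$ accuracy scale. Amortising the $\tilde O(qk^2n^{O(1/q)})$ oracle calls at $\tilde O(m+n)$ cost each gives the second term of the running time, and maintaining $X$ exactly gives the alternative $\tilde O(qn^{\omega+O(1/q)})$ term. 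If no feasible primal point is ever produced, the time-averaged dual is an approximate infeasibility certificate; since it bounds $\langle X,PL_eP\rangle$ uniformly over all edges $e$, it rules out not only fractional but integral solutions with $O(qk)$ edges at the weaker scale $O(\Delta n^{-2/q})$ (again invoking the structural bound on total augmenting weight), which is the rejection guarantee.

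Given a fractional solution $x$ with $\1^\rot x=O(k)$ and $P(L_G+\sum_e x_e L_e)P\succeq\Omega(\gamma)P$, I would invoke the subgraph sparsification routine to produce weights $\tilde x$ supported on only $O(qk)$ edges, with $\1^\rot\tilde x=O(k)$, such that $L_G+\sum_e\tilde x_e L_e\succeq c\,(L_G+\sum_e x_e L_e)$ on $\1^\perp$; only this one-sided inequality is needed, since the aim is to keep $\lambda_2$ large. The implementation generalises the randomised Batson--Spielman--Srivastava construction: repeatedly sample $O(1)$ candidate edges per round with leverage-score-type probabilities taken against the current combined Laplacian, and control progress with upper and lower barrier potentials of the $\uPot{s}{M}$ and $\lPot{s}{M}$ type defined relative to the base subspace, so that the always-present contribution of $L_G$ is tracked separately — this is the generalisation flagged in the abstract, and it is what lets the process build a good augmentation rather than merely re-sparsify an existing graph. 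After $O(qk)$ sampled edges the lower barrier reaches the desired level and we obtain $\lambda_2(L_H)\ge c\lambda_\star^2\Delta$ at the chosen grid level; the factor $q$ in both the edge count and the $n^{O(1/q)}$ overheads is the familiar trade-off of degree-$q$ barrier functions.

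The main obstacle is the interface between the two parts rather than either part in isolation. On the SDP side, the delicate point is to design the oracle so that the mixed matrix/vector constraints are captured by one width parameter and all $m$ inner products $\langle X,PL_eP\rangle$ are evaluated in $\tilde O(m+n)$ total time via sketching and Laplacian solves, while still returning a fractional solution robust enough to survive rounding. On the sparsification side, the delicate point is to prove that the generalised BSS potentials stay bounded when the ``free'' term $L_G$ is carried along: the Sherman--Morrison-type rank-one updates must interact correctly with the projection $P$ and with the low-energy subspace of the base graph, and it is here that the degree $q$ of the barriers, the $O(qk)$ edge budget, the $n^{O(1/q)}$ slack, and the quadratic loss $\lambda_\star\mapsto\lambda_\star^2$ must all be made to fit together.
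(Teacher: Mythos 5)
Your overall architecture matches the paper's: a geometric grid search over target values $\gamma\in[n^{-1/q},1]$, an MWU-based feasibility SDP solver at each level, a subgraph-sparsification rounding of the fractional solution, and a rejection decision tied to the bottom grid level. However, two of the quantitative mechanisms you attribute to the SDP phase actually reside in the sparsification phase, and one of your descriptions of the oracle would produce a different (and worse) width analysis than the paper's.

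First, the quadratic degradation $\lambda_\star\mapsto\lambda_\star^2$ does not come from a coarse resolvent surrogate inside the SDP solver. The paper's solver (Theorem~\ref{thm:mainSDPlambda2}) produces a genuine feasible point for $\PSDP(G,W,k,(1-\delta')\gamma)$, i.e.\ it loses only a constant factor in $\gamma$, not $\gamma/\Delta$. The quadratic loss arises entirely in the rounding step: the subgraph-sparsification guarantee in Theorem~\ref{thm:main-formal} lower-bounds $\lambda_{\min}(X+\sum_i c_iv_iv_i^\rot)$ by a constant times $\lambda_{k+1}(X)$, and the proof of Lemma~\ref{lemma:augmentablespectra} then needs the separate structural observation that $\lambda_{k+2}(L_G)\ge\lambda_{\OPT}$ (because adding $k$ edges can change at most $k$ eigenvalues). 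Combining $\lambda_{k+1}(X)\gtrsim\lambda_{k+2}(L_G)/\Delta\gtrsim\lambda_\star$ with the SDP objective $\gtrsim\lambda_\star\Delta$ yields $\lambda_2(L_H)\gtrsim\lambda_\star^2\Delta$. If you omit the $\lambda_{k+2}(L_G)\ge\lambda_{\OPT}$ comparison you cannot close this step, so your sketch is missing the load-bearing inequality.

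Second, your description of the oracle — choose the single candidate edge $e$ maximising $\langle X,PL_eP\rangle$ and put weight $k$ on it, giving width proportional to $k$ and hence a $k^2$ iteration count — does not match the paper and would be problematic for the running time. The paper's oracle (Algorithm~\ref{alg:oracle_sdp}) assigns weights either uniformly $w_e=\gamma$ to all edges or $w_e\in\{0,1\}$ per edge depending on whether $v+\beta_e<L_e\bullet Z$, which is exactly what keeps the width parameters at $\ell=1$, $\rho=3$ (constants) and the iteration count at $O(\log n/\gamma^2)=\widetilde O(n^{O(1/q)})$, independent of $k$. The $k^2$ factor in the final running time comes from the cost of the sparsification subroutine (terms of the form $mk+nk^2$ in Theorem~\ref{thm:main-formal}), not from the SDP width. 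With your proposed oracle the width analysis would indeed depend on $k$, and you would then need a separate argument to recover the stated running time.

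Finally, your remark that only a one-sided inequality from the sparsifier is needed is not what the paper does: the paper explicitly uses two potential functions tracking two different subspaces (the variable top-$T$ eigenspace for the upper barrier and a fixed $k$-dimensional subspace for the lower barrier), and it states that reducing to a one-sided oracle à la \cite{LS17} is unclear in this setting and is left as future work. If you wish to argue one-sidedness suffices, that claim needs its own justification.
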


 We remark that the most typical  application of our problem is the scenario in which only a low number of edges are needed such that the resulting graph enjoys good expansion, and these correspond to the regime of $k=n^{o(1)}$ and $\lambda_{\star}\in (n^{-1/q},O(1))$~\cite{conf/soda/GharanT14}, under which our algorithm runs in almost-linear time and achieves an $\Omega(\lambda_\star)$-approximation. In particular, when it is possible to augment $G$ to be an expander graph, i.e.~$\lambda_{\star}=\Theta(1)$, our algorithm achieves a  constant-factor approximation. 
Our algorithm runs much faster than the previously best-known algorithm for a similar problem that runs in \emph{at least} $\Omega\left(n^2mk\right)$ time~\cite{KMST10:subsparsification}, 
though their algorithm solves the more general problem: for any instance $G,W,k$, if the optimum solution is $\lambda_\star \Delta$, i.e., $G$ is $(k,\lambda_\star\Delta)$-spectrally-augmentable with respect to $W$, for \emph{any} $\lambda_\star\in [0,1)$, then their algorithm finds a graph $H=(V,E\cup F)$ with $\lambda_2(L_H)\geq c\lambda_{\star}^2 \Delta$ such that $|F|=O(k)$ and the total sum of weights of edges in $F$ is at most $k$. Our algorithm can only find a graph $H$ when $\lambda_{\star}\in (n^{-1/q},1)$. 


To give an overview of the proof technique for    Theorem~\ref{thm:mainAlgConn}, notice that  our problem is closely linked to the \emph{algebraic connectivity maximisation problem} studied  in \cite{GB06:growing}, which looks for $k$ edges from the candidate set to maximise $\lambda_2(L_H)$ of the resulting graph $H$.
It is known that   the algebraic connectivity maximisation problem is $\textbf{NP}$-hard~\cite{mosk2008maximum}, and Ghosh and Boyd~\cite{GB06:growing} show that this problem  can be formulated as an \textsf{SDP}, which we call the \textsf{GB-SDP}.
Inspired by  this, we study  the following \textsf{P-SDP}, which is the feasibility version of the \textsf{GB-SDP} parameterised by some $\gamma$. Here, $P_{\bot}$ is the   projection on the space orthogonal to $\1\triangleq (1,\dots,1)^{\rot}$, i.e., $P_{\bot}=I-\frac{1}{n}\1\1^{\rot}$. 

 ${\PSDP(G,W,k,\gamma)}$ 
\(\displaystyle
\begin{aligned}[c]
 & & \lambda\geq \gamma &  \\
 & & \mat{L}_G +\sum_{e\in E_W} w_e \mat{L}_e \succeq \lambda \Delta\mat{P}_{\bot}  & \\
 && k-\sum_{e \in E_W} w_e \geq 0 & \\
 && 1-w_e\geq 0, & \qquad \forall e\in E_W\\
&& w_e\geq 0,   &\qquad \forall e \in E_W\\
&&\gamma \geq 0.&
\end{aligned}
\)

Notice that, if $G$ is $(k,\gamma \Delta)$-spectrally-augmentable with respect to $W$, then there is a set $F$ of $k$ edges such that, by setting $w_e=1$ if $e\in F$ and $w_e=0$ otherwise, it holds that   $L_G+\sum_{e\in E_W}L_e \succeq {\gamma} \Delta P_\bot$. Therefore, there is a feasible solution of $\PSDP(G,W,k,{\gamma})$. Our algorithmic result for solving the $\PSDP$ is summarised as follows:

\begin{theorem}\label{thm:mainSDPlambda2}
Let $\delta'>0$ be any constant. 
There exists an algorithm running in $\tilde{O}(( m+n)/\gamma^2)$ time  that 
either finds a solution to   $\PSDP(G,W,k,(1-\delta')\gamma)$ or certifies that there is no feasible solutions for $\PSDP(G,W,k,\gamma)$.  
\end{theorem}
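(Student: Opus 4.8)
The plan is to solve $\PSDP(G,W,k,\gamma)$ with the primal--dual (Plotkin--Shmoys--Tardos / Arora--Kale) framework for feasibility \textsf{SDP}s, driven by the matrix multiplicative weights update (MWU) method. First I would eliminate the scalar variable $\lambda$ by fixing $\lambda=\gamma$: since we only want a feasibility certificate, $\PSDP(G,W,k,\gamma)$ is feasible if and only if there is $w$ in the polytope $\mathcal{K}\triangleq\{w\in\mathbb{R}^{E_W}:\, 0\le w_e\le 1,\ \sum_e w_e\le k\}$ with $L_G+\sum_{e\in E_W}w_e L_e\succeq \gamma\Delta P_\bot$. Dividing by $\Delta$ this becomes $M(w)\succeq 0$ on $\1^\perp$, where $M(w)\triangleq \tfrac1\Delta\big(L_G+\sum_e w_e L_e\big)-\gamma P_\bot$ depends affinely on $w$. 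Because $G$ and $W$ have maximum degree at most $\Delta$ and $w_e\le 1$, every $M(w)$ has spectral norm $O(1)$; this bounded \emph{width} is what keeps the iteration count small.

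The MWU algorithm maintains a density matrix $X_t$ supported on $\1^\perp$ (PSD, trace one), initially $X_1=\tfrac{1}{n-1}P_\bot$, and at step $t$ calls a separation oracle on $X_t$. The oracle maximises $w\mapsto \langle X_t,M(w)\rangle=\tfrac1\Delta\langle X_t,L_G\rangle+\tfrac1\Delta\sum_e w_e\langle X_t,L_e\rangle-\gamma$ over $w\in\mathcal{K}$; since $\langle X_t,L_e\rangle\ge 0$, the maximiser just sets $w_e=1$ on the $k$ edges with largest $\langle X_t,L_e\rangle$ — a combinatorial step. If the maximum value is at least $-\Theta(\delta'\gamma)$ the oracle returns that $w^{(t)}$ and we set the loss matrix $M_t=M(w^{(t)})$; otherwise $X_t$ witnesses $\langle X_t,M(w)\rangle<0$ for all $w\in\mathcal K$, so no $w$ makes $M(w)\succeq0$ and we reject, which certifies infeasibility of $\PSDP(G,W,k,\gamma)$. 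If the oracle succeeds for $T=\Theta\!\big(\rho^2\log n/(\delta'\gamma)^2\big)=\tilde O(1/\gamma^2)$ steps (step size $\eta=\Theta(\delta'\gamma)$, width $\rho=O(1)$), the standard MWU regret bound gives $\lambda_{\min}\!\big(\tfrac1T\sum_t M_t\big)\ge -\delta'\gamma$ on $\1^\perp$; by affinity $\tfrac1T\sum_t M_t=M(\bar w)$ with $\bar w=\tfrac1T\sum_t w^{(t)}\in\mathcal K$, i.e.\ $L_G+\sum_e\bar w_e L_e\succeq (1-\delta')\gamma\,\Delta P_\bot$, so $(\bar w,\lambda=(1-\delta')\gamma)$ solves $\PSDP(G,W,k,(1-\delta')\gamma)$. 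The point stressed in the paper — folding the box and budget constraints into the \emph{same} MWU instance by working with a block ``density operator'' (a direct sum of a density matrix and nonnegative scalars of total trace one) and designing one oracle that handles the Laplacian constraint and the scalar constraints simultaneously — is the conceptual device that makes this bookkeeping uniform; with the greedy oracle above it is immediate, but in the unified form one must check that each block contributes width $O(1)$.

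The remaining work is to run each iteration in nearly-linear time. The oracle needs the numbers $\langle X_t,L_e\rangle=b_e^\top X_t b_e$ (with $b_e$ the signed indicator of $e$) and $\langle X_t,L_G\rangle$, and $X_t\propto \exp\!\big(-\eta\sum_{s<t}M_s\big)$, whose exponent is a weighted graph Laplacian (the base $G$ scaled by $t$ plus the cumulative candidate weights) minus a scalar multiple of $P_\bot$ — a matrix with $O(m+n)$ nonzeros. Writing $X_t=YY^\top/\Tr(YY^\top)$ with $Y=\exp(-\tfrac\eta2\sum_{s<t}M_s)$, I would estimate all the quadratic forms at once by a Johnson--Lindenstrauss sketch: draw $\ell=O(\log n/\delta'^2)$ Gaussian vectors, apply $Y$ to each via a known nearly-linear-time routine for the action of a matrix exponential on a vector (polynomial/rational approximation exploiting the sparsity of the exponent), and read $\langle X_t,L_e\rangle$ off the sketch up to a $(1\pm\delta')$ factor; a linear-time top-$k$ selection finishes the oracle. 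Summing over $\tilde O(1/\gamma^2)$ iterations yields the claimed $\tilde O\big((m+n)/\gamma^2\big)$ time.

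I expect the main obstacle to be the error analysis binding these approximations together: the oracle sees only approximate quadratic forms and an approximate $X_t$, so one must verify that the perturbed oracle still either returns a $w^{(t)}$ with $\langle X_t,M_t\rangle\ge -O(\delta'\gamma)$ or a \emph{robust} infeasibility certificate, and that accumulated multiplicative errors over $\tilde O(1/\gamma^2)$ steps do not push the final guarantee below $(1-\delta')\gamma$; this is dealt with by running the scheme with accuracy a constant fraction of $\delta'\gamma$ and propagating the bounds carefully. A secondary point is checking that the (indefinite) exponent can be applied to a vector in nearly-linear time despite not being positive semidefinite, which one gets from its being a graph Laplacian shifted by a scalar on $\1^\perp$.
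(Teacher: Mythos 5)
Your argument is correct and yields the same bounds, but it takes a genuinely different route from the paper at the level of the oracle and the MWU set-up, so let me compare.

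You eliminate the scalar variables $\lambda,v,\beta$ up front: fix $\lambda=\gamma$, move the budget and box constraints into the polytope $\mathcal K=\{w: 0\le w_e\le 1,\ \sum_e w_e\le k\}$, and run matrix MWU with a density operator supported only on $\1^\perp$. The oracle then has to solve a linear program over $\mathcal K$ each round, which by the nonnegativity of $\langle X_t,L_e\rangle$ is just a top-$k$ greedy selection. A virtue of this route is that every $w^{(t)}$ lies in $\mathcal K$, hence so does $\bar w=\tfrac1T\sum_t w^{(t)}$ by convexity; you never leave the box/budget constraints, so there is no clean-up step at the end. The paper instead folds the matrix constraint, the budget scalar, and the $m$ box scalars into a single block-diagonal loss $M(\lambda,w)=\Diag(A,B,C)$ and a block-diagonal density operator $\Diag(Z,v,\beta)$, runs MWU over this combined object, and uses a case-based oracle (Algorithm~\ref{alg:oracle_sdp}) that is allowed to violate the budget within a round (e.g.\ it sets $w_e=1$ for \emph{all} $e$ in a threshold set $B$ regardless of $|B|$); the budget is only enforced on the averaged solution, via the bound $B(\bar\lambda,\bar w)\succeq-\delta m$, and a final shift $(\bar\lambda-3\delta,\bar w-\delta)$ is needed to make the output strictly feasible. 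What the paper's formulation buys is uniformity: every constraint is ``just another diagonal block'' of one operator, so the width analysis, the potential argument, and the infeasibility certificate (Fact~\ref{fact:infeasibledual}, which turns a failed oracle call into a \DSDP-feasible point) are all handled by one machine. What your formulation buys is simplicity: the oracle is a one-line greedy, $\bar w$ is feasible for free, and the infeasibility certificate is the density matrix $X_t$ itself (if $\max_{w\in\mathcal K}\langle X_t,M(w)\rangle<0$ then $X_t$ separates every $w\in\mathcal K$ from the PSD cone). Both obtain width $\rho=O(1)$, hence $T=\tilde O(1/\gamma^2)$ rounds, and both realize each round in $\tilde O(m+n)$ time via the same JL-sketched matrix-exponential machinery (your sketch description matches what the paper imports as Lemma~\ref{lem:ov11}). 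The self-identified ``main obstacle,'' propagating the $(1\pm\delta')$ sketching error through the oracle and the regret bound, is precisely what the paper absorbs into the $\tilde U_\varepsilon$ approximation from \cite{OV11:towards}; it is standard bookkeeping and not a gap in your plan. One small point worth tightening: your oracle threshold should be chosen so that a failed call literally gives $\max_{w\in\mathcal K}\langle X_t,M(w)\rangle<0$ (e.g.\ threshold $0$ is cleanest); the regret bound then still delivers $\lambda_{\min}(M(\bar w))\ge-\delta'\gamma$ because a succeeding oracle contributes a nonnegative term each round.
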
 

Since the solution to the \PSDP\ only guarantees that the total weights of the selected edges are at most $k$ if $G$ is $(k,\gamma\Delta)$-spectrally augmentable, following   \cite{KMST10:subsparsification} we use a subgraph sparsification algorithm to round our \textsf{SDP} solution, such that  there are only $O(k)$ edges selected in the end. To give a high-level overview of this rounding step, we redefine the set $E_W$ of candidate edges, and assume that $E_W$ consists of the edges  whose weight from the \PSDP\ solution is non-zero. Therefore, our objective is to find $O(k)$ edges from $E_W$ and new weights, which form an edge set $F$, such that
the Laplacian matrix $L_{H}$ of 
the resulting graph $H=(V, E\cup F)$ is close to $L_{G+W}$. That is, the subgraph sparsification problem asks for a sparse representation of $G+W$ while keeping the entire base graph $G$  in the resulting   representation. Our improved algorithm shows that, as long as $k=n^{o(1)}$, a subgraph sparsifier can be computed in almost-linear time\footnote{We remark that, when $k=\Theta(n)$, our problem can be solved directly by using a spectral sparsifier $\widetilde{W}$ of the graph $W$ with $O(n)$ edges, which can be computed in nearly-linear time. This will imply that the two Laplacians $L_{G + \widetilde{W}} = L_G + L_{\widetilde{W}}$ and $L_{G + W} = L_G + L_W$ are close.}.
Our result on subgraph sparsification will be formally described in  Theorem~\ref{thm:main-formal}.

\subsection{Our techniques}

In this section  we will explain the techniques used to design the  fast algorithm for the $\PSDP$, and an almost-linear time  algorithm for subgraph sparsification.  

\paragraph*{Faster algorithm for solving the $\PSDP$.}  Our efficient $\mathsf{SDP}$ solver is based on    the primal-dual framework developed    in \cite{AK16:combinatorial}, which has been     used in many other works~\cite{jain2011qip,OV11:towards}. In this primal-dual framework, we  will    work on both the original \textsf{SDP} $\PSDP(G,W,k,\gamma)$ and its dual $\DSDP(G,W,k,\gamma)$ defined as follows:

${\DSDP(G,W,k,\gamma)} $
\(\displaystyle
\begin{aligned}[r]
&&  \mat{Z}\bullet \mat{L}_G+ k v + \sum_{e\in E_W} \beta_e <\gamma &\\
& & 
 \mat{Z}\bullet \Delta \mat{P}_{\bot} = 1 &\\
& &  \mat{Z}\bullet \mat{L}_{e}  \leq v + \beta_e, &\qquad \forall e \in E_W \\
& & \mat{Z} \succeq \mat{0} & \\
& & \beta_e \geq 0, & \qquad\forall e \in E_W\\
&& v\geq 0.&
\end{aligned}
\)

We then apply the matrix multiplicative weight update (MWU) algorithm. Formally speaking, starting with some initial embedding $X^{(1)}$, for each $t\geq 1$ our algorithm 
iteratively uses a carefully constructed oracle \textsc{Oracle} for $\DSDP(G,W,k,\gamma)$ to check whether the current embedding $X^{(t)}$ is good or not: 
\begin{itemize}  
    \item If  $X^{(t)}$   satisfies some condition, denoted by $\mathcal{C}\left(X^{(t)}\right)$, then the oracle fails, which implies that we can find a feasible solution from $X^{(t)}$ to $\DSDP(G,W,k,\gamma)$. This   implies that the primal SDP $\PSDP(G,W,k,\gamma)$ has no feasible solution, which certifies that $G$ is not $(k,\gamma)$-spectrally-augmentable with respect to $W$.
\item If  $X^{(t)}$ does not satisfy the condition $\mathcal{C}\left(X^{(t)}\right)$, then the oracle does not fail, which certifies that the current solution from $X^{(t)}$ is not feasible for $\DSDP(G,W,k,\gamma)$, and will output a set of numbers $\left(\lambda^{(t)},w^{(t)}\right)$ for updating the embedding.
\end{itemize} 
The procedure above will be iterated  for   $T$ times, for some $T$ depending on the oracle and the approximate parameter $\delta'>0$: if the oracle fails in any iteration, then $\PSDP$ is infeasible; otherwise, the oracle does not fail for all $T$ iterations and we find a feasible solution to $\PSDP(G,W,k,(1-\delta') \gamma)$. 

The main challenge for applying the above framework in our setting is to construct the \textsc{Oracle} and deal with the complicated constraints in our \textsf{SDP}s, which include both matrix inequality constraint and vector inequality constraints of different variables. To work with these constraints, our strategy is to unify them through a diagonal block matrix $X$, and through this we turn all individual constraints into a single matrix constraint. 
The embedding in each iteration is constructed in nearly-linear time in $n+m$ by the definition of the embedding.  
To construct the \textsc{Oracle}, we carefully design the condition $\mathcal{C}(X)$ with the intuition that if the candidate solution corresponding to $X$ has a relatively small dual objective value, then a re-scaling of $X$ gives a feasible solution to $\DSDP$. Then we use a case analysis to show that if $\mathcal{C}(X)$ is not satisfied, we can very efficiently find updating numbers $(\lambda^{(t)},w^{(t)})$ by distinguishing edges satisfying one constraint (in $\DSDP$) from those that do not satisfy it and assigning different weights $w^{(t)}$ to them accordingly. 

\paragraph*{Faster algorithm for  subgraph sparsification.} 

The second component behind proving our main result is an efficient algorithm for the subgraph sparsification problem.  
Our algorithm is inspired by the    the original deterministic algorithm for subgraph sparsification~\cite{KMST10:subsparsification} and the almost-linear time algorithm for constructing linear-sized spectral sparsifiers~\cite{LS15:linearsparsifier}. In particular, both algorithms follow the   BSS framework, and proceed in iterations: it is shown that, with the careful choice of barrier values $u_j$ and $\ell_j$ in each iteration $j$ and the associated potential functions, one or more vectors can be selected in each iteration and the final barrier values can be used to bound the approximation ratio of the constructed sparsifier.

However, in contrast to most algorithms for   linear-sized spectral sparsifiers~\cite{zhu15,LS17,LS15:linearsparsifier}, both the barrier values and the  potential functions  in \cite{KMST10:subsparsification} are employed for a slightly  different purpose. In particular, instead of expecting the final constructed ellipsoid to be close to being a sphere, the final constructed ellipsoid for subgraph sparsification could be still very  far from being a sphere, since the total number of added edges is $O(k)$. Because of this, the two potential functions in \cite{KMST10:subsparsification} are used to quantify the contribution of the added vectors towards \emph{two different subspaces}: one fixed $k$-dimensional subspace denoted by $S$, and one variable space defined with respect to the currently constructed matrix.  Based on analysing two different subspaces for every added vector, which is computationally expensive, the   algorithm in \cite{KMST10:subsparsification}
ensures that the added vectors will significantly benefit the ``worst subspace'', the subspace in $\mathbb{R}^n$ that limits the approximation ratio of the final constructed sparsifier. 

Because of these different roles of the potential functions in \cite{KMST10:subsparsification} and \cite{BSS,LS15:linearsparsifier}, when applying the randomised BSS framework~\cite{LS15:linearsparsifier} for the subgraph sparsification problem, more technical issues need to take into account: 
(1)  Since \cite{KMST10:subsparsification} crucially depends on some projection matrix denoted by $P_S$, of which the exact computation is expensive,  to obtain an efficient algorithm for subgraph sparsification one needs to obtain some projection matrix close to $P_S$ and such a projection matrix can be computed efficiently. (2)   Since the upper and lower potential functions keep track of different subspaces whose dimensions are of different orders in most regimes, analysing the impact of multiple added vectors to the potential functions are significantly more challenging than \cite{LS15:linearsparsifier}. 

To address the first issue, we show that the problem of computing an approximate projection close to $P_S$ while preserving relevant proprieties can be reduced to the generalised eigenvalue problem, which in turn can be efficiently approximated   by a recent algorithm~\cite{pmlr-v70-allen-zhu17b}. For the second issue, we meticulously bound the \emph{intrinsic dimension} of the matrix corresponding to the multiple added vectors, and by a more refined matrix analysis than \cite{LS15:linearsparsifier} we show that the potential functions and the relative effective resistances decease  in each iteration.
We highlight that   developing a fast procedure to computing all the quantities that involve a fixed projection matrix and analysing the impact of multiple added vectors with respect to two different subspaces constitute the most challenging part of the design of our algorithm.  


Finally, we remark that, although the almost-linear time algorithm~\cite{LS15:linearsparsifier} has been improved by the subsequent paper~\cite{LS17}, it looks more challenging to adapt the technique developed in \cite{LS17} for the setting of subgraph sparsification. In particular, since the two potential functions in \cite{LS17} are used to analyse the same space $\mathbb{R}^n$, it is shown in \cite{LS17} that it suffices to analyse the one-sided case through a one-sided oracle. However, the   two potential functions defined in our paper are used to analyse two different subspaces, and it remains unclear whether we can reduce our problem to the one-sided case. We will leave this for future work.

\subsection{Other related work}

Spielman and Teng~\cite{spielman_teng:SS11}  present the first algorithm for constructing spectral sparsifiers: for any parameter $\varepsilon\in(0,1)$, and 
any undirected graph $G$ of $n$ vertices and $m$ edges, they prove that a spectral sparsifier of $G$ with $\widetilde{O}\left(n/\varepsilon^2\right)$ edges exists, and can be constructed in $\widetilde{O}\left(m/\varepsilon^2 \right)$ time. Since then, there has been extensive studies on different variants of spectral sparsifiers and their efficient constructions in various settings. In addition to several results on   several constructions of linear-sized spectral sparsifiers mentioned above, there are many   studies on constructing spectral sparsifiers in streaming and dynamic settings~\cite{AbrahamDKKP16,KapralovLMMS17,kelner2013spectral}. 
The subgraph sparsification problem has many applications, including constructing
precondtioners and  nearly-optimal ultrasparsifiers~\cite{KMST10:subsparsification,Peng2013},  optimal approximate matrix product~\cite{CohenNW16}, and some network optimisation problems \cite{mellette2019expanding}. 
Our work is also related to a sequence  of research on network design, in which the goal is to find minimum cost subgraphs under some ``connectivity constraints''. Typical examples include constraints on vertex connectivity \cite{chakraborty2008network,cheriyan2014approximating, chuzhoy2009k,fakcharoenphol2012log,kortsarz2004hardness,laekhanukit2014parameters}, shortest path distances \cite{dinitz2016approximating,dodis1999design},
and spectral information \cite{allen2017near,boyd2004fastest, GBS08:effresist,nikolov2019proportional}.
 
\section{Preliminaries}
\label{sec:preliminaries}

In this section, we list all the notation used in our paper, and the lemmas used for proving the main results.

 \subsection{Notation}
 For any   symmetric matrix $A\in\mathbb{R}^{n\times n}$, let $\lambda_{\min}(A)=\lambda_1(A)\leq \cdots\leq \lambda_n(A)=\lambda_{\max}(A)$ be the eigenvalues of $A$,  where $\lambda_{\min}(A)$ and $\lambda_{\max}(A)$ represent the    smallest and    largest eigenvalues of $A$. 
For any subspace $S$ of dimension $k$, let  $P_S$ be the orthogonal projection onto  $S$.   For any matrix $A$, let $A\big|_{S}$ be the restriction of $A$ to   $S$. Note that $A\big|_S$ is a $k \times k$ matrix.

We call a matrix $A$  \emph{positive semi-definite~(\textsf{PSD})} if  $x^{\rot}Ax\geq 0$ holds for any $x\in\mathbb{R}^n$, and a matrix $A$ \emph{positive definite}  if  $x^{\rot}Ax> 0$ holds for any $x\in \mathbb{R}^n \setminus \{0\}$.  
 For any \textsf{PSD} matrix $A$,  let $A^{\dagger}$ be the pseudoinverse of $A$, and the \emph{intrinsic dimension} of $A$ is defined by 
$$\mathrm{intdim}(A) \triangleq \frac{\Tr(A)}{\norm{A}},$$
where 
$\|A \|$ is the spectral norm of $A$.
The number of non-zero entries of any matrix $A$ is denoted by $\mathsf{nnz}(A)$. 
For any positive definite matrix $A$, we define the corresponding ellipsoid by
$$
\mathsf{Ellip}(A)\triangleq\left\{ x :~ x^{\rot}A^{-1}x\leq 1 \right\}.
$$
For any two matrices $A$ and $B$, we write $A\preceq B$ to represent $B-A$ is \textsf{PSD}, and $A\prec B$ to represent $B-A$
 is positive definite.   For two matrices $A$ and $B$ and positive scalar $\varepsilon$ we write $A \approx_{\varepsilon} B$ if 
$
    (1-\varepsilon) \cdot B \preceq A \preceq (1+\varepsilon) \cdot B$.
The trace of matrix $A$ is denoted by $\tr(A)$, and we use  $A \bullet B$ to denote the entry-wise products of $A$ and $B$, i.e., $A\bullet B=\sum_{ij}A_{ij} B_{ij}$, which implies that  $\Tr \left(A^{\rot}B\right)= A\bullet B$.

  For any connected and undirected graph  
  $G=(V,E,w)$ with $n$ vertices, $m$ edges, and weight function $w: E\rightarrow \mathbb{R}_{\geq 0}$, we  
   fix an arbitrary orientation of the edges in $G$, and let $B\in\mathbb{R}^{m\times n}$ be the \emph{signed edge-vertex incidence matrix} defined by 
 \begin{equation} \nonumber
B_G(e,v)\triangleq \left\{ \begin{aligned}
         1 & \qquad \mbox{if $v$ is $e$'s head,} \\
         -1 & \qquad \mbox{if $v$ is $e$'s tail,} \\
                 0&\qquad \mbox{otherwise.}
                          \end{aligned} \right.
                          \end{equation}
    We define $b_e\in\mathbb{R}^n$ for every edge $e=\{u,v\}$, i.e., $b_e(u)=1, b_e(v)=-1$, and $b_e(w)=0$ for any vertex $w$ different from $u$ and $v$.  
We  define an $m\times m$ diagonal matrix $W_G$ by $W_G(e,e)=w_e$ for any edge $e\in E[G]$, and assume that the value of $w_e$ is polynomially bounded in $n$ for every edge $e$. 
 The Laplacian matrix of $G$ is an $n\times n$ matrix $L$ defined by
\begin{equation} \nonumber
L_G(u,v)\triangleq \left\{ \begin{aligned}
         -w(u,v) & \qquad \mbox{if $u\sim v$,} \\
         \deg(u) & \qquad \mbox{if $u=v$,} \\
                 0&\qquad \mbox{otherwise,}
                          \end{aligned} \right.
                          \end{equation}
where $\deg(v)=\sum_{u\sim v} w(u,v)$.
It is easy to verify 
 that \[
x^{\rot}L_G x = x^{\rot}B_G^{\rot}W_GB_Gx =\sum_{u\sim v} w_{u,v}(x_u-x_v)^2 \geq 0
\] holds
for any $x\in \mathbb{R}^n$. Hence, the Laplacian matrix of any undirected graph is a positive semi-definite matrix.  
Notice that, by setting $x_u=1$ if $u\in S$ and $x_u=0$ otherwise, $x^{\rot}L_Gx$ equals to the value of the cut between $S$ and $V\setminus S$. Hence, a spectral sparsifier is a stronger notion than a cut sparsifier.

\subsection{Useful facts in matrix analysis}
The following lemmas will be used in our analysis.

\begin{lemma}[Sherman-Morrison Formula]\label{lem:woodbury}
Let $\mat{A}\in \mathbb{R}^{n\times n}$ be an invertible matrix, and $u,v\in\mathbb{R}^n$. Suppose that $1+v^{\rot}A^{-1}u\neq 0$. Then it holds that
\[
(\mat{A}+u v^\rot)^{-1} = \mat{A}^{-1} - \frac{\mat{A}^{-1} u v^\rot\mat{A}^{-1}}{1+v^\rot \mat{A}^{-1} u}.
\]
\end{lemma}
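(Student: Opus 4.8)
The plan is to prove Lemma~\ref{lem:woodbury} (the Sherman--Morrison formula) by direct verification: rather than deriving the formula, I would simply check that the claimed right-hand side is the inverse of $\mat{A}+uv^{\rot}$, which by uniqueness of inverses of an invertible matrix suffices. Concretely, set $\mat{M} \triangleq \mat{A}^{-1} - \frac{\mat{A}^{-1}uv^{\rot}\mat{A}^{-1}}{1+v^{\rot}\mat{A}^{-1}u}$, which is well-defined because the hypothesis guarantees $1+v^{\rot}\mat{A}^{-1}u \neq 0$, and compute $(\mat{A}+uv^{\rot})\mat{M}$.

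The key computation is a short algebraic expansion. Writing $\alpha \triangleq 1 + v^{\rot}\mat{A}^{-1}u$ (a nonzero scalar), I would expand
\[
(\mat{A}+uv^{\rot})\left(\mat{A}^{-1} - \frac{\mat{A}^{-1}uv^{\rot}\mat{A}^{-1}}{\alpha}\right) = I + uv^{\rot}\mat{A}^{-1} - \frac{uv^{\rot}\mat{A}^{-1}}{\alpha} - \frac{uv^{\rot}\mat{A}^{-1}uv^{\rot}\mat{A}^{-1}}{\alpha}.
\]
The crucial step is to recognise that $v^{\rot}\mat{A}^{-1}u$ is a scalar, so $uv^{\rot}\mat{A}^{-1}uv^{\rot}\mat{A}^{-1} = (v^{\rot}\mat{A}^{-1}u)\,uv^{\rot}\mat{A}^{-1} = (\alpha - 1)\,uv^{\rot}\mat{A}^{-1}$. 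Substituting, the last three terms become $uv^{\rot}\mat{A}^{-1}\bigl(1 - \tfrac{1}{\alpha} - \tfrac{\alpha-1}{\alpha}\bigr) = uv^{\rot}\mat{A}^{-1}\cdot 0 = 0$, leaving $I$. An entirely analogous computation (or an appeal to the fact that a one-sided inverse of a square matrix over a field is two-sided) gives $\mat{M}(\mat{A}+uv^{\rot}) = I$, so $\mat{M} = (\mat{A}+uv^{\rot})^{-1}$.

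There is essentially no obstacle here; the only point requiring any care is the scalar-versus-matrix bookkeeping in the term $uv^{\rot}\mat{A}^{-1}uv^{\rot}\mat{A}^{-1}$, ensuring the reordering $uv^{\rot}\mat{A}^{-1}u = u\,(v^{\rot}\mat{A}^{-1}u)$ is applied correctly, and noting up front that nonzero $\alpha$ is exactly what makes $\mat{M}$ and the stated formula well-defined. Since this is a standard lemma quoted for later use, I would keep the proof to these few lines.
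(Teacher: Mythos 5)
Your verification is correct: the expansion is standard, the scalar bookkeeping on $v^{\rot}\mat{A}^{-1}u$ is handled properly, and once one side is checked the two-sidedness of the inverse follows since we are over a field. The paper itself states this lemma without proof (it is a classical identity used as a black box), so there is no "paper's proof" to compare against; the direct-verification route you take is the natural one and is fine as written.
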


\begin{lemma}[Lemma~3.5, \cite{KMST10:subsparsification}]\label{lem:pseud-inv SM}
For any   symmetric (possibly singular) matrix $A\in\mathbb{R}^{n\times n}$ and  $Y = vv^{\rot}$, it holds that 
  \[
    (A + PYP)^{\dag} = A^{\dag} - \frac{A^{\dag} Y A^{\dag}}{1 + v^{\rot} A^{\dag} v},
  \]
  where $P$ is the orthogonal projectin on $\mathrm{Im}(A)$.
\end{lemma}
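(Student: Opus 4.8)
The final statement is Lemma~\ref{lem:pseud-inv SM}, the pseudoinverse version of the Sherman--Morrison formula, attributed to \cite{KMST10:subsparsification}. Here is how I would prove it.

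\medskip

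\noindent\textbf{Proof proposal.}
The plan is to reduce the statement to the ordinary Sherman--Morrison formula (Lemma~\ref{lem:woodbury}) by working inside the subspace $\mathrm{Im}(A)$, on which $A$ is genuinely invertible. Write $S = \mathrm{Im}(A)$ and let $P = P_S$ be the orthogonal projection onto $S$; by hypothesis this is exactly the $P$ in the statement. First I would record the basic compatibility facts: since $A$ is symmetric, $\mathbb{R}^n = \mathrm{Im}(A)\oplus\ker(A)$ is an orthogonal decomposition, $A = PAP$, $A^{\dagger} = PA^{\dagger}P$, and $AA^{\dagger} = A^{\dagger}A = P$. Consequently $A^{\dagger}$ restricted to $S$ is the honest inverse of $A$ restricted to $S$, and both sides of the claimed identity are supported on $S$ (i.e.\ annihilate $\ker(A)$ and have image in $S$), so it suffices to verify the identity after conjugating by $P$, equivalently to verify it as an identity of operators on the $k'$-dimensional space $S$, where $k' = \dim S = \mathrm{rank}(A)$.

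\medskip

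\noindent Next I would set up the reduction. Decompose $v = Pv + (I-P)v =: v_S + v_{\perp}$. Only $v_S$ matters: $PYP = P vv^{\rot} P = v_S v_S^{\rot}$, and likewise $A^{\dagger}YA^{\dagger} = A^{\dagger} v_S v_S^{\rot} A^{\dagger}$ and $v^{\rot}A^{\dagger}v = v_S^{\rot}A^{\dagger}v_S$ because $A^{\dagger}$ kills $v_{\perp}$. So, abbreviating $w := v_S \in S$, the claim becomes
\[
  (A + w w^{\rot})^{\dagger} \;=\; A^{\dagger} - \frac{A^{\dagger} w w^{\rot} A^{\dagger}}{1 + w^{\rot} A^{\dagger} w}.
\]
Now the key point is that $\mathrm{Im}(A + ww^{\rot}) = S$ as well: since $w \in S$, adding $ww^{\rot}$ cannot enlarge the image beyond $S$, and it cannot shrink it either — if $x \in S$ satisfied $(A+ww^{\rot})x = 0$ then $x^{\rot}Ax + (w^{\rot}x)^2 = 0$, forcing $Ax = 0$ and hence $x = 0$ since $A$ is invertible on $S$ (this is where one uses that $A$ is PSD, or at least definite on its image; I would note the lemma is applied in the paper to PSD matrices, and state this hypothesis or observe it is needed). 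Therefore $(A+ww^{\rot})^{\dagger}$ is also the operator on $S$ that inverts $A + ww^{\rot}\big|_S$, extended by zero on $\ker(A)$.

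\medskip

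\noindent Finally I would invoke the classical formula. Let $A_S := A\big|_S$, an invertible $k'\times k'$ matrix, and view $w$ as a vector in $S$. By Lemma~\ref{lem:woodbury} applied in $S$ with $u = v = w$ (the hypothesis $1 + w^{\rot}A_S^{-1}w \neq 0$ is exactly $1 + v^{\rot}A^{\dagger}v \neq 0$, which holds automatically when $A$ is PSD since then the quantity is $\geq 1 > 0$),
\[
  (A_S + w w^{\rot})^{-1} = A_S^{-1} - \frac{A_S^{-1} w w^{\rot} A_S^{-1}}{1 + w^{\rot} A_S^{-1} w}.
\]
Extending both sides by zero on $\ker(A)$ and translating $A_S^{-1}$ back to $A^{\dagger}$, $A_S^{-1}w$ back to $A^{\dagger}w = A^{\dagger}v$, this is precisely the asserted identity.

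\medskip

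\noindent\textbf{Main obstacle.} The only subtlety — and the thing to be careful about — is the image bookkeeping: that $\mathrm{Im}(A + PYP) = \mathrm{Im}(A)$, so that pseudoinverting ``inside $S$'' is legitimate on both sides. This uses definiteness of $A$ on its image (hence the PSD hypothesis) together with $v_S \in S$; without it, $A + PYP$ could be singular on $S$ and the pseudoinverse would live on a smaller subspace, breaking the clean formula. Everything else is routine once one commits to the ``restrict to $\mathrm{Im}(A)$'' viewpoint. An alternative, purely computational route is to verify directly that the claimed right-hand side $R$ satisfies the four Moore--Penrose conditions with $M := A + PYP$ — namely $MRM = M$, $RMR = R$, and $MR, RM$ symmetric — but I expect the subspace reduction above to be shorter and more transparent.
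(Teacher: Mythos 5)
The paper imports this lemma verbatim from \cite{KMST10:subsparsification} (it is stated with the citation tag ``Lemma~3.5'') and gives no proof of its own, so there is no in-paper argument to compare your proposal against. Your reconstruction --- restrict to $S=\mathrm{Im}(A)$, observe both sides annihilate $\ker A$ and have image in $S$, replace $v$ by $v_S = Pv$, check that $\mathrm{Im}(A+v_Sv_S^{\rot})=S$, then invoke the ordinary Sherman--Morrison formula on $A\big|_S$ --- is correct and is essentially the standard way to prove this.

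One small refinement to your ``main obstacle'' discussion: positive semidefiniteness of $A$ is not actually needed for the image-preservation step, only the nondegeneracy condition $1 + v^{\rot}A^{\dag}v \neq 0$ that Sherman--Morrison requires anyway. Indeed, if $x\in S$ satisfies $(A + v_S v_S^{\rot})x = 0$, then $Ax = -(v_S^{\rot}x)\,v_S$. If $v_S^{\rot}x = 0$ this forces $x \in \ker A \cap S = \{0\}$; otherwise, rescaling so $v_S^{\rot}x = 1$ gives $x = -A^{\dag}v_S$ and hence $1 + v_S^{\rot}A^{\dag}v_S = 0$, which is precisely what the hypothesis forbids. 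So the identity holds for any symmetric $A$ with $1 + v^{\rot}A^{\dag}v \neq 0$, matching the generality claimed in the lemma; PSD (the case actually used in the paper) is just a convenient sufficient condition that makes the denominator $\geq 1$, as you note.
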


 \begin{lemma}[Araki-Lieb-Thirring Inequality,  \cite{A07:audenaert2007araki}]\label{lem:ALT_ineq}
For $A, B \succeq 0, q \geq 0$ and for $r \geq 1$, the following inequality holds:
\[
  \tr (ABA)^{rq} \leq \tr(A^r B^r A^r)^q.
\]
\end{lemma}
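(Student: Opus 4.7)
The plan is to prove the Araki--Lieb--Thirring inequality in two steps: first establishing the sharper log-majorization form, then deriving the trace inequality as a consequence.

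\textbf{Step 1 (log-majorization).} First I would prove that for $A, B \succeq 0$ and $r \geq 1$,
\[
  \lambda^{\downarrow}\bigl((A^{1/2}BA^{1/2})^r\bigr) \prec_{\log} \lambda^{\downarrow}\bigl(A^{r/2} B^r A^{r/2}\bigr),
\]
i.e.\ the partial products of the decreasing-order eigenvalues of the left-hand side are bounded by those of the right-hand side. The core $k=1$ statement is the operator-norm bound
\[
  \|(A^{1/2}BA^{1/2})^r\| \leq \|A^{r/2} B^r A^{r/2}\|,
\]
which I would establish via the Lie--Trotter formula $\lim_{s\to 0^+}(A^{s/2}B^s A^{s/2})^{1/s} = \exp(\log A + \log B)$ combined with monotonicity of $s \mapsto \|(A^{s/2}B^s A^{s/2})^{1/s}\|$ on $(0,\infty)$, or alternatively by Audenaert's integral-representation argument in \cite{A07:audenaert2007araki}. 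To promote the $k=1$ inequality to all partial products, I would apply it to the antisymmetric tensor powers $\Lambda^k A$ and $\Lambda^k B$: since the top eigenvalue of $\Lambda^k M$ equals $\prod_{i=1}^k \lambda_i^{\downarrow}(M)$, the operator-norm bound lifts to the full log-majorization claim.

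\textbf{Step 2 (from log-majorization to the trace bound).} Log-majorization is preserved under composition with any $\phi$ such that $\phi \circ \exp$ is increasing and convex on $\mathbb{R}$, and the power map $\phi(x) = x^q$ satisfies this for every $q \geq 0$. Therefore the log-majorization of Step~1 immediately yields
\[
  \tr\bigl[(A^{1/2}BA^{1/2})^{rq}\bigr] \;\leq\; \tr\bigl[(A^{r/2}B^r A^{r/2})^{q}\bigr].
\]
Finally, substituting $A \leftarrow A^2$, which is permissible since $A^2 \succeq 0$ whenever $A \succeq 0$ and $(A^2)^s = A^{2s}$ by the functional calculus, converts $A^{1/2}BA^{1/2}$ into $ABA$ and $A^{r/2}B^rA^{r/2}$ into $A^r B^r A^r$, producing the claimed $\tr(ABA)^{rq} \leq \tr(A^r B^r A^r)^q$.

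\textbf{Main obstacle.} The genuinely non-elementary content is concentrated in the operator-norm inequality at $k=1$; once that is established, the passage to log-majorization via antisymmetric tensor powers is routine, and the trace bound is a formal consequence. I expect the hardest part to be proving the operator-norm inequality without circularity: the Lie--Trotter route requires a delicate monotonicity lemma in $s$, while the integral-representation route requires the careful $r \in (0,1]$ versus $r \geq 1$ interpolation that forms the crux of Audenaert's proof. Everything downstream is clean functional calculus and standard majorization theory.
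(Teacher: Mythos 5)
The paper does not prove this lemma; it is imported wholesale from the cited Audenaert reference, so there is no in-paper argument to compare your proposal against. What follows is an assessment of the proposal on its own terms.

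Your overall architecture is the standard one and is sound: prove the log-majorization $\lambda^{\downarrow}\bigl((A^{1/2}BA^{1/2})^r\bigr) \prec_{\log} \lambda^{\downarrow}\bigl(A^{r/2}B^rA^{r/2}\bigr)$ by first getting the $k=1$ operator-norm bound, lift to all partial products by applying it to antisymmetric tensor powers, convert to the trace inequality using that $t\mapsto e^{qt}$ is increasing convex, and then substitute $A\mapsto A^2$ to land on the stated form. Steps~2 and the final substitution are correct and routine.

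The gap is in your mechanism for the $k=1$ case, which is where all the content lives. The ``monotonicity of $s\mapsto\|(A^{s/2}B^sA^{s/2})^{1/s}\|$ on $(0,\infty)$'' is not an auxiliary fact you can combine with Lie--Trotter: applied between $s_1$ and $s_2=rs_1$, with $(A,B)$ replaced by $(A^{s_1},B^{s_1})$, it \emph{is} exactly the $k=1$ norm inequality you are trying to prove, so invoking it is circular. The Lie--Trotter formula controls the $s\to0^+$ limit (and is what underlies Golden--Thompson) but gives no leverage on the comparison between $s=1$ and $s=r\geq1$. A non-circular elementary proof of $k=1$ goes through L\"owner--Heinz: by homogeneity it suffices to show that $A^{r/2}B^rA^{r/2}\preceq I$ forces $A^{1/2}BA^{1/2}\preceq I$. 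The hypothesis gives $B^r\preceq A^{-r}$ (for invertible $A$; perturb and pass to a limit otherwise); operator monotonicity of $t\mapsto t^{1/r}$ for $1/r\in(0,1]$ then yields $B\preceq A^{-1}$, i.e.\ $A^{1/2}BA^{1/2}\preceq I$. Your alternative route via Audenaert's integral representation is the one actually taken in the cited paper and does close, but as written the Lie--Trotter route does not.
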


\begin{lemma}[Corollary~3.9, \cite{KMST10:subsparsification}] \label{lem: majorisation}
  Let $A$ be a \textsf{PSD} matrix such that $\Tr(A) \leq T \in \mathbb{N}$ and $A \preceq I_n$. Then, for every symmetric positive semidefinite matrix $U$, it holds that
  \[
    U  \bullet A = \Tr(U A) \leq \sum_{i=n-T+1}^{n} \lambda_i(U).
  \]
\end{lemma}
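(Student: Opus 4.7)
The plan is to reduce the inequality to a simple linear-programming/exchange argument by diagonalising $U$ in its eigenbasis and reading off the quadratic form values of $A$ along that basis. Since the statement depends only on $\Tr(A)$, on $A \preceq I_n$, and on the eigenvalues of $U$, this is a fairly clean combinatorial optimisation once the spectral setup is in place.

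Concretely, first I would diagonalise $U = \sum_{i=1}^n \lambda_i(U) \, u_i u_i^{\rot}$ with $\{u_i\}_{i=1}^n$ an orthonormal eigenbasis of $U$ and eigenvalues listed in nondecreasing order $\lambda_1(U)\leq \cdots \leq \lambda_n(U)$. Then
\[
\Tr(UA) \;=\; \sum_{i=1}^n \lambda_i(U)\,\bigl(u_i^{\rot} A\, u_i\bigr) \;=\; \sum_{i=1}^n \lambda_i(U)\, a_i,
\]
where I set $a_i \triangleq u_i^{\rot} A u_i$. The constraint $0\preceq A\preceq I_n$ immediately gives $0 \leq a_i \leq 1$ for every $i$, and the cyclicity of trace combined with orthonormality of $\{u_i\}$ gives $\sum_i a_i = \Tr(A) \leq T$.

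At this point the problem reduces to bounding $\sum_i \lambda_i(U) a_i$ over the polytope
\[
\mathcal{P} \;=\; \Bigl\{(a_1,\ldots,a_n)\in[0,1]^n :\ \sum_{i=1}^n a_i \leq T\Bigr\}.
\]
Since the objective is linear in $(a_i)$, its maximum over $\mathcal{P}$ is attained at a vertex. A standard exchange argument shows the optimiser sets $a_i = 1$ on the $T$ indices with the largest weights $\lambda_i(U)$ and $a_i = 0$ elsewhere: if some index $i$ with a large $\lambda_i(U)$ has $a_i < 1$ while some index $j$ with a smaller $\lambda_j(U)$ has $a_j > 0$, swapping mass from $j$ to $i$ only increases the objective and preserves feasibility (here $T\in \mathbb{N}$ ensures no fractional corner survives). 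Because the eigenvalues are sorted in nondecreasing order, this maximum equals $\sum_{i=n-T+1}^{n}\lambda_i(U)$, yielding
\[
\Tr(UA) \;\leq\; \sum_{i=n-T+1}^{n} \lambda_i(U),
\]
which is the desired conclusion.

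I do not anticipate a serious obstacle; the only step that needs care is verifying that the $a_i = u_i^{\rot} A u_i$ can indeed be treated as free variables in $\mathcal{P}$ (they genuinely are, since they are diagonal entries of a matrix in $[0,I_n]$ with trace at most $T$), and that the vertex $(1,\ldots,1,0,\ldots,0)$ with ones in the top $T$ positions is feasible precisely because $T$ is an integer, so no rounding is lost.
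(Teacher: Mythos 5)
Your proof is correct. Note that the paper does not actually prove this lemma; it imports it verbatim as Corollary~3.9 of \cite{KMST10:subsparsification}, so there is no in-paper argument to compare against. Your derivation is a clean, self-contained version of the standard Ky Fan–type majorisation argument: diagonalise $U$, observe that the numbers $a_i=u_i^{\rot}Au_i$ lie in $[0,1]$ and sum to $\Tr(A)\leq T$ (all forced by $0\preceq A\preceq I$ and the basis-invariance of trace), and then maximise the resulting linear objective over the polytope $\{a\in[0,1]^n:\sum_i a_i\leq T\}$. The one hypothesis you use that is easy to overlook, and that you correctly rely on, is $U\succeq 0$: without nonnegativity of the $\lambda_i(U)$ the LP optimum would \emph{not} be attained by filling the top $T$ coordinates (you would instead zero out the negative-eigenvalue coordinates), and the stated bound $\sum_{i=n-T+1}^n\lambda_i(U)$ could be exceeded. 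Your remark about $T\in\mathbb{N}$ is also accurate insofar as it makes the right-hand side a genuine vertex of the polytope, though the inequality itself would survive fractional $T$ with a suitably interpreted upper bound. In short: correct, and an appropriate level of rigour for a result the paper itself treats as black-box.
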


\begin{lemma}[Lemma~3.6, \cite{KMST10:subsparsification}]\label{lem:majorisation eigenvalues}
    For every \textsf{PSD} matrix $A$, every projection $P$ and every $r \in \{1, \dots, n\}$ it holds that
    \[
        \sum_{i=n-r+1}^n \lambda_i(A) \geq \sum_{i=n-r+1}^n \lambda_i(PAP)
    \]
 \end{lemma}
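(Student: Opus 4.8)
\medskip
\noindent\emph{Proof approach.} The plan is to rewrite the left-hand side through the variational (Ky Fan) principle for sums of the largest eigenvalues and then invoke Lemma~\ref{lem: majorisation}. Recall that for any symmetric $M\in\mathbb R^{n\times n}$,
\[
\sum_{i=n-r+1}^{n}\lambda_i(M)=\max\big\{\,\Tr(MQ)\ :\ Q\text{ an orthogonal projection of rank }r\,\big\},
\]
the maximum being attained by the projection $Q$ onto the span of the $r$ top eigenvectors of $M$. I would apply this to $M=PAP$ and fix such an optimal $Q$, so that $\sum_{i=n-r+1}^{n}\lambda_i(PAP)=\Tr(PAP\,Q)$.

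The next step is to move the outer projections off of $A$: using the cyclicity of the trace and $P^2=P$, $\Tr(PAP\,Q)=\Tr\!\big(A\cdot(PQP)\big)$. Writing $M':=PQP$, I would then check that $M'$ is admissible for Lemma~\ref{lem: majorisation} with $T=r$: it is \textsf{PSD}; from $Q\preceq I$ and $P\preceq I$ we get $M'=PQP\preceq P\preceq I$; and since $\Tr(PQP)=\Tr(QPQ)$ with $QPQ\preceq Q$, we have $\Tr(M')\le\Tr(Q)=r$, where $r\in\mathbb N$. Lemma~\ref{lem: majorisation}, applied with the \textsf{PSD} matrix $A$ playing the role of ``$U$'', then gives $\Tr(AM')\le\sum_{i=n-r+1}^{n}\lambda_i(A)$, and chaining the displays finishes the proof:
\[
\sum_{i=n-r+1}^{n}\lambda_i(PAP)=\Tr(AM')\le\sum_{i=n-r+1}^{n}\lambda_i(A).
\]

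I do not expect a genuine obstacle here; the one point that needs care is the reduction itself, namely verifying that the hypotheses of Lemma~\ref{lem: majorisation} really hold — in particular the bound $\Tr(PQP)\le r$, which is exactly what keeps the eigenvalue window on the right-hand side at ``top $r$'' — and noting that the Ky Fan step on the left is an equality for the chosen $Q$, not merely an inequality. Should one wish to avoid Lemma~\ref{lem: majorisation} altogether, an alternative is to let $V=\mathrm{Im}(P)$, identify the nonzero spectrum of $PAP$ with that of the compression $A\big|_V$, and apply Cauchy interlacing between $A\big|_V$ and $A$; this also works but needs a small case split according to whether $r\le\dim V$ or $r>\dim V$, so I would prefer the route above.
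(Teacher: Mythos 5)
The paper does not supply a proof of this lemma; it is stated with the citation to Lemma~3.6 of \cite{KMST10:subsparsification} and left as an imported black box, so there is no internal argument to compare yours against. Your proof is correct as written: the Ky Fan variational principle gives $\sum_{i=n-r+1}^{n}\lambda_i(PAP)=\Tr(PAP\,Q)$ for an optimal rank-$r$ projection $Q$, cyclicity of the trace turns this into $\Tr\!\big(A\,(PQP)\big)$, and $M'=PQP$ is \textsf{PSD}, satisfies $M'\preceq P\preceq I$, and has $\Tr(M')=\Tr(QP)\le\Tr(Q)=r\in\mathbb N$, so Lemma~\ref{lem: majorisation} applies with $(U,A,T)\mapsto(A,M',r)$ and gives $\Tr(AM')\le\sum_{i=n-r+1}^{n}\lambda_i(A)$. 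This is tidier than the interlacing alternative you sketch, since it needs no case split on whether $r$ exceeds $\mathrm{rank}(P)$ (Lemma~\ref{lem: majorisation} already allows $\Tr(M')<T$).

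One caveat worth flagging, since both ingredients are pulled from the same source: your derivation proves Lemma~3.6 of \cite{KMST10:subsparsification} using Corollary~3.9 of the same paper, and the numbering suggests the dependency may run the other way there. Within the present paper this is harmless because Lemma~\ref{lem: majorisation} is itself taken as a black box, but if you ever need a genuinely self-contained proof you should verify that Corollary~3.9 of \cite{KMST10:subsparsification} is established independently of Lemma~3.6 (it is — it is a standard consequence of Ky Fan together with the extreme-point structure of $\{A: 0\preceq A\preceq I,\ \Tr A\le T\}$ — but that is worth confirming rather than assuming).
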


\begin{lemma}\label{fact:PSD-invariant}
For any matrices $A,B$ satisfying $A \preceq B$ and any matrix $C$, it holds that $C^{\rot}AC \preceq C^{\rot}BC$. In particular,  when $C$ is symmetric, we have that $CAC \preceq CBC$.
 \end{lemma}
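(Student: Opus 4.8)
The plan is to unwind the definition of the Löwner (\textsf{PSD}) order and push the quadratic form through $C$. Recall that $A \preceq B$ means precisely that $B - A$ is \textsf{PSD}, i.e. $z^{\rot}(B-A)z \geq 0$ for every vector $z$ of the appropriate dimension. To establish $C^{\rot}AC \preceq C^{\rot}BC$, it suffices to show that $C^{\rot}(B-A)C = C^{\rot}BC - C^{\rot}AC$ is \textsf{PSD}.

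The key (and only) step is the change of variables $z = Cy$: for an arbitrary vector $y$ we would write
\[
y^{\rot}\bigl(C^{\rot}(B-A)C\bigr)y = (Cy)^{\rot}(B-A)(Cy) \geq 0,
\]
where the inequality is exactly the hypothesis $A \preceq B$ applied to the vector $z = Cy$. Since $y$ was arbitrary, $C^{\rot}(B-A)C \succeq 0$, which is the claimed inequality $C^{\rot}AC \preceq C^{\rot}BC$. (One should note that linearity of the matrix product in each factor is what lets us split $C^{\rot}(B-A)C$ as $C^{\rot}BC - C^{\rot}AC$; this is immediate.)

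For the ``in particular'' clause, when $C$ is symmetric we have $C^{\rot} = C$, so $C^{\rot}AC = CAC$ and $C^{\rot}BC = CBC$, and the general statement specializes directly to $CAC \preceq CBC$. There is no real obstacle here: the argument is a one-line verification, and the subtlety — if any — is merely to keep track of the fact that $C$ need not be square in the general statement (so $C^{\rot}AC$ and $C^{\rot}BC$ live in a possibly lower-dimensional space), whereas the symmetric special case implicitly assumes $C$ is square.
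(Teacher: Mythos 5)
Your proof is correct and follows essentially the same route as the paper's: both reduce the claim to evaluating a quadratic form after the substitution $z = Cy$ (the paper writes $u = Cw$ and compares $u^{\rot}Au \leq u^{\rot}Bu$ directly, while you phrase it as positive semi-definiteness of $C^{\rot}(B-A)C$, which is the same computation). The ``in particular'' clause is handled identically.
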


\begin{proof} 
Since it holds for any vector $w$ and $u=Cw$ that 
\[
w^{\rot}C^{\rot}ACw = u^{\rot} A u \leq u^{\rot} B u =w^{\rot}C^{\rot}BCw,
\]
the first statement holds. The second statement follows by the fact that $C$ is symmetric.
\end{proof}

\begin{lemma} \label{lem:proj perm}
For any \textsf{PSD} matrix $A$, a projection matrix $P$ and a positive integer $q$, the following statements hold:
  \begin{enumerate}
    \item $\lp PAP \rp^{q} \preceq P A^{q} P \preceq A^q$;
    \item $\Tr \lsp \lp AP \rp^q \lp PA\rp^q \rsp \leq \Tr \lsp PA^{2q}P\rsp.$
  \end{enumerate}
\end{lemma}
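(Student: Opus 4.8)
The plan is to prove the two statements essentially by induction on $q$, using the fact that a projection $P$ satisfies $P^2 = P$, $P = P^{\rot}$, and $0 \preceq P \preceq I$, together with Lemma~\ref{fact:PSD-invariant} to push projections and PSD matrices through congruences. For statement~(1), first observe that the right inclusion $P A^q P \preceq A^q$ is immediate from Lemma~\ref{fact:PSD-invariant} applied with $C = P$ to the trivial inequality $A^q \preceq A^q$, using $P A^q P \preceq P A^q P + (I-P)A^q(I-P)$ — or more directly, since $I - P \succeq 0$ and $A^q \succeq 0$, we can write $A^q - PA^qP$ and verify it is PSD by noting $A^{q/2}(I-P)A^{q/2} \succeq 0$ when $q$ is even and handling odd $q$ by a limiting/square-root argument; the cleanest route is $x^{\rot}(A^q - PA^qP)x = \|A^{q/2}x\|^2 - \|A^{q/2}Px\|^2 \geq 0$ because $\|A^{q/2}Px\| \le \|A^{q/2}\|\cdots$ — actually the honest argument is that $P A^q P \preceq A^q$ follows since for the PSD matrix $A^q$ and any projection $P$ one has $PA^qP \preceq A^q$ (a standard fact, provable by $A^q = PA^qP + PA^q(I-P) + (I-P)A^qP + (I-P)A^q(I-P)$ and Schur-complement positivity of the block form).

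For the left inclusion $(PAP)^q \preceq PA^qP$, I would induct on $q$. The base case $q=1$ is an equality. For the inductive step, suppose $(PAP)^{q-1} \preceq PA^{q-1}P$. Then
\[
(PAP)^q = (PAP)(PAP)^{q-1}(PAP) \cdot (PAP)^{-1}
\]
is awkward since $PAP$ may be singular; instead write $(PAP)^q = (PAP)^{1/2}(PAP)^{q-1}(PAP)^{1/2}$ and apply Lemma~\ref{fact:PSD-invariant} with $C = (PAP)^{1/2}$ to the inductive hypothesis, giving $(PAP)^q \preceq (PAP)^{1/2} PA^{q-1}P (PAP)^{1/2}$. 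Since $(PAP)^{1/2} = P(PAP)^{1/2}P$ has range in $\mathrm{Im}(P)$, the right-hand side equals $(PAP)^{1/2} A^{q-1} (PAP)^{1/2}$ up to the projections being absorbed, and one bounds $(PAP)^{1/2} A^{q-1} (PAP)^{1/2} \preceq P A^{q/2} \cdots$ — the point being to telescope back using $PAP \preceq PA P$... Let me instead use the Araki--Lieb--Thirring-style identity or, more simply: by induction it suffices to show $(PAP)^2 \preceq PA^2P$, i.e.\ $PAPAP \preceq PA^2P$, equivalently $PA(I-P)AP \succeq 0$, which is clear since $A(I-P)A = A^{1/2}\big(A^{1/2}(I-P)A^{1/2}\big)A^{1/2} \succeq 0$; then a standard induction (using that $M \preceq N$ with $M,N$ PSD implies $M^2 \preceq N^{1/2}NN^{1/2}$ is \emph{not} automatic, so) — the correct clean induction is: $(PAP)^{q} = PAP\,(PAP)^{q-2}\,PAP$ when $q$ is even, bounded via $PAP\,PA^{q-2}P\,PAP = PA PA^{q-2}APAP \preceq PA\cdot A^{q-2}\cdot AP$-type steps, handling parity separately.

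For statement~(2), I would write $\Tr[(AP)^q(PA)^q] = \Tr[(AP)^q((AP)^q)^{\rot}] = \|(AP)^q\|_F^2 \geq 0$, and note $(AP)^q(PA)^q = (AP)^{q-1}(APA)(PA)^{q-1}$. The key observation is $\Tr[(AP)^q(PA)^q] = \Tr[P(AP)^{q-1}A \cdot AP(PA)^{q-1}] $... more efficiently: apply Lemma~\ref{lem:ALT_ineq} (Araki--Lieb--Thirring) with the substitution $A \leftarrow$ (a square root of $P$, i.e.\ $P$ itself since $P^{1/2}=P$), $B \leftarrow A$, $r = 1$ won't reduce; instead use it with $r$ chosen so that $\Tr(PAP)^{2q} $ relates to $\Tr(P A^2 P)^q$ — actually the cleanest: $\Tr[(AP)^q(PA)^q] = \Tr[(PA^2P)\cdot\text{something}]$ via cyclic invariance, $\Tr[(AP)^q(PA)^q] = \Tr[PA(PA)^{q-1}(AP)^{q-1}AP] = \Tr[AP(PA^2P)\cdots]$, and then invoke statement~(1) together with Lemma~\ref{fact:PSD-invariant} and cyclicity of trace to collapse to $\Tr[PA^{2q}P]$.

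The main obstacle I anticipate is handling the \emph{non-commutativity} carefully in the induction for statement~(1): the naive inductive step fails because $M \preceq N$ does not imply $CMC \preceq CNC$ unless $C$ is PSD (Lemma~\ref{fact:PSD-invariant} gives only $C^{\rot}MC \preceq C^{\rot}NC$), so each conjugation must be arranged to use a \emph{symmetric} $C = (PAP)^{1/2}$ or $C = P$, and the parity of $q$ must be tracked so that the telescoping of $PAP \cdots PAP$ into $PA^qP$ is valid at each stage. Statement~(2) should then follow relatively painlessly from (1), cyclicity of the trace, and the observation that the relevant traces are Frobenius norms and hence nonnegative and monotone under the PSD order established in (1).
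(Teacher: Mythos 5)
Your proposal has a genuine gap, and the most serious one is a claim you declare to be ``a standard fact'' but which is actually false. The right-hand inclusion $PA^qP \preceq A^q$ does \emph{not} hold for general PSD $A$ and projections $P$: take $A = \begin{pmatrix} 1 & 1 \\ 1 & 1 \end{pmatrix}$ and $P = e_1 e_1^{\rot}$, so that $A - PAP = \begin{pmatrix} 0 & 1 \\ 1 & 1 \end{pmatrix}$, which has determinant $-1$ and is therefore indefinite. Both of the arguments you sketch for it collapse exactly at the spots where you hedge: the pointwise computation needs $\|A^{q/2}Px\| \leq \|A^{q/2}x\|$, which fails because $P$ does not commute with $A^{q/2}$; and the block decomposition of $A^q - PA^qP$ has zero $\mathrm{Im}(P)$-block while the off-diagonal blocks $PA^q(I-P)$ are nonzero, so it cannot be PSD (a Hermitian matrix with a vanishing diagonal block and nonvanishing off-diagonal block is never PSD, and there is no Schur complement to invoke). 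Your unease at that point in the write-up was well founded. For what it is worth, the paper's own proof also treats this inclusion as if it were free --- it calls the $q=1$ case ``trivial'' and its inductive step $(PAP)^q = PA(PAP)^{q-2}AP \preceq PA\,A^{q-2}\,AP$ tacitly uses $(PAP)^{q-2} \preceq A^{q-2}$ --- so you are in company, but that does not make the step valid.

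Beyond this, you identify some of the right ingredients but do not land on the paper's argument. For the left inclusion $(PAP)^q \preceq PA^qP$, the paper runs a step-by-two induction: $q=1$ is an equality; $q=2$ follows from $APA \preceq A^2$ (which is $P \preceq I$ conjugated by $C=A$ via Lemma~\ref{fact:PSD-invariant}), then conjugating by $P$ and using $P^2=P$; for $q>2$ one writes $(PAP)^q = PA(PAP)^{q-2}AP$ and conjugates the inductive hypothesis by $C=AP$ via Lemma~\ref{fact:PSD-invariant}. You gesture at this telescoping (``$PAP\cdot PA^{q-2}P\cdot PAP$-type steps'') but never commit, and your earlier route via $(PAP)^{1/2}(PAP)^{q-1}(PAP)^{1/2}$ stalls because $(PAP)^{1/2}$ cannot be moved past $A^{q-1}$. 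For statement~(2), the proposal is too vague to evaluate: you circle through ALT, Frobenius norms, and cyclicity without settling on a decomposition. The paper's proof is short once one has the identity $(AP)^q(PA)^q = A(PAP)^{2(q-1)}A$, valid because $P^2=P$; from there statement~(1), Lemma~\ref{fact:PSD-invariant}, and cyclicity of trace finish the job in two lines.
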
 

\begin{proof} 
We prove the first statement by induction. The statement holds trivially when $q=1$. For $q=2$, we have that $A P A \preceq A^2$ because of $P \preceq I$. This implies that $ PAPAP \preceq PA^2P$. Combining this with the fact of $P^2=P$ proves the case of $q=2$.
For the inductive step, we assume that $q>2$ and have that 
\[
    \lp PAP\rp^q = PA \lp PAP \rp^{q-2} AP \preceq PA A^{q-2} AP = PA^q P,
\]
which the inequality above is based on the inductive hypothesis. With this we proved the first statement.

Now we prove the second statement. Because of  $P^2=P$ and the first statement, it holds that 
\[
    (AP)^q (PA)^q = A (PAP)^{2(q-1)} A \preceq  A P A^{2(q-1)}PA,
\]
which implies that 
\begin{align*}
    \Tr \lsp \lp AP \rp^q \lp PA\rp^q \rsp  
    &\leq \Tr \lsp A P A^{2(q-1)} P A\rsp
    = \Tr \lsp P A^2 P A^{2(q-1)} P\rsp 
    \leq \Tr \lsp A^2 P A^{2(q-1)}\rsp \\
    &= \Tr \lsp P A^{2q} P\rsp.
\end{align*}

With this we proved the second statement.
\end{proof}

      \begin{lemma}\label{lem:spec decomp preserved under unitary}
        For any matrix $A$ having spectral decomposition 
        \[
            A = \sum_{i=1}^k \lambda_i f_i f_i^{\rot}
        \]
        and a unitary matrix $U$ such that $U^{\rot}U = I$, the matrix $UAU^{\rot}$ has spectral decomposition
        \[
            UAU^{\rot} = \sum_{i=1}^k \lambda_i \lp Uf_i\rp \lp Uf_i\rp^{\rot}
        \]
    \end{lemma}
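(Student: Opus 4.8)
The plan is to verify the claimed spectral decomposition of $UAU^{\rot}$ directly from the definition, using only the hypothesis that $U$ is unitary. Starting from $A = \sum_{i=1}^k \lambda_i f_i f_i^{\rot}$, I would conjugate both sides by $U$ and $U^{\rot}$:
\[
    UAU^{\rot} = U\left(\sum_{i=1}^k \lambda_i f_i f_i^{\rot}\right)U^{\rot} = \sum_{i=1}^k \lambda_i (Uf_i)(Uf_i)^{\rot},
\]
where the second equality uses linearity of the map $M \mapsto UMU^{\rot}$ together with the identity $U(f_if_i^{\rot})U^{\rot} = (Uf_i)(U f_i^{\rot}) = (Uf_i)(Uf_i)^{\rot}$, the last step being $(Uf_i)^{\rot} = f_i^{\rot}U^{\rot}$.

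To conclude that this is genuinely a spectral decomposition (i.e.\ that it exhibits the eigenvalues and an orthonormal system of eigenvectors of $UAU^{\rot}$), I would then check two things. First, the vectors $\{Uf_i\}$ are orthonormal: since $U^{\rot}U = I$, we have $(Uf_i)^{\rot}(Uf_j) = f_i^{\rot}U^{\rot}Uf_j = f_i^{\rot}f_j$, so orthonormality of $\{f_i\}$ is inherited. Second, each $Uf_i$ is an eigenvector: $(UAU^{\rot})(Uf_i) = UA(U^{\rot}U)f_i = UAf_i = \lambda_i Uf_i$. Together these confirm the displayed expression is the spectral decomposition of $UAU^{\rot}$, with the same multiset of eigenvalues $\{\lambda_i\}$.

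There is essentially no obstacle here — the statement is a routine consequence of unitary conjugation and the only mild point worth stating explicitly is that one should interpret "spectral decomposition" as including the assertion that $\{f_i\}$ (hence $\{Uf_i\}$) is an orthonormal set, which is what makes the argument in the previous paragraph meaningful; without that the claim would merely be an algebraic rewriting. The whole proof is two short displayed computations plus the orthonormality check.
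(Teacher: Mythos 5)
Your proof is correct and follows essentially the same route as the paper: the paper's one-line justification is that $\{Uf_i\}_{i=1}^k$ is an orthonormal set, which is exactly the key point you verify (via $U^{\rot}U = I$) before reading off the conjugated decomposition. You simply write out the details the paper leaves implicit.
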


\begin{proof} 
        The statement  simply follows from the fact that the set $\{Uf_i\}_{i=1}^k$ forms   an orthonormal set.
    \end{proof}

\begin{lemma}\label{cor:trace ineq proj}
    Let $A$ be a \textsf{PSD} matrix and $P_L$ be the projection on the top $T$ eigenspace of $A$. Moreover, suppose that $u > \lambdaMax(A)$. Then, for any projection $P$ on a $T$-dimensional space, it holds that
    \[
        \Tr \lsp  P_L (u I - A)^{-1} P_L \rsp \geq \Tr \lsp P (u I - A)^{-1} P\rsp.
    \]
\end{lemma}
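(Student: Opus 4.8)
The plan is to reduce the statement to the Ky Fan-type trace inequality already recorded as Lemma~\ref{lem: majorisation}, after replacing $A$ by its resolvent. Set $B \triangleq (uI-A)^{-1}$. Since $u > \lambdaMax(A)$, every eigenvalue of $A$ lies in $[0,\lambdaMax(A)] \subseteq (-\infty, u)$, so $B$ is well-defined and \textsf{PSD}, with eigenvalues $(u-\lambda_i(A))^{-1}$ and the same eigenvectors as $A$. Because the scalar map $x \mapsto (u-x)^{-1}$ is strictly increasing on $(-\infty,u)$, it preserves the ordering of eigenvalues, so the eigenvectors realising the $T$ largest eigenvalues of $A$ are exactly those realising the $T$ largest eigenvalues of $B$. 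Hence the projection $P_L$ onto the top $T$ eigenspace of $A$ is also the projection onto the top $T$ eigenspace of $B$. (If $A$ has a repeated eigenvalue straddling the cutoff, one fixes any choice of top $T$ eigenspace; the same choice works simultaneously for $A$ and $B$.)

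Next I would rewrite both sides in a trace-friendly form. Using $P_L^2 = P_L$ and the cyclicity of the trace, $\Tr\lsp P_L (uI-A)^{-1} P_L\rsp = \Tr\lsp P_L B P_L\rsp = \Tr\lsp B P_L\rsp = \sum_{i=n-T+1}^n \lambda_i(B)$, where the last equality is the standard characterisation of the trace against a top $T$ eigenprojection. Likewise, for the arbitrary rank-$T$ projection $P$ we have $\Tr\lsp P(uI-A)^{-1}P\rsp = \Tr\lsp P B P\rsp = \Tr\lsp B P\rsp$.

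It then remains to bound $\Tr\lsp B P\rsp$ by $\sum_{i=n-T+1}^n \lambda_i(B)$, and this is exactly Lemma~\ref{lem: majorisation} applied with the \textsf{PSD} matrix there taken to be $P$ — which satisfies $P \preceq I_n$ and $\Tr(P) = T \in \mathbb{N}$ since $P$ is a rank-$T$ projection — and with the symmetric \textsf{PSD} matrix $U$ there taken to be $B$. That lemma yields $\Tr\lsp B P\rsp = P \bullet B \leq \sum_{i=n-T+1}^n \lambda_i(B)$, and combining this with the two identities of the previous paragraph gives $\Tr\lsp P_L (uI-A)^{-1} P_L\rsp = \sum_{i=n-T+1}^n \lambda_i(B) \geq \Tr\lsp P(uI-A)^{-1}P\rsp$, which is the claim.

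There is no serious obstacle here; the only step that needs a moment of care is the monotonicity argument identifying $P_L$ with the top $T$ eigenspace of the resolvent $B$, since the hypothesis is phrased in terms of the eigenspace of $A$ rather than of $(uI-A)^{-1}$, and one should check this identification is unaffected by eigenvalue multiplicities at the cutoff. Everything else is cyclicity of the trace together with an off-the-shelf invocation of Lemma~\ref{lem: majorisation} (equivalently, the Ky Fan maximum principle).
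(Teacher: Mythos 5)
Your proof is correct, and it actually takes a cleaner route than the paper's. You observe that $B \triangleq (uI-A)^{-1}$ is \textsf{PSD} with the same eigenvectors as $A$ in the same order, so $P_L$ is simultaneously the top-$T$ eigenprojection of $B$; then $\Tr\lsp P_L B P_L\rsp = \sum_{i=n-T+1}^n \lambda_i(B)$, and Lemma~\ref{lem: majorisation} applied with $U=B$ and the \textsf{PSD} matrix taken to be $P$ (a rank-$T$ projection, so $\Tr(P)=T$, $P\preceq I$) gives $\Tr\lsp PBP\rsp = B\bullet P \leq \sum_{i=n-T+1}^n\lambda_i(B)$. This is precisely the Ky Fan maximum principle for $B$, and every step holds with no caveats.

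The paper instead works with $A$ directly, invoking Karamata's majorisation inequality with $f(x)=1/(u-x)$ and the eigenvalue comparison of Lemma~\ref{lem:majorisation eigenvalues}. That chain correctly establishes
\[
\Tr\lsp P_L(uI-A)^{-1}P_L\rsp \;=\; \sum_{i=n-T+1}^n \frac{1}{u-\lambda_i(A)} \;\geq\; \sum_{i=n-T+1}^n \frac{1}{u-\lambda_i(PAP)},
\]
but the final step of the paper asserts that the right-hand side equals $\Tr\lsp P(uI-A)^{-1}P\rsp$, and that identity fails whenever $P$ does not commute with $A$: the quantity $\sum_{i=n-T+1}^n 1/(u-\lambda_i(PAP))$ equals $\Tr\lsp (P(uI-A)P)^{\dag}\rsp$, and for a positive definite matrix $M$ one has $P M^{-1} P \succeq (PMP)^{\dag}$ on $\mathrm{Im}(P)$ (Schur complement), so in fact $\Tr\lsp P(uI-A)^{-1}P\rsp \geq \Tr\lsp(P(uI-A)P)^{\dag}\rsp$ — the inequality goes the wrong way, so the paper's chain does not close as written. (A $2\times 2$ example with $A=\mathrm{diag}(1,0)$, $u=2$, and $P$ projecting onto $(1,1)/\sqrt 2$ gives $3/4$ versus $2/3$.) Your route through Lemma~\ref{lem: majorisation} on the resolvent avoids this issue entirely and proves the stated lemma. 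Your parenthetical about repeated eigenvalues at the cutoff is also exactly the right thing to note: any fixed choice of top-$T$ eigenspace for $A$ works verbatim for $B$.
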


\begin{proof} 
We first recall the following  Karamata Majorisation inequality, which will be used in our proof:  for any  non-increasing sequences $x_1, x_2, \dots x_m$ and $y_1, y_2, \dots, y_m$ such that 
    \begin{equation}\label{eq:Karamata}
        \sum_{i=1}^r x_i \geq \sum_{i=1}^r y_i
    \end{equation}
      for any $1\leq r\leq m$, and a convex function $f$, it holds that 
    \[
        \sum_{i=1}^r f(x_i) \geq \sum_{i=1}^r f(y_i).
    \]
    We apply the inequality above by setting $x_i = \lambda_{n-i+1}(A)$,  $y_i = \lambda_{n-i+1}(PAP)$, and $m = T$. By  Lemma~\ref{lem:majorisation eigenvalues}, we know that 
    the two sequences $\{x_i\}$ and $\{y_i\}$ satisfy 
    \eqref{eq:Karamata}. We further set $f(x) = \frac{1}{u - x}$, and this gives us that 
    \begin{align*}
        \Tr \lsp P_L (u I - A)^{-1} P_L \rsp 
        & = \sum_{i=n-T+1}^n \frac{1}{u - \lambda_i(A)}
        = \sum_{i=n-T+1}^n f(\lambda_i(A)) 
        \geq \sum_{i=n-T+1}^n f(\lambda_i(PAP))\\
        & = \sum_{i=n-T+1}^n \frac{1}{u - \lambda_i(PAP)}
        =\Tr \lsp P (u I - A)^{-1} P \rsp,
    \end{align*}
    which proves the statement.
\end{proof}

\section{A fast SDP solver }\label{sec:proofmainSDPlambda2}

We use the primal-dual framework introduced in \cite{AK16:combinatorial} to solve the  SDP~$\PSDP(G,W,k,\gamma)$ and prove Theorem~\ref{thm:mainSDPlambda2}. The framework is based on the matrix multiplicative weight update (MWU) algorithm on both the primal SDP $\PSDP(G,W,k,\gamma)$ and its dual $\DSDP(G,W,k,\gamma)$.


%

\paragraph*{Notation.}
  For any given  vector $\beta$, we use $\Diag(\beta)$ to denote the diagonal matrix such that each diagonal entry $[\Diag(\beta)]_{ii} = \beta_i$. Given  matrix $Z$, scalar $v$ and vector $\beta$, we use $\Diag(Z,v,\beta)$ to  denote the diagonal $3$-block matrix with blocks $Z$, $v$ and $\Diag(\beta)$.  We use  $I_V$ and $I_{E_W}$ to denote the identity matrices on vertex set $V$ and edge set $E_W$ with $|E_W|=m$, respectively. We further define 
\begin{eqnarray}
&& E\triangleq \Diag(\Delta\cdot I_V, m, I_{E_W}), \quad \Pi\triangleq \Diag(P_\bot, 1, I_{E_W}), \label{eqn:defEPIN}\\ &&N\triangleq \Diag(\Delta\cdot P_\bot, m, I_{E_W})=E^{1/2}\Pi E^{1/2}. \nonumber
\end{eqnarray}


For any given  parameter $\lambda$ and  vector $w$, we define \[V(\lambda, {w})\triangleq \lambda,  \qquad  
A(\lambda, {w}) \triangleq \mat{L}_G +\sum_{e\in E_W} w_e \mat{L}_e - \lambda \Delta \mat{P}_{\bot},  \qquad 
B(\lambda, {w}) \triangleq k- \sum_{e\in E_W}w_e.\] 
Let $c=c(\lambda,w)\in \mathbb{R}^{m}$ denote the vector with $c_e=1-w_e$ for each $e\in E_W$, and  $C=C(\lambda, {w})=\Diag(c(\lambda,w))$ be the diagonal $m\times m$ matrix with the diagonal entry $1- {w}_e$ corresponding to edge $e$. Then we define 
\begin{eqnarray}
M(\lambda, {w}) \triangleq\Diag\left(A(\lambda,w), B(\lambda,w), C(\lambda,w)\right)= \begin{bmatrix}
      A(\lambda, {w}) & 0 & 0           \\[0.3em]
      0& B(\lambda, {w})           & 0 \\[0.3em]
      0           & 0 & C(\lambda, {w})
     \end{bmatrix}.\label{eqn:M-definition}
\end{eqnarray}

	\begin{definition}
		An $(\ell,\rho)$-oracle for $\DSDP(G,W,k,\gamma)$ is an algorithm that on input $\langle Z,v,\beta\rangle$ with $ \Diag(Z,v,\beta) \bullet N  =1$, either fails or outputs $(\lambda, {w})$ with $\lambda\geq 0$, $ {w}\in \mathbb{R}_{\geq 0}^{m}$ that satisfies
\[
V(\lambda, {w})\geq \gamma, \qquad
A(\lambda, {w}) \bullet Z + B(\lambda, {w}) \cdot v +  c(\lambda, {w}) \cdot \beta \geq 0, 
\qquad-\ell N \preceq M(\lambda, {w})\preceq \rho N.
\]
	\end{definition}
	
We have the following simple fact and we defer its proof to Appendix~\ref{sec:OmittedSDP}.
\begin{fact}\label{fact:infeasibledual}
	If an $(\ell,\rho)$-oracle for $\DSDP(G,W,k,\gamma)$ does not fail on input $\langle Z, v,\beta \rangle$ with \[ \Diag(Z,v,\beta) \bullet N  =1,\] then $\langle Z, v,\beta \rangle$ is infeasible for $\DSDP(G,W,k,\gamma)$.
\end{fact}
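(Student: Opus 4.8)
The plan is to prove the contrapositive by contradiction. Suppose the $(\ell,\rho)$-oracle does not fail on an input $\langle Z,v,\beta\rangle$ with $\Diag(Z,v,\beta)\bullet N=1$, so that it returns a pair $(\lambda,w)$ with $\lambda\geq 0$, $w\in\mathbb{R}_{\geq 0}^{m}$ satisfying the three guaranteed conditions $V(\lambda,w)\geq\gamma$, $\ A(\lambda,w)\bullet Z + B(\lambda,w)\cdot v + c(\lambda,w)\cdot\beta\geq 0$, and $-\ell N\preceq M(\lambda,w)\preceq\rho N$; and suppose, for contradiction, that $\langle Z,v,\beta\rangle$ is also feasible for $\DSDP(G,W,k,\gamma)$. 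I would first unfold the second oracle inequality using the definitions of $A(\lambda,w)$, $B(\lambda,w)$, and $c(\lambda,w)$, exploiting bilinearity of $\bullet$, and regroup the terms as
\[
0 \;\leq\; \Bigl( Z\bullet L_G + kv + \sum_{e\in E_W}\beta_e \Bigr) \;+\; \sum_{e\in E_W} w_e\bigl( Z\bullet L_e - v - \beta_e \bigr) \;-\; \lambda\,\bigl( Z\bullet \Delta P_\bot \bigr).
\]

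Next I would bound each of the three groups using feasibility of $\langle Z,v,\beta\rangle$ for $\DSDP$: the first parenthesis is strictly less than $\gamma$ by the objective-value constraint of $\DSDP$; every summand $Z\bullet L_e - v - \beta_e$ is at most $0$ by the per-edge constraint of $\DSDP$, and since the oracle guarantees $w_e\geq 0$ the whole sum is at most $0$; and $Z\bullet\Delta P_\bot=1$ by the normalisation constraint of $\DSDP$. Substituting these three bounds into the displayed inequality gives $0 < \gamma - \lambda$, i.e.\ $\lambda<\gamma$, which contradicts $V(\lambda,w)=\lambda\geq\gamma$ from the first oracle guarantee. Hence no feasible dual point can be an input on which the oracle does not fail, which is exactly the claim.

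I do not expect a genuine obstacle here; the only care needed is the bookkeeping in expanding $A(\lambda,w)\bullet Z$ (splitting it, via linearity of $\bullet$, into the $L_G$, the $L_e$, and the $P_\bot$ contributions) and then matching the three resulting groups against exactly the three $\DSDP$ constraints that control them. It is worth recording that the semidefinite sandwich condition $-\ell N\preceq M(\lambda,w)\preceq\rho N$ produced by the oracle plays no role in this soundness argument — it is used later to bound the width, and hence the iteration count, of the MWU scheme — and likewise the input normalisation $\Diag(Z,v,\beta)\bullet N=1$ is not invoked, since it merely specifies the domain on which the oracle is defined; the conclusion rests solely on the oracle's first two output guarantees together with the constraints of $\DSDP$.
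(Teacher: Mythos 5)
Your proof is correct and mirrors the paper's own argument: both assume feasibility of $\langle Z,v,\beta\rangle$, expand $A(\lambda,w)\bullet Z + B(\lambda,w)\,v + c(\lambda,w)\cdot\beta$ into the same three groups, bound them via the $\DSDP$ constraints, and derive the same contradiction (the paper phrases it as the oracle quantity being simultaneously $\geq 0$ and $<\gamma-\lambda\leq 0$, you as $\lambda<\gamma$ versus $\lambda\geq\gamma$ — algebraically identical). Your closing remark that the sandwich condition $-\ell N\preceq M(\lambda,w)\preceq\rho N$ and the normalisation $\Diag(Z,v,\beta)\bullet N=1$ play no role here is accurate.
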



In order to apply the MWU algorithm, in the following we use the $U_{\varepsilon}(A)$ to denote the matrix 
			\[
			U_\varepsilon(A)\triangleq \frac{E^{-1/2}(1-\varepsilon)^{E^{-1/2}AE^{-1/2}}E^{-1/2}}{\Pi\bullet (1-\varepsilon)^{E^{-1/2}AE^{-1/2}}},
		\]
		where $E, \Pi$ are matrices as defined in Equation (\ref{eqn:defEPIN}).  


\subsection{The MWU algorithm} 
In the framework of MWU for solving our SDP, we sequentially produce candidate dual solutions $\langle Z^{(t)},v^{(t)},\beta^{(t)}\rangle$ such that $\Diag(Z^{(t)},v^{(t)},\beta^{(t)})\bullet N=1$ for any $t$. Specifically, for any given $k,\gamma$, we start with a solution $Z^{(1)}=\frac{1}{\Delta(n-1)}I$, $v^{(1)}=\frac{2}{n-1}$ and $\beta_e^{(1)}=0$ for any $e\in E_W$. At each iteration $t$, we invoke a good separation oracle that takes $\Diag(Z^{(t)},v^{(t)},\beta^{(t)})$ as input, and then either guarantees that $\Diag(Z^{(t)},v^{(t)},\beta^{(t)})$ is already good for dual SDP (and thus certifies infeasibility of primal SDP), or outputs $(\lambda^{(t)},w^{(t)})$ certifying the infeasibility of $\Diag(Z^{(t)},v^{(t)},\beta^{(t)})$.

If $\left(\lambda^{(t)},w^{(t)}\right)$ is returned by the oracle, then the algorithm   updates the next candidate solution based on 
\begin{eqnarray*}
X^{(t)} = U_{\varepsilon} \left(\frac{1}{2\rho}\sum_{s=1}^{t-1}M^{(s)}\right), 
\end{eqnarray*}
where $M^{(s)} \triangleq M\left(\lambda^{(s)},{w}^{(s)}\right)$ is as  defined before and  $\varepsilon$ is a parameter of the algorithm. By definition, we have that 
$X^{(t)}\bullet N=1$. Moreover, since $M^{(t)}$ can be viewed as a $3$-block diagonal matrix with diagonal entries $A^{(t)},B^{(t)},C^{(t)}$,  $\exp(M^{(t)})=\Diag\left(\exp\left(A^{(t)}\right),\exp\left(B^{(t)}\right),\exp\left(C^{(t)}\right)\right)$. Therefore, we can decompose $X^{(t)}$ as $$X^{(t)}=\Diag(Z^{(t)},v^{(t)},\beta^{(t)}).$$ 
Note that $X^{(t)}\bullet N = 1$ is equivalent to
\[
\Delta \cdot Z^{(t)}\bullet P_\bot +m\cdot  v^{(t)} +\sum_{e\in E_W}\beta_e^{(t)}
=1.\]

The following theorem guarantees that, after a small number of iterations, the algorithm either finds a good enough dual solution, or a feasible solution to the primal SDP. The proof of the theorem is built upon a result in \cite{Orecchia:11} and is deferred to Appendix \ref{sec:OmittedSDP}.
	\begin{theorem}\label{thm:sdp_MWU}
Let \textsc{Oracle} be an $(\ell,\rho)$-oracle for $\DSDP(G,W,k,\gamma)$, and let $\delta>0$. Let $N$,   $X^{(t)}$, and  $M^{(t)}$ be   defined as  above, for any $t\geq 1$. Let $\varepsilon=\min\{1/2,\delta/2\ell \}$. 
Suppose that  \textsc{Oracle} does not fail for $T$ rounds, where 
\begin{eqnarray*}
T=O\left(\frac{\rho \log n}{\delta \varepsilon}\right) \leq \max\left\{O\left(\frac{\rho\log n}{\delta}\right), O\left(\frac{\rho \ell \log n }{\delta^2}\right) \right\}, 
	\end{eqnarray*}
		then  $(\bar{\lambda}-{3\delta },\bar{w}-{\delta})$ is a feasible solution to $\PSDP(G,W,k,\gamma-{3\delta })$, where
		$	\bar{\lambda}\triangleq\frac{1}{T}\sum_{t=1}^T\lambda^{(t)}$ and $ \bar{w}\triangleq\frac{1}{T}\sum_{t=1}^{T}w^{(t)}. 
		$
	\end{theorem}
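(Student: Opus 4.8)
The plan is to run the matrix multiplicative‑weights (MWU) analysis underlying the Arora--Kale primal--dual scheme, in the form established in \cite{Orecchia:11}, after a change of variables that converts our mixed matrix/scalar/vector, $N$‑normalised setting into the ordinary trace‑normalised one. Writing $N=E^{1/2}\Pi E^{1/2}$ with $\Pi=\Diag(P_\bot,1,I_{E_W})$ a block‑diagonal projection, substitute $\tilde X=E^{1/2}XE^{1/2}$ and $\tilde M^{(t)}=E^{-1/2}M^{(t)}E^{-1/2}$. Under this substitution $X\bullet N=1$ becomes $\tilde X\bullet\Pi=1$, the pairing $M^{(t)}\bullet X$ becomes $\tilde M^{(t)}\bullet\tilde X$, the width guarantee $-\ell N\preceq M^{(t)}\preceq\rho N$ of the $(\ell,\rho)$‑oracle becomes $-\ell\Pi\preceq\tilde M^{(t)}\preceq\rho\Pi$, and the update rule $X^{(t)}=U_\varepsilon\big(\frac{1}{2\rho}\sum_{s=1}^{t-1}M^{(s)}\big)$ becomes exactly the $\Pi$‑restricted matrix‑exponential update on the losses $\{\tilde M^{(s)}\}_{s<t}$. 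Hence the analysis of \cite{Orecchia:11} applies verbatim in the tilde variables.

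Two inputs then drive the argument. First, the matrix MWU regret bound (at learning rate $\varepsilon$, over $T$ rounds, with losses bounded by $\ell$ from below and $\rho$ from above) gives, for every block‑PSD $X$ with $X\bullet N=1$,
\[
\frac{1}{T}\sum_{t=1}^T M^{(t)}\bullet X\ \ge\ \frac{1}{T}\sum_{t=1}^T M^{(t)}\bullet X^{(t)}\ -\ O\!\left(\varepsilon\ell+\frac{\rho\log n}{\varepsilon T}\right),
\]
where the $\varepsilon\ell$ term is the contribution of the negative part of the losses and $\rho\log n/(\varepsilon T)$ is the usual entropy term. With $\varepsilon=\min\{1/2,\delta/2\ell\}$ and $T=\Theta(\rho\log n/(\delta\varepsilon))$ this error is at most $\delta$, which is precisely why the bound on $T$ takes the two‑regime form $\max\{O(\rho\log n/\delta),O(\rho\ell\log n/\delta^2)\}$. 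Second, since \textsc{Oracle} does not fail in any of the $T$ rounds, each $X^{(t)}=\Diag(Z^{(t)},v^{(t)},\beta^{(t)})$ satisfies $A^{(t)}\bullet Z^{(t)}+B^{(t)} v^{(t)}+c^{(t)}\cdot\beta^{(t)}\ge 0$, i.e.\ $M^{(t)}\bullet X^{(t)}\ge 0$. Combining, $\frac{1}{T}\sum_t M^{(t)}\bullet X\ge-\delta$ for every block‑PSD $X$ with $X\bullet N=1$.

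It remains to read this off block by block. Since $A(\lambda,w)$, $B(\lambda,w)$, $C(\lambda,w)$ are affine in $(\lambda,w)$, one has $\frac{1}{T}\sum_t M^{(t)}=\Diag\big(A(\bar\lambda,\bar w),B(\bar\lambda,\bar w),C(\bar\lambda,\bar w)\big)$. Testing against $X=\Diag(Z,0,0)$ with $Z\succeq 0$, $\Delta(Z\bullet P_\bot)=1$ yields $A(\bar\lambda,\bar w)\succeq-\delta\Delta P_\bot$ on $\mathrm{Im}(P_\bot)$ (and $A(\bar\lambda,\bar w)\1=0$, as every Laplacian and $P_\bot$ kills $\1$), i.e.\ $L_G+\sum_{e\in E_W}\bar w_e L_e\succeq(\bar\lambda-\delta)\Delta P_\bot$; subtracting $\delta\sum_{e\in E_W}L_e\preceq 2\delta\Delta P_\bot$ — which uses that $\sum_{e\in E_W}L_e$ is the Laplacian of $W$, whose maximum degree is at most $\Delta$ — gives $L_G+\sum_{e\in E_W}(\bar w_e-\delta)L_e\succeq(\bar\lambda-3\delta)\Delta P_\bot$, which is where the $3\delta$ slack originates. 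Testing against $X=\Diag(0,v,0)$ with $mv=1$ gives $k-\sum_e\bar w_e\ge-\delta m$, hence $k-\sum_e(\bar w_e-\delta)\ge 0$; testing against $X=\Diag(0,0,\Diag(\beta))$ with $\beta\ge 0$, $\sum_e\beta_e=1$ gives $1-\bar w_e\ge-\delta$ for each $e$, hence $1-(\bar w_e-\delta)\ge 0$. Averaging the oracle's first guarantee over the $T$ rounds gives $\bar\lambda\ge\gamma$, so $\bar\lambda-3\delta\ge\gamma-3\delta$, and the remaining non‑negativity conditions follow from the oracle always returning $w^{(t)}\in\mathbb{R}_{\ge 0}^m$ and $\lambda^{(t)}\ge 0$ (a coordinate of $\bar w-\delta$ that would dip below zero can be clipped to zero, which only strengthens the matrix constraint and $w_e\le 1$). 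This verifies that $(\bar\lambda-3\delta,\bar w-\delta)$ is feasible for $\PSDP(G,W,k,\gamma-3\delta)$.

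The main obstacle is not conceptual but one of care: making the reduction to the standard trace‑normalised matrix MWU of \cite{Orecchia:11} fully rigorous — in particular checking that the $(\ell,\rho)$‑width hypothesis, which in our formulation mixes a PSD matrix block, a scalar block and a diagonal block, transfers cleanly after conjugation by $E^{-1/2}$ into the $\Pi$‑normalised statement — and then tuning the single parameter $\varepsilon=\min\{1/2,\delta/2\ell\}$ so that it simultaneously controls the negative‑part regret term and the entropy term. It is this balancing, together with the estimate $\delta\sum_{e\in E_W}L_e\preceq 2\delta\Delta P_\bot$, that pins down both the precise $3\delta$ loss in $\gamma$ and the stated two‑regime bound on $T$.
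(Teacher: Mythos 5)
Your proposal follows essentially the same route as the paper's proof: invoke the matrix MWU regret bound in the $N$-normalised setting, use the oracle's non-failure guarantee $M^{(t)}\bullet X^{(t)}\geq 0$ to conclude $M(\bar\lambda,\bar w)\succeq -\delta N$ with the stated choice of $\varepsilon$ and $T$, and then read off the three diagonal blocks against appropriately normalised test matrices to obtain $A(\bar\lambda,\bar w)\succeq-\delta\Delta P_\bot$, $B(\bar\lambda,\bar w)\geq-\delta m$, and $C(\bar\lambda,\bar w)\succeq-\delta I$, after which the shift to $(\bar\lambda-3\delta,\bar w-\delta)$ and the bound $\sum_{e\in E_W}L_e=L_W\preceq 2\Delta P_\bot$ recover feasibility for $\PSDP(G,W,k,\gamma-3\delta)$. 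The paper cites Orecchia's Theorem 3.3.3, which already delivers the bound with $\lambda_{\min,N}$, whereas you explicitly perform the conjugation by $E^{-1/2}$ to reduce to the $\Pi$-trace-normalised case; this is a presentational rather than substantive difference, and your accounting of how the $3\delta$ loss arises and how $\varepsilon=\min\{1/2,\delta/2\ell\}$ produces the two-regime bound on $T$ matches the paper exactly.

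One caution on your handling of $w_e\geq 0$. You propose clipping any negative coordinate of $\bar w-\delta$ to zero and assert this only strengthens the matrix constraint and $w_e\leq 1$. That part is true, but clipping \emph{increases} the coordinate relative to $\bar w_e-\delta$, so $\sum_e w_e$ can exceed $\sum_e(\bar w_e-\delta)$, and the budget constraint $k-\sum_e w_e\geq 0$ (which you established only as $k-\sum_e(\bar w_e-\delta)\geq 0$, with zero slack) is not automatically preserved. To be fair, the paper's own proof does not verify $w_e\geq 0$ at all, so you have noticed a genuine loose end in the argument; your patch simply does not fully close it.
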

%

\paragraph*{Approximate computation.} By applying the Johnson-Linderstrauss (JL) dimensionality  reduction to the embedding corresponding to $U_\varepsilon$, we can approximate $X^{(t+1)}$ by $\tilde{X}^{(t+1)}$ while preserving the relevant properties. Specifically, let $\tilde{U}_\varepsilon$ be a randomised approximation to $U_\varepsilon$ from applying the JL Lemma (see \cite{OV11:towards}), and we  compute in nearly-linear time the matrix 
$\tilde{X}^{(t+1)} =\tilde{U}_{\varepsilon} \left(\frac{1}{2\rho}\sum_{s=1}^{t-1}M^{(s)}\right)$ and 
decompose it into $3$ blocks:
$$\tilde{X}^{(t+1)}=\Diag\left(\tilde{Z}^{(t+1)},\tilde{v}^{(t+1)},\Diag\left(\tilde{\beta}^{(t+1)}\right)\right).$$ Moreover, $\tilde{X}^{(t+1)}\bullet L_H$ well approximates $X^{(t+1)}\bullet L_H$ for any graph $H$, which suffices for our oracle. Hence, we assume that the oracle receives $\tilde{X}^{(t+1)}$ as input instead of $X^{(t+1)}$. 

Formally, we need the following lemma which follows directly from Lemma 2.3 in \cite{OV11:towards}.

\begin{lemma}[Lemma 2.3, \cite{OV11:towards}]\label{lem:ov11}
Let $\varepsilon>0$ be a sufficiently small constant. Let $M\in \mathbb{R}^{(n+1+m)\times (n+1+m)}, M\succeq 0$ be a matrix such that $M=\Diag(A,B,C)$, where $A\in \mathbb{R}^{n\times n}, B \in \mathbb{R}$ and $C\in \mathbb{R}^{m\times m}$ is a diagonal matrix. Let $\tilde{X}=\tilde{U}_\varepsilon(M)$, and $X=U_\varepsilon(M)$. Then, the following statements hold:
\begin{enumerate}
\item $\tilde{X}\succeq 0$ and $\tilde{X}\bullet N=1$. Furthermore, $\tilde{X}$ can be decomposed into $3$ blocks such that $\tilde{X}=\Diag(\tilde{Z},\tilde{v},\Diag(\tilde{\beta}))$.
\item The embedding $\left\{\tilde{v}_i\in \mathbb{R}^d: i\in V\right\}$ corresponding to $\tilde{X}$ can be represented in $d=O(\log n)$ dimensions. 
\item The embedding $\left\{\tilde{v}_i\in \mathbb{R}^d: i\in V\right\}$ can be computed in   $\tilde{O}(t_M+n+m)$ time, where   $t_M$ is the running time for performing matrix-vector multiplication by $M$. 
\item For any graph $H=(V,E_H)$, with high probability, we have 
$$\left(1-\frac{1}{64}\right) L_H\bullet Z - \tau \leq L_H\bullet \tilde{Z} \leq \left(1+\frac{1}{64}\right) L_H\bullet Z+\tau$$
where $\tau=O(1/\poly(n))$ and $Z$ is the block matrix in the decomposition of $X$ such that $X=\Diag(Z,v,\Diag(\beta))$. 
\end{enumerate}
\end{lemma}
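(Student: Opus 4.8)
The plan is to derive the lemma from Lemma~2.3 of \cite{OV11:towards}, instantiated with the three-block matrices $E,\Pi,N$ of Equation~(\ref{eqn:defEPIN}); the content is in checking that the block structure is preserved and that every error term stays at the claimed scale. Write $Y\triangleq E^{-1/2}ME^{-1/2}$, which is symmetric (as $E$ is diagonal) and \textsf{PSD} (as $M\succeq 0$), and put $\eta_0\triangleq\ln(1-\varepsilon)<0$, so $(1-\varepsilon)^Y=\exp(\eta_0 Y)$. First I would factor this as $(1-\varepsilon)^Y=R^2$ with $R\triangleq\exp(\tfrac{\eta_0}{2}Y)$ symmetric and $R\preceq I$, which rewrites $U_\varepsilon(M)=QQ^{\rot}/Z_M$ with $Q\triangleq E^{-1/2}R$ and $Z_M\triangleq\Pi\bullet R^2$; the elementary identity $Z_M=\Tr(\Pi R^2)=\Tr(QQ^{\rot}N)=N\bullet QQ^{\rot}$ (using $N=E^{1/2}\Pi E^{1/2}$) then gives $X\bullet N=(N\bullet QQ^{\rot})/Z_M=1$. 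Since $M=\Diag(A,B,C)$ and $E,\Pi,N$ are block-diagonal with matching $n/1/m$ blocks while the $m$-block $C$ of $M$ is diagonal, the matrices $Y,R,Q$ and hence $X$ are block-diagonal with diagonal $m$-block, which yields the decomposition $X=\Diag(Z,v,\Diag(\beta))$ with $Z\succeq 0$. This is item~(1).

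For items~(2) and (3) the plan has two ingredients. First, $R=\exp(\tfrac{\eta_0}{2}Y)$ is never formed exactly but is approximated by a degree-$D$ polynomial $p(Y)$ of $Y$ with $\norm{p(Y)-R}\le 1/\poly(n)$; here $D=\tilde O(1)$ suffices once one has argued that the effective spectral radius $|\eta_0|\cdot\norm{Y}$ is polylogarithmic in the regime in which the lemma is invoked, and since each product $u\mapsto Yu$ costs $t_M+O(n+m)$ time (multiplying by the diagonal $E^{-1/2}$ is free), $p(Y)$ can be applied to a vector in $\tilde O(t_M+n+m)$ time. Second, draw a Johnson--Lindenstrauss matrix $\Phi\in\mathbb{R}^{d\times(n+1+m)}$ with $d=O(\log n)$, put $G\triangleq E^{-1/2}p(Y)\Phi^{\rot}$, and set $\tilde X\triangleq GG^{\rot}/(N\bullet GG^{\rot})$. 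Because $E^{-1/2}p(Y)$ is block-diagonal, $\tilde X$ keeps the three-block form $\Diag(\tilde Z,\tilde v,\Diag(\tilde\beta))$ with diagonal $m$-block, is \textsf{PSD} as an outer product, and satisfies $\tilde X\bullet N=1$ by construction; the embedding vector $\tilde v_i$ for $i\in V$ is the $V$-restriction of the $i$-th row of $G$, so computing the full embedding amounts to forming the $d$ columns $E^{-1/2}p(Y)\phi_j^{\rot}$ of $G$ ($\phi_j$ ranging over the rows of $\Phi$), which takes $\tilde O(t_M+n+m)$ time in total. This gives items~(2) and (3).

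For item~(4): fix a graph $H=(V,E_H)$ and let $\hat b_e\in\mathbb{R}^{n+1+m}$ be the zero-padding of $b_e$, so that $L_H\bullet Z=\tfrac{1}{Z_M}\sum_{e\in E_H}w_e\norm{RE^{-1/2}\hat b_e}^2$ and $L_H\bullet\tilde Z=\tfrac{1}{\tilde Z_M}\sum_{e\in E_H}w_e\norm{\Phi\,p(Y)E^{-1/2}\hat b_e}^2$ with $\tilde Z_M=N\bullet GG^{\rot}$. Taking the hidden constant in $d=O(\log n)$ large enough, a single \textsf{JL} union bound --- over the at most $\binom{n}{2}$ vectors $\{p(Y)E^{-1/2}\hat b_e\}$ and over the columns of the matrix whose Frobenius norm defines $\tilde Z_M$ --- guarantees, with high probability, $\norm{\Phi\,p(Y)E^{-1/2}\hat b_e}^2=(1\pm\tfrac1{128})\norm{p(Y)E^{-1/2}\hat b_e}^2$ for all $e$ and $\tilde Z_M=(1\pm\tfrac1{128})\Tr(\Pi\,p(Y)^2)$. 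Folding the polynomial-approximation error (the replacement $p(Y)\to R$, of norm $1/\poly(n)$) into an additive $\tau=O(1/\poly(n))$ then yields $(1-\tfrac1{64})L_H\bullet Z-\tau\le L_H\bullet\tilde Z\le(1+\tfrac1{64})L_H\bullet Z+\tau$; here one uses $\norm{RE^{-1/2}\hat b_e}\le\norm{E^{-1/2}}\norm{\hat b_e}=O(1)$ (as $R\preceq I$), $\sum_e w_e\le\poly(n)$, and the lower bound $Z_M=\Pi\bullet R^2\ge e^{\eta_0\lambda_{\max}(Y)}\Tr(\Pi)\ge 1/\poly(n)$.

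The step I expect to be the main obstacle is the accuracy/cost analysis of the polynomial approximation of $\exp(\tfrac{\eta_0}{2}Y)$: one must first establish that $|\eta_0|\cdot\norm{Y}$ is polylogarithmic in the regime where the lemma is applied --- recalling that in the algorithm the input $M$ is the scaled partial sum $\tfrac{1}{2\rho}\sum_{s<t}M^{(s)}$ with $-\ell N\preceq M^{(s)}\preceq\rho N$ and $t\le T=O(\rho\log n/(\delta\varepsilon))$, $\varepsilon=\min\{1/2,\delta/2\ell\}$ --- so that a polylog-degree truncation already attains $1/\poly(n)$ accuracy, and then to verify that this error propagates only additively at scale $1/\poly(n)$ (not multiplicatively) through $Q=E^{-1/2}R$, the random projection $\Phi$, and both normalisers, which is exactly where the lower bound $Z_M\ge 1/\poly(n)$ and the boundedness of $\norm{RE^{-1/2}\hat b_e}$ are needed.
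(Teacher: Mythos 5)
The paper does not actually prove this statement; it imports it with the one-line remark ``follows directly from Lemma~2.3 in \cite{OV11:towards},'' and the details of the adaptation to the three-block setting are never spelled out. So there is no internal argument of the paper to compare yours against; what you have written is an attempted reconstruction of what that adaptation would look like.

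Your structural part is on the right track. Factoring $(1-\varepsilon)^Y = R^2$ with $R = \exp(\tfrac{\eta_0}{2}Y)$, writing $U_\varepsilon(M) = QQ^{\rot}/Z_M$ with $Q = E^{-1/2}R$, deriving $X \bullet N = 1$ from $N = E^{1/2}\Pi E^{1/2}$ and cyclicity of trace, and observing that block-diagonality of $M$ (with diagonal $m$-block) propagates through $Y$, $R$, $Q$ to give $X = \Diag(Z,v,\Diag(\beta))$ are all correct and are the natural normal form. The JL reduction on $G = E^{-1/2}p(Y)\Phi^{\rot}$ and the union bound over the $O(n^2)$ vectors $p(Y)E^{-1/2}\hat b_e$ is also the standard route for item~(4).

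The step you flag as ``the main obstacle'' is, however, a genuine gap, not merely a loose end, and you are right to be suspicious. In the regime where this lemma is invoked, $|\eta_0|\cdot\|Y\|$ is \emph{not} polylogarithmic: the aggregate loss $\frac{1}{2\rho}\sum_{s<t}M^{(s)}$ satisfies $-\frac{\ell t}{2\rho}N \preceq M \preceq \frac{t}{2}N$, so $\|Y\| = O(t)$, while $|\eta_0| \approx \varepsilon = \delta/(2\ell)$ and $T = O(\rho\ell\log n/\delta^2)$ with $\delta = \Theta(\gamma)$, giving $|\eta_0|\cdot\|Y\| = \Theta(\log n/\gamma) = \Theta(n^{1/q}\log n)$ at the bottom of the range $\gamma \geq n^{-1/q}$. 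Even with Chebyshev acceleration the approximating polynomial therefore has degree $\tilde\Theta(1/\sqrt\gamma)$, not $\tilde O(1)$, so the per-call running time in item~(3) carries an additional $\poly(1/\gamma)$ factor. This does not threaten the paper's final running times (Theorems~\ref{thm:mainSDPlambda2} and \ref{thm:mainAlgConn} already carry $n^{O(1/q)}$ factors that absorb it), but a self-contained proof cannot simply assert $D = \tilde O(1)$; the dependence on $\|Y\|$ must be made explicit. Two further small points: your bound ``$Z_M = \Pi \bullet R^2 \geq e^{\eta_0\lambda_{\max}(Y)}\Tr(\Pi) \geq 1/\poly(n)$'' is false in this regime, since $e^{\eta_0\lambda_{\max}(Y)} = (1-\varepsilon)^{\Theta(T)} = n^{-\Theta(1/\gamma)}$; the standard fix is to work with the shift $Y \mapsto Y - \lambda_{\max}(Y)I$ (harmless, since $U_\varepsilon$ is invariant under $M\mapsto M+cE$), after which $Z_M \geq 1$. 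Finally, note that the aggregate $\frac{1}{2\rho}\sum_s M^{(s)}$ fed to $U_\varepsilon$ in the MWU loop is not itself \textsf{PSD}, so the hypothesis $M \succeq 0$ in the lemma is only met after the same scalar shift; this is worth making explicit since your derivation $R \preceq I$ uses it.
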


In our algorithm, we only need to compute $L_H\bullet \tilde{Z}$ and all the diagonal entries on $\tilde{v}$ and $\tilde{\beta}$. These quantities can be computed efficiently by the above embedding.

%

\subsection{The oracle} Now we are ready to present the oracle for our SDP $\DSDP(G,W,k,\gamma)$,  and our result is summarised as follows:
\begin{theorem}\label{thm:sdporacle}
On input $\tilde{X}^{(t)}$, there exists an algorithm \textsc{Oracle} that runs in   $\tilde{O}(n+m)$ time and is an $(\ell,\rho)$-oracle for SDP $\DSDP(G,W,k,\gamma)$, where $\ell=1$ and $ \rho=3$. 
\end{theorem}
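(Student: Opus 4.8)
The plan is to construct \textsc{Oracle} explicitly and then verify the three defining properties of an $(\ell,\rho)$-oracle for $\DSDP(G,W,k,\gamma)$, namely feasibility of the returned $(\lambda,w)$ with respect to the first two scalar/inner-product constraints, and the sandwich bound $-\ell N \preceq M(\lambda,w) \preceq \rho N$ with $\ell=1,\rho=3$. First I would set up the condition $\mathcal{C}(\tilde X^{(t)})$ that triggers failure: writing $\tilde X^{(t)} = \Diag(\tilde Z, \tilde v, \Diag(\tilde\beta))$, I would define a rescaling so that the candidate dual point has $\Delta \tilde Z\bullet P_\bot + m\tilde v + \sum_e \tilde\beta_e = 1$ (which holds by $\tilde X^{(t)}\bullet N = 1$), and declare failure exactly when the resulting dual objective $\tilde Z\bullet L_G + kv + \sum_e \beta_e$ is below $\gamma$ after appropriately accounting for the edge constraints $\tilde Z\bullet L_e \le v + \beta_e$. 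By Fact~\ref{fact:infeasibledual}, non-failure must yield a valid $(\lambda,w)$; conversely, if the oracle fails, the rescaled point is feasible for $\DSDP$, certifying infeasibility of $\PSDP$.

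Next I would define the output $(\lambda^{(t)}, w^{(t)})$ when the oracle does not fail. The natural choice is to take $\lambda^{(t)} = \gamma$ (so $V(\lambda,w) = \lambda \ge \gamma$ is immediate), and to set the weights $w_e^{(t)}$ by a two-case rule: for edges $e$ with $\tilde Z\bullet L_e$ large (violating $\tilde Z\bullet L_e \le v + \beta_e$), set $w_e^{(t)}$ to its maximal admissible value (essentially $1$, or a constant), and for the remaining edges set $w_e^{(t)} = 0$. This is the "distinguishing edges" idea flagged in the techniques section. With this choice I would check the second oracle inequality $A(\lambda,w)\bullet Z + B(\lambda,w) v + c(\lambda,w)\cdot\beta \ge 0$: expanding, this is $\big(L_G + \sum_e w_e L_e - \lambda\Delta P_\bot\big)\bullet Z + (k - \sum_e w_e) v + \sum_e (1-w_e)\beta_e \ge 0$, and the failure condition $\mathcal{C}$ is precisely engineered so that its negation guarantees this quantity is nonnegative — using $Z\bullet \Delta P_\bot = $ (a known multiple of) the normalisation, the edge-classification, and $\sum_e w_e \le k$.

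The heart of the argument, and the main obstacle, is the spectral sandwich $-N \preceq M(\lambda^{(t)},w^{(t)}) \preceq 3N$, because $M$ is a $3$-block diagonal matrix and each block must be controlled separately against the corresponding block of $N = \Diag(\Delta P_\bot, m, I_{E_W})$. For the $C$-block, $C(\lambda,w) = \Diag(1-w_e)$ with $w_e\in\{0,1\}$ (or $[0,1]$) forces $0 \preceq C \preceq I_{E_W}$, giving $-I \preceq C \preceq I$ trivially. For the $B$-block, $B = k - \sum_e w_e$ with $0 \le \sum_e w_e \le k$ and $k \le m$ gives $-m \le B \le m$, i.e. $-m \preceq B \preceq 3m$ comfortably. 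The delicate block is $A(\lambda,w) = L_G + \sum_e w_e L_e - \gamma\Delta P_\bot$ against $\Delta P_\bot$: I would upper-bound $L_G + \sum_e w_e L_e \preceq \Delta \cdot (\text{something}) \cdot P_\bot$ using that the combined graph $G + \{e : w_e = 1\}$ has maximum degree bounded by $2\Delta$ (since both $G$ and $W$ have max degree $\le \Delta$), hence $L_{G+F} \preceq 2\Delta \cdot I$, and on the orthogonal complement of $\1$ this reads $L_{G+F} \preceq 2\Delta P_\bot$; then $A \preceq (2 - \gamma)\Delta P_\bot \preceq 2\Delta P_\bot \preceq 3\Delta P_\bot$, while for the lower bound $A \succeq -\gamma\Delta P_\bot \succeq -\Delta P_\bot$ since $\gamma \le 1$ and $L_G + \sum w_e L_e \succeq 0$. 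Assembling the three blocks yields $-N \preceq M \preceq 3N$, so $\ell = 1$, $\rho = 3$. Finally, the running time: each step — computing the $\tilde v_i, \tilde\beta_e$, evaluating $\tilde Z \bullet L_e$ for all $e\in E_W$ via the $O(\log n)$-dimensional embedding of Lemma~\ref{lem:ov11}, classifying edges, and forming $(\lambda^{(t)}, w^{(t)})$ — is linear or nearly-linear in $n+m$, giving the claimed $\tilde O(n+m)$ bound. The one subtlety I would be careful about is the effect of the JL approximation error $\tau = O(1/\poly(n))$ and the $1/64$ multiplicative slack in Lemma~\ref{lem:ov11} on the threshold in $\mathcal{C}$: the condition should be stated with enough margin that both the feasibility guarantee and the infeasibility certificate survive these perturbations, which only costs constants in $\rho$.
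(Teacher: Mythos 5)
Your proposal correctly identifies the skeleton the paper uses — a failure condition tied to certifying dual feasibility (Lemma~\ref{lem:goodsolu}), output $\lambda=\gamma$, classification of edges by whether $Z\bullet L_e \leq v+\beta_e$, and a three-block width analysis for $M(\lambda,w)$ against $N=\Diag(\Delta P_\bot,m,I_{E_W})$. However, there is a missing branch and an arithmetic error that together mean your argument does not establish $\rho=3$. The paper's \textsc{Oracle} has \emph{two} distinct non-failure outputs, not one: when $T_{\mathrm{tol}} > \gamma m - \gamma\sum_e Z\bullet L_e$ it returns the \emph{uniform} weights $w_e=\gamma$ for all $e\in E_W$, and only in the complementary case does it use the $\{0,1\}$-classification $w_e=\mathbf{1}[e\in B]$. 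The uniform branch is precisely what yields the tight $A$-block bound: $\sum_e w_e L_e = \gamma L_W \preceq 2\gamma\Delta P_\bot$, so $A(\gamma,w)\preceq (2+2\gamma-\gamma)\Delta P_\bot=(2+\gamma)\Delta P_\bot\preceq 3\Delta P_\bot$ using $\gamma\leq 1$. By always outputting $w_e\in\{0,1\}$, the best you can say is $L_G+\sum_{e}w_e L_e\preceq L_G+L_W$. Compounding this, your step ``$\Delta(G+F)\leq 2\Delta$ hence $L_{G+F}\preceq 2\Delta I$'' drops a factor of two: a graph $H$ of maximum degree $\Delta_H$ satisfies $L_H\preceq 2\Delta_H I$, so $L_{G+F}\preceq 4\Delta I$. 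Your single-branch design therefore gives $A\preceq (4-\gamma)\Delta P_\bot$, i.e.\ $\rho=4$ rather than $\rho=3$ (unless one assumes $\gamma\geq 1$, which the application does not permit).

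These gaps are not fatal downstream — the MWU iteration count $T=O(\rho\ell\log n/\delta^2)$ absorbs a constant change in $\rho$ into the $\tilde O$ — but the theorem as stated asserts $\rho=3$, which your argument does not deliver. A smaller point of logic: you cite Fact~\ref{fact:infeasibledual} for ``non-failure must yield a valid $(\lambda,w)$,'' but that is the \emph{definition} of an $(\ell,\rho)$-oracle; Fact~\ref{fact:infeasibledual} is the converse inference (non-failure certifies the input is infeasible for $\DSDP$). And ``if the oracle fails, the rescaled point is feasible for $\DSDP$'' is exactly the content of Lemma~\ref{lem:goodsolu}, which requires its own rescaling argument (setting $Z'=Z/T$, $v'=v/T$, and redefining $\beta'_e$ on $B$ to saturate the edge constraint) that you gesture at but do not supply.
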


For the simplicity of presentation, we abuse notation and  use $X=\Diag(Z,v,\beta)$ to denote the input to the oracle, although it should be clear that the input is the approximate embedding $\tilde{X}=\Diag(\tilde{Z},\tilde{v},\Diag(\tilde{\beta}))$ of $X$. Our oracle is described in Algorithm \ref{alg:oracle_sdp}. 
\begin{algorithm}[!h]
	\caption{\textsc{Oracle} for SDP $\DSDP(G,W,k,\gamma)$}\label{alg:oracle_sdp}
	\begin{algorithmic}[1]
		\Require Candidate solution $\langle Z,v, \beta\rangle$ with ${\Delta}\cdot Z\bullet P_\bot + m\cdot v +\sum_{e\in E_W}\beta_e=1$, target value $\gamma$
\State{Let $B:=\{e: v+ \beta_e<L_e\bullet Z \}$, $\Gamma:=\sum_{e\in B} (L_e\bullet Z - v- \beta_e)$, and $T:=Z\bullet \Delta P_{\bot}$.}
\State{Let $T_{\textrm{tol}}:= L_G\bullet Z + k v + \sum_{e\in E_W} \beta_e$.}
\If{$\Gamma\leq T\gamma-T_{\textrm{tol}}$}
		\State{Output ``fail''.}
		\Comment{In this case, $\langle Z,v, \beta\rangle$ is ``good'' enough}
\ElsIf{$T_{\textrm{tol}}>\gamma m- {\gamma}\sum_{e\in E_W} Z\bullet L_e$}
		\State{\Return $w_e={\gamma}$, and $\lambda=\gamma$.} 
\Else     
		\State{\Return $w_e=1$ for $e\in B$, $w_e=0$ for $e\in E_W\setminus B$ and $\lambda=\gamma$}    
		\EndIf
	\end{algorithmic}
\end{algorithm}

In order to prove Theorem~\ref{thm:sdporacle}, we give two lemmas in the following. 
We first show that if the \textsc{Oracle} fails, then we can find a dual feasible solution for $\DSDP(G,W,k,\gamma)$. 

\begin{lemma}\label{lem:goodsolu}
Let $\langle Z,v, \beta\rangle$ be a candidate solution. 
Suppose that 
for 
\[
B\triangleq\{e:  v+\beta_e-L_e\bullet Z  < 0 \}, \quad T\triangleq Z\bullet \Delta P_\bot, \quad T_{\textrm{tol}}\triangleq L_G \bullet Z + k v + \sum_{e\in E_W} \beta_e,\] 
it holds that 
\[
\Gamma\triangleq\sum_{e\in B}  (L_e\bullet Z - v- \beta_e)\leq T\gamma-T_{\textrm{tol}}.
\] 
Moreover, by setting  $Z'=\frac{Z}{T}$, $v'=\frac{v}{T}$, and $\beta_e'= \frac{\beta_e}{T}$ if $e\in E_W\setminus B$ and $\beta_e'=\frac{L_e\bullet Z- v}{T}$ if $e\in B$, we have that  $\langle Z',v', \beta'\rangle$ is a dual feasible for $\DSDP(G,W,k,\gamma)$.%
\end{lemma}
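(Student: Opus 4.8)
The plan is to verify directly that the scaled tuple $\langle Z',v',\beta'\rangle$ satisfies each of the six constraints of $\DSDP(G,W,k,\gamma)$, using the hypothesis $\Gamma\leq T\gamma - T_{\textrm{tol}}$ only for the strict objective inequality. First I would record the normalisation: since the input satisfies $\Diag(Z,v,\beta)\bullet N=1$, which unpacks to $\Delta\cdot Z\bullet P_\bot + m v + \sum_e \beta_e = 1$, and $T = Z\bullet\Delta P_\bot$, dividing through by $T$ shows $Z'\bullet\Delta P_\bot = 1$ up to the contributions of $v',\beta'$ — actually more carefully, the constraint $\mat{Z}\bullet\Delta\mat{P}_\bot = 1$ should follow once I check that $T$ is exactly the right normalising factor; I would confirm $Z'\bullet\Delta P_\bot = (Z\bullet\Delta P_\bot)/T = T/T = 1$, so the equality constraint holds by construction.

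Next I would check the inequality constraints $\mat{Z}\bullet\mat{L}_e \leq v + \beta_e$ for all $e\in E_W$. For $e\in E_W\setminus B$, by the very definition of $B$ we have $v+\beta_e - L_e\bullet Z \geq 0$, and dividing by $T>0$ preserves this, so $L_e\bullet Z' \leq v' + \beta_e'$. For $e\in B$, the choice $\beta_e' = (L_e\bullet Z - v)/T$ is engineered precisely so that $v' + \beta_e' = (v + L_e\bullet Z - v)/T = (L_e\bullet Z)/T = L_e\bullet Z'$, giving equality. I would also need $\beta_e'\geq 0$: for $e\in E_W\setminus B$ this is $\beta_e/T\geq 0$ since $\beta_e\geq 0$ and $T>0$; for $e\in B$ this needs $L_e\bullet Z \geq v$, which I expect follows because $e\in B$ means $v+\beta_e < L_e\bullet Z$ and $\beta_e\geq 0$, hence $v \leq v+\beta_e < L_e\bullet Z$. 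Positivity of $v' = v/T$ follows from $v\geq 0$ and $T>0$ (one should note $T>0$ holds because $Z\succeq 0$ and $Z\neq 0$ on the relevant subspace; I would justify $T > 0$ from the normalisation and the structure of the candidate solution). Finally $Z'\succeq 0$ is immediate from $Z\succeq 0$.

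The one constraint requiring the hypothesis is the strict objective inequality $\mat{Z}\bullet\mat{L}_G + kv + \sum_{e\in E_W}\beta_e < \gamma$, now written for the primed variables. I would compute $Z'\bullet L_G + kv' + \sum_e \beta_e'$ by splitting the sum over $E_W\setminus B$ and $B$:
\[
\sum_{e\in E_W}\beta_e' = \frac{1}{T}\left(\sum_{e\in E_W\setminus B}\beta_e + \sum_{e\in B}(L_e\bullet Z - v)\right) = \frac{1}{T}\left(\sum_{e\in E_W}\beta_e + \sum_{e\in B}(L_e\bullet Z - v - \beta_e)\right) = \frac{1}{T}\left(\sum_{e\in E_W}\beta_e + \Gamma\right).
\]
Therefore
\[
Z'\bullet L_G + kv' + \sum_{e\in E_W}\beta_e' = \frac{1}{T}\left(L_G\bullet Z + kv + \sum_{e\in E_W}\beta_e + \Gamma\right) = \frac{T_{\textrm{tol}} + \Gamma}{T} \leq \frac{T_{\textrm{tol}} + (T\gamma - T_{\textrm{tol}})}{T} = \gamma,
\]
using the hypothesis $\Gamma \leq T\gamma - T_{\textrm{tol}}$ in the last step. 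I would then remark that the inequality is strict in the actual oracle because the failure condition is checked with some slack (or, failing that, that a non-strict $\leq\gamma$ already contradicts feasibility of the primal at parameter $\gamma$ via weak duality — this is the intended use downstream, so I would match whichever convention the surrounding text adopts). The main obstacle, such as it is, is purely bookkeeping: making sure the split of $\sum\beta_e'$ is done correctly and that the edge cases $T>0$ and $\beta_e'\geq 0$ for $e\in B$ are cleanly justified from the definition of $B$; there is no conceptual difficulty, since the primed variables were reverse-engineered from exactly these constraints.
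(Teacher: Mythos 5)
Your proposal is correct and follows essentially the same approach as the paper's proof: normalisation by $T$ to hit the equality constraint, direct verification of $Z'\bullet L_e\leq v'+\beta_e'$ by case split on $B$, and the split-sum computation $Z'\bullet L_G+kv'+\sum_e\beta_e'=(T_{\textrm{tol}}+\Gamma)/T\leq\gamma$. You are in fact slightly more careful than the paper, which omits the explicit checks that $\beta_e'\geq 0$ for $e\in B$, that $T>0$, and — a genuine small wrinkle you rightly flag — that the argument only yields a non-strict $\leq\gamma$ whereas $\DSDP$ is written with a strict inequality.
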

\begin{proof}
By definition, it holds that  $Z'\bullet \Delta P_\bot=\frac{Z\bullet \Delta P_\bot}{T}=1$, as $T=Z\bullet \Delta P_\bot$. Moreover, we have  \[Z'\bullet L_e = \frac{Z\bullet L_e}{T}\leq \frac{v}{T}+\frac{\beta_e}{T} = v'+\beta_e'\] for any $e\in E_W\setminus B$ and \[Z'\bullet L_e = \frac{Z\bullet L_e}{T}= \frac{v}{T} + \frac{L_e\bullet Z -v}{T}= v'+\beta_e'\] for $e\in B$. 
We also note that
\begin{eqnarray*}
&&Z'\bullet L_G+kv'+\sum_{e\in E_W}\beta_e'\\
&=&\frac{Z}{T}\bullet L_G+k\frac{v}{T}+\sum_{e\in E_W\setminus B}\frac{\beta_e}{T} + \sum_{e\in B}\frac{L_e\bullet Z-v}{T}\\
&=&\frac{1}{T}\left(Z\bullet L_G+kv+\sum_{e\in E_W\setminus B}\beta_e + \sum_{e\in B} (\beta_e + L_e\bullet Z- v -\beta_e)\right)\\
&=&\frac{1}{T}\left(  T_{\textrm{tol}}+ \sum_{e\in B} (L_e\bullet Z- v -\beta_e)\right) \\
&\leq& \frac{1}{T}(T_{\textrm{tol}} + T\gamma-T_{\textrm{tol}})\\
&=& 
\gamma,
\end{eqnarray*}
where the last inequality follows by our assumption that $\sum_{e\in B}(L_e\bullet Z-v-\beta_e)\leq T \gamma-T_{\textrm{tol}}$.
%
%
%
%
\end{proof}

We then show that if \textsc{Oracle} does not fail, then it returns $(\lambda,w)$ that satisfies the properties of $(\ell,\rho)$-oracle for $\DSDP(G,W,k,\gamma)$ for appropriate $\ell, \rho$. 
\begin{lemma}\label{lemma:width-oracle}
When \textsc{Oracle} described in the algorithm does not fail, it returns a vector $w$ and value $\lambda$ such that $V(\lambda,w)\geq \gamma$, and for the matrix 
$M(\lambda,w)=\Diag(A(\lambda,w), B(\lambda,w),C(\lambda,w))$, it holds that \[A(\lambda,w)\bullet Z+B(\lambda,w)\cdot v+C(\lambda,w)\cdot \beta\geq 0.\] Moreover, it holds that  $-N\preceq M(\lambda,w)\preceq 3N$.
\end{lemma}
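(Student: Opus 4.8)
The plan is to go through the two branches in which \textsc{Oracle} (Algorithm~\ref{alg:oracle_sdp}) does not fail — the ``else if'' branch, which returns $w_e=\gamma$ for every $e\in E_W$, and the final ``else'' branch, which returns $w_e=\mathbf{1}[e\in B]$ — and to verify the three asserted properties in each. In both branches the returned value is $\lambda=\gamma$, so $V(\lambda,w)=\lambda\geq\gamma$ is immediate and the content lies in the other two inequalities. Throughout I would carry along the quantities $B=\{e:v+\beta_e<L_e\bullet Z\}$, $\Gamma=\sum_{e\in B}(L_e\bullet Z-v-\beta_e)\geq 0$, $T=\Delta P_\bot\bullet Z$ and $T_{\textrm{tol}}=L_G\bullet Z+kv+\sum_e\beta_e$ from the algorithm, using that non-failure is exactly the statement $\Gamma>T\gamma-T_{\textrm{tol}}$. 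I would also use the input normalisation $X\bullet N=1$, which unpacks to $T+mv+\sum_e\beta_e=1$, and the standing assumptions $0<\gamma\leq 1$ (the algebraic connectivity of a non-complete $\Delta$-bounded graph is at most $\Delta$) and $1\leq k\leq m$ (any feasible $F\subseteq E_W$ with $|F|=k$ forces $k\leq|E_W|=m$).

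For the middle inequality $A(\lambda,w)\bullet Z+B(\lambda,w)\cdot v+c(\lambda,w)\cdot\beta\geq 0$, the first step is to substitute $\lambda=\gamma$ into the definitions of $A,B,c$ and collect terms, which should yield the clean identity
\[
A(\gamma,w)\bullet Z+B(\gamma,w)\cdot v+c(\gamma,w)\cdot\beta \;=\; T_{\textrm{tol}}-\gamma T+\sum_{e\in E_W}w_e\bigl(L_e\bullet Z-v-\beta_e\bigr).
\]
In the ``else'' branch the sum is exactly $\Gamma$, so the right-hand side is $T_{\textrm{tol}}-\gamma T+\Gamma>0$ by the non-failure condition. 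In the ``else if'' branch the sum is $\gamma\bigl(\sum_e L_e\bullet Z-mv-\sum_e\beta_e\bigr)$; substituting $T+mv+\sum_e\beta_e=1$ should collapse the whole expression to $T_{\textrm{tol}}+\gamma\sum_e L_e\bullet Z-\gamma$, which is positive precisely because the ``else if'' test $T_{\textrm{tol}}>\gamma m-\gamma\sum_e Z\bullet L_e$ fired and $m\geq 1$.

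For the width bound $-N\preceq M(\lambda,w)\preceq 3N$, the key observation is that $M(\lambda,w)=\Diag(A(\gamma,w),B(\gamma,w),C(\gamma,w))$ and $N=\Diag(\Delta P_\bot,m,I_{E_W})$ are block-diagonal with matching block structure, so the bound decouples into three one-block inequalities. For the vertex block I would use $A(\gamma,w)=L_G+\sum_e w_eL_e-\gamma\Delta P_\bot\succeq-\gamma\Delta P_\bot\succeq-\Delta P_\bot$ (since $L_G,L_e\succeq0$ and $\gamma\leq 1$), together with $A(\gamma,w)\preceq L_G+\sum_e w_eL_e\preceq L_{G+W}\preceq 3\Delta P_\bot$ (using $0\leq w_e\leq 1$ and the maximum-degree bound, the constant $3=\rho$ leaving room). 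For the scalar block, $\sum_e w_e\in[0,m]$ in both branches, so $B(\gamma,w)=k-\sum_e w_e\in[k-m,k]\subseteq[-m,3m]$ via $1\leq k\leq m$ — here the choice of $m$ as the middle scaling of $N$ is exactly what makes this automatic. For the edge block, $C(\gamma,w)$ has diagonal entries $1-w_e\in[0,1]$ in both branches (each is $1-\gamma$ in one branch and in $\{0,1\}$ in the other), so $0\preceq C(\gamma,w)\preceq I_{E_W}$. Reassembling the blocks gives the claim.

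The step I expect to be the real work is the ``else if'' case of the middle inequality: the ``else'' case drops straight out of the non-failure condition, but in the ``else if'' case the threshold in the algorithm has been chosen so that, after plugging in the normalisation $X\bullet N=1$, the left-hand side telescopes to precisely the quantity the test controls, and making that cancellation line up correctly is the one genuinely delicate computation. The width bound, by contrast, is essentially bookkeeping once one exploits the block-diagonal structure, the only minor point being to pin down the constant in $\|L_{G+W}\|\leq c\Delta$ from the degree assumption, which the slack $\rho=3$ absorbs.
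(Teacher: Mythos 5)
Your proof follows the same route as the paper's: a case analysis over the two non-fail branches, the identity
\[
A(\gamma,w)\bullet Z + B(\gamma,w)\cdot v + c(\gamma,w)\cdot\beta = T_{\textrm{tol}} - \gamma T + \sum_{e\in E_W} w_e\bigl(L_e\bullet Z - v - \beta_e\bigr)
\]
for the middle inequality, and a block-by-block verification of the width bound using the shared block-diagonal structure of $M(\lambda,w)$ and $N$. In fact your handling of the ``else if'' branch of the middle inequality is the more careful one: you collapse the sum via the normalisation $T + mv + \sum_e\beta_e = 1$ to $T_{\textrm{tol}} + \gamma\sum_e Z\bullet L_e - \gamma$ and then invoke $m\geq 1$ to compare with the test threshold, whereas the paper's corresponding line reads off $\sum_e(v+\beta_e) + \Delta P_\bot\bullet Z$ as $m$ rather than as $1$; the paper's conclusion still holds only because $m\geq 1$ makes your tighter value dominate, and your version makes this explicit.

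The one step that does not hold as written is $L_{G+W}\preceq 3\Delta P_\bot$. Each of $G$ and $W$ has maximum (weighted) degree at most $\Delta$, so $G+W$ can have degree up to $2\Delta$, and the spectral norm of a Laplacian is bounded by twice the maximum degree, with equality for bipartite regular graphs; so the generic bound is $L_{G+W}\preceq 4\Delta P_\bot$, and the claimed slack in $\rho=3$ actually runs the wrong way for your estimate. The paper's Case~1 does get $A(\gamma,w)\preceq 3\Delta P_\bot$ correctly because there $w_e=\gamma$ for every $e$, so $\sum_e w_e L_e = \gamma L_W \preceq 2\gamma\Delta P_\bot$ and hence $L_G + \gamma L_W - \gamma\Delta P_\bot \preceq (2+\gamma)\Delta P_\bot \preceq 3\Delta P_\bot$; your uniform bound $\sum_e w_e L_e\preceq L_W$ discards precisely the factor $\gamma$ that makes the constant $3$ work in that branch. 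The paper's Case~2 has a parallel shortfall, so the cleanest fix — for both — is to take $\rho=4$, which only shifts constants in the iteration count and changes nothing downstream.
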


\begin{proof}
When \textsc{Oracle} does not fail, it returns a vector $w$ and a value $\lambda$. The proof is based on case distinction. 

\textbf{Case 1:} Consider the case that $$T_{\textrm{tol}}=L_G \bullet Z + k v + \sum_{e\in E_W} \beta_e>{\gamma} m- {\gamma}\sum_{e\in E_W} Z\bullet L_e.$$ Then, we have  by the algorithm description that $w_e={\gamma}$
and $\lambda=\gamma$. Then $V(\lambda,w)=\lambda= 
\gamma$. 
Recall that $\Delta\cdot Z\bullet P_\bot + m\cdot v +{\sum_{e\in E_W}\beta_e}=1$.  
Now we have 
\begin{eqnarray*}
&&A(\lambda,w)\bullet Z+B(\lambda,w)v+C(\lambda,w)\cdot \beta \\
&=& \left(L_G+\sum_{e\in E_W}w_e L_e-\lambda\Delta P_\bot\right) \bullet Z + \left(k-\sum_{e\in E_W}w_e\right)v + \sum_{e\in E_W} (1-w_e)\cdot \beta_e \\
&= &Z\bullet  L_G+kv+\sum_{e \in E_W}\beta_e - \sum_{e \in E_W} w_e\cdot (v+\beta_e-Z\bullet L_e)-\lambda \Delta P_\bot \bullet Z\\
&=&T_{\textrm{tol}} - {\gamma}\left(\sum_{e\in E_W}(v+\beta_e) +\Delta\cdot P_\bot \bullet Z\right) +{\gamma}\sum_{e\in E_W} Z\bullet L_e \\
&=& T_{\textrm{tol}} -{\gamma} m+ {\gamma}\sum_{e\in E_W} Z\bullet L_e>0,
\end{eqnarray*}
by the assumption on $T_{\textrm{tol}}$. 
Since both $G=(V,E)$ and $(V,E_W)$ have the maximum degree at most $\Delta$, we have that
$$A(\lambda,w)=L_G+\sum_{e\in E_W}w_e L_e -\lambda \Delta P_\bot \\
\preceq 2\Delta P_\bot + 2\gamma \Delta P_{\bot} - \gamma \Delta P_\bot \preceq 3\Delta P_\bot ,$$
and 
$$ -\gamma \Delta P_\bot = -\lambda\Delta P_\bot \preceq L_G+\sum_{e\in E_W}w_e L_e -\lambda P_\bot =A(\lambda,w).$$
Furthermore, $-\gamma m< B(\lambda,w)=k-\gamma m= k-\sum_{e\in E_W}w_e\leq k\leq m$, and $0 < 1-w_e \leq 1 $ for any $e\in E_W$. Thus, $0\preceq C(\lambda,w)\preceq {I_{E_W}}$. Therefore, it holds that $-\gamma N \preceq M(\lambda,w)\preceq 3 N$.


\textbf{Case 2:} Consider the case that $$T_{\textrm{tol}}=L_G \bullet Z + k v + \sum_{e\in E_W} \beta_e\leq {\gamma} m- {\gamma}\sum_{e\in E_W} Z\bullet L_e,$$ and $\Gamma =\sum_{e\in B} (L_e\cdot Z - v- \beta_e) >T\gamma-T_{\textrm{tol}}$, where $B\triangleq\{e: v+ \beta_e < L_e\bullet Z \}$, and $T\triangleq Z\bullet \Delta P_\bot$. Then, by our algorithm, it holds that   $w_e=1$ for $e\in B$ and $w_e=0$ for $e\in E_W\setminus B$, and $\lambda=\gamma$. Then $V(\lambda,w)=\lambda= 
\gamma$.

Now we have that
\begin{eqnarray*}
&&A(\lambda,w)\bullet Z+B(\lambda,w)v+C(\lambda,w)\cdot \beta\\
&= &Z\bullet L_G+kv+\sum_{e \in E_W}\beta_e - \sum_{e \in E_W} w_e\cdot (v+\beta_e-Z\bullet L_e)-\lambda \Delta P_\bot \bullet Z\\
	&=&T_{\textrm{tol}}+ \sum_{e\in B}\left(Z\bullet L_e - v -\beta_e\right) -\gamma T\\
	&= & T_{\textrm{tol}}+\Gamma-\gamma T\\
	&>&T_{\textrm{tol}} + T\gamma  - T_{\textrm{tol}} -\gamma T > 0,
\end{eqnarray*}
where the last inequality follows from the fact that $\Gamma > T\gamma- T_{\textrm{tol}}$.
Furthermore, 
$$A(\lambda, {w})=\mat{L}_G +\sum_{e\in E_W} w_e \mat{L}_e - \lambda \mat{P}_{\bot}\preceq 3\Delta P_\bot, \quad A(\lambda, {w})\succeq -\gamma P_\bot.$$ 
In addition, we have $$-m\leq -\sum_{e\in E_W}w_e\leq {B(\lambda,w)}=k-\sum_{e\in E_W}w_e\leq k\leq m,$$ $\beta_e=1-w_e=1$ if $w\in E_W\setminus B$ and $\beta_e=1-w_e=0$ if $w\in B$. Therefore, it holds that $- N\preceq M(\lambda,w)\preceq 3 N$.
\end{proof}

Finally, we note that Theorem \ref{thm:sdporacle} will follow from the above two lemmas.


%

\subsection{Proof of Theorem \ref{thm:mainSDPlambda2}}
\begin{proof}[Proof of Theorem \ref{thm:mainSDPlambda2}]
Let $\delta'>0$ be any constant. We specify $\delta=\frac{\delta'\gamma}{3}$ in our MWU algorithm, which is  described in the previous subsections. 
 We set $\rho=3$ and $\ell=1$, and let
\[
T\triangleq O\left(\frac{\rho\ell\log n}{\delta^2}\right)=O\left(\frac{\log n}{(\delta')^2\gamma^2}\right)=O\left(\frac{\log n}{\gamma^2}\right).
\]
In the MWU algorithm, if the \textsc{Oracle} fails in the $t$-th iteration for some $1\leq t\leq T$, then the corresponding embedding $\tilde{X}^{(t)}=\Diag(\tilde{Z}^{(t)},\tilde{v}^{(t)},\Diag(\beta^{(t)}))$ provides a good enough solution:  the precondition of Lemma~\ref{lem:goodsolu} is satisfied, which further implies that $\tilde{X}^{(t)}$ can be turned into a dual feasible solution with objective at most $\gamma$, i.e., we find a solution to $\DSDP(G,W,k,\gamma)$. Therefore, the primal SDP $\PSDP(G,W,k,\gamma)$ is infeasible.

Otherwise, the \textsc{Oracle} does not fail for $T$ iterations, and by Theorem~\ref{thm:sdp_MWU} and Lemma~\ref{lemma:width-oracle} we know that for 
\[\bar{\lambda}\triangleq \frac{1}{T}\sum_{t=1}^T\lambda^{(t)}, \bar{w}\triangleq \frac{1}{T}\sum_{t=1}^{T}w^{(t)},\] $(\bar{\lambda}-{3\delta }, \qquad\qquad \bar{w}-{\delta})$ is a feasible solution for $\PSDP(G,W,k,\gamma-{3\delta })=\PSDP(G,W,k,\gamma-\delta'\gamma)$. 

For the running time, by Lemma \ref{lem:ov11}, 
  in each iteration  the \textsc{Oracle} can be implemented in $\tilde{O}(n+m)$ time and the approximation embedding can be found in $\tilde{O}(t_M+n+m)=\tilde{O}(n+m)$ time, as $t_M=\tilde{O}(|E(G)|+m+n)=\tilde{O}(n+m)$. Thus, in $\tilde{O}((n+m)/\gamma^2)$ time we either find a solution to our SDP with objective value at least $(1-\delta')\gamma$, for any constant $\gamma'>0$ or we certify that there is $\PSDP(G,W,k,\gamma)$ is infeasible (in case the \textsc{Oracle} fails.)  
\end{proof}

\section{Subgraph sparsificaion \label{sec:alg}}


 Now we give an overview of  our efficient algorithm for constructing subgraph sparsifiers, and discuss its connection to spectral sparsifiers. 
 Recall that,  for any $k\in\mathbb{N}$, parameter $\kappa\geq 1$, and two weighted graphs  $G=(V, E)$ and
  $W=(V, E_W)$, the subgraph sparsification problem  is to find a
  set $F\subseteq E_W$ of $|F|=O(k)$ edges with weights $\{w_e\}_{e\in F}$, such that the resulting graph $H=(V, E+F)$ is a $\kappa$-approximation of $G+W$, i.e.,  
  \begin{equation}\label{eq:kapt}
     L_{G+W}  \preceq    L_{G}  +\sum_{e\in F} w_e b_e b_e^{\rot}   \preceq \kappa\cdot    L_{G+W}.   
  \end{equation}
To  construct the required edge set $F$, we apply the standard reduction for constructing graph sparsifiers by setting $v_e\triangleq L_{G+W}^{\dagger/2} b_e$ for every $e\in E_W$, and  (\ref{eq:kapt})
is equivalent to  
\[
    I_{\mathsf{im}(L_{G+W})}\preceq L^{\dagger/2}_{G+W} L_G L^{\dagger/2}_{G+W} 
    + \sum_{e\in F} w_e v_e v_e^{\rot} \preceq \kappa\cdot I_{\mathsf{im}(L_{G+W})},
\]
where $I_{\mathsf{im}(L_{G+W})}$ is the identity on $\mathsf{im}(L_{G+W})$. Our main result is summarised as follows:


\begin{theorem}\label{thm:main-formal}
Let $\varepsilon$ and $q$ be arbitrary constants such that
$\varepsilon\leq 1/20$ and $q\geq 10$.
Then, there is a randomised algorithm such that, for any two graphs $G=(V,E)$ and $W=(V,E_W)$ defined on the same vertex set as input, by defining $X=\left(L^{\dagger/2}_{G+W} L_G L^{\dagger/2}_{G+W}\right)\Big|_{\mathrm{Im}(L_{G+W})}$ and $\overline{M}\triangleq \left(\sum_{i=1}^m v_i v_i^{\rot}\right)|_{\mathrm{Im}(L_{G+W})} $  where every $v_i$ is of the form $L_{G+W}^{\dagger/2} b_e$ for some edge $e\in E_W$,   the algorithm outputs a set of non-negative coefficients $\{c_i\}_{i=1}^m$ with
$|\{ c_i\ |\ c_i\neq 0\} |=K$ for some $K=O \left(qk/\varepsilon^2\right)$
such that  it holds for some constant $C$ that
\[
C \cdot(1-O(\varepsilon))\cdot \min\{ 1,  K/T\}\cdot \lambda_{k+1}(X)\cdot I\preceq  
X+ \sum_{i=1}^m c_i  v_iv_i^{\rot} 
\preceq  (1 + O(\varepsilon))\cdot I,
\]
where $T\triangleq \left\lceil \mathrm{tr}\left(\overline{M}\right)\right\rceil$. Moreover, if we assume that   every $v_i$ is associated with some cost denoted by $cost_i$ such that  $\sum_{i=1}^m cost_i=1$, then with constant probability the coefficients $\{c_i\}_{i=1}^m$ returned by the algorithm satisfy  
$\sum_{i=1}^m c_i\cdot cost_i \leq O(1/\varepsilon^2)\cdot  \min\{ 1, k/T\}$. The algorithm runs in time
\[
    \widetilde{O} \lp \min \left\{ n^{\omega},  \frac{mk + nk^2}{\sqrt{\lambda_{k+1}(X)}}\right\} + q\cdot n^{O(1/q)}
    \lp \frac{mn^{2/q}}{\varepsilon^{2+2/q}} + \min \left\{ n^{\omega}, mk + nk^2 + k^{\omega}\right\} \rp \Big/\varepsilon^5 \rp.
\]
\end{theorem}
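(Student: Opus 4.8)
The plan is to adapt the randomised BSS framework of \cite{LS15:linearsparsifier} to the subgraph sparsification setting, maintaining a matrix that starts from $X$ and into which vectors $v_iv_i^\rot$ are added in batches over a sequence of iterations. The central difference from ordinary spectral sparsification is that we must track \emph{two} subspaces simultaneously, as in \cite{KMST10:subsparsification}: the fixed top-$k$ eigenspace $S$ of $X$ (equivalently the space where $X$ fails to be bounded below by a positive multiple of the identity), and the variable space associated with the current matrix. Accordingly, we maintain two upper barriers and two lower barriers $u_j,\ell_j, (\text{and their } S\text{-restricted counterparts})$, together with the associated potential functions $\uPot{u_j}{A_j}$ and $\lPot{\ell_j}{A_j}$; the lower potential is computed with respect to the restriction to $S$ so that it only needs to control a $k$-dimensional subspace, which is why the final ellipsoid may remain far from a sphere while the lower bound $\min\{1,K/T\}\cdot\lambda_{k+1}(X)$ still emerges from the final barrier values. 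Lemmas~\ref{lem:pseud-inv SM}, \ref{lem:proj perm}, \ref{cor:trace ineq proj}, \ref{lem: majorisation} and \ref{lem:majorisation eigenvalues} are exactly the tools needed to relate traces over these restricted subspaces to traces over the full space after each rank-increase.

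The iteration structure is: in each step $j$, sample each candidate vector $v_i$ independently with probability proportional to its relative leverage score with respect to both barriers (using the resolvent $\uRes{u_j}{A_j}$ for the upper side and the pseudoinverse resolvent on $S$ for the lower side), rescale, and add the sampled batch $\Delta_j=\sum_i c_i v_iv_i^\rot$ to $A_j$; then advance $u_j\to u_{j+1}$ and $\ell_j\to \ell_{j+1}$ by carefully chosen increments. The key quantitative claim, proved by a refined matrix-Chernoff / matrix-martingale argument (in the spirit of \cite{LS15:linearsparsifier} but using the \emph{intrinsic dimension} $\mathrm{intdim}(\cdot)$ of the added batch rather than its ambient dimension), is that with the right step sizes both $\uPot{u_{j+1}}{A_{j+1}} \le \uPot{u_j}{A_j}$ and $\lPot{\ell_{j+1}}{A_{j+1}} \le \lPot{\ell_j}{A_j}$ hold in expectation (hence with good probability by a union/Freedman bound), while the relative effective resistances that govern the sampling probabilities shrink geometrically so that the total number of distinct sampled vectors is $O(qk/\varepsilon^2)$ and the total sampled cost is $O(1/\varepsilon^2)\cdot\min\{1,k/T\}$. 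The Araki--Lieb--Thirring inequality (Lemma~\ref{lem:ALT_ineq}) and the Sherman--Morrison / pseudoinverse Sherman--Morrison formulas (Lemmas~\ref{lem:woodbury}, \ref{lem:pseud-inv SM}) are the algebraic engine for showing the single-batch potential does not increase; the main obstacle is pushing these through when the rank of the batch is larger than one and the two potentials see subspaces of very different dimension — this is precisely where bounding $\mathrm{intdim}(\Delta_j)$ and invoking Lemma~\ref{lem:proj perm} and Lemma~\ref{cor:trace ineq proj} becomes essential, and it is the step I expect to be hardest.

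For the running time, the plan is to observe that the per-iteration work is dominated by (i) computing the relative leverage scores against the two resolvents, which by Johnson--Lindenstrauss sketching plus fast Laplacian solvers reduces to $\widetilde{O}(mk+nk^2)$ or $\widetilde{O}(n^\omega)$ arithmetic (the $1/\sqrt{\lambda_{k+1}(X)}$ factor enters through the number of solver calls needed to resolve the spectral gap at the $(k{+}1)$st eigenvalue), and (ii) maintaining an \emph{approximate} projection $\widetilde{P}_S\approx P_S$, since the exact top-$k$ eigenspace of $X$ is too expensive. For (ii) I would reduce the computation of $\widetilde P_S$ to a generalised eigenvalue problem and invoke the solver of \cite{pmlr-v70-allen-zhu17b}; one then has to check that all the potential-function inequalities above are robust to replacing $P_S$ by $\widetilde P_S$ — using the $\approx_\varepsilon$ notation and Lemma~\ref{fact:PSD-invariant} to propagate the perturbation — which accounts for the extra $\varepsilon^{-5}$ and $q n^{O(1/q)}$ factors in the stated bound. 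The number of iterations is $O(q\log n)$-ish after the geometric shrinkage of effective resistances is established, giving the claimed product form. Finally, the cost guarantee $\sum_i c_i\cdot cost_i\le O(1/\varepsilon^2)\min\{1,k/T\}$ follows by tracking, alongside the two potentials, a third linear functional $\sum_i c_i\, cost_i$ and showing its expected increment per iteration is controlled by the same relative-resistance quantities, so that Markov's inequality yields the bound with constant probability.
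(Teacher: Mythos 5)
Your high-level plan is the right one: adapt the randomised BSS framework of~\cite{LS15:linearsparsifier} to start from $A_0=X$, track an upper potential against the top-$T$ eigenspace of the evolving $A_j$ and a lower potential against the fixed bottom-$k$ eigenspace $S$ of $X$ (you wrote ``top-$k$'' but your parenthetical --- ``the space where $X$ fails to be bounded below'' --- is the correct, bottom-$k$ subspace), replace the exact $P_S$ by an approximate projection computed via the generalised eigenvalue solver of~\cite{pmlr-v70-allen-zhu17b}, and track a third cost functional alongside the two potentials. This matches the paper's architecture and correctly identifies the main technical ingredients (intrinsic-dimension matrix Chernoff, Sherman--Morrison / pseudoinverse Sherman--Morrison, Araki--Lieb--Thirring, the projection/trace lemmas). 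However, two concrete gaps would stop a direct execution of your plan.

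First, you apply the lower potential to $A_j$ restricted to $S$, i.e.\ you track $\Tr\left(P_{\mathcal V}(A_j-\ell_j I)P_{\mathcal V}\right)^{\dagger q}$. The paper instead defines $Z=\left(P_{\mathcal V}(I-X)P_{\mathcal V}\right)^{\dagger/2}$ and tracks $B_j=Z(A_j-X)Z$ under the lower potential. This $Z$-conjugation (and subtraction of $X$) is not cosmetic: it normalises the contribution of the added vectors against $\overline M\big|_{S'}$, so that the final guarantee $\lambda_{\min}(B_K|_{S'})\geq\theta_{\min}$ combines with $\lambda_{\max}(A_K)\leq\theta_{\max}$ via the triangle-inequality argument in the sparsification-guarantee lemma to yield $\lambda_{\min}(A_K)=\Omega(\lambda^*\min\{1,k/T\})$. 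Without the $Z$-conjugation you would be lower-bounding $\lambda_{\min}\left((A_j)|_{S'}\right)$ directly, but there is no clean way to convert that back into a lower bound on $\lambda_{\min}(A_K)$ on the \emph{whole} space, which is what the theorem claims. You need to state this change of variables explicitly and make the resolvent in the relative effective resistances $v_i^{\rot}Z\left(P_{\mathcal V}(B_j-\ell_j I)P_{\mathcal V}\right)^{\dagger}Z v_i$ (not $v_i^{\rot}(\cdot)v_i$ directly), so that $\sum_i Z v_i v_i^{\rot}Z=P_{\mathcal V}$ is exactly the identity on the $k$-dimensional subspace.

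Second, the claim that the effective resistances ``shrink geometrically'' so the number of iterations is ``$O(q\log n)$-ish'' is not what happens. The resistances only fail to increase \emph{in expectation} (and the potentials likewise), and the step sizes $\delta_{u,j},\delta_{\ell,j}$ are proportional to $1/\rho_j$, which stays roughly constant rather than shrinking. What controls the iteration count is a second-moment / Markov argument: the sum $\sum_j\rho_j^{2\varepsilon/q}\Phi_j^{1/q}$ stays bounded in conditional expectation, and this gives $\tau=O(q\,n^{O(1/q)}/\varepsilon^2)$ iterations, not $O(q\log n)$. The $n^{O(1/q)}$ factor --- which visibly appears in the stated running time --- comes exactly from $\Lambda^{(1+2\varepsilon)/q}$ combined with the polynomial lower bound on $\lambda_{\min}(\overline M)$, so an $O(\log n)$ iteration bound is not achievable here and is not what the theorem asserts. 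You should therefore aim for the weaker-but-still-sublinear iteration count and verify that this is consistent with the claimed running time, rather than assume geometric decay.

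Beyond these, a smaller correction: the $1/\sqrt{\lambda_{k+1}(X)}$ factor in the runtime arises from a \emph{single} call to the generalised eigenvalue solver of~\cite{pmlr-v70-allen-zhu17b} at initialisation (its runtime depends on $1/\sqrt{\text{gap}}$), not from per-iteration solver calls; the per-iteration work is $\widetilde O\!\left(\left(m n^{2/q}/\varepsilon^{2+2/q}+\min\{n^{\omega},mk+nk^2+k^{\omega}\}\right)/\varepsilon^3\right)$ by combining the approximations to the lower-barrier quantities (spectral decomposition of $\mathbb V^{\rot}(\tilde L-\ell L_W)\mathbb V$ plus Johnson--Lindenstrauss) with those to the upper-barrier quantities (polynomial approximation of $(uI-A)^{-1/2}$ plus Laplacian solves).
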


Without loss of generality, we assume that $\overline{M}$ is a full-rank matrix, which can be achieved by adding $n$ self-loops each of small weight $\gamma=\Theta(1/\mathrm{poly}(n))$, so that with constant probability  these self-loops will not be sampled by the algorithm.

\subsection{Overview of our algorithm\label{sec:BSS}}

\paragraph*{The BSS framework.}
At a high level, our algorithm follows the  BSS framework for constructing spectral sparsifiers~\cite{BSS}. The BSS algorithm proceeds by iterations: in each iteration $j$ the algorithm chooses one or more vectors, denoted by $v_{j_1},\ldots, v_{j_k}$, and adds $\Delta_j = \sum_{i=1}^{k} c_{j_i}v_{j_i} v_{j_i}^{\rot}$ to the currently constructed matrix by setting $A_j = A_{j-1} + \Delta_j$, where $c_{j_1},\ldots, c_{j_k}$ are scaling factors, and $A_0=\mathbf{0}$ initially. Moreover, two barrier values, the \emph{upper barrier} $u_j$ and the \emph{lower barrier} $\ell_j$, are maintained such that the constructed ellipsoid  $\mathsf{Ellip}(A_j)$  is sandwiched between the outer sphere $u_j\cdot I$ and the inner sphere $\ell_j\cdot I$ for any iteration $j$. To ensure this, all the previous analysis uses a potential function $\Phi(A_j, u_j, \ell_j)$ defined by
$$
\Phi\left(A_j, u_j, \ell_j\right) = \tr [f(u_j I - A_j)]  + \tr [f( A_j -\ell_j I)]
$$
for some function $f$, and a bounded value of $\Phi\left(A_j, u_j, \ell_j\right)$ implies that
\begin{equation}\label{eq:two-sided}
\ell_j\cdot I \prec A_j \prec u_j\cdot I.
\end{equation}
After each iteration,  the two barrier values $\ell_j$ and $u_j$ are increased properly   by setting
\[
u_{j+1} = u_j + \delta_{u,j}, \qquad 
\ell_{j+1 } = \ell_j + \delta_{\ell,j}
\]
for some positive values $\delta_{u,j}$ and $\delta_{\ell,j}$. The careful choice of $\delta_{u,j}$ and $\delta_{\ell,j}$ ensures that    after $\tau$ iterations $\mathsf{Ellip}(A_{\tau})$ is close to being a sphere, which implies that $A_{\tau}$ is a spectral sparsifier of $\ell_{\tau}\cdot I$, see Figure~\ref{fig:bsspic} for illustration.


\begin{center}
\begin{figure}[t]
\centering
\begin{tikzpicture}[xscale=0.35,yscale=0.35]

   
  \filldraw[ball color = cyan, opacity = 1] (0,0) ellipse (1.2cm and 1.9cm);
   \draw (0,0) ellipse (1.2cm and 1.9cm);
 

  \draw (-2,0) arc (180:360:2 and 0.6);
  
  \draw[dashed] (2,0) arc (0:180:2 and 0.6);
  \filldraw[ball color = MyMango, opacity = 0.9] (0,0) circle (1cm);
  \draw (0,0) circle (1cm);
  
  \draw (-1,0) arc (180:360:1 and 0.3);
  \draw[dashed] (1,0) arc (0:180:1 and 0.3);
  
   \draw (0,0) circle (2cm);
   \filldraw[ball color = white, opacity = 0.1] (0,0) circle (2cm);
 
\node () at (0,-5) {\footnotesize{Iteration $j$}};


   \filldraw[ball color = cyan, opacity = 1] (8,0) ellipse (2.2cm and 1.7cm);
   \draw (8,0) ellipse (2.2cm and 1.7cm);
 

 \draw[-stealth, color=black!70, line width=2pt] (3,0) -- (4.5,0);

 \draw[dashed] (10.5,0) arc (0:180:2.5 and 0.75);
  \filldraw[ball color = MyMango, opacity = 0.9] (8,0) circle (1.5cm);
  \draw (8,0) circle (1.5cm);
  \draw (6.5,0) arc (180:360:1.5 and 0.45);
  \draw[dashed] (9.5,0) arc (0:180:1.5 and 0.45);
  \draw (8,0) circle (2.5cm);
  \filldraw[ball color = white, opacity = 0.1] (8,0) circle (2.5cm);
  \draw[color=black!80] (5.5,0) arc (180:360:2.5 and 0.75);

\node () at (8,-5) {
\footnotesize{Iteration $j+1$}
};



  \filldraw[ball color = cyan, opacity = 1] (22,0) ellipse (3.3cm and 3.7cm);
   \draw (22,0) ellipse (3.3cm and 3.7cm);


 \draw[-stealth, color=black!70, line width=2pt] (12,0) -- (13.5,0);

\draw[-stealth, color=black!70, line width=2pt] (15,0) -- (16.5,0);

 \draw[dashed] (26,0) arc (0:180:4 and 1.2);
  
 \filldraw[ball color = MyMango, opacity = 0.9] (22,0) circle (3cm);
 \draw (22,0) circle (3cm);
  \draw (19,0) arc (180:360:3 and 0.9);
  \draw[dashed] (25,0) arc (0:180: 3 and 0.9);
  \draw (22,0) circle (4cm);
  \filldraw[ball color = white, opacity = 0.1] (22,0) circle (4cm);
  
   \draw[color=black!80] (18,0) arc (180:360:4 and 1.2);

\node () at (22,-5) {
\footnotesize{Final iteration $\tau$}
};

\end{tikzpicture}
\caption{Illustration of the BSS framework: the light grey  and orange balls in  iteration $j$ represent the spheres $u_j\cdot I$ and $\ell_j\cdot I$, and the cyan  ellipsoid sandwiched between the two balls corresponds to the constructed ellipsoid in iteration $j$. After each iteration $j$,  the algorithm increases the value of $\ell_j$ and $u_j$
by some $\delta_{\ell,j}$ and $\delta_{u,j}$
 so that the invariant \eqref{eq:two-sided}  holds in iteration $j+1$. 
This process is repeated for $\tau$ iterations, so that the final constructed ellipsoid  is close to be a sphere.
\label{fig:bsspic}} 

\end{figure}
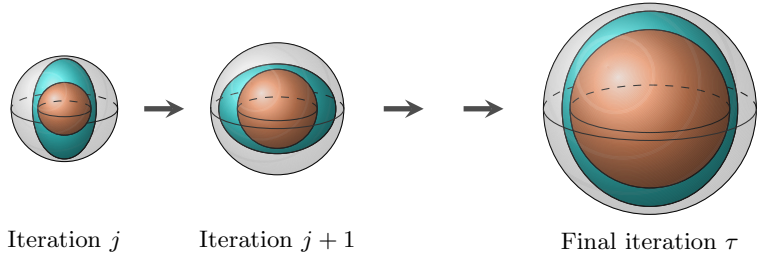

\end{center}

\vspace{-0.9cm}


\paragraph*{The BSS framework for subgraph sparsification.}
The BSS framework ensures that, when starting with the zero matrix, after choosing $O(n)$ vectors, the final constructed matrix is close to $I$. 
However, applying the BSS framework to construct   a subgraph sparsifier is significantly  more challenging due to the following two  reasons:
\begin{itemize}
\item Instead of starting with the zero matrix, we need to start with some \emph{non-zero} matrix $A_0=X$, and the number of added vectors is $K=O(k)$, which could be much smaller than $n$. This implies that the ellipsoid corresponding to the final constructed matrix could be still very far from being a sphere.
\item Because of this and every rank-one update has  different contribution towards each direction in $\mathbb{R}^n$, to ``optimise'' the contribution of $O(k)$ rank-one updates we have to ensure that
the added vectors will significantly benefit the ``worst subspace'', the subspace in $\mathbb{R}^n$ that limits the approximation ratio of the final constructed sparsifier. 
\end{itemize}
To address these two challenges, in the celebrated paper 
 Kolla et al.~\cite{KMST10:subsparsification}  propose to keep track of the algorithm's progress  with respect to two subspaces, each of which is measured by some potential function.  Specifically, in each iteration $j$ they define
$
A_j \triangleq X+\sum_{i} c_i v_iv_i^{\rot}$,
where $\sum_i c_i v_iv_i^{\rot}$ is the sum of currently picked rank-one matrices after reweighting during the first $j$ iterations. 
For the upper barrier value $u_j$ in iteration $j$,  they define the upper potential function
\[
\Phi^{u_j} (A_j) \triangleq \tr\left( P_{L(A_j)} \left( u_j I - A_j \right)  P_{L(A_j)} \right)^{\dagger},
\]
where $L(A_j)$ is the $T$-dimensional subspace of $A_j$ spanned by the $T$ largest eigenvectors of $A_j$  and $\PL{A_j}$ is the projection onto that subspace. 
Notice that   $\Phi^{u_j}(A_j)$ is defined with respect to a variable space $L(A_j)$ that \emph{changes after every rank-one update}, in order to upper bound the maximum eigenvalue of the final constructed matrix in the entire space. 
Similarly, for the same matrix $A_j$ and
lower barrier $\ell_j$ in iteration $j$, they define the lower potential function by
\[
\Phi_{\ell_j}(B_j) \triangleq \tr\left( P_S (B_j - \ell_j I)P_S \right)^{\dagger},
\]
where $P_S$ is the orthogonal projection onto $S$, the subspace generated by the bottom $k$ eigenvectors of $X$, and 
the matrix $B_j$ is defined by 
$
B_j = Z(A_j-X)Z,
$ 
for
$
Z  = \left(P_S (I-X) P_S\right)^{\dag/2}.
$
Since the total number of chosen vectors is $K=O(k)$, instead of expecting the final constructed matrix $A_{\tau}$   approximating the identity matrix, the objective of the subgraph sparsification is to   find coefficients $\{c_i\}$ with $K = O(k)$ non-zeros such that the following two conditions hold:
\begin{enumerate}
  \item $X + \sum_{i=1}^m c_i v_i v_i^{\rot} \preceq \theta_{\mathrm{max}} I$, and
  \item $\sum_{i=1}^m c_i Z v_iv_i^{\rot}Z \succeq \theta_{\mathrm{min}} P_S$,
\end{enumerate}
for some positive constants $\theta_{\mathrm{min}}, \theta_{\mathrm{max}}$. Informally, the first condition above states that  the length of any axis of $\mathsf{Ellip}(A_j)$ is upper bounded, and the second condition ensures that the final matrix $A_{\tau}$ has significant contribution towards the bottom $k$ eigenspace $X$.
In other words,  
instead of ensuring $\ell_j \cdot I\prec A_j\prec u_j\cdot I$, 
$\Phi^{u_j} (A_j)$ and  $\Phi_{\ell_j}(B_j) $ are used to ``quantify'' the shapes of the two ellipsoids with different dimensions:
\begin{itemize}\itemsep -0.4pt
    \item  The function 
$\Phi^{u_j} (A_j)$ studies the ellipsoid $A_j$ projected onto its own top eigenspaces, the subspace that \emph{changes} after each iteration;
\item The function $\Phi_{\ell_j} (B_j)$ studies $A_j-X$ projected onto the bottom $k$ eigenspace of $X$, the subspace that remains 
\emph{fixed} during the entire BSS process. 
\end{itemize}
Proving the existence of some vector in each iteration  so that the algorithm will make progress is much more involved, and constitutes one of the key lemmas used in    \cite{KMST10:subsparsification} for constructing a subgraph sparsifier. 
We remark that the subgraph sparsification algorithm presented in \cite{KMST10:subsparsification} requires the computation of the projection matrices $P_{L(A_j)}$ in each iteration.
Because of this, the algorithm presented in \cite{KMST10:subsparsification} runs in time $\Omega\left(n^2mk\right)$.

\paragraph*{Our approach.} At a very high level, our algorithm and its analysis can be viewed as a neat combination of the algorithm presented in \cite{LS15:linearsparsifier} and the algorithm presented in \cite{KMST10:subsparsification}. Specifically, for any iteration $j$  with the constructed matrix  $A_j$,  we set 
$
B_j \triangleq Z(A_j-X)Z,
$
where
$
Z \triangleq \left(\PV (I-X) \PV\right)^{\dag/2},
$
and define the two potential functions by 
\[
\Phi^{u_j} (A_j) \triangleq \tr\left( P_{L(A_j)} \left( u_j I - A_j \right)  P_{L(A_j)} \right)^{\dagger q},
\]
\[
\Phi_{\ell_j}(B_j) \triangleq \lPot{\ell_j}{B_j}
\]
for some fixed projection matrix $\PV$,  projecting on a $k$-dimensional subspace $S'$. Similar with \cite{LS15:linearsparsifier}, with the help of $q$-th power in the definition of $\Phi^{u_j} (A_j)$ and $\Phi_{\ell_j}(B_j)$ we   show that the eigenvalues of  our constructed matrices $A_j$ and $B_j$ are never very close to the two barrier values $u_j$ and $\ell_j$. Moreover, although the top $T$-eigenspace of the currently constructed matrix $A_j$ changes after every rank-one update,   multiple vectors can still be selected according to some probability distribution in each iteration. 

However,   when combining the randomised BSS framework~\cite{LS15:linearsparsifier} with the algorithm presented in \cite{KMST10:subsparsification}, we have to take many challenging technical issues into account. In particular, we need to address the following issues: 
\begin{itemize}
    \item Both the algorithm and its analysis in \cite{KMST10:subsparsification} crucially depend on the projection matrix $P_S$, of which the exact computation is expensive. Therefore, in order to obtain an efficient algorithm for subgraph sparsification, one needs to obtain some projection matrix close to $P_S$ and such projection matrix can be computed efficiently.
    \item As  indicated by  our definition of $\Phi_{\ell_j}(B_j)$ above, developing a fast subgraph sparsification algorithm would require efficient approximation of   polynomials of the matrix $(\PV(B_j - \ell_j I)\PV)^q$. In comparison to \cite{LS15:linearsparsifier},  the fixed projection matrix $\PV$ sandwiched between two consecutive $\left(B_j-\ell_j I\right)$ makes computing the required quantities much more challenging.
\end{itemize}

To address these issues, we   prove that there is a $k$-dimensional subspace $S'$ close to $S$, and all of our required quantities that involve the projection onto
$S'$, denoted by $\PV$, can be computed   efficiently.  
Moreover, we   prove that   the quality of  our constructed subgraph sparsifer  based on the ``approximate projection'' $\PV$ is the same as the one constructed by \cite{KMST10:subsparsification}, in which the ``optimal projection'' $P_S$ is needed.  Our result on the approximate subspace $S'$ is summarised as follows:

\begin{lemma}\label{thm:our projection informal}
There is an algorithm that computes a matrix $\mathbb{V} = L^{-1/2} V$ for
matrix $V$ in $ \min \Big\{O(n^{\omega}), \tilde{O}\lp\frac{mk + nk^2}{\sqrt{\lambda^*}}\rp \Big\}$ time, such that  with constant probability the following two properties hold:
\begin{enumerate}
    \item $\PV = VV^{\rot}$ is a projection matrix  on a $k$-dimensional subspace $S'$ of $\mathbb{R}^n$;
    \item For any $u\in\mathbb{R}^n$ satisfying $u^{\rot}V = 0$ we have that  
    \[
        \frac{u^{\rot}Xu}{u^{\rot}u} \geq \frac{\lambda_{k+1}(X)}{2} = \frac{\lambda^*}{2}.
    \]
\end{enumerate}
\end{lemma}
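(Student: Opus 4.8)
The goal is to produce a $k$-dimensional subspace $S'$ close to $S$ (bottom-$k$ eigenspace of $X$), with projection $\PV = VV^\rot$, such that any vector orthogonal to $S'$ has Rayleigh quotient at least $\lambda_{k+1}(X)/2$, and this must be computed in roughly $\min\{n^\omega, \tilde O((mk+nk^2)/\sqrt{\lambda^*})\}$ time.

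The plan is to reduce the computation of an approximate bottom-$k$ eigenspace of $X = \left(L^{\dagger/2}_{G+W} L_G L^{\dagger/2}_{G+W}\right)\big|_{\mathrm{Im}(L_{G+W})}$ to the generalised eigenvalue problem and invoke the algorithm of Allen-Zhu and Li~\cite{pmlr-v70-allen-zhu17b}, exactly as flagged in the techniques section. First I would observe that computing the bottom-$k$ eigenvectors of $X$ is equivalent to computing the top-$k$ eigenvectors of $I - X = \left(L^{\dagger/2}_{G+W} L_W L^{\dagger/2}_{G+W}\right)\big|_{\mathrm{Im}(L_{G+W})}$, which in turn is the generalised eigenvalue problem for the pencil $(L_W, L_{G+W})$ restricted to $\mathrm{Im}(L_{G+W})$. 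Since both $L_W$ and $L_{G+W}$ admit fast Laplacian solvers (and matrix–vector products in $\tilde O(m+n)$ time), the black-box eigensolver runs in the stated time, with the $1/\sqrt{\lambda^*}$ factor coming from the eigengap $\lambda_{k+1}(X) = \lambda^*$, i.e.\ the gap between the $k$-th and $(k+1)$-th eigenvalues of $I-X$. The output is an orthonormal basis $V \in \mathbb{R}^{n\times k}$ spanning an approximate top-$k$ eigenspace $S'$ of $I-X$; by construction $\PV = VV^\rot$ is an orthogonal projection onto a $k$-dimensional subspace, giving property~(1). For the alternative $n^\omega$-time bound one simply diagonalises directly.

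For property~(2): I would use the approximation guarantee of the eigensolver, which certifies that for every $u$ with $u^\rot V = 0$ the Rayleigh quotient $u^\rot (I-X) u / u^\rot u$ is at most $\lambda_{k+1}(I-X) + \eta$ for a target accuracy $\eta$ that can be driven down by the solver (paying only logarithmically, since the gap-dependent running time already appears). Choosing $\eta$ smaller than, say, $\lambda^*/2$ — which is possible because $\lambda_{k+1}(I-X) = 1 - \lambda_{k+1}(X) \le 1 - 2\eta$ forces a true gap of size at least $\lambda^*$ on the complement — yields $u^\rot X u / u^\rot u = 1 - u^\rot (I-X) u / u^\rot u \ge 1 - \lambda_{k+1}(I-X) - \eta = \lambda_{k+1}(X) - \eta \ge \lambda^*/2$. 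The constant-probability clause is inherited from the randomised (block Krylov / stochastic) eigensolver, and one may amplify it at a $\log$ cost.

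The main obstacle I anticipate is not the high-level reduction but making the ``restricted to $\mathrm{Im}(L_{G+W})$'' bookkeeping precise and plugging the right quantities into the external eigensolver's interface: $X$ is never formed explicitly, so one must argue that applying $x \mapsto L^{\dagger/2}_{G+W} L_W L^{\dagger/2}_{G+W} x$ via two Laplacian solves plus a sparse multiply faithfully implements the operator on the correct subspace, that the spurious zero eigenvalue from $\ker(L_{G+W})$ does not interfere with the top-$k$ extraction, and that the eigengap assumption required by~\cite{pmlr-v70-allen-zhu17b} is exactly $\lambda^* = \lambda_{k+1}(X)$ rather than some smaller internal gap. Once the operator access model and the gap are pinned down, the running-time bound and both properties follow by quoting the eigensolver's theorem and a one-line Rayleigh-quotient manipulation.
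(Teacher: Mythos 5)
Your proposal matches the paper's proof: reduce to the generalised eigenvalue problem for the pencil $(L_W, L_{G+W})$, invoke the eigensolver of \cite{pmlr-v70-allen-zhu17b} with accuracy parameter $\Theta(\lambda^*)$ to obtain $\mathbb{V}$ (giving property (1) by construction and property (2) via the guarantee on the orthogonal complement's Rayleigh quotient for $\overline{M}=I-X$), with the $O(n^\omega)$ alternative from a direct spectral decomposition. One small imprecision worth noting: the $1/\sqrt{\lambda^*}$ factor arises from the \emph{target accuracy} $\varepsilon=\lambda^*/(2-\lambda^*)$ fed to the (gap-free) solver rather than from an eigengap per se, and the solver's guarantee is the multiplicative bound $u^\rot \overline{M}u/u^\rot u \le \mu_{k+1}/(1-\varepsilon)$, which the paper sets up to equal exactly $1-\lambda^*/2$; your additive-error phrasing reaches the same conclusion but the justification you give for choosing $\eta$ is somewhat circular.
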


We highlight that, in comparison to \cite{LS15:linearsparsifier}, in our setting the upper and lower potential functions \emph{keep track of two different subspaces whose dimensions are of different orders in most regimes},  i.e., $k$ versus $T$, and this makes our analysis much more involved than \cite{LS15:linearsparsifier}.  
On the other side, we also show that the algorithm in \cite{LS15:linearsparsifier} can be viewed as a special case of our algorithms, and from this aspect our algorithm presents a general framework for constructing spectral sparsifiers and subgraph sparsifiers.

At the end of this subsection, we mention  the following fact about $\PV$, which will be extensively used in the remaining part of our analysis.

\begin{remark}
It is important to remember that $\PV$ is a fixed projection. Moreover, it holds by definition that $Z \cdot \PV = \PV \cdot Z = Z$, and $
  \sum_{i=1}^m Z v_i v_i^{\rot} Z = Z \barr{M} Z = \PV$.
\end{remark}


\subsection{Description of our algorithm\label{sec:ouralgo}}
Our algorithm proceeds in iterations in which multiple vectors are sampled with different probabilities. In each iteration $j$, $A_j$ is updated by setting $A_{j+1}=A_{j}+\Delta_j$, where $\Delta_j$ is the sum of the sampled rank-one matrices with reweighting. To compensate for this change, the two barriers $u_j$ and $\ell_j$ are increased by $\delta_{u,j}$ and $\delta_{\ell,j}$. The algorithm terminates when the difference of the barriers is greater than $\alpha$, defined by
$
    \alpha \triangleq  4k/ \Lambda.
$
Specifically, in the initialisation step, the algorithm sets
$
 A_0\triangleq X$,$u_0 \triangleq 2+\lambda_{\max}(X)$, $\ell_0 \triangleq -2k/\Lambda,
$
where 
$
\Lambda \triangleq \max\{k, T\}$.
In each iteration $j$, the algorithm keeps track of the currently constructed matrix $A_j$ and hence, also of the matrix $B_j \triangleq Z(A_j-X) Z$, where $Z \triangleq (\PV (I-X) \PV)^{\dagger/2}$ for some  fixed projection matrix $\PV$. Intuitively, the projection matrix $\PV$ used here is close to $P_S$, but can be approximated more efficiently than computing   $P_S$ precisely.  A detailed discussion regarding the exact computation of $\PV$ is presented in Section~\ref{sec:approximated projection}.
In each iteration $j$, the algorithm starts by computing the \emph{relative effective resistances}, which is defined as
\[
    R_i(A_j, B_j, u_j, \ell_j) \triangleq v_i^{\rot}\uRes{u_j}{A_j} v_i + v_i^{\rot}\lRes{\ell_j}{B_j}v_i,
\]
for all vectors $v_i$. Then, the algorithm computes the number of vectors $N_j$ that will be sampled, which can be written as 
\begin{align*} 
  N_j \triangleq  &\lp \frac{\varepsilon}{4\rho_j}\cdot \lambdaMin\lsp(u_j I-A_j)^{-1}\barr{M}\rsp \cdot \frac{\lambdaMax\lp(u_j I - A_j)^{-1}\barr{M}\rp}{\Tr \lsp(u_j I - A_j)^{-1}\barr{M}\rsp}\rp ^{2\varepsilon/q} \cdot \rho_j \\
  &\cdot \mathrm{min} \left\{ \frac{1}{\lambdaMax\lp (u_j I - A_j)^{-1}\barr{M}\rp}, \frac{1}{\lambdaMax\lp \PV (B_j - \ell_j I) \PV \rp^{\dag}} \right\},
\end{align*}
where 
\begin{align*}
    \rho_j &\triangleq \sum_{t=1}^m R_t(A_j, B_j, u_j, \ell_j) = \Tr \lsp (u_j I - A_j)^{-1}\barr{M}\rsp + \Tr \lsp \PV (B_j - \ell_j I ) \PV\rsp^{\dag}.
\end{align*}
Next, the algorithm samples $N_j$ vectors such that every $v_i$ is sampled with probability proportional to $R_i(A_j, B_j, u_j, \ell_j)$, i.e., the sampling probability of every $v_i$ is defined by
\[
p(v_i) \triangleq \frac{ R_i(A_j, B_j, u_j, \ell_j)}{ \sum_{t=1}^ m  R_t(A_j, B_j, u_j, \ell_j)}.
\]
For every sampled $v_i$, the algorithm scales it to 
\[
    w_i\triangleq \sqrt{\frac{\varepsilon}{ q\cdot R_i (A,B, u,\ell)}}\cdot v_i,
\]
and gradually adds $w_iw_i^{\rot}$ to $A_j$. After each rank-one update, the algorithm increases the barrier values by  the average increases
\[
    \overline{\delta}_{u, j} \triangleq   \frac{(1+3\varepsilon)\cdot \varepsilon}{q\cdot \rho_j}  \qquad \textrm{and} \qquad \overline{\delta}_{\ell, j} \triangleq \frac{(1-3\varepsilon)\cdot  \varepsilon}{q\cdot \rho_j},
\]
and checks whether the terminating condition of the algorithm is satisfied. Note that between two consecutive iterations $j$ and $j+1$ of the algorithm, the two barriers $u_j$ and $\ell_j$ are increased by 
$
   \delta_{u, j} \triangleq  N_j \cdot \overline{\delta}_{u, j}$  and $\delta_{\ell, j} \triangleq N_j \cdot \overline{\delta}_{\ell, j},
   $
respectively.

The formal description of our algorithm is presented in Algorithm~\ref{alg:subgraph sparsifier}. We remark that, in contrast to the algorithm for constructing a spectral sparsifier~\cite{LS15:linearsparsifier}, the total number of vectors  needed in the final iteration $j$ could be much smaller than $O(N_j)$.  This is why our algorithm performs a sanity check in line~15 after every   rank-$1$ update $w_iw_i^{\rot}$.

\begin{algorithm}[!h]
	\caption{Algorithm for constructing subgraph spectral sparsifiers}\label{alg:subgraph sparsifier}
	\begin{algorithmic}[1]
		\Require $\eps \leq  1/20, q \geq  10$ 
		
		\State $u_0 = 2+ \lambda_{\max}(X)$, $\ell_0 = -2k/\Lambda$
		\Comment{Here $u_0$ and $\ell_0$ are the initial barrier values}
		\State $\widehat{u} = u_0$ and $\widehat{\ell} = \ell_0$ 
		\Comment{Here $\widehat{u}$ and $\widehat{\ell}$ are the current barrier values} 
		
		\State $j=0$
		\Comment{ $j$ will be the index of the current iteration}
		\State $A_0=X$, $B_0=\mathbf{0}$
		\While{$\widehat{u} - \widehat{\ell} > \alpha + u_0 - \ell_0$}
		    \Comment{ Start of iteration $j$}
		    \State Compute $R_t(A_j, B_j, \ell_j, u_j)$ and hence $p(v_t)$ for all vectors $v_t$
		    \State Compute $N_j$
		    \State Sample $N_j$ vectors $v_1, \dots v_{N_j}$ according to $p$
		    \State Set $W_j \leftarrow 0$
		    \For{every subphase $i = 1 \dots N_j$}
		        \Comment{Start of subphase $i$}
		        \State $w_i \leftarrow \sqrt{\frac{\varepsilon}{q \cdot R_i(A_j, B_j, u_j, \ell_j)}} \cdot v_i$
		        \State $W_j \leftarrow W_j + w_iw_i^{\rot}$ 
    		    \State $\widehat{u} \leftarrow \widehat{u} +  \barr{\delta}_{u, j}$
    		    \State $\widehat{\ell} \leftarrow \widehat{\ell} + \barr{\delta}_{\ell, j}$
		        \If{$\widehat{u} - \widehat{\ell} > \alpha + u_0 - \ell_0$} 
		            \State Stop at the current subphase
		            \Comment{End of subphase $i$}
		        \EndIf
		    \EndFor
		        
	        \State $A_{j+1} \leftarrow A_j + W_j$
	        \State $B_{j+1} \leftarrow Z(A_{j+1} - X)Z$
	        \State $j = j+1$
	        \Comment{End of iteration $j$}
		\EndWhile
		\State \textbf{Return} $M = A_{j}$
	\end{algorithmic}
\end{algorithm}

\section{Analysis of the subgraph sparsification algorithm}\label{sec:analysis}

In this section we analyse Algorithm~\ref{alg:subgraph sparsifier} and prove Theorem~\ref{thm:main-formal}. The section will be organised as follows: in  subsection~\ref{sec:one iteration} we
analyse  the  sequence of upper and lower potential functions, and prove that  the  total effective resistances decrease after every iteration. In subsection~\ref{sec:total number of iterations and vectors}, we analyse the number of iterations needed before the algorithm terminates,   and the total number of vectors picked by the algorithm. In subsection~\ref{sec:runtime analysis} we present an efficient subroutine  to approximately compute all the quantities used in the algorithm.  In subsection~\ref{sec:approx guarantee},  we analyse the spectral properties of the final constructed matrix. Finally,  we combine everything together and prove Theorem~\ref{thm:main-formal} in subsection~\ref{sec:proof of main theorem}.


\subsection{Analysis within each iteration}\label{sec:one iteration}
We analyse the sampling scheme within a single iteration\footnote{Throughout the rest of the paper we will interchangeably use the words ``iteration'' and ``phase'' when referring to one run of the while loop (lines 6-19) in Algorithm~\ref{alg:subgraph sparsifier}.}, and drop the subscript representing the iteration $j$ for simplicity. We assume that in each iteration the algorithm samples $N$ vectors independently from $\{v_i\}_{i=1}^m $, where every vector is sampled with probability proportional to their relative effective resistance. We use $v_1,\ldots, v_N$ to denote these $N$ sampled vectors, and let 
\[
w_i\triangleq \sqrt{\frac{\varepsilon}{ q\cdot R_i (A,B, u,\ell)}}\cdot v_i
\]
for any $1\leq i\leq N$.
Let 
\[
W \triangleq \sum_{i=1}^N w_i w_i^{\rot}.
\]

\begin{lemma}\label{lem:expectation w_i}
It holds for any $1\leq i\leq N$ that 
\[
    \Ex \lsp w_i{w_i}^{\rot} \rsp = \frac{\varepsilon}{q\cdot \rho} \cdot \barr{M},
\]
and hence 
\[
    \Ex \lsp W \rsp = \frac{\varepsilon\cdot N}{q\cdot \rho}   \cdot \barr{M}.
\]
\end{lemma}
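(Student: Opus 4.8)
The plan is to compute the expectation directly from the sampling distribution. Each $w_i$ is a scaled version of a sampled vector $v_i$, where $v_i$ equals $v_t$ with probability $p(v_t) = R_t(A,B,u,\ell)/\rho$ (recall $\rho = \sum_{t=1}^m R_t(A,B,u,\ell)$). Thus for a single sampled vector,
\[
\Ex\lsp w_i w_i^{\rot}\rsp
= \sum_{t=1}^m p(v_t)\cdot \frac{\varepsilon}{q\cdot R_t(A,B,u,\ell)}\, v_t v_t^{\rot}
= \sum_{t=1}^m \frac{R_t(A,B,u,\ell)}{\rho}\cdot \frac{\varepsilon}{q\cdot R_t(A,B,u,\ell)}\, v_t v_t^{\rot}.
\]

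The key step is the cancellation of the $R_t(A,B,u,\ell)$ factor: since the scaling in the definition of $w_i$ was chosen precisely to be $\sqrt{\varepsilon/(q R_i)}$, the weight of $v_t$ in the expectation becomes the constant $\varepsilon/(q\rho)$, independent of $t$. This gives
\[
\Ex\lsp w_i w_i^{\rot}\rsp = \frac{\varepsilon}{q\cdot\rho}\sum_{t=1}^m v_t v_t^{\rot} = \frac{\varepsilon}{q\cdot\rho}\cdot\barr{M},
\]
using the definition $\barr{M} = \sum_{i=1}^m v_i v_i^{\rot}$ (here understood, as in the statement of Theorem~\ref{thm:main-formal}, restricted appropriately, but this does not affect the identity). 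For the second claim, since $W = \sum_{i=1}^N w_i w_i^{\rot}$ is a sum of $N$ terms each sampled independently from the same distribution, linearity of expectation yields $\Ex\lsp W\rsp = N\cdot\frac{\varepsilon}{q\cdot\rho}\cdot\barr{M} = \frac{\varepsilon\cdot N}{q\cdot\rho}\cdot\barr{M}$.

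There is no real obstacle here — the lemma is essentially bookkeeping, and the only thing to be careful about is making sure the index $i$ in $R_i(A,B,u,\ell)$ inside the definition of $w_i$ refers to the relative effective resistance of the specific sampled vector, so that the cancellation against $p(v_t)$ is exact; this is just a matter of unwinding the notation set up in the algorithm description. I would present the one-line computation above and then note that independence plus linearity gives the sum.
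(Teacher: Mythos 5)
Your proof is correct and matches the paper's argument exactly: both expand $\Ex\lsp w_iw_i^{\rot}\rsp$ as a sum over the sampling distribution, observe that the $\sqrt{\varepsilon/(qR_t)}$ rescaling cancels the $R_t/\rho$ sampling weight, and then apply linearity of expectation for $W$. The paper simply presents this as a one-line display without the surrounding commentary.
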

\begin{proof}
We have for any $1\leq i \leq N$ that 
\[
    \Ex \left[ w_i w_i^{\rot} \right]  =   \sum_{i=1}^m \frac{R_i(A, B, u, \ell)}{\rho } \cdot \frac{\varepsilon}{q \cdot R_i(A, B, u, \ell)} \cdot v_i{v_i}^{\rot} = \frac{\varepsilon}{q \cdot \rho } \cdot \barr{M},
\]
and hence
\[
    \Ex \lsp W\rsp  = \frac{\varepsilon \cdot N}{q \cdot \rho } \cdot \barr{M}.
\]
\end{proof} 
Next, we will show that, as long as the total number of sampled vectors are not too large, it holds with probability at least $1-\varepsilon/(2n)$ that $\mathbf{0} \preceq W\preceq (1/2) (uI-A)$. To prove this, we recall the following stronger version of the matrix Chernoff bound.

\begin{lemma}[\cite{Tropp15}]\label{lem:intdim MCB}
Let $\{X_i\}$ be a finite sequence of independent, Hermitian matrices of the same size, such that $\lambda_{\min}(X_i)\geq 0$ and $\lambda_{\max}(X_i)\leq D$ hold for every matrix $X_i$.
Moreover, assume that  $\mathbf{E} \lsp \sum_i X_i\rsp \preceq U,$
and let  
$  \mu \triangleq \lambda_{\mathrm{max}}(U).$
Then, it holds for any $\delta$ with $\delta \geq D/\mu$  that 
\[
    \Pro \lsp \lambda_{\mathrm{max}}\lp \sum_i X_i \rp \geq (1 + \delta)\mu \rsp \leq 2\cdot \mathrm{intdim}(U) \cdot \lp \frac{\mathrm{e}^\delta}{(1+\delta)^{1+\delta}}\rp^{\mu/D}.
\]
\end{lemma}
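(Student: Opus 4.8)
The plan is to run the matrix Laplace transform method, but with the single modification that turns the ambient dimension into the intrinsic dimension: apply Markov's inequality to $\Tr(e^{\theta Y}-I)$ rather than to $\Tr(e^{\theta Y})$, where $Y\triangleq\sum_i X_i$. Since every $X_i\succeq 0$ we have $Y\succeq 0$, so $e^{\theta Y}-I\succeq 0$ with eigenvalues $e^{\theta\lambda_j(Y)}-1\ge 0$; hence $\{\lambda_{\max}(Y)\ge t\}\subseteq\{\Tr(e^{\theta Y}-I)\ge e^{\theta t}-1\}$ for every $\theta>0$, and Markov gives
\[
\Pro\!\left[\lambda_{\max}(Y)\ge t\right]\le\frac{\Ex\,\Tr(e^{\theta Y}-I)}{e^{\theta t}-1}=\frac{\Ex\,\Tr(e^{\theta Y})-\Tr(I)}{e^{\theta t}-1}.
\]

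For the numerator I would invoke the standard subadditivity of the matrix cumulant generating function. Convexity of $x\mapsto e^{\theta x}$ on $[0,D]$ gives $e^{\theta X_i}\preceq I+g(\theta)X_i$ with $g(\theta)\triangleq(e^{\theta D}-1)/D$, and then operator monotonicity of $\log$ together with $\log(1+a)\le a$ yields $\log\Ex\,e^{\theta X_i}\preceq g(\theta)\,\Ex X_i$, so $\sum_i\log\Ex\,e^{\theta X_i}\preceq g(\theta)\sum_i\Ex X_i\preceq g(\theta)U$. Lieb's concavity theorem then gives $\Ex\,\Tr\,e^{\theta Y}\le\Tr\exp\!\big(\sum_i\log\Ex\,e^{\theta X_i}\big)\le\Tr\,e^{g(\theta)U}$, and subtracting $\Tr(I)$ from both sides, $\Ex\,\Tr(e^{\theta Y}-I)\le\Tr(e^{g(\theta)U}-I)$.

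The intrinsic dimension now appears by estimating $\Tr(e^{g(\theta)U}-I)=\sum_j\big(e^{g(\theta)\lambda_j(U)}-1\big)$. The function $s\mapsto e^{g(\theta)s}-1$ is convex, nonnegative, and vanishes at $s=0$, so $e^{g(\theta)s}-1\le\frac{s}{\mu}\big(e^{g(\theta)\mu}-1\big)$ for $s\in[0,\mu]$ with $\mu=\lambda_{\max}(U)=\|U\|$; summing over the eigenvalues of $U$ gives $\Tr(e^{g(\theta)U}-I)\le\frac{\Tr(U)}{\|U\|}\big(e^{g(\theta)\mu}-1\big)=\mathrm{intdim}(U)\cdot\big(e^{g(\theta)\mu}-1\big)$. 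Combining the displays and bounding $e^{g(\theta)\mu}-1\le e^{g(\theta)\mu}$,
\[
\Pro\!\left[\lambda_{\max}(Y)\ge t\right]\le\mathrm{intdim}(U)\cdot\frac{e^{g(\theta)\mu}}{e^{\theta t}-1}.
\]
Setting $t=(1+\delta)\mu$ and choosing $\theta=\tfrac{1}{D}\ln(1+\delta)$ makes $g(\theta)\mu-\theta t=\tfrac{\mu}{D}\big[\delta-(1+\delta)\ln(1+\delta)\big]$, i.e.\ $e^{g(\theta)\mu-\theta t}=\big(e^{\delta}/(1+\delta)^{1+\delta}\big)^{\mu/D}$, which is the advertised base; and the hypothesis $\delta\ge D/\mu$ forces $\theta t=\tfrac{\mu}{D}(1+\delta)\ln(1+\delta)\ge\tfrac{(1+\delta)\ln(1+\delta)}{\delta}\ge 1$, so $e^{\theta t}>e>2$ and $e^{\theta t}-1\ge\tfrac12 e^{\theta t}$, which produces the factor $2$.

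The step I expect to require the most care is keeping the ``$-I$'' consistently in play: Markov must act on $e^{\theta Y}-I$ so that the ambient dimension cancels, the Lieb/subadditivity machinery is naturally stated for $\Tr\,e^{\theta Y}$ (so one passes to the centred version only at the very end by subtracting $\Tr(I)$), and the final convexity estimate must again be applied to a function vanishing at the origin — it is precisely this ``vanishing at $0$'' on both ends that replaces $\dim=\Tr(I)$ by $\mathrm{intdim}(U)=\Tr(U)/\|U\|$. A secondary bookkeeping point is the admissible range of $\delta$: the clean constant $2$ and the Chernoff shape of the tail only emerge once $e^{\theta t}$ is bounded away from $1$ at the optimal $\theta$, which is exactly what $\delta\ge D/\mu$ guarantees.
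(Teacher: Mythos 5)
Your proof is correct and is essentially Tropp's own argument for the intrinsic-dimension matrix Chernoff bound; the paper cites the result from \cite{Tropp15} without reproducing a proof, so there is nothing internal to compare against. The key moves are all in order: applying Markov to $\Tr(e^{\theta Y}-I)$ rather than $\Tr(e^{\theta Y})$; the chord bound $e^{\theta x}\le 1+g(\theta)x$ on $[0,D]$ followed by operator monotonicity of $\log$ and the scalar inequality $\log(1+a)\le a$ to get $\sum_i\log\Ex e^{\theta X_i}\preceq g(\theta)U$; Lieb's concavity (the master bound) to pass to $\Tr e^{g(\theta)U}$; and then the step that actually produces $\mathrm{intdim}(U)$ in place of the ambient dimension — bounding the convex function $s\mapsto e^{g(\theta)s}-1$, which vanishes at the origin, by the chord through $(0,0)$ and $(\mu,e^{g(\theta)\mu}-1)$ on $[0,\mu]$ and summing over the spectrum of $U$. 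The parameter choices $t=(1+\delta)\mu$, $\theta=\tfrac1D\ln(1+\delta)$ give $g(\theta)=\delta/D$ and $e^{g(\theta)\mu-\theta t}=\bigl(e^\delta/(1+\delta)^{1+\delta}\bigr)^{\mu/D}$, exactly the advertised Chernoff base. The one place worth spelling out a hair more is why $\delta\ge D/\mu$ yields $\theta t\ge 1$: one needs $(1+\delta)\ln(1+\delta)\ge\delta$, which holds because $h(\delta)=(1+\delta)\ln(1+\delta)-\delta$ satisfies $h(0)=0$ and $h'(\delta)=\ln(1+\delta)\ge 0$; once $\theta t\ge 1>\ln 2$ the replacement $e^{\theta t}-1\ge\tfrac12 e^{\theta t}$ is legitimate and delivers the factor $2$.
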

\begin{lemma} \label{lem:bounded W}
Assume that the number of samples satisfies
\[ 
    N  \leq \lp \frac{\varepsilon}{4\rho}\cdot \lambdaMin\lsp(u I - A)^{-1}\barr{M}\rsp \cdot \frac{\lambdaMax\lp (u I - A)^{-1}\barr{M}\rp }{\Tr \lsp (u I - A)^{-1}\barr{M}\rsp}\rp^{2\varepsilon/q} \cdot \rho \cdot \frac{1}{\lambdaMax\lp (u I - A)^{-1}\barr{M}\rp} 
\]
Then it holds that 
\[
    \Pro \left[ \mathbf{0} \preceq W \preceq \frac{1}{2}(u I - A)\right] \geq 1-  \frac{\varepsilon}{2 \rho} \cdot \lambdaMin \lsp (u I - A)^{-1}\barr{M}\rsp \geq 1 - \frac{\varepsilon}{2 n}.
\]
\end{lemma}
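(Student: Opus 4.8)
The plan is to apply the intrinsic-dimension matrix Chernoff bound (Lemma~\ref{lem:intdim MCB}) to the rescaled rank-one matrices appearing in $W$, after first conjugating by $(uI-A)^{-1/2}$ so that the relevant comparison is against a multiple of the identity. Concretely, set $X_i \triangleq (uI-A)^{-1/2} w_iw_i^{\rot} (uI-A)^{-1/2}$, so that $\sum_i X_i = (uI-A)^{-1/2} W (uI-A)^{-1/2}$ and the target event $W \preceq \tfrac12(uI-A)$ becomes $\lambda_{\max}\!\lp\sum_i X_i\rp \leq \tfrac12$. The nonnegativity $\mathbf{0}\preceq W$ is automatic since each $w_iw_i^{\rot}\succeq 0$, so the whole probabilistic content is the upper bound.

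First I would compute the three ingredients needed to invoke Lemma~\ref{lem:intdim MCB}: the per-term bound $D$, the expectation bound $U$, and $\mu = \lambda_{\max}(U)$ together with $\mathrm{intdim}(U)$. For the per-term bound, since $w_i = \sqrt{\varepsilon/(q R_i)}\, v_i$ and $R_i \geq v_i^{\rot}(uI-A)^{-1}v_i$ by the definition of the relative effective resistance, we get
\[
\lambda_{\max}(X_i) = \frac{\varepsilon}{q R_i}\, v_i^{\rot}(uI-A)^{-1} v_i \leq \frac{\varepsilon}{q},
\]
so we may take $D = \varepsilon/q$. For the expectation, by Lemma~\ref{lem:expectation w_i} we have $\Ex[W] = \tfrac{\varepsilon N}{q\rho}\barr M$, hence $\Ex[\sum_i X_i] = \tfrac{\varepsilon N}{q\rho}(uI-A)^{-1/2}\barr M (uI-A)^{-1/2} =: U$. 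This matrix is similar to $\tfrac{\varepsilon N}{q\rho}(uI-A)^{-1}\barr M$, so $\mu = \lambda_{\max}(U) = \tfrac{\varepsilon N}{q\rho}\lambda_{\max}\lp(uI-A)^{-1}\barr M\rp$ and $\Tr(U) = \tfrac{\varepsilon N}{q\rho}\Tr\lsp(uI-A)^{-1}\barr M\rsp$, so that
\[
\mathrm{intdim}(U) = \frac{\Tr\lsp(uI-A)^{-1}\barr M\rsp}{\lambda_{\max}\lp(uI-A)^{-1}\barr M\rp}.
\]

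Next I would choose $\delta$ so that $(1+\delta)\mu = \tfrac12$; the hypothesis on $N$ is exactly what forces $\mu$ to be small enough that this $\delta$ exceeds $D/\mu = \tfrac{\varepsilon}{q\mu}$, which is the admissibility condition in Lemma~\ref{lem:intdim MCB}. Indeed, unwinding, $N \leq (\cdots)^{2\varepsilon/q}\cdot \rho/\lambda_{\max}\lp(uI-A)^{-1}\barr M\rp$ gives $\mu \leq \tfrac{\varepsilon}{q}\cdot(\cdots)^{2\varepsilon/q}$, a quantity well below $1/2$, so $\delta \geq 1$ and in particular $\delta \geq D/\mu$. Then $\mu/D = \tfrac{N}{\rho}\lambda_{\max}\lp(uI-A)^{-1}\barr M\rp$ and the Chernoff tail gives
\[
\Pro\lsp \lambda_{\max}\lp\textstyle\sum_i X_i\rp \geq \tfrac12\rsp \leq 2\cdot\mathrm{intdim}(U)\cdot\lp\frac{\mathrm e^{\delta}}{(1+\delta)^{1+\delta}}\rp^{\mu/D}.
\]
The remaining work is the elementary but slightly fiddly estimate showing this bound is at most $\tfrac{\varepsilon}{2\rho}\lambda_{\min}\lsp(uI-A)^{-1}\barr M\rsp$: one bounds $\mathrm e^{\delta}/(1+\delta)^{1+\delta} \leq (\mathrm e/(2\mu))^{-1/(2\mu)}$-type expressions using $(1+\delta)\mu = 1/2$, substitutes $\mu$ from the $N$-bound, and checks that the polynomial-in-$n$ factor $\mathrm{intdim}(U) \leq \Tr\lsp(uI-A)^{-1}\barr M\rsp = \rho - \Tr\lsp\PV(B-\ell I)\PV\rsp^{\dag} \leq \rho$ is absorbed by the exponentially small factor because the exponent $\mu/D = (\cdots)^{2\varepsilon/q}\cdot(\text{something})$ is chosen precisely to beat it. The final inequality $\tfrac{\varepsilon}{2\rho}\lambda_{\min}\lsp(uI-A)^{-1}\barr M\rsp \geq 1-\varepsilon/(2n)$ — wait, we want the upper tail probability to be \emph{at most} $\tfrac{\varepsilon}{2\rho}\lambda_{\min}[\cdots]$, and then observe $\tfrac{\varepsilon}{2\rho}\lambda_{\min}\lsp(uI-A)^{-1}\barr M\rsp \leq \tfrac{\varepsilon}{2n}$ since $\rho \geq \Tr\lsp(uI-A)^{-1}\barr M\rsp \geq n\cdot\lambda_{\min}\lsp(uI-A)^{-1}\barr M\rsp$ (using that $\barr M$ is full rank on the relevant $n$-dimensional space). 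Taking complements yields the claimed probability $\geq 1 - \tfrac{\varepsilon}{2\rho}\lambda_{\min}[\cdots] \geq 1-\tfrac{\varepsilon}{2n}$.

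The main obstacle I anticipate is the bookkeeping in the last step: verifying that the precise exponent $2\varepsilon/q$ built into the definition of $N_j$ is exactly what is needed to make $\lp\mathrm e^{\delta}/(1+\delta)^{1+\delta}\rp^{\mu/D}$, multiplied by the $\mathrm{poly}(n)$ intrinsic-dimension factor, come out below $\tfrac{\varepsilon}{2\rho}\lambda_{\min}\lsp(uI-A)^{-1}\barr M\rsp$. This requires carefully tracking how the ratio $\lambda_{\max}/\lambda_{\min}$ and $\lambda_{\max}/\Tr$ of $(uI-A)^{-1}\barr M$ enter, and is the kind of inequality where the constants have clearly been reverse-engineered from this very computation; reproducing it cleanly (rather than just "it works out") is where the care is needed. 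Everything else — the conjugation trick, identifying $D$, $U$, $\mu$, $\mathrm{intdim}(U)$, and checking $\delta \geq D/\mu$ — is routine given Lemmas~\ref{lem:expectation w_i} and~\ref{lem:intdim MCB}.
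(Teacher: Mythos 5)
Your proposal reproduces the paper's proof essentially verbatim: the conjugation $z_i = (uI-A)^{-1/2}w_i$, the identification $D=\varepsilon/q$, the computation of $U$, $\mu$, and $\mathrm{intdim}(U)$, the choice $1+\delta = 1/(2\mu)$, the application of Lemma~\ref{lem:intdim MCB}, and the final chain of bounds (including the observation that $\rho \geq \Tr\lsp(uI-A)^{-1}\barr M\rsp \geq n\lambda_{\min}\lsp(uI-A)^{-1}\barr M\rsp$) all match the paper exactly. The one step you leave as an outline — showing the Chernoff tail is at most $\tfrac{\varepsilon}{2\rho}\lambda_{\min}\lsp(uI-A)^{-1}\barr M\rsp$ — is carried out in the paper by using $\mathrm e^\delta/(1+\delta)^{1+\delta} \leq (\mathrm e/(1+\delta))^{1+\delta}$ together with $(1+\delta)\mu = 1/2$ to get exponent $q/(2\varepsilon)$, whereupon the $(\cdots)^{2\varepsilon/q}$ factor built into the hypothesis on $N$ lands exactly on the intrinsic-dimension term, and the leftover $(2\mathrm e\varepsilon/q)^{q/2\varepsilon} \leq 1$ closes the calculation; this is precisely the reverse-engineering you anticipated.
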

\begin{proof}
  Let $z_i = (u I - A)^{- 1/2} w_i$. We will prove this result using Lemma~\ref{lem:intdim MCB} for the sequence of matrices $z_i z_i^{\rot}$. We have that
  \begin{align*}
    \Tr(z_i z_i^{\rot}) &= z_i^{\rot}z_i \\
                        &= {w_i}^{\rot}(uI - A)^{-1}w_i \\
                        &= \frac{\varepsilon}{q} \cdot \frac{{v_i}^{\rot}(uI - A)^{-1}v_i}{R_i(A, B, u, \ell)} \\
                        &= \frac{\varepsilon}{q} \cdot \frac{{v_i}^{\rot}\uRes{u}{A}v_i}{{v_i}^{\rot}\uRes{u}{A}v_i + {v_i}^{\rot}\lRes{\ell}{B}v_i} \\
                        &\leq \frac{\varepsilon}{q},
  \end{align*}
  which implies that  $\lambdaMax(z_i z_i^{\rot}) \leq  \varepsilon/q$. Therefore,  we have that
  \begin{align*}
    \Ex\left(\sum_{i=1}^N z_i z_i^{\rot}\right)  
    &= \Ex \lsp(u I - A)^{-1/2} \lp\sum_{i=1}^N w_i {w_i}^{\rot}\rp (u I - A)^{-1/2}\rsp \\
    &= \frac{\varepsilon \cdot N}{q \cdot \rho} \cdot  (uI - A)^{-1/2} \barr{M} (uI - A)^{-1/2},
  \end{align*}
  which implies that  
  \begin{align*}
    \lambdaMax\lp \Ex \lsp \sum_{i=1}^N z_i z_i^{\rot}\rsp \rp 
    &= \frac{\varepsilon \cdot N}{q \cdot \rho} \cdot \lambdaMax \lp (uI - A)^{-1/2} \barr{M} (uI - A)^{-1/2}\rp  \\
    &= \frac{\varepsilon \cdot N}{q \cdot \rho} \cdot \lambdaMax \lp (uI - A)^{-1}\barr{M}\rp,
  \end{align*}
  where that last equality follows by the fact that the eigenvalues are invariant under circular permutations.
  
  Now we apply the Matrix Chernoff Bound~(Lemma~\ref{lem:intdim MCB}) to analyse $W$. To this end, we define $D\triangleq \varepsilon/q$,
  \[
  U \triangleq \Ex\left(\sum_{i=1}^N z_i z_i^{\rot}\right) =\frac{\varepsilon \cdot N}{q \cdot \rho} \cdot  (uI - A)^{-1/2} \barr{M} (uI - A)^{-1/2},
  \]
 \[  \mu = \lambda_{\max}(U)= \frac{\varepsilon \cdot N}{q \cdot \rho} \cdot \lambdaMax \lp (uI - A)^{-1} \barr{M}\rp , \]
  and  \[   
     \mathrm{intdim}(U) = \frac{\Tr(U)}{\norm{U}} 
      = \frac{\Tr \lp (uI - A)^{-1/2} \barr{M} (uI - A)^{-1/2}\rp}{\norm{(uI - A)^{-1/2} \barr{M} (uI - A)^{-1/2}}}
      = \frac{\Tr \lp (uI - A)^{-1} \barr{M}\rp}{\lambdaMax\lp (uI - A)^{-1} \barr{M} \rp,
    }
  \]
  using the fact that both the trace and the eigenvalues are invariant under circular permutations. We also set $\delta$ such that 
  \begin{equation*}
    1+\delta = \frac{1}{2\mu} =\frac{q \cdot \rho}{2\varepsilon \cdot N} \cdot \frac{1}{\lambdaMax\lp (uI - A)^{-1} \barr{M}\rp}.
  \end{equation*}
  Together with our choice of $\varepsilon$ and $q$, it is  easy to check that this value of  $\delta$ satisfies  $\delta\geq D/\mu$, which is required for Lemma~\ref{lem:intdim MCB}.  Hence, by applying   the Matrix Chernoff Bound (Lemma~\ref{lem:intdim MCB}) we have that
  \begin{align}
    &\mathbf{P}\left[\lambdaMax\left(\sum_{i=1}^N z_i z_i^{\rot}\right) \geq(1+\delta)\mu\right]  \leq   2\cdot \mathrm{intdim}(U) \cdot \left(\frac{\mathrm{e}^{\delta}}{(1+\delta)^{1+\delta}}\right)^{\mu q/\varepsilon} \nonumber\\
     &\leq   \frac{2\cdot \Tr \lp (uI - A)^{-1} \barr{M}\rp}{\lambdaMax\lp (uI - A)^{-1} \barr{M} \rp} \cdot \lp \frac{\mathrm{e}}{1+\delta} \rp^{(1+\delta)\mu q /\varepsilon} \nonumber\\
     &= 2 \cdot \frac{\Tr \lp (uI - A)^{-1} \barr{M}\rp}{\lambdaMax\lp (uI - A)^{-1} \barr{M} \rp} \cdot \lp \frac{2 \mathrm{e}  \varepsilon N}{q\rho}   \cdot \lambdaMax\lp (uI - A)^{-1} \barr{M} \rp \rp^{q/2\varepsilon} \nonumber\\
     &\leq 2\cdot \frac{\varepsilon}{4 \rho} \cdot \lambdaMin \lsp (u I - A)^{-1}\barr{M}\rsp \label{eq:choice of N}\\
     &\leq \frac{\varepsilon}{2} \cdot \frac{\lambdaMin \lsp(u I - A)^{-1}\barr{M}\rsp}{\Tr \lsp (u I - A)^{-1}\barr{M}\rsp} \nonumber\\
     &\leq \frac{\varepsilon}{2n}, 
  \end{align}
  where \eqref{eq:choice of N} follows from the choice of $N$ and the fact that $\left (2\varepsilon \mathrm{e}/q \right)^{q/2\varepsilon} \leq 1$.
  Hence, with probability at least $1 - \varepsilon/(2n)$  it holds that 
  \[
    \lambdaMax\lp \sum_{i=1}^N z_i z_i^{\rot}\rp \leq (1+\delta)\mu = \frac{1}{2},
  \]
  which  implies that $\mathbf{0} \preceq \sum_{i=1}^N z_i z_i^{\rot} \preceq (1/2)\cdot I$ and thus $\mathbf{0} \preceq W \preceq (1/2)\cdot (uI - A)$.
\end{proof}

By  Lemma~\ref{lem:bounded W} we know that the sampled matrix $W$ satisfies $W \preceq \frac{1}{2} \lp u I - A\rp $ with probability $
1-  \varepsilon/(2n) $. In the following, conditioning on this event we will show that the expected value of the potential function decreases. To this end, we introduce the conditional expectation $\widetilde{\Ex}$ defined by 
\[
\widetilde{\Ex}[ \cdot ] = \Ex \lsp \ \cdot \  \Big| W \preceq \frac{1}{2} \lp u I - A\rp \rsp. 
\]
The following estimate tells us that $w_iw_i^{\rot}$ does not change too much under the new expectation.
\begin{lemma}\label{lem:bound expectation} It holds for any vector $w$ defined above that
  \[
    (1-\varepsilon/2) \cdot \Ex \lsp ww^{\rot} \rsp \preceq \widetilde{\Ex}\lsp ww^{\rot}\rsp \preceq (1+\varepsilon/2) \cdot \Ex \lsp ww^{\rot}\rsp.
  \]
\end{lemma}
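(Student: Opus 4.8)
The plan is to reduce to a single sampled vector and then split the expectation according to whether the conditioning event holds. Since $w_1,\dots,w_N$ are i.i.d.\ and the event $\mathcal{E}\triangleq\{W\preceq\tfrac12(uI-A)\}$ is a symmetric function of $w_1,\dots,w_N$, both $\Ex[w_iw_i^{\rot}]$ and $\widetilde{\Ex}[w_iw_i^{\rot}]$ are independent of $i$, so it suffices to prove the two inclusions for $w=w_1$. The first ingredient I would record is a deterministic estimate: each sampled vector satisfies $w_1w_1^{\rot}\preceq\frac{\varepsilon}{q}(uI-A)$. Indeed, writing $w_1=\sqrt{\varepsilon/(q\,R(A,B,u,\ell))}\,v$ for the sampled vector $v$, the computation in the proof of Lemma~\ref{lem:bounded W} shows $w_1^{\rot}(uI-A)^{-1}w_1=\frac{\varepsilon}{q}\cdot\frac{v^{\rot}(uI-A)^{-1}v}{R(A,B,u,\ell)}\le\frac{\varepsilon}{q}$, because $R(A,B,u,\ell)$ dominates its first summand; and a rank-one matrix $ww^{\rot}$ with $w^{\rot}C^{-1}w=c$ obeys $ww^{\rot}\preceq cC$ for every positive definite $C$ (Cauchy--Schwarz), applied here with $C=uI-A\succ0$.

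For the upper bound, put $p\triangleq\Pro[\mathcal{E}]$ and use the decomposition $\Ex[w_1w_1^{\rot}]=p\cdot\widetilde{\Ex}[w_1w_1^{\rot}]+\Ex[w_1w_1^{\rot}\mathbf{1}_{\bar{\mathcal{E}}}]$. Since $w_1w_1^{\rot}\succeq0$ the last term is PSD, so $p\cdot\widetilde{\Ex}[w_1w_1^{\rot}]\preceq\Ex[w_1w_1^{\rot}]$, i.e.\ $\widetilde{\Ex}[w_1w_1^{\rot}]\preceq p^{-1}\Ex[w_1w_1^{\rot}]$. By Lemma~\ref{lem:bounded W} we have $p\ge1-\varepsilon/(2n)$, and $\frac{1}{1-x}\le1+2x$ on $[0,1/2]$ gives $p^{-1}\le1+\varepsilon/n\le1+\varepsilon/2$, which is the claimed upper inclusion.

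For the lower bound I would crucially use the \emph{sharper} failure probability provided by Lemma~\ref{lem:bounded W}, namely $\Pro[\bar{\mathcal{E}}]\le\frac{\varepsilon}{2\rho}\,\lambdaMin\lsp(uI-A)^{-1}\barr{M}\rsp$, rather than the crude bound $\varepsilon/(2n)$. Combining this with the deterministic estimate above,
\[
\Ex[w_1w_1^{\rot}\mathbf{1}_{\bar{\mathcal{E}}}]\ \preceq\ \Pro[\bar{\mathcal{E}}]\cdot\frac{\varepsilon}{q}(uI-A)\ \preceq\ \frac{\varepsilon^2}{2q\rho}\,\lambdaMin\lsp(uI-A)^{-1}\barr{M}\rsp(uI-A).
\]
The elementary fact $\lambdaMin\lsp(uI-A)^{-1}\barr{M}\rsp(uI-A)\preceq\barr{M}$ (this is $\lambdaMin(C^{-1/2}\barr{M}C^{-1/2})\,C\preceq\barr{M}$ with $C=uI-A$) turns the right-hand side into $\frac{\varepsilon^2}{2q\rho}\barr{M}=\frac{\varepsilon}{2}\cdot\frac{\varepsilon}{q\rho}\barr{M}=\frac{\varepsilon}{2}\Ex[w_1w_1^{\rot}]$ by Lemma~\ref{lem:expectation w_i}. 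Plugging this into $p\cdot\widetilde{\Ex}[w_1w_1^{\rot}]=\Ex[w_1w_1^{\rot}]-\Ex[w_1w_1^{\rot}\mathbf{1}_{\bar{\mathcal{E}}}]$ gives $p\cdot\widetilde{\Ex}[w_1w_1^{\rot}]\succeq(1-\varepsilon/2)\Ex[w_1w_1^{\rot}]\succeq0$, and since $0\le p\le1$ and $\widetilde{\Ex}[w_1w_1^{\rot}]\succeq0$ we may discard the factor $p$ to obtain $\widetilde{\Ex}[w_1w_1^{\rot}]\succeq(1-\varepsilon/2)\Ex[w_1w_1^{\rot}]$.

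The one point that needs care — and essentially the only genuine obstacle — is recognising that the naive tail bound $\Pro[\bar{\mathcal{E}}]\le\varepsilon/(2n)$ is \emph{too weak} for the lower inclusion: one really needs the $\lambdaMin$-weighted version of the failure probability, because it is exactly the factor $\lambdaMin\lsp(uI-A)^{-1}\barr{M}\rsp$ that converts $uI-A$ back into $\barr{M}$, and hence into a multiple of $\Ex[w_1w_1^{\rot}]$. Everything else is routine PSD-inequality manipulation.
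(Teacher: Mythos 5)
Your proposal is correct and follows essentially the same route as the paper's own proof: for the upper inclusion you divide out $\Pro[\mathcal{E}]\ge 1-\varepsilon/(2n)$ exactly as the paper does, and for the lower inclusion you split off $\Ex[ww^{\rot}\mathbf{1}_{\bar{\mathcal{E}}}]$, bound it by $\Pro[\bar{\mathcal{E}}]\cdot\frac{\varepsilon}{q}(uI-A)$, invoke the sharper $\lambdaMin$-weighted tail probability from Lemma~\ref{lem:bounded W}, and finish with $\lambdaMin\lsp(uI-A)^{-1}\barr{M}\rsp(uI-A)\preceq\barr{M}$ and Lemma~\ref{lem:expectation w_i}, which is precisely the paper's chain of inequalities. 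Your write-up is, if anything, a bit more careful: you explicitly record the deterministic estimate $ww^{\rot}\preceq\frac{\varepsilon}{q}(uI-A)$ and the "discard $p\le1$" step, both of which the paper uses but does not spell out.
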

\begin{proof}
    For the second inequality, we have that
    \begin{align*}
      \widetilde{\Ex}\lsp ww^{\rot}\rsp 
      &= \Ex \lsp ww^{\rot} \big| W \preceq \frac{1}{2}(uI-A) \rsp 
      \preceq \frac{\Ex \lsp ww^{\rot} \rsp}{\Pro \lsp W \preceq \frac{1}{2}(uI-A)\rsp}\\
      &\preceq  \frac{\Ex \lsp ww^{\rot} \rsp}{1- \frac{\varepsilon}{2n}}
      \preceq (1+\varepsilon/2)\cdot \Ex \lsp ww^{\rot} \rsp,
    \end{align*}
    where the second last inequality comes from Lemma~\ref{lem:bounded W}.
    
    To prove  the first inequality, notice  that
    \begin{align*}
      \widetilde{\Ex} \lsp ww^{\rot}\rsp &= \Ex \lsp ww^{\rot}\big|W\preceq \frac{1}{2}(uI-A)\rsp \\
      &\succeq \Ex \lsp ww^{\rot}\rsp - \Pro\lsp W \npreceq \frac{1}{2}(uI-A)\rsp \cdot \Ex \lsp ww^{\rot} \big| W \npreceq \frac{1}{2}(uI-A)\rsp  \\
      &\succeq \Ex \lsp ww^{\rot}\rsp - \frac{\varepsilon}{2\rho}\cdot \lambdaMin\lsp (uI-A)^{-1}\barr{M}\rsp \cdot \frac{\varepsilon}{q} \cdot (uI-A).
    \end{align*}
    Therefore, to prove the statement it suffices to show that
    \[
      \frac{\varepsilon}{2\rho}\cdot \lambdaMin\lsp (uI-A)^{-1}\barr{M}\rsp \cdot \frac{\varepsilon}{q} \cdot (uI-A) \preceq (\varepsilon/2) \cdot \Ex\lsp ww^{\rot}\rsp.
    \]
    Based on   Lemma~\ref{lem:expectation w_i}, this  is equivalent to show that 
    \[
      \frac{\varepsilon}{2\rho}\cdot \lambdaMin\lsp (uI-A)^{-1}\barr{M}\rsp \cdot \frac{\varepsilon}{q} \cdot (uI-A) \preceq (\varepsilon/2) \cdot \frac{\varepsilon}{q\cdot \rho} \cdot \barr{M},    
    \]
    or alternatively
    \[
      \lambdaMin\lsp (uI-A)^{-1}\barr{M}\rsp \cdot I \preceq (uI-A)^{-1/2}\barr{M}(uI-A)^{-1/2},
    \]
    which holds by the fact that the eigenvalues are invariant under circular permutations. 
\end{proof}

To analyse the spectral properties of the sampled matrix,  recall that our potential functions are defined by  
\[
\Phi^{u}(A) = \tr\left( P_{L(A)} \left( u I - A \right)  P_{L(A)} \right)^{\dagger q} = \sum_{i=n-T+1}^n \lp\frac{1}{u-\lambda_i(A)} \rp^q,
\]
\[
\Phi_{\ell}(B) = \lPot{\ell}{B} = \sum_{i=1}^k \lp\frac{1}{\lambda_i(B\big|_{S'}) - \ell}\rp^q,
\] 
where $S'$ is the $k$-dimensional space where $\PV$ is projecting onto. We further write \[
\Phi_{u, \ell}(A, B) \triangleq \Phi^{u}(A) + \Phi_{\ell}(B).
\]
Note that the total numbers of the terms involved in $\Phi^{u}(A)$ and $\Phi_{\ell}(B)$ respectively are  different for most settings, and this is another reason why our analysis is more involved. 

To analyse the change of the potential functions and relative effective resistances, we divide each iteration into  subphases, and analyse the change of the potential function after each rank-one update.  Without loss of generality, we assume that $v_1,\ldots, v_N$ are the $N$ sampled vectors in this iteration.  
We introduce matrices $A^{(i)}$ and $B^{(i)}$ defined by 
\[ A^{(i)} = A + \sum_{t=1}^{i} w_t w_t^{\rot} \quad \text{and} \quad B^{(i)} = B + \sum_{t=1}^{i} Zw_t w_t^{\rot}Z,
\]
 for every $0 \leq i \leq N$. 
Recall that we defined the average change of the barrier values by 
\[
\barr{\delta_u} = \frac{\delta_{u}}{N} = \frac{(1+3\varepsilon)\cdot \varepsilon}{q \cdot \rho} \quad \text{and} \quad \barr{\delta_\ell} = \frac{\delta_{\ell}}{N} = \frac{(1-3\varepsilon)\cdot \varepsilon}{q \cdot \rho}, 
\]
and let 
\[
\widehat{u}_i \triangleq u + i \cdot \barr{\delta_u} \quad \text{and} \quad \widehat{\ell}_i \triangleq \ell + i \cdot \barr{\delta_\ell}.
\]
For each intermediate subphase we define
\[
  \rho^{\widehat{u}_i}\left(A^{(i)}\right) \triangleq \Tr \lsp \left(\widehat{u}_i I - A^{(i)}\right)^{-1} \barr{M}\rsp
\]
and
\[
  \rho_{\widehat{\ell}_i}\left(B^{(i)}\right) \triangleq \Tr \lsp \PV \left(B^{(i)} - \widehat{\ell}_i I\right) \PV \rsp^{\dag}.
\]
We also define that 
\[
  \rho^{(i)} \triangleq \rho^{\widehat{u}_i}\left(A^{(i)}\right) +  \rho_{\widehat{\ell}_i}\left(B^{(i)}\right).
\]

The following lemma relates the properties of the sampled matrix $W$ to the individual rank-one update within each iteration.
 \begin{lemma}\label{lem:first bound w_i}
 If   $W \preceq (1/2)\cdot (u I - A)$, then it holds for any $0\leq i\leq N-1$  that  
    \[
    w_{i+1}^{\rot}\left(\widehat{u}_{i+1}I - A^{(i)} \right)^{-1}w_{i+1} \leq \frac{2\varepsilon}{q}
    \]
     and
    \[
    w_{i+1}^{\rot}\lRes{\widehat{\ell}_{i+1}}{B^{(i)}}w_{i+1} \leq \frac{2\varepsilon}{q}.
    \]
  \end{lemma}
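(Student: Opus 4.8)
The plan is to bound each of the two quadratic forms separately, using the fact that the scaling factor $w_{i+1} = \sqrt{\varepsilon/(q R_{i+1})}\cdot v_{i+1}$ (where $R_{i+1} = R_{i+1}(A,B,u,\ell)$ is computed with respect to the barrier values at the \emph{start} of the iteration), together with the hypothesis $W \preceq (1/2)(uI-A)$ and the monotonicity of the barrier values within a subphase.

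First, for the upper term: since $A^{(i)} = A + \sum_{t=1}^{i} w_t w_t^{\rot} \preceq A + W \preceq A + (1/2)(uI-A)$, we get $uI - A^{(i)} \succeq uI - A - (1/2)(uI-A) = (1/2)(uI-A)$. Also $\widehat{u}_{i+1} = u + (i+1)\overline{\delta}_u \geq u$, so $\widehat{u}_{i+1}I - A^{(i)} \succeq uI - A^{(i)} \succeq (1/2)(uI - A)$, whence $(\widehat{u}_{i+1}I - A^{(i)})^{-1} \preceq 2(uI-A)^{-1}$. Therefore
\[
w_{i+1}^{\rot}\left(\widehat{u}_{i+1}I - A^{(i)}\right)^{-1}w_{i+1} \leq 2\, w_{i+1}^{\rot}(uI-A)^{-1}w_{i+1} = \frac{2\varepsilon}{q}\cdot\frac{v_{i+1}^{\rot}(uI-A)^{-1}v_{i+1}}{R_{i+1}} \leq \frac{2\varepsilon}{q},
\]
where the last inequality uses that $R_{i+1} = v_{i+1}^{\rot}(uI-A)^{-1}v_{i+1} + v_{i+1}^{\rot}\lRes{\ell}{B}v_{i+1} \geq v_{i+1}^{\rot}(uI-A)^{-1}v_{i+1}$ since the second summand is nonnegative (it is a quadratic form in a pseudoinverse of a PSD matrix).

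For the lower term the argument is analogous but requires some care with the projection $\PV$ and the pseudoinverse. We have $B^{(i)} = B + \sum_{t=1}^{i} Z w_t w_t^{\rot} Z = Z(A^{(i)} - X)Z \succeq Z(A - X)Z = B$, so $B^{(i)} - \widehat{\ell}_{i+1} I \succeq B - \ell I$ because $\widehat{\ell}_{i+1} = \ell + (i+1)\overline{\delta}_\ell \geq \ell$ (both $\overline{\delta}_u$ and $\overline{\delta}_\ell$ are nonnegative for $\varepsilon \leq 1/20$). Sandwiching by $\PV$ preserves the ordering by Lemma~\ref{fact:PSD-invariant}, so $\PV(B^{(i)} - \widehat{\ell}_{i+1}I)\PV \succeq \PV(B - \ell I)\PV$; since both sides have image $S'$, passing to pseudoinverses reverses the inequality on $S'$, giving $(\PV(B^{(i)} - \widehat{\ell}_{i+1}I)\PV)^{\dag} \preceq (\PV(B - \ell I)\PV)^{\dag}$. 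Because $Z = \PV Z = Z\PV$, the vector $Z v_{i+1}$ lies in $S'$, so
\[
w_{i+1}^{\rot}\lRes{\widehat{\ell}_{i+1}}{B^{(i)}}w_{i+1} = \frac{\varepsilon}{q R_{i+1}}\, v_{i+1}^{\rot}\lRes{\widehat{\ell}_{i+1}}{B^{(i)}}v_{i+1} \leq \frac{\varepsilon}{q R_{i+1}}\, v_{i+1}^{\rot}\lRes{\ell}{B}v_{i+1} \leq \frac{\varepsilon}{q},
\]
which is even stronger than claimed (so a fortiori $\leq 2\varepsilon/q$).

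The main obstacle I anticipate is the bookkeeping around the pseudoinverses when the matrices $\PV(B^{(i)} - \widehat{\ell}_i I)\PV$ are only PSD on the subspace $S'$ and zero elsewhere: one must verify that ``$X \succeq Y$ with $\mathrm{Im}(X) = \mathrm{Im}(Y) = S'$ implies $X^{\dag} \preceq Y^{\dag}$'' is applicable, i.e.\ that the relevant matrices indeed share the image $S'$ (this uses that $\PV(I-X)\PV$ is positive definite on $S'$, guaranteed by the full-rank assumption on $\overline{M}$ and the construction of $Z$). The upper-term bound is routine; once the image-matching subtlety for the lower term is settled, the rest is a direct computation as above.
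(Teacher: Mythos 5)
Your argument for the upper term is correct and matches the paper's approach (you phrase the algebra slightly differently but arrive at the same key inequality $(\widehat{u}_{i+1}I - A^{(i)})^{-1} \preceq 2(uI-A)^{-1}$).

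The argument for the lower term, however, contains a sign error at the decisive step. You write that $B^{(i)} - \widehat{\ell}_{i+1}I \succeq B - \ell I$ ``because $\widehat{\ell}_{i+1} \geq \ell$''. That implication is false: from $\widehat{\ell}_{i+1}\geq \ell$ we get $-\widehat{\ell}_{i+1}I \preceq -\ell I$, so the increase of the barrier makes $B - \widehat{\ell}_{i+1}I$ \emph{smaller}, not larger. Since $B^{(0)}=B$, the case $i=0$ already falsifies your claim: $B^{(0)} - \widehat{\ell}_1 I = B - \ell I - \overline{\delta}_\ell I \prec B - \ell I$ (as $\overline{\delta}_\ell>0$). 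Consequently $\lp\PV(B^{(i)}-\widehat{\ell}_{i+1}I)\PV\rp^\dag$ is \emph{larger} than $\lp\PV(B-\ell I)\PV\rp^\dag$, not smaller, and the ``even stronger bound $\varepsilon/q$'' you obtain is spurious. That a factor of $2$ appears in the lemma's statement at all is precisely because this degradation must be accounted for. The paper's proof absorbs the increase by using the explicit cap on the sampling budget: the definition of $N_j$ forces $\delta_\ell \leq \tfrac{1}{2}\cdot \lambdaMax\lp\PV(B-\ell I)\PV\rp^{-\dag}$, hence $\delta_\ell \PV \preceq \tfrac12 \PV(B-\ell I)\PV$, which (combined with $B^{(i)}\succeq B$) gives $\PV(B^{(i)}-\widehat{\ell}_{i+1}I)\PV \succeq \tfrac12 \PV(B-\ell I)\PV$. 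This is the step your proof is missing; without using the structure of $N_j$ you cannot conclude. Your observations about matching images of the projected operators (so that pseudoinverses reverse Löwner order) are fine and would also be needed in a complete proof, but they do not fix the directional error.
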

\begin{proof}
Since $W \preceq \frac{1}{2} (u I - A)$, it holds for any $0\leq i \leq N-1$ that
\[
A^{(i)} - A \preceq W \preceq \frac{1}{2} (uI-A),
\]
which implies that
\[
2A^{(i)} - A \preceq uI \preceq \left( 2\widehat{u}_{i+1} - u \right) I,
\]
i.e.,
\begin{equation}\label{eq:rewritecondition}
uI-A\preceq 2\left( \widehat{u}_{i+1} I- A^{(i)}\right).
\end{equation}
To prove the first statement, we notice that 
\begin{align*}
w_{i+1}^{\rot}\left(\widehat{u}_{i+1}I - A^{(i)} \right)^{-1}w_{i+1} 
      &= \frac{\varepsilon}{q} \cdot \frac{v_{i+1}^{\rot} 
      \uRes{\hat{u}_{i+1}}{A ^{(i)}}v_{i+1}}{R_{i+1}(A, B, u, \ell)}\\ 
      &\leq \frac{\varepsilon}{q} \cdot \frac{v_{i+1}^{\rot}\uRes{\hat{u}_{i+1}}{A^{(i)}}v_{i+1}}{v_{i+1}^{\rot}\uRes{u}{A}v_{i+1}}.
\end{align*}

Hence, it suffices to show that
\[
 \uRes{\hat{u}_{i+1}}{A^{(i)}} \preceq 2 \uRes{u}{A},
\]
which  holds by (\ref{eq:rewritecondition}).  

Now we prove the second statement. Since $B-\ell I$ is positive definite and $\PV Z=Z$, it holds that $v_{i+1}^{\rot}\lRes{\ell}{B}v_{i+1} = 0$ if $v_{i+1}$ is not in $S'$, in which case the statement holds trivially.  Therefore, we only need to study the case that $v_{i+1}^{\rot}\lRes{\ell}{B}v_{i+1} >0$, under which condition it holds that 
\begin{align*} 
      w_{i+1}^{\rot}\lRes{\hat{\ell}_{i+1}}{B^{(i)}}w_{i+1} 
      &= \frac{\varepsilon}{q} \cdot \frac{v_{i+1}^{\rot}\lRes{\widehat{\ell}_{i+1}}{B^{(i)}}v_{i+1}}{R_{i+1}(A, B, u, \ell)} \\
      &\leq \frac{\varepsilon}{q} \cdot \frac{v_{i+1}^{\rot}\lRes{\widehat{\ell}_{i+1}}{B^{(i)}}v_{i+1}}{v_{i+1}^{\rot}\lRes{\ell}{B}v_{i+1}}.
    \end{align*}
    To prove the statement, it suffices to show that   
    \[
     \lp \PV \lp B^{(i)} - \widehat{\ell}_{i+1}I \rp \PV \rp^{\dag} \preceq 2 \lp \PV \lp B - \ell I \rp \PV \rp^{\dag},
     \]
    which is equivalent to showing that 
\begin{equation}\label{eq:cond2}
\PV \lp B^{(i)} - \widehat{\ell}_{i+1}I \rp \PV \succeq \frac{1}{2} \cdot \PV \lp B - \ell I \rp \PV.
\end{equation}
By the definition of $\delta_{\ell}$, we have that 
\begin{align*}
\delta_{\ell} &= \frac{(1- 3\varepsilon) \cdot \varepsilon\cdot N}{q\cdot \rho}  \\
&\leq \frac{(1 -3\varepsilon)\cdot \varepsilon}{q} \cdot \mathrm{min} \left\{ \frac{1}{\lambdaMax\lp (u I - A)^{-1}\barr{M}\rp}, \frac{1}{\lambdaMax\lp \PV (B - \ell I) \PV\rp^{\dag}} \right\} \\
&\leq \frac{1}{2} \cdot \frac{1}{\lambdaMax\lp \PV (B - \ell I) \PV\rp^{\dag}},
\end{align*}
where the last inequality follows by our choice of $\varepsilon$ and $q$.
Hence we have that
\[
	\delta_\ell \PV \preceq \frac{1}{2} \cdot \PV \lp B - \ell I \rp \PV.
\]
Therefore, it holds that
  \begin{align*} 
      \PV \lp B^{(i)} - \widehat{\ell}_{i+1}I \rp \PV 
      &\succeq \PV \lp B - \widehat{\ell}_{i+1}I \rp \PV \\
      &= \PV \lp B - \ell I \rp \PV - (i+1)\cdot \barr{\delta_\ell} \PV\\
      &\succeq \PV \lp B - \ell I \rp \PV - \delta_{\ell} \PV\\
      & \succeq \frac{1}{2} \cdot \PV \lp B - \ell I \rp \PV,
    \end{align*}
which proves (\ref{eq:cond2}) and  the second statement of the lemma.
\end{proof}

\begin{lemma}\label{lem:bound separate potentials}
    Assuming $q \geq 10, \varepsilon \leq  1/20$, 
    \[w_{i+1}^{\rot}\uRes{\hat{u}_{i+1}}{A^{(i)}}w_{i+1} \leq \frac{2\varepsilon}{q}\] and 
    \[w_{i+1}^{\rot}\lRes{\hat{\ell}_{i+1}}{B^{(i)}}w_{i+1} \leq \frac{2\varepsilon}{q},\]
    it holds that 
    \[
    \Phi^{\hat{u}_{i+1}}\left(A^{(i+1)}\right) \leq \Phi^{\hat{u}_{i+1}}\left(A^{(i)}\right) + q (1+2\varepsilon) \cdot w_{i+1}^{\rot} \lp \hat{u}_{i+1} I - A^{(i)}\rp^{-(q+1)}w_{i+1}
    \]
    and
    \[
    \Phi_{\hat{\ell}_{i+1}}\left(B^{(i+1)}\right) \leq \Phi_{\hat{\ell}_{i+1}}\left(B^{(i)}\right) - q(1-2\varepsilon) \cdot w_{i+1}^{\rot} Z \lp \PV \lp B^{(i)} - \hat{\ell}_{i+1} I \rp \PV \rp^{\dag (q+1)}Z w_{i+1}.
    \]
  \end{lemma}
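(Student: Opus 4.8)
The plan is to treat the two potential functions separately, since the upper potential $\Phi^{u}$ depends only on $A$ while the lower potential $\Phi_{\ell}$ depends only on $B$, and in each case reduce the problem to a first-order (Sherman--Morrison / pseudoinverse Sherman--Morrison) expansion controlled by the hypothesis that the relative resistance of the new vector is at most $2\varepsilon/q$. For the upper potential, I would first use the variational fact (which follows from Lemma~\ref{cor:trace ineq proj}) that $\Phi^{u}(A) = \Tr\lp P_{L(A)}(uI-A)^{-1}P_{L(A)}\rp^{q}$ can be bounded from above by replacing the moving subspace $L(A^{(i+1)})$ with the subspace $L(A^{(i)})$ from the previous subphase; that is, $\Phi^{\hat u_{i+1}}(A^{(i+1)}) \le \Tr\lsp P_{L(A^{(i)})}(\hat u_{i+1} I - A^{(i+1)})^{-1}P_{L(A^{(i)})}\rsp^{q}$. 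This is the standard trick for handling the variable subspace in \cite{KMST10:subsparsification,LS15:linearsparsifier}: the projection onto the \emph{true} top-$T$ eigenspace of $A^{(i+1)}$ maximises this trace quantity over all $T$-dimensional projections.

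Once the subspace is frozen, I would work inside the fixed $T$-dimensional space $L(A^{(i)})$, write $A^{(i+1)} = A^{(i)} + w_{i+1}w_{i+1}^{\rot}$, and apply the Sherman--Morrison formula (Lemma~\ref{lem:woodbury}) to expand $(\hat u_{i+1} I - A^{(i)} - w_{i+1}w_{i+1}^{\rot})^{-1}$. The key point is that the correction term is $\frac{1}{1 - w_{i+1}^{\rot}(\hat u_{i+1}I - A^{(i)})^{-1}w_{i+1}}$ times a rank-one matrix, and by the first hypothesis this denominator lies in $[1-2\varepsilon/q, 1]$, so it is at least $1$ and at most $(1-2\varepsilon/q)^{-1} \le 1 + O(\varepsilon/q)$. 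Plugging this into the trace of the $q$-th power and expanding $(M + \text{small rank-one})^{q} \preceq M^{q} + q(1+2\varepsilon)\,(\text{rank-one terms with } M^{q-1}\text{-weights})$ — here one needs the elementary fact that for a rank-one perturbation the first-order term in the $q$-th power, after taking traces, is exactly $q$ times $w_{i+1}^{\rot}(\hat u_{i+1}I - A^{(i)})^{-(q+1)}w_{i+1}$ up to the $(1+2\varepsilon)$ slack coming from the denominator — yields the claimed bound $\Phi^{\hat u_{i+1}}(A^{(i+1)}) \le \Phi^{\hat u_{i+1}}(A^{(i)}) + q(1+2\varepsilon)\, w_{i+1}^{\rot}(\hat u_{i+1}I - A^{(i)})^{-(q+1)}w_{i+1}$. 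The lower potential is symmetric: using Lemma~\ref{lem:pseud-inv SM} for the pseudoinverse version of Sherman--Morrison applied to $\PV(B^{(i+1)} - \hat\ell_{i+1}I)\PV = \PV(B^{(i)} - \hat\ell_{i+1}I)\PV + \PV Z w_{i+1}w_{i+1}^{\rot} Z \PV$ (recalling $\PV Z = Z$), the correction now \emph{decreases} the pseudoinverse, and the denominator $1 + w_{i+1}^{\rot}\lRes{\hat\ell_{i+1}}{B^{(i)}}w_{i+1}$ is in $[1, 1+2\varepsilon/q]$ by the second hypothesis, giving the $(1-2\varepsilon)$ slack in the opposite direction.

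The main obstacle I anticipate is the matrix-analytic step that turns a rank-one perturbation of a positive (semi)definite matrix $M$ into a clean bound on the $q$-th power: one must show that $\Tr\lp M + c\, xx^{\rot}\rp^{q} \le \Tr M^{q} + q c'\, x^{\rot} M^{q-1} x$ for the appropriate effective coefficient $c'$, and this requires either a careful telescoping expansion of $(M + cxx^{\rot})^q - M^q$ into a sum of $q$ terms each of which is a product of powers of $M$ sandwiching $xx^{\rot}$ (then using cyclicity of trace and $M^a \preceq \|M\|^{?}\cdots$ style monotonicity to collapse each term to $x^{\rot}M^{q-1}x$), or an integral/Löwner-type argument. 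Controlling the cross terms so that the total overhead is only the multiplicative $(1\pm 2\varepsilon)$ factor — rather than something growing with $q$ — is where the hypotheses $w_{i+1}^{\rot}(\cdots)w_{i+1} \le 2\varepsilon/q$ are used in an essential way, since they make each of the $q$ correction terms of size $O(\varepsilon/q)$ relative to the main term, so their sum telescopes to $O(\varepsilon)$. I would handle this by an explicit induction on $q$ (or on the number of correction factors), mirroring the rank-one update lemmas in \cite{LS15:linearsparsifier}, and expect the bookkeeping of the lower-order terms to be the most delicate part of the argument.
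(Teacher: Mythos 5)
The central difficulty of this lemma is the \emph{moving} projection $\PL{A^{(i+1)}}$, and your opening move handles it in the wrong direction. By Lemma~\ref{cor:trace ineq proj}, the projection onto the top $T$ eigenspace of $A^{(i+1)}$ \emph{maximises} $\Tr\lsp P\lp \hat u_{i+1} I - A^{(i+1)}\rp^{-q} P \rsp$ over all $T$-dimensional projections $P$; consequently
\[
\Phi^{\hat u_{i+1}}\lp A^{(i+1)}\rp \;=\; \Tr\lsp \PL{A^{(i+1)}}\lp \hat u_{i+1} I - A^{(i+1)}\rp^{-q} \PL{A^{(i+1)}} \rsp \;\geq\; \Tr\lsp \PL{A^{(i)}}\lp \hat u_{i+1} I - A^{(i+1)}\rp^{-q} \PL{A^{(i)}} \rsp,
\]
which is the opposite of the inequality you assert. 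So you cannot ``freeze'' the subspace at the start and then run Sherman--Morrison inside $L(A^{(i)})$; that yields a lower bound on $\Phi^{\hat u_{i+1}}\lp A^{(i+1)}\rp$, useless for the claim. The paper instead carries $P=\PL{A^{(i+1)}}$ through the entire expansion, writes $\Tr\lsp P (Y-ww^\rot)^{-q} P\rsp = \Tr\lsp (P(Y-ww^\rot)^{-1}P)^{q}\rsp$ (because $P$ is a spectral projector of the matrix it sandwiches), applies Sherman--Morrison and the Araki--Lieb--Thirring inequality, and only after the resolvent has been replaced by $(\hat u_{i+1}I - A^{(i)})^{-q}$ does it swap $\PL{A^{(i+1)}} \to \PL{A^{(i)}}$, using Lemma~\ref{cor:trace ineq proj} in the correct direction (since $\PL{A^{(i)}}$ is now the maximiser for the matrix $A^{(i)}$ appearing in the trace). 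The order of operations is not cosmetic; reversing it as you propose breaks the proof.

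A secondary point: you flag the rank-one $q$-th-power expansion as the delicate step and sketch a telescoping or induction argument. The paper sidesteps that bookkeeping entirely. After Sherman--Morrison it factors out $Y^{-1/2}$ on both sides, applies Araki--Lieb--Thirring (Lemma~\ref{lem:ALT_ineq}) to pull the projections past the middle factor, and then uses a plain scalar-style binomial bound $\lp I + \frac{D}{1-2\varepsilon/q}\rp^{q} \preceq I + q(1+2\varepsilon)D$ valid because $D = Y^{-1/2}ww^\rot Y^{-1/2} \preceq (2\varepsilon/q)I$. Combined with Lemma~\ref{lem:proj perm} to collapse $(Y^{-1/2}P)^q(PY^{-1/2})^q \preceq PY^{-q}P$, this gives the stated coefficient $q(1+2\varepsilon)$ in two lines, with no telescoping across $q$ terms. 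Your reading of the lower-potential case (pseudoinverse Sherman--Morrison via Lemma~\ref{lem:pseud-inv SM}, $\PV Z = Z$, denominator in $[1, 1+2\varepsilon/q]$) is correct, and there the projection $\PV$ is fixed so the subtlety above does not arise.
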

\begin{proof}
  To prove the first statement, for simplicity we define 
  $Y = \hat{u}_{i+1} I - A^{(i)}$, $P = \PL{A^{(i+1)}}$, and  $w = w_{i+1}$. Then it holds that  
    \begin{align}
      \Phi^{\hat{u}_{i+1}}\left(A^{(i+1)}\right)
      &= \Tr \lsp \PL{A^{(i+1)}} \lp \hat{u}_{i+1} I - A^{(i)} - w_{i+1}w_{i+1}^{\rot}\rp^{-q} \PL{A^{(i+1)}} \rsp \nonumber\\ 
      &= \Tr \lsp P \lp Y - ww^{\rot} \rp^{-q} P \rsp \nonumber\\
      &= \Tr \lsp P \lp Y^{-1} + \frac{Y^{-1}ww^{\rot}Y^{-1}}{1 - w^{\rot}Y^{-1}w}\rp^q P\rsp,  \nonumber
        \end{align}
 where the last line follows  by  the Sherman-Morrison formula~(Lemma~\ref{lem:woodbury}). Since $P$ is the projection onto $L\left(A^{(i+1)} \right)$, the bottom $T$ eigenspace of $(Y-ww^{\rot})^{-q}$, by applying the argument from the spectral decomposition of $(Y-ww^{\rot})^{-q}$ we have that 
       \begin{align}
  \Phi^{\hat{u}_{i+1}}\left(A^{(i+1)}\right)     &= \Tr \lsp \lp P \lp Y^{-1} + \frac{Y^{-1}ww^{\rot}Y^{-1}}{1 - w^{\rot}Y^{-1}w}\rp P \rp^q \rsp \nonumber \\
            &= \Tr \lsp \lp PY^{-1/2} \lp I + \frac{Y^{-1/2}ww^{\rot}Y^{-1/2}}{1 - w^{\rot}Y^{-1}w} \rp Y^{-1/2}P \rp^q \rsp \nonumber\\
      &\leq \Tr \lsp \lp PY^{-1/2}\rp^q \lp I + \frac{Y^{-1/2}ww^{\rot}Y^{-1/2}}{1 - w^{\rot}Y^{-1}w} \rp^q \lp Y^{-1/2}P \rp^q \rsp \nonumber\\
      &= \Tr \lsp \lp Y^{-1/2}P \rp^q \lp PY^{-1/2}\rp^q \lp I + \frac{Y^{-1/2}ww^{\rot}Y^{-1/2}}{1 - w^{\rot}Y^{-1}w} \rp^q \rsp, \label{eq:step1}
    \end{align}
    where the last inequality follows by
    the Araki-Lieb-Thirring inequality~(Lemma~\ref{lem:ALT_ineq}).   
    Now we   use the Taylor expansion of matrices to upper bound the 
      second matrix above. Let  $D = Y^{-1/2}ww^{\rot}Y^{-1/2}$. Since $D \preceq  (2\varepsilon)/q \cdot I, q \geq 10$ and $\varepsilon \leq 1/20$,  we have that
    \begin{align}
     &\lp I + \frac{Y^{-1/2}ww^{\rot}Y^{-1/2}}{1 - w^{\rot}Y^{-1}w} \rp^q  
     \preceq 
      \lp I + \frac{D}{1-2\varepsilon/q} \rp^q \notag  \\
      &\preceq I + \frac{qD}{1-2\varepsilon/q} + \frac{q(q-1)}{2} \cdot \lp 1 + \frac{2\varepsilon/q}{1-2\varepsilon/q}\rp^{q-2} \lp \frac{D}{1-2\varepsilon/q}\rp^2 \notag \\
      &\preceq I + q \lp 1 + 1.1 \frac{2\varepsilon}{q}\rp D + 1.4 \frac{q(q-1)}{2} D^2 \notag \\
      &\preceq I + q(1+2\varepsilon)D. \label{eq:step2}
    \end{align}
   Combining \eqref{eq:step1} with \eqref{eq:step2}, we have that 
    \begin{align}
    &\Phi^{\hat{u}_{i+1}}\left(A^{(i+1)}\right)     \leq \Tr \lsp \lp Y^{-1/2}P \rp^q \lp PY^{-1/2}\rp^q \lp I + \frac{Y^{-1/2}ww^{\rot}Y^{-1/2}}{1 - w^{\rot}Y^{-1}w} \rp^q \rsp \nonumber\\
      &\leq \Tr \lsp \lp Y^{-1/2}P \rp^q \lp PY^{-1/2}\rp^q \lp I + q(1+2\varepsilon)D \rp \rsp \notag\\
      &=    \Tr \lsp \lp Y^{-1/2}P \rp^q \lp PY^{-1/2}\rp^q \rsp + q(1+2\varepsilon) \Tr \lsp \lp Y^{-1/2}P \rp^q \lp PY^{-1/2}\rp^q D\rsp \nonumber\\
      &\leq \Tr \lsp  PY^{-q}P \rsp + q(1+2\varepsilon) \Tr \lsp \lp Y^{-1/2}P \rp^q \lp PY^{-1/2}\rp^q Y^{-1/2}ww^{\rot}Y^{-1/2}\rsp \label{eq:newlabel}\\
      &= \Tr \lsp P Y^{-q} P\rsp + q(1+2\varepsilon) w^{\rot}Y^{-1/2} \lp Y^{-1/2}P \rp^q \lp PY^{-1/2}\rp^q Y^{-1/2}w \nonumber\\
      &\leq \Tr \lsp \PL{A^{(i+1)}} \lp \hat{u}_{i+1} I - A^{(i)} \rp^{-q} \PL{A^{(i+1)}} \rsp 
      + q(1+2\varepsilon) w^{\rot}Y^{-1}Y^{-(q-1)} Y^{-1}w  \notag\\
      &\leq \Tr \lsp \PL{A^{(i)}} \lp \hat{u}_{i+1} I - A^{(i)} \rp^{-q} \PL{A^{(i)}} \rsp 
      + q(1+2\varepsilon) w^{\rot}Y^{-(q+1)}w \label{eq:proj props}\\
      &= \Phi^{\hat{u}_{i+1}}\left(A^{(i)}\right) + q (1+2\varepsilon) \cdot w_{i+1}^{\rot} \lp \hat{u}_{i+1} I - A^{(i)}\rp^{-(q+1)}w_{i+1}, \nonumber
    \end{align}
    where \eqref{eq:newlabel} follows by Lemma~\ref{lem:proj perm}, and 
  \eqref{eq:proj props} follows by Lemma~\ref{cor:trace ineq proj}.
    
Next we prove the second statement of the lemma. For simplicity we assume that   $\barr{Y} = \PV \lp B^{(i)} - \hat{\ell}_{i+1} I \rp \PV$ and $w = Zw_{i+1}$. Then it holds that 
    \begin{align*}
      \Phi_{\hat{\ell}_{i+1}}\left(B^{(i+1)}\right)
      &= \Tr \lsp \PV \lp B^{(i)} + Zw_{i+1}w_{i+1}^{\rot}Z - \hat{\ell}_{i+1} I \rp \PV\rsp^{\dag q} \\
      &= \Tr \lsp \barr{Y} + ww^{\rot}\rsp^{\dag q} \\
      &= \Tr \lsp \barr{Y}^{\dag} - \frac{\barr{Y}^{\dag}ww^{\rot}\barr{Y}^{\dag}}{1 + w^{\rot}\barr{Y}^{\dag}w}\rsp^q \\
             &= \Tr \lsp \barr{Y}^{\dag 1/2} \lp I - \frac{\barr{Y}^{\dag 1/2}ww^{\rot}\barr{Y}^{\dag 1/2}}{1   + w^{\rot}\barr{Y}^{\dag}w} \rp \barr{Y}^{\dag 1/2}\rsp^q \\
              &\leq \Tr \lsp \barr{Y}^{\dag q/2} \lp I - \frac{\barr{Y}^{\dag 1/2}ww^{\rot}\barr{Y}^{\dag 1/2}}{1   + w^{\rot}\barr{Y}^{\dag}w} \rp^q \barr{Y}^{\dag q/2}\rsp \\
      &\leq \Tr \lsp \barr{Y}^{\dag q/2} \lp I - \frac{\barr{Y}^{\dag 1/2}ww^{\rot}\barr{Y}^{\dag 1/2}}{1 + 2\varepsilon/q} \rp^q \barr{Y}^{\dag q/2}\rsp,
    \end{align*}
    where the second equality follows by the fact that $Z\cdot \PV = \PV\cdot Z = Z$, the third equality follows from   Lemma~\ref{lem:pseud-inv SM}, and the last inequality follows by the condition that $w^{\rot}\barr{Y}^{\dag} w\leq 2\varepsilon/q $. 
    We define 
     \[
     E = \frac{\barr{Y}^{\dag 1/2}ww^{\rot}\barr{Y}^{\dag 1/2}}{1 + 2\varepsilon/q}.
     \]From the assumption of the lemma, we know that $E \preceq \frac{2\varepsilon}{q} I$ and hence
      \[
      (I - E)^q \preceq I - qE + \frac{q(q-1)}{2} E^2 \preceq I - \lp q -  \varepsilon(q-1) \rp E.
      \]
      Therefore, it holds that  
      \begin{align*}
         &\Phi_{\hat{\ell}_{i+1}}\left(B^{(i+1)}\right)  \leq \Tr \lsp \barr{Y}^{\dag q/2} \lp I - \frac{\barr{Y}^{\dag 1/2}ww^{\rot}\barr{Y}^{\dag 1/2}}{1 + 2\varepsilon/q} \rp^q \barr{Y}^{\dag q/2}\rsp \\
        &\leq \Tr \lsp \barr{Y}^{\dag q/2} \lp I - \lp q -  \varepsilon(q-1) \rp E\rp \barr{Y}^{\dag q/2}\rsp \\
        &=\Tr \lsp \barr{Y}^{\dag q}\rsp - \lp q - \varepsilon (q-1) \rp \Tr \lsp \barr{Y}^{\dag q/2} E \barr{Y}^{\dag q/2}\rsp \\
        &=\Tr \lsp \barr{Y}^{\dag q}\rsp - \frac{q - \varepsilon (q-1)}{1 + 2\varepsilon/q} \Tr \lsp \barr{Y}^{\dag (q+1)/2}ww^{\rot}\barr{Y}^{\dag (q+1)/2}\rsp \\
        &\leq \Tr \lsp \PV \lp B^{(i)} - \hat{\ell}_{i+1} I \rp \PV \rsp^{\dag q} - q(1-2\varepsilon)w^{\rot} \lp \PV \lp B^{(i)} - \hat{\ell}_{i+1} I \rp \PV \rp^{\dag(q+1)} w\\
        &=\Phi_{\hat{\ell}_{i+1}}\left(B^{(i)}\right) - q(1-2\varepsilon) \cdot w_{i+1}^{\rot} Z \lp \PV \lp B^{(i)} - \hat{\ell}_{i+1} I \rp \PV \rp^{\dag (q+1)}Z w_{i+1},
      \end{align*}
      which proves the second statement.
\end{proof}

\begin{lemma}\label{lem:bound separate sum resistances}
Assuming $q \geq 10, \varepsilon \leq  1/20$ and
\[ w_{i+1}^{\rot}\uRes{\widehat{u}_{i+1}}{A^{(i)}}w_{i+1} \leq \frac{2\varepsilon}{q} \] and 
\[ w_{i+1}^{\rot}\lRes{\widehat{\ell}_{i+1}}{B^{(i)}}w_{i+1} \leq \frac{2\varepsilon}{q},\]
it holds that 
\[
  \rho^{\widehat{u}_{i+1}}\left(A^{(i+1)}\right) \leq \rho^{\widehat{u}_{i+1}}\left(A^{(i)}\right) + \frac{1}{1-2\varepsilon/q} \cdot w_{i+1}^{\rot} \lp \widehat{u}_{i+1} I - A^{(i)}\rp^{-1} \barr{M} \lp \widehat{u}_{i+1} I - A^{(i)}\rp^{-1} w_{i+1}
\]
and
\[
  \rho_{\widehat{\ell}_{i+1}}\left(B^{(i+1)}\right) \leq \rho_{\widehat{\ell}_{i+1}}\left(B^{(i)}\right) - \frac{1}{1+2\varepsilon/q} \cdot w_{i+1}^{\rot}Z \lp \PV \lp B^{(i)} - \widehat{\ell}_{i+1} I \rp \PV \rp^{\dag 2}Zw_{i+1}.
\]
\end{lemma}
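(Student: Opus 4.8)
The plan is to derive both inequalities from a single rank-one update via Sherman--Morrison, exactly as in the proofs of Lemma~\ref{lem:first bound w_i} and Lemma~\ref{lem:bound separate potentials}. The point is that going from subphase $i$ to subphase $i+1$ changes $A^{(i)}$ only by $w_{i+1}w_{i+1}^{\rot}$, and changes $\PV(B^{(i)}-\widehat{\ell}_{i+1}I)\PV$ only by the rank-one term $Zw_{i+1}w_{i+1}^{\rot}Z$ (using $\PV Z = Z\PV = Z$), so the resistance quantities, being traces of (pseudo-)inverses, can be updated in closed form.

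For the upper bound I would set $Y\triangleq \widehat{u}_{i+1}I - A^{(i)}$ and $w\triangleq w_{i+1}$; the hypothesis $w^{\rot}Y^{-1}w\le 2\varepsilon/q$ already presupposes $Y\succ 0$ and gives $1-w^{\rot}Y^{-1}w\ge 1-2\varepsilon/q>0$. Applying the Sherman--Morrison formula (Lemma~\ref{lem:woodbury}) to $\lp\widehat{u}_{i+1}I - A^{(i+1)}\rp^{-1} = (Y-ww^{\rot})^{-1}$, multiplying on the right by $\barr M$ and taking the trace gives
\[
\rho^{\widehat{u}_{i+1}}\!\left(A^{(i+1)}\right) = \rho^{\widehat{u}_{i+1}}\!\left(A^{(i)}\right) + \frac{w^{\rot}Y^{-1}\barr M Y^{-1}w}{1-w^{\rot}Y^{-1}w},
\]
where I used cyclicity of the trace; since $\barr M\succeq 0$ the added term is nonnegative, and bounding the denominator below by $1-2\varepsilon/q$ yields the first claimed inequality.

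For the lower bound I would set $\barr Y\triangleq \PV\lp B^{(i)}-\widehat{\ell}_{i+1}I\rp\PV$ and note that $\PV\lp B^{(i+1)}-\widehat{\ell}_{i+1}I\rp\PV = \barr Y + Zw_{i+1}w_{i+1}^{\rot}Z$, with $Zw_{i+1}\in\mathrm{Im}(\barr Y)$. Invoking the pseudo-inverse Sherman--Morrison identity (Lemma~\ref{lem:pseud-inv SM}) with $A=\barr Y$, $v=Zw_{i+1}$ and projection $P=\PV$ (the orthogonal projection onto $\mathrm{Im}(\barr Y)$), then taking the trace, gives
\[
\rho_{\widehat{\ell}_{i+1}}\!\left(B^{(i+1)}\right) = \rho_{\widehat{\ell}_{i+1}}\!\left(B^{(i)}\right) - \frac{w_{i+1}^{\rot}Z\,\barr Y^{\dag 2}\,Z w_{i+1}}{1 + w_{i+1}^{\rot}Z\barr Y^{\dag}Z w_{i+1}},
\]
and bounding the denominator above by $1+2\varepsilon/q$ — using the second hypothesis $w_{i+1}^{\rot}\lRes{\widehat{\ell}_{i+1}}{B^{(i)}}w_{i+1}\le 2\varepsilon/q$ — gives the second claimed inequality. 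The assumptions $q\ge 10$, $\varepsilon\le 1/20$ enter only through these denominator bounds.

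The step I expect to require the most care is the pseudo-inverse update: one must verify that the rank-one perturbation stays inside $\mathrm{Im}(\barr Y)$ so that Lemma~\ref{lem:pseud-inv SM} applies with $P=\PV$, and that $\barr Y$ is nonsingular on the $k$-dimensional range $S'$ of $\PV$. Both facts follow from $\PV Z=Z$ together with the standing invariant that all eigenvalues of $B^{(i)}\big|_{S'}$ strictly exceed $\widehat{\ell}_{i+1}$ (guaranteed whenever $\Phi_{\widehat{\ell}_{i+1}}$ is finite), so that $\mathrm{Im}(\barr Y)=S'$ and $\PV$ is exactly the projection onto it. Everything else is the routine trace manipulation indicated above.
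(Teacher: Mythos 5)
Your proposal is correct and follows essentially the same route as the paper: set $Y=\widehat u_{i+1}I-A^{(i)}$ (resp.\ $\barr Y=\PV(B^{(i)}-\widehat\ell_{i+1}I)\PV$), apply Sherman--Morrison (resp.\ the pseudo-inverse version, Lemma~\ref{lem:pseud-inv SM}), take the trace with $\barr M$ and use cyclicity, then bound the scalar denominator via the hypothesis. Your extra remark about checking that the perturbation $Zw_{i+1}w_{i+1}^{\rot}Z$ lies in $\mathrm{Im}(\barr Y)=S'$ (so that $\PV$ is the right projection for Lemma~\ref{lem:pseud-inv SM}) is a point the paper's proof leaves implicit via $Z\PV=\PV Z=Z$.
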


\begin{proof}
    To prove the first statement, for simplicity we define 
  $Y = \widehat{u}_{i+1} I - A^{(i)}$ and  $w = w_{i+1}$, for brevity. Then it holds that  
    \begin{align}
      &\rho^{\widehat{u}_{i+1}}\left(A^{(i+1)}\right)
      = \Tr \lsp \lp \widehat{u}_{i+1} I - A^{(i)} - w_{i+1}w_{i+1}^{\rot}\rp^{-1} \barr{M} \rsp \nonumber\\ 
      &= \Tr \lsp \lp Y - ww^{\rot} \rp^{-1} \barr{M} \rsp \nonumber\\
      &= \Tr \lsp \lp Y^{-1} + \frac{Y^{-1}ww^{\rot}Y^{-1}}{1 - w^{\rot}Y^{-1}w}\rp \barr{M}\rsp,  \label{eq:S-M} \\
      &= \Tr \lsp Y^{-1}\barr{M} \rsp + \frac{1}{1- w^{\rot}Y^{-1}w} \cdot \Tr \lsp Y^{-1}ww^{\rot}Y^{-1}\barr{M}\rsp \nonumber\\
      &\leq \Tr \lsp \lp \widehat{u}_{i+1} I - A^{(i)} \rp^{-1}\barr{M} \rsp + \frac{1}{1-2\varepsilon/q} \cdot w^{\rot}Y^{-1}\barr{M}Y^{-1}w \label{eq:hypot} \\
      &= \rho^{\widehat{u}_{i+1}}\left(A^{(i)}\right) + \frac{1}{1-2\varepsilon/q} \cdot w_{i+1}^{\rot} \lp \widehat{u}_{i+1} I - A^{(i)}\rp^{-1} \barr{M} \lp \widehat{u}_{i+1} I - A^{(i)}\rp^{-1} w_{i+1}. \nonumber
    \end{align}
    where \eqref{eq:S-M} comes from the Sherman-Morrison formula~(Lemma~\ref{lem:woodbury}) and \eqref{eq:hypot} from the hypothesis. 
    
    For the second statement, again let $\barr{Y} = \PV \lp B^{(i)} - \widehat{\ell}_{i+1} I \rp \PV$ and $w = Zw_{i+1}$. We have that 
    \begin{align}
      &\rho_{\widehat{\ell}_{i+1}}\left(B^{(i+1)}\right)
      = \Tr \lsp \PV \lp B^{(i)} + Zw_{i+1}w_{i+1}^{\rot}Z - \widehat{\ell}_{i+1} I \rp \PV\rsp^{\dag} \nonumber\\
      &= \Tr \lsp \barr{Y} + ww^{\rot}\rsp^{\dag} \nonumber\\
      &= \Tr \lsp \barr{Y}^{\dag} - \frac{\barr{Y}^{\dag}ww^{\rot}\barr{Y}^{\dag}}{1 + w^{\rot}\barr{Y}^{\dag}w}\rsp \label{eq:PSD S-M}\\
      &= \Tr \lsp \barr{Y}^{\dag} \rsp - \frac{1}{1+w^{\rot}\barr{Y}^{\dag}w} \cdot \Tr \lsp \barr{Y}^{\dag}ww^{\rot}\barr{Y}^{\dag} \rsp \nonumber\\
      &\leq \Tr \lsp \PV \lp B^{(i)} - \widehat{\ell}_{i+1} I \rp \PV \rsp^{\dag} - \frac{1}{1+2\varepsilon/q} \cdot w^{\rot}\barr{Y}^{\dag 2}w \label{eq:hypot2}\\
      &= \rho_{\widehat{\ell}_{i+1}}\left(B^{(i)}\right) - \frac{1}{1+2\varepsilon/q} \cdot w_{i+1}^{\rot}Z \lp \PV \lp B^{(i)} - \widehat{\ell}_{i+1} I \rp \PV \rp^{\dag 2}Zw_{i+1}, \nonumber
    \end{align}
    where \eqref{eq:PSD S-M} follows by Lemma~\ref{lem:pseud-inv SM} and \eqref{eq:hypot2} follows by the hypothesis of the lemma.
\end{proof}

The following lemma states that, assuming the event $W\preceq \frac{1}{2} (u I-A)$ occurs, both the potential functions and the total relative effective resistances are not increasing in expectation.  We remark that, in contrast to \cite{LS15:linearsparsifier}, the relative effective resistances are not only used for random sampling in each iteration of the algorithm, but also used for analysing the algorithm's performance. That's why the fact of the total  relative effective resistances being non-increasing is needed here.

\begin{lemma}\label{lem:potentials decrease in subphases}
It holds for any $0\leq i\leq N-1$ that   
\[
    \widetilde{\Ex} \lsp \Phi_{\hat{u}_{i+1}, \hat{\ell}_{i+1}}\left(A^{(i+1)}, B^{(i+1)}\right) \rsp \leq \Phi_{\hat{u}_i, \hat{\ell}_i} \left(A^{(i)}, B^{(i)}\right),
\]
and   
\[
  \widetilde{\Ex} \lsp \rho^{(i+1)} \rsp \leq \rho^{(i)}.
\]
\end{lemma}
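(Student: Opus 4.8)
The statement is entirely local to one iteration, so I would drop the iteration index as in the surrounding text and chain together the one‑subphase estimates with two elementary ``barrier‑shift'' estimates. First, under $\widetilde{\Ex}$ the event $W\preceq\frac12(uI-A)$ holds, hence $A^{(i)}-A\preceq W\preceq\frac12(uI-A)$ for every $i$; by Lemma~\ref{lem:first bound w_i} this supplies exactly the hypotheses
$w_{i+1}^{\rot}\uRes{\hat u_{i+1}}{A^{(i)}}w_{i+1}\le 2\varepsilon/q$ and $w_{i+1}^{\rot}\lRes{\hat\ell_{i+1}}{B^{(i)}}w_{i+1}\le 2\varepsilon/q$ needed to invoke both Lemma~\ref{lem:bound separate potentials} and Lemma~\ref{lem:bound separate sum resistances}. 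So, deterministically on this event, $\Phi^{\hat u_{i+1}}(A^{(i+1)})$ and $\Phi_{\hat\ell_{i+1}}(B^{(i+1)})$ (and the analogues $\rho^{\hat u_{i+1}}(A^{(i+1)})$, $\rho_{\hat\ell_{i+1}}(B^{(i+1)})$ with exponents $1$ and $2$) are bounded by their values at $A^{(i)},B^{(i)}$ plus/minus the stated rank‑one remainders.

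\textbf{Barrier‑shift step.} Next I would compare the barrier $\hat u_{i+1}=\hat u_i+\barr{\delta_u}$ against $\hat u_i$ (and $\hat\ell_{i+1}=\hat\ell_i+\barr{\delta_\ell}$ against $\hat\ell_i$) at the \emph{fixed} matrices $A^{(i)},B^{(i)}$. Since $\Phi^{u}(A^{(i)})=\sum_{j=n-T+1}^{n}(u-\lambda_j(A^{(i)}))^{-q}$ and each $u\mapsto(u-\lambda)^{-q}$ is decreasing and convex on $(\lambda,\infty)$, integrating the derivative over $[\hat u_i,\hat u_{i+1}]$ gives
\[
\Phi^{\hat u_{i+1}}(A^{(i)})\ \le\ \Phi^{\hat u_i}(A^{(i)})\ -\ q\,\barr{\delta_u}\,\Tr\!\bigl[\PL{A^{(i)}}\,(\hat u_{i+1}I-A^{(i)})^{-(q+1)}\,\PL{A^{(i)}}\bigr],
\]
and symmetrically, since $\ell\mapsto(\mu-\ell)^{-q}$ is increasing and convex,
\[
\Phi_{\hat\ell_{i+1}}(B^{(i)})\ \le\ \Phi_{\hat\ell_i}(B^{(i)})\ +\ q\,\barr{\delta_\ell}\,\Tr\!\bigl[(\PV(B^{(i)}-\hat\ell_{i+1}I)\PV)^{\dag(q+1)}\bigr].
\]
The same argument applied to $u\mapsto\Tr[(uI-A^{(i)})^{-1}\barr M]$ and $\ell\mapsto\Tr[\PV(B^{(i)}-\ell I)\PV]^{\dag}$ (both convex in the barrier) yields the corresponding shift bounds for $\rho^{\hat u_{i+1}}(A^{(i)})$ and $\rho_{\hat\ell_{i+1}}(B^{(i)})$, with $(\cdot)^{-(q+1)}$ and $(\cdot)^{\dag(q+1)}$ replaced by $(\cdot)^{-2}$ and $(\cdot)^{\dag 2}$.

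\textbf{Combining and taking expectation.} Adding the two one‑subphase inequalities to the two barrier‑shift inequalities bounds $\Phi_{\hat u_{i+1},\hat\ell_{i+1}}(A^{(i+1)},B^{(i+1)})$ by $\Phi_{\hat u_i,\hat\ell_i}(A^{(i)},B^{(i)})$ plus a ``noise minus drift'' remainder; I then take $\widetilde{\Ex}$ with $A^{(i)},B^{(i)}$ (hence $Y\triangleq\hat u_{i+1}I-A^{(i)}$ and $\barr{Y}\triangleq\PV(B^{(i)}-\hat\ell_{i+1}I)\PV$) held fixed. By Lemma~\ref{lem:bound expectation} and Lemma~\ref{lem:expectation w_i}, $(1-\varepsilon/2)\tfrac{\varepsilon}{q\rho}\barr M\preceq\widetilde{\Ex}[w_{i+1}w_{i+1}^{\rot}]\preceq(1+\varepsilon/2)\tfrac{\varepsilon}{q\rho}\barr M$, and by the Remark $Z\barr M Z=\PV$, so it remains to check that noise does not exceed drift in expectation, i.e.\ the scalar inequalities
\[
(1+2\varepsilon)(1+\tfrac{\varepsilon}{2})\,\Tr[Y^{-(q+1)}\barr M]\ \le\ (1+3\varepsilon)\,\Tr[\PL{A^{(i)}}Y^{-(q+1)}\PL{A^{(i)}}]
\]
(and its exponent‑$(1,2)$ variant with the extra factor $(1-2\varepsilon/q)^{-1}$ for $\rho$), together with $(1-3\varepsilon)\le(1-2\varepsilon)(1-\varepsilon/2)$ and $(1-3\varepsilon)(1+2\varepsilon/q)\le1-\varepsilon/2$ for the lower‑barrier terms. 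The lower ones are immediate for $\varepsilon\le1/20$, $q\ge10$. For the upper ones the key point is to dominate $\barr M$ by the current top‑$T$ eigenspace: since $X+\barr M=I$ on $\mathrm{Im}(L_{G+W})$ we have $\barr M\preceq I$ and $\Tr(\barr M)\le T$, so Lemma~\ref{lem: majorisation} with $U=Y^{-(q+1)}$ gives $\Tr[Y^{-(q+1)}\barr M]\le\sum_{j=n-T+1}^{n}\lambda_j(Y^{-(q+1)})=\Tr[\PL{A^{(i)}}Y^{-(q+1)}\PL{A^{(i)}}]$ (the top $T$ eigenvectors of $Y^{-(q+1)}$ coincide with those of $A^{(i)}$), while for $\rho$ one uses $\barr M\preceq I$ directly to get $\Tr[Y^{-1}\barr M Y^{-1}\barr M]\le\Tr[Y^{-2}\barr M]$; with these reductions the remaining numerical inequalities $(1+2\varepsilon)(1+\varepsilon/2)\le 1+3\varepsilon$ and $(1+\varepsilon/2)/(1-2\varepsilon/q)\le 1+3\varepsilon$ hold, and the lemma follows.

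\textbf{Main obstacle.} The convexity/integration bookkeeping in the barrier‑shift step is routine; the delicate part is the last step: one must line up the $(1\pm2\varepsilon)$ coming from the rank‑one update, the $(1\pm\varepsilon/2)$ lost in passing to $\widetilde{\Ex}$, and the $(1\pm3\varepsilon)$ built into $\barr{\delta_u},\barr{\delta_\ell}$ so that a genuine slack remains, and — crucially, for the \emph{upper} potential and effective resistance, whose ``drift'' lives on the variable top‑$T$ eigenspace while the ``noise'' involves the fixed matrix $\barr M$ — invoke the majorisation inequality (using $\barr M\preceq I$ and $\mathrm{tr}(\barr M)\le T$) to transfer $\barr M$ onto that eigenspace. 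That transfer is exactly the place where the ``two different subspaces'' structure must be handled correctly, and is the step most easily gotten wrong.
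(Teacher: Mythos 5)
Your proof is correct and is, in substance, identical to the paper's argument: you invoke the same chain Lemma~\ref{lem:first bound w_i} $\to$ Lemmas~\ref{lem:bound separate potentials} and \ref{lem:bound separate sum resistances}, pass to $\widetilde{\Ex}$ via Lemmas~\ref{lem:expectation w_i} and \ref{lem:bound expectation} and the identity $Z\barr{M}Z=\PV$, transfer $\barr{M}$ onto the variable top-$T$ eigenspace via Lemma~\ref{lem: majorisation} (and use $\barr{M}\preceq I$ for the resistance version), and close with the $(1\pm 3\varepsilon)$ slack built into $\barr{\delta_u},\barr{\delta_\ell}$. The only cosmetic difference is ordering: you perform the barrier-shift estimate at fixed $A^{(i)},B^{(i)}$ first and combine at the end, whereas the paper first takes the conditional expectation and then encodes the barrier shift via the convex interpolant $f_i(x)$ and the bound $f_i'(1)\geq f_i(1)-f_i(0)$; these are the same convexity argument written in different order.
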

 \begin{proof}
  We assume that the sampled matrix $W$ satisfies $\mathbf{0}\preceq W\preceq (1/2)\cdot (uI-A)$. Then, combining Lemma~\ref{lem:first bound w_i} and Lemma~\ref{lem:bound separate potentials} we have that 
    \[
        \Phi^{\hat{u}_{i+1}}\left(A^{(i+1)}\right) \leq \Phi^{\hat{u}_{i+1}}\left(A^{(i)}\right) + q (1+2\varepsilon) \cdot w_{i+1}^{\rot} \lp \hat{u}_{i+1} I - A^{(i)}\rp^{-(q+1)}w_{i+1}
    \]
    and 
    \[
        \Phi_{\hat{\ell}_{i+1}}\left(B^{(i+1)}\right) \leq \Phi_{\hat{\ell}_{i+1}}\left(B^{(i)}\right) - q(1-2\varepsilon) \cdot w_{i+1}^{\rot} Z \lp \PV \lp B^{(i)} - \hat{\ell}_{i+1} I \rp \PV \rp^{\dag (q+1)}Z w_{i+1}.
    \]
    Combining these and the definition of $\widetilde{\Ex}[\cdot]$, we have that 
    \begin{align*}
    \lefteqn{ \widetilde{\Ex} \lsp \Phi_{\hat{u}_{i+1}, \hat{\ell}_{i+1}}\left(A^{(i+1)}, B^{(i+1)}\right)\rsp }\\
       &\leq\Phi_{\hat{u}_{i+1}, \hat{\ell}_{i+1}}\left(A^{(i)}, B^{(i)}\right) + q(1+2\varepsilon) \cdot \widetilde{\Ex} \lsp w_{i+1}^{\rot} \lp \hat{u}_{i+1} I - A^{(i)}\rp^{-(q+1)}w_{i+1} \rsp \\
       &\qquad \qquad-q(1-2\varepsilon) \cdot \widetilde{\Ex} \lsp w_{i+1}^{\rot} Z \lp \PV \lp B^{(i)} - \hat{\ell}_{i+1} I \rp \PV \rp^{\dag (q+1)}Z w_{i+1} \rsp \\
       &= \Phi_{\hat{u}_{i+1}, \hat{\ell}_{i+1}}\left(A^{(i)}, B^{(i)}\right) + q(1+2\varepsilon) \cdot \Tr \lp \lp \hat{u}_{i+1} I - A^{(i)}\rp^{-(q+1)}  \widetilde{\Ex}\lsp w_{i+1}w_{i+1}^{\rot}\rsp \rp \\
       &\qquad\qquad-q(1-2\varepsilon) \cdot \Tr \lp \lp \PV \lp B^{(i)} - \hat{\ell}_{i+1} I \rp \PV \rp^{\dag (q+1)} Z \widetilde{\Ex}\lsp w_{i+1}w_{i+1}^{\rot}\rsp Z\rp \\
       &\leq \Phi_{\hat{u}_{i+1}, \hat{\ell}_{i+1}}\left(A^{(i)}, B^{(i)}\right) + q(1+2\varepsilon)(1+\varepsilon/2) \cdot \Tr \lp \lp \hat{u}_{i+1} I - A^{(i)}\rp^{-(q+1)} \Ex \lsp w_{i+1}w_{i+1}^{\rot}\rsp \rp \\
       &\qquad\qquad-q(1-2\varepsilon)(1-\varepsilon/2) \cdot \Tr \lp \lp \PV \lp B^{(i)} - \hat{\ell}_{i+1} I \rp \PV \rp^{\dag (q+1)} Z \Ex \lsp w_{i+1}w_{i+1}^{\rot}\rsp Z\rp \\
       &\leq \Phi_{\hat{u}_{i+1}, \hat{\ell}_{i+1}}\left(A^{(i)}, B^{(i)}\right) + q(1+3\varepsilon) \cdot \frac{\varepsilon}{q\cdot \rho} \cdot \Tr \lp \lp \hat{u}_{i+1} I - A^{(i)}\rp^{-(q+1)} \barr{M} \rp \\
       &\qquad\qquad-q(1-3\varepsilon) \cdot \frac{\varepsilon}{q \cdot \rho} \cdot \Tr \lp \lp \PV \lp B^{(i)} - \hat{\ell}_{i+1} I \rp \PV \rp^{\dag (q+1)} Z \barr{M} Z\rp \\
       &\leq \Phi_{\hat{u}_{i+1}, \hat{\ell}_{i+1}}\left(A^{(i)}, B^{(i)}\right) + q\cdot \barr{\delta_u} \cdot \Tr \lp \PL{A^{(i)}} \lp \hat{u}_{i+1} I - A^{(i)}\rp^{-(q+1)} \PL{A^{(i)}} \rp\\
       &\qquad \qquad -q \cdot \barr{\delta_\ell} \cdot \Tr \lp \PV \lp B^{(i)} - \hat{\ell}_{i+1} I \rp \PV \rp^{\dag (q+1)}, \\
        \end{align*}
    where the third inequality follows by Lemma~\ref{lem:expectation w_i}, and the fourth inequality follows by Lemma~\ref{lem: majorisation}. To upper bound $\widetilde{\Ex} \lsp \Phi_{\hat{u}_{i+1}, \hat{\ell}_{i+1}}\left(A^{(i+1)}, B^{(i+1)}\right)\rsp$ with respect to
    $\Phi_{\hat{u}_i, \hat{\ell}_i} \left(A^{(i)}, B^{(i)}\right)$, we define function

    \begin{align*}
      f_i(x) &= \Tr \lp \PL{A^{(i)}} \lp \left(\hat{u}_i + x\barr{\delta_u}\right) I - A^{(i)}\rp^{-q} \PL{A^{(i)}} \rp + \Tr \lp \PV \lp B^{(i)} - (\hat{\ell}_i + x\barr{\delta_{\ell}}) I \rp \PV \rp^{\dag q} \\
           &= \sum_{t=n-T+1}^n \lp \frac{1}{\hat{u}_i + x\barr{\delta_u} - \lambda_t\left(A^{(i)}\right)} \rp^q + \sum_{t=1}^k \lp \frac{1}{\lambda_t\left(B^{(i)}\big|_{S'}\right) - (\hat{\ell}_i + x\barr{\delta_{\ell}})}\rp^q. 
    \end{align*}
    By the convexity of the function $f$ we know that 
    \[
        f_i'(1) \geq f_i(1) - f_i(0) = \Phi_{\hat{u}_{i+1}, \hat{\ell}_{i+1}}\left(A^{(i)}, B^{(i)}\right) - \Phi_{\hat{u}_i, \hat{\ell}_i}\left(A^{(i)}, B^{(i)}\right).
    \]
    Therefore, it holds that
    \begin{align*}
    \lefteqn{\widetilde{\Ex} \lsp \Phi_{\hat{u}_{i+1}, \hat{\ell}_{i+1}}\left(A^{(i+1)}, B^{(i+1)}\right)\rsp}\\
   &  \leq \Phi_{\hat{u}_{i+1}, \hat{\ell}_{i+1}}\left(A^{(i)}, B^{(i)}\right) + q\cdot \barr{\delta_u} \cdot \Tr \lp \PL{A^{(i)}} \lp \hat{u}_{i+1} I - A^{(i)}\rp^{-(q+1)} \PL{A^{(i)}} \rp\\
     &    \qquad \qquad -q \cdot \barr{\delta_\ell} \cdot \Tr \lp \PV \lp B^{(i)} - \hat{\ell}_{i+1} I \rp \PV \rp^{\dag (q+1)}\\
    & = \Phi_{\hat{u}_{i+1}, \hat{\ell}_{i+1}}\left(A^{(i)}, B^{(i)}\right) - f_i'(1) \\
    & \leq \Phi_{\hat{u}_{i+1}, \hat{\ell}_{i+1}}\left(A^{(i)}, B^{(i)}\right) - \Phi_{\hat{u}_{i+1}, \hat{\ell}_{i+1}}\left(A^{(i)}, B^{(i)}\right) + \Phi_{\hat{u}_i, \hat{\ell}_i}\left(A^{(i)}, B^{(i)}\right)\\
    & =  \Phi_{\hat{u}_i, \hat{\ell}_i}\left(A^{(i)}, B^{(i)}\right),
        \end{align*}
    which proves the statement.

Next, we will prove that the conditional expectation of the sum of the relative effective resistances decreases as well. 
Conditioning on the event that $W\preceq (1/2)\cdot (uI-A)$, by  Lemmas~\ref{lem:first bound w_i} and \ref{lem:bound separate sum resistances} we have that 
  \[
    \rho^{\widehat{u}_{i+1}}\left(A^{(i+1)}\right) \leq \rho^{\widehat{u}_{i+1}}\left(A^{(i)}\right) + \frac{1}{1-2\varepsilon/q} \cdot w_{i+1}^{\rot} \lp \widehat{u}_{i+1} I - A^{(i)}\rp^{-1} \barr{M} \lp \widehat{u}_{i+1} I - A^{(i)}\rp^{-1} w_{i+1}
  \]
  and 
  \[
    \rho_{\widehat{\ell}_{i+1}}\left(B^{(i+1)}\right) \leq \rho_{\widehat{\ell}_{i+1}}\left(B^{(i)}\right) - \frac{1}{1+2\varepsilon/q} \cdot w_{i+1}^{\rot}Z \lp \PV \lp B^{(i)} - \widehat{\ell}_{i+1} I \rp \PV \rp^{\dag 2}Zw_{i+1}.
  \]
  We will  upper bound the two terms separately. To upper bound the first term, we have that  
  \begin{align*}
    \lefteqn{\tilde{\Ex} \lsp \rho^{\widehat{u}_{i+1}}\left(A^{(i+1)}\right)\rsp}\\
    &\leq \rho^{\widehat{u}_{i+1}}\left(A^{(i)}\right) + \frac{1}{1-2\varepsilon/q} \cdot \tilde{\Ex} \lsp w_{i+1}^{\rot} \lp \widehat{u}_{i+1} I - A^{(i)}\rp^{-1} \barr{M} \lp \widehat{u}_{i+1} I - A^{(i)}\rp^{-1} w_{i+1} \rsp \\
    &= \rho^{\widehat{u}_{i+1}}\left(A^{(i)}\right) + \frac{1}{1-2\varepsilon/q} \cdot \Tr \lsp \lp \widehat{u}_{i+1} I - A^{(i)}\rp^{-1} \barr{M} \lp \widehat{u}_{i+1} I - A^{(i)}\rp^{-1} \tilde{\Ex} \lsp w_{i+1}w_{i+1}^{\rot}\rsp \rsp \\
    &\leq \rho^{\widehat{u}_{i+1}}\left(A^{(i)}\right) + \frac{1 + \varepsilon/2}{1-2\varepsilon/q} \cdot \Tr \lsp \lp \widehat{u}_{i+1} I - A^{(i)}\rp^{-1} \barr{M} \lp \widehat{u}_{i+1} I - A^{(i)}\rp^{-1} \Ex \lsp w_{i+1}w_{i+1}^{\rot}\rsp \rsp \\
    &= \rho^{\widehat{u}_{i+1}}\left(A^{(i)}\right) + \frac{1 + \varepsilon/2}{1-2\varepsilon/q} \cdot \frac{\varepsilon}{q \cdot \rho}\cdot \Tr \lsp \lp \widehat{u}_{i+1} I - A^{(i)}\rp^{-1} \barr{M} \lp \widehat{u}_{i+1} I - A^{(i)}\rp^{-1} \barr{M} \rsp \\
    &\leq \rho^{\widehat{u}_{i+1}}\left(A^{(i)}\right) + \cdot \frac{(1 + 3\varepsilon) \varepsilon}{q \cdot \rho}\cdot \Tr \lsp \lp \widehat{u}_{i+1} I - A^{(i)}\rp^{-1} \barr{M} \lp \widehat{u}_{i+1} I - A^{(i)}\rp^{-1} \barr{M} \rsp \\
    &\leq \rho^{\widehat{u}_{i+1}}\left(A^{(i)}\right) + \barr{\delta_u}\cdot \Tr \lsp \lp \widehat{u}_{i+1} I - A^{(i)}\rp^{-2} \barr{M} \rsp,
  \end{align*}
  where the second inequality follows by Lemma~\ref{lem:bound expectation},   and the last equality follows by our choice   of $\delta_u$. 
     We use a similar technique to upper bound the second term, and have that  
  \begin{align*}
    &\tilde{\Ex} \lsp \rho_{\widehat{\ell}_{i+1}}\left(B^{(i+1)}\right)\rsp
    \leq \rho_{\widehat{\ell}_{i+1}}\left(B^{(i)}\right) - \frac{1}{1+2\varepsilon/q} \cdot \tilde{\Ex} \lsp w_{i+1}^{\rot}Z \lp \PV \lp B^{(i)} - \widehat{\ell}_{i+1} I \rp \PV \rp^{\dag 2}Zw_{i+1} \rsp \\
    &= \rho_{\widehat{\ell}_{i+1}}\left(B^{(i)}\right) - \frac{1}{1+2\varepsilon/q} \cdot \Tr \lsp \lp \PV \lp B^{(i)} - \widehat{\ell}_{i+1} I \rp \PV \rp^{\dag 2}Z\tilde{\Ex} \lsp w_{i+1}w_{i+1}^{\rot} \rsp Z\rsp \\
    &\leq \rho_{\widehat{\ell}_{i+1}}\left(B^{(i)}\right) - \frac{1-\varepsilon/2}{1+2\varepsilon/q} \cdot \Tr \lsp \lp \PV \lp B^{(i)} - \widehat{\ell}_{i+1} I \rp \PV \rp^{\dag 2}Z \Ex \lsp w_{i+1}w_{i+1}^{\rot} \rsp Z\rsp \\
    &= \rho_{\widehat{\ell}_{i+1}}\left(B^{(i)}\right) - \frac{1-\varepsilon/2}{1+2\varepsilon/q} \cdot \frac{\varepsilon}{q\cdot \rho}\cdot \Tr \lsp \lp \PV \lp B^{(i)} - \widehat{\ell}_{i+1} I \rp \PV \rp^{\dag 2}Z \barr{M} Z\rsp \\
    &\leq \rho_{\widehat{\ell}_{i+1}}\left(B^{(i)}\right) - \frac{(1-3\varepsilon) \varepsilon}{q\cdot \rho}\cdot \Tr \lsp \lp \PV \lp B^{(i)} - \widehat{\ell}_{i+1} I \rp \PV \rp^{\dag 2}Z \barr{M} Z\rsp \\
    &= \rho_{\widehat{\ell}_{i+1}}\left(B^{(i)}\right) - \barr{\delta_{\ell}}\cdot \Tr \lsp \lp \PV \lp B^{(i)} - \widehat{\ell}_{i+1} I \rp \PV \rp^{\dag 2}\rsp,   
  \end{align*}
  where the second inequality follows by Lemma~\ref{lem:bound expectation}. 
  
  To prove the statement of the lemma, we introduce function $f$ defined by \begin{align*}
    f(x) &\triangleq \Tr \lsp \lp (\widehat{u}_{i} + x\barr{\delta_u}) I - A^{(i)}\rp^{-1} \barr{M} \rsp + \Tr \lsp \lp \PV \lp B^{(i)} - (\widehat{\ell}_{i} + x\barr{\delta_{\ell}}) I \rp \PV \rp^{\dag}\rsp,
\end{align*}
and have that  
  \[
    f'(x) = -\barr{\delta_u} \cdot \Tr \lsp \lp (\widehat{u}_{i} + x\barr{\delta_u}) I - A^{(i)}\rp^{-2} \barr{M}\rsp + \barr{\delta_{\ell}} \cdot \Tr \lsp \lp \PV \lp B^{(i)} - (\widehat{\ell}_{i} + x\delta_{\ell}) I \rp \PV \rp^{\dag 2}\rsp.
  \]
  Since $f$ is a  convex function,  we have that 
  \[
    f'(1) \geq f(1) - f(0) = \lp \rho^{\widehat{u}_{i+1}}\left(A^{(i)}\right) + \rho_{\widehat{\ell}_{i+1}}\left(B^{(i)}\right) \rp - \lp \rho^{\widehat{u}_{i}}\left(A^{(i)}\right) + \rho_{\widehat{\ell}_{i}}\left(B^{(i)}\right) \rp.
\]
  Combining everything together, we have that 
  \begin{align*}
  \lefteqn{\tilde{\Ex} \lsp \rho^{(i+1)}\rsp }\\
  &= \tilde{\Ex}\lsp \rho^{\widehat{u}_{i+1}}\left(A^{(i+1)}\right) \rsp + \tilde{\Ex} \lsp\rho_{\widehat{\ell}_{i+1}}\left(B^{(i+1)}\right)\rsp \\
  &\leq \rho^{\widehat{u}_{i+1}}\left(A^{(i)}\right) + \barr{\delta_u}\cdot \Tr \lsp \lp \widehat{u}_{i+1} I - A^{(i)}\rp^{-2} \barr{M}\rsp\\
  &\qquad\qquad+ \rho_{\widehat{\ell}_{i+1}}\left(B^{(i)}\right) - \barr{\delta_{\ell}}\cdot \Tr \lsp \lp \PV \lp B^{(i)} - \widehat{\ell}_{i+1} I \rp \PV \rp^{\dag 2}\rsp \\
  &= \rho^{\widehat{u}_{i+1}}\left(A^{(i)}\right) + \rho_{\widehat{\ell}_{i+1}}\left(B^{(i)}\right) - f'(1)\\
  & \leq \rho^{\widehat{u}_{i+1}}\left(A^{(i)}\right) + \rho_{\widehat{\ell}_{i+1}}\left(B^{(i)}\right) - \lp \rho^{\widehat{u}_{i+1}}\left(A^{(i)}\right) + \rho_{\widehat{\ell}_{i+1}}\left(B^{(i)}\right) \rp + \lp \rho^{\widehat{u}_{i}}\left(A^{(i)}\right) + \rho_{\widehat{\ell}_{i}}\left(B^{(i)}\right) \rp \\
  & = \rho^{(i)},
  \end{align*}
  which proves the claimed statement.
\end{proof}


\subsection{On the total number of iterations and  the   number of chosen vectors}\label{sec:total number of iterations and vectors}

In this subsection we prove that with constant probability the algorithm samples $\Theta(k/\varepsilon^2)$ vectors.  The following technical lemmas will be used in our analysis.

\begin{lemma}\label{lem:Phi_0}
  It holds that
  \[
    \Phi_{u_0, \ell_0}(A_0, B_0)
    \leq T \cdot \lp u_0 - \lambdaMax(X)\rp^{-q} + k \cdot \lp -\ell_0\rp^{-q}
  \]
  and hence 
  \[
    \Phi_{u_0, \ell_0}(A_0, B_0)^{1/q} \leq T^{1/q} \cdot \lp u_0 - \lambdaMax(X)\rp^{-1} + k^{1/q} \cdot \lp -\ell_0\rp^{-1}.
  \]
\end{lemma}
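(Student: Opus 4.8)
The plan is to evaluate both potential functions directly at the initial matrices $A_0=X$ and $B_0=\mathbf{0}$ using their eigenvalue representations. Recall from the definitions that
$\Phi^{u_0}(A_0)=\sum_{i=n-T+1}^n \lp u_0-\lambda_i(X)\rp^{-q}$ and
$\Phi_{\ell_0}(B_0)=\sum_{i=1}^k \lp \lambda_i(B_0\big|_{S'})-\ell_0\rp^{-q}$, with $\Phi_{u_0,\ell_0}(A_0,B_0)$ equal to their sum. So it suffices to bound each sum separately and add.

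For the upper term I would first observe that $u_0-\lambdaMax(X)=2>0$ by the choice $u_0=2+\lambdaMax(X)$, so each denominator $u_0-\lambda_i(X)$ is positive and bounded below by $u_0-\lambdaMax(X)$; since the sum contains exactly $T$ summands this gives $\Phi^{u_0}(A_0)\le T\,(u_0-\lambdaMax(X))^{-q}$. For the lower term I would use that $B_0=\mathbf{0}$, hence its restriction $B_0\big|_{S'}$ is the $k\times k$ zero matrix and every eigenvalue $\lambda_i(B_0\big|_{S'})$ equals $0$; together with $-\ell_0=2k/\Lambda>0$ each of the $k$ summands equals $(-\ell_0)^{-q}$, so $\Phi_{\ell_0}(B_0)=k\,(-\ell_0)^{-q}$. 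Adding the two bounds yields the first displayed inequality.

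For the second inequality I would invoke subadditivity of the map $t\mapsto t^{1/q}$ on $\mathbb{R}_{\ge 0}$, which is valid because $q\ge 1$: $(a+b)^{1/q}\le a^{1/q}+b^{1/q}$. Applying this with $a=T\,(u_0-\lambdaMax(X))^{-q}$ and $b=k\,(-\ell_0)^{-q}$, and simplifying $a^{1/q}=T^{1/q}(u_0-\lambdaMax(X))^{-1}$ and $b^{1/q}=k^{1/q}(-\ell_0)^{-1}$, gives the claimed bound. The whole argument is routine; the only points needing (minimal) care are checking strict positivity of the relevant denominators — immediate from the definitions of $u_0$ and $\ell_0$ — and the concavity/subadditivity of the $q$-th root, so I do not expect any genuine obstacle here.
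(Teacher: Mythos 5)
Your proof is correct and follows essentially the same approach as the paper's: evaluate $\Phi^{u_0}(A_0)$ and $\Phi_{\ell_0}(B_0)$ directly via the eigenvalue sums at $A_0=X$ and $B_0=\mathbf{0}$, and then apply the subadditivity $(a+b)^{1/q}\le a^{1/q}+b^{1/q}$. The only cosmetic difference is that the paper applies subadditivity to $\Phi^{u_0}(A_0)$ and $\Phi_{\ell_0}(B_0)$ themselves and then bounds each $q$-th root, whereas you apply it directly to the two upper bounds; this is immaterial.
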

\begin{proof}
By definition, we have that $A_0=X$, and $B_0=Z(A_0-X)Z=\mathbf{0}$. Therefore, it holds that 
\[
    \Phi^{u_0}(A_0) = \sum_{i=n-T+1}^n \lp \frac{1}{u_0 - \lambda_i(X)}\rp^q \leq \sum_{i=n-T+1}^n \lp \frac{1}{u_0 - \lambdaMax(X)}\rp^q = T\cdot \lp u_0 - \lambdaMax(X)\rp^{-q},
\]
and 
\[
    \Phi_{\ell_0}(B_0) = \sum_{i=1}^k \lp \frac{1}{\lambda_i(\mathbf{0}) - \ell_0}\rp^q = k\cdot (-\ell_0)^{-q}.
\]
Combining the two inequalities above gives us that 
\[
    \Phi_{u_0, \ell_0}(A_0, B_0) = \Phi^{u_0}(A_0) + \Phi_{\ell_0}(B_0) \leq T\cdot \lp u_0 - \lambdaMax(X)\rp^{-q} + k\cdot (-\ell_0)^{-q}.
\]
To prove the second statement, notice that it holds for $a,b\in\mathbb{R}^+$ and $q\in\mathbb{Z}^+$ that 
\[
    (a + b)^{1/q} \leq a^{1/q} + b^{1/q}.
\]
By setting $a= \Phi^{u_0}(A_0)$ and $b=\Phi_{\ell_0}(B_0)$ and applying the inequality above, we prove the second statement of the lemma.
\end{proof}

\begin{lemma}\label{lem:rho_0}
  It holds that 
  \[
    \rho_0 \leq T \cdot \lp u_0 - \lambdaMax(X)\rp^{-1} + k \cdot (-\ell_0)^{-1} \leq \frac{T + \Lambda}{2}.
  \]
\end{lemma}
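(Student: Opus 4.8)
The plan is to evaluate $\rho_0=\rho^{u_0}(A_0)+\rho_{\ell_0}(B_0)$ directly from the initialisation $A_0=X$, $B_0=\mathbf{0}$, $u_0=2+\lambdaMax(X)$, $\ell_0=-2k/\Lambda$, and to bound each of the two summands separately; adding the bounds gives both inequalities at once.

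First I would bound the upper term. By definition $\rho^{u_0}(A_0)=\Tr\lsp(u_0 I-X)^{-1}\barr{M}\rsp$. Since $u_0 I-X\succeq \lp u_0-\lambdaMax(X)\rp I\succ 0$, we have $(u_0 I-X)^{-1}\preceq \lp u_0-\lambdaMax(X)\rp^{-1} I$, and because $\barr{M}\succeq \mathbf{0}$ this yields
\[
\rho^{u_0}(A_0)\;\leq\;\frac{1}{u_0-\lambdaMax(X)}\cdot\Tr(\barr{M})\;\leq\;\frac{T}{u_0-\lambdaMax(X)},
\]
where the last step uses $T=\left\lceil\Tr(\barr{M})\right\rceil\geq\Tr(\barr{M})$. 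Since $u_0-\lambdaMax(X)=2$, this gives $\rho^{u_0}(A_0)\leq T/2$.

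Next I would bound the lower term. As $B_0=\mathbf{0}$ and $\PV$ is an orthogonal projection, $\PV(B_0-\ell_0 I)\PV=(-\ell_0)\PV$, so
\[
\rho_{\ell_0}(B_0)=\Tr\lsp\lp(-\ell_0)\PV\rp^{\dag}\rsp=(-\ell_0)^{-1}\Tr(\PV)=\frac{k}{-\ell_0},
\]
using that $-\ell_0=2k/\Lambda>0$ and that $\PV$ projects onto a $k$-dimensional subspace (Lemma~\ref{thm:our projection informal}), hence $\Tr(\PV)=k$. Substituting $-\ell_0=2k/\Lambda$ gives $\rho_{\ell_0}(B_0)=\Lambda/2$.

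Combining the two bounds proves both inequalities:
\[
\rho_0\;\leq\;\frac{T}{u_0-\lambdaMax(X)}+\frac{k}{-\ell_0}\;=\;\frac{T}{2}+\frac{\Lambda}{2}\;=\;\frac{T+\Lambda}{2}.
\]
There is essentially no obstacle here: beyond the definitions, the only facts used are $\barr{M}\succeq\mathbf{0}$ (it is the restriction of a sum of rank-one PSD matrices) and $\Tr(\barr{M})\leq T$, both immediate from the setup, together with $\PV$ being a rank-$k$ projection, which is guaranteed by Lemma~\ref{thm:our projection informal}.
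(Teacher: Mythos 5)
Your proof is correct and follows essentially the same route as the paper: evaluate $\rho_0$ directly from the initialisation, bound the upper and lower contributions separately, and add. The only (minor) difference is in bounding $\Tr\lsp(u_0 I - X)^{-1}\barr{M}\rsp$: the paper invokes the majorisation inequality (Lemma~\ref{lem: majorisation}) to pass to $\sum_{i=n-T+1}^n (u_0 - \lambda_i(X))^{-1}$ before applying $\lambda_i(X)\leq\lambdaMax(X)$, whereas you use the simpler operator inequality $(u_0 I - X)^{-1}\preceq (u_0-\lambdaMax(X))^{-1}I$ together with $\barr M\succeq 0$ and $\Tr(\barr M)\leq T$ to get the same bound more directly.
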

\begin{proof} 
By the definition of $\rho_0$, we have that 
  \begin{align*}
    \rho_0 &= \Tr \lsp (u_0 I - X)^{-1}\barr{M} \rsp + \Tr \lsp \PV(\mathbf{0} - \ell_0I)\PV\rsp^{\dag} \\
    &\leq \sum_{i=n-T+1}^n \frac{1}{u_0 - \lambda_i(X)} + \sum_{i=1}^k \frac{1}{0 - \ell_0} \\
    &\leq T \cdot (u_0 - \lambdaMax(X))^{-1} + k \cdot (-\ell_0)^{-1} \\
    &=\frac{T}{2} + \frac{\Lambda}{2}, 
  \end{align*}
  where  the first inequality follows by   Lemma~\ref{lem: majorisation}  and the last equality follows by our choice of $u_0$ and $\ell_0$. 
\end{proof}

\begin{lemma}\label{lem:lower bound rho_j}
It holds for any iteration $j$ that  
  \[
    \rho_j \geq \frac{T + k - 1}{u_0 - \ell_0 + \frac{1+3\varepsilon}{6\varepsilon} \cdot \sum_{t=0}^{j-1}(\delta_{u, t} - \delta_{\ell, t})}
  \]
\end{lemma}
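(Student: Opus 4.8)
The plan is to split $\rho_j = \Tr[\uRes{u_j}{A_j}\,\overline{M}] + \Tr[\PV(B_j-\ell_j I)\PV]^{\dagger}$ and bound the two summands separately, after first identifying the denominator in the statement. Since $\overline{\delta}_{u,t}-\overline{\delta}_{\ell,t} = \tfrac{6\varepsilon^2}{q\rho_t}$ and $\overline{\delta}_{u,t} = \tfrac{(1+3\varepsilon)\varepsilon}{q\rho_t}$, we have $\tfrac{1+3\varepsilon}{6\varepsilon}(\delta_{u,t}-\delta_{\ell,t}) = \tfrac{1+3\varepsilon}{6\varepsilon}\,N_t(\overline{\delta}_{u,t}-\overline{\delta}_{\ell,t}) = N_t\overline{\delta}_{u,t} = \delta_{u,t}$, so $u_0-\ell_0 + \tfrac{1+3\varepsilon}{6\varepsilon}\sum_{t=0}^{j-1}(\delta_{u,t}-\delta_{\ell,t}) = u_0-\ell_0 + \sum_{t=0}^{j-1}\delta_{u,t} = u_j-\ell_0$. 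It therefore suffices to prove $\rho_j \ge \tfrac{T+k-1}{u_j-\ell_0}$, and I will obtain $\tfrac{T-1}{u_j-\ell_0}$ from the upper-potential term and $\tfrac{k}{u_j-\ell_0}$ from the lower-potential term.

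For the upper-potential term, note $A_j = X + \sum_i c_i v_iv_i^{\rot} \succeq X \succeq \mathbf{0}$, while the invariant $A_j \prec u_j I$ maintained by the algorithm gives $\mathbf{0} \prec u_j I - A_j \preceq u_j I \preceq (u_j-\ell_0)I$ on $\mathrm{Im}(L_{G+W})$, where the last step uses $\ell_0 = -2k/\Lambda \le 0$. Hence $\uRes{u_j}{A_j} \succeq (u_j-\ell_0)^{-1}I$, and since $\overline{M}\succeq\mathbf{0}$, we get $\Tr[\uRes{u_j}{A_j}\,\overline{M}] \ge \tfrac{\Tr(\overline{M})}{u_j-\ell_0} \ge \tfrac{T-1}{u_j-\ell_0}$, using $T=\lceil\Tr(\overline{M})\rceil\le\Tr(\overline{M})+1$. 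For the lower-potential term I would establish $\PV(B_j-\ell_j I)\PV \preceq (u_j-\ell_0)\PV$; taking pseudoinverses on the $k$-dimensional space $S'$ then gives $(\PV(B_j-\ell_j I)\PV)^{\dagger}\succeq(u_j-\ell_0)^{-1}\PV$, hence $\Tr[\PV(B_j-\ell_j I)\PV]^{\dagger}\ge(u_j-\ell_0)^{-1}\Tr(\PV)=\tfrac{k}{u_j-\ell_0}$, and adding the two bounds finishes the proof. The inequality to prove is equivalent to $B_j \preceq (u_j-\ell_0+\ell_j)\PV$, and I would attack it via $B_j = Z(A_j-X)Z$ together with the identity $Z(I-X)Z=\PV$ (the Remark after Lemma~\ref{thm:our projection informal}): writing $B_j = ZA_jZ - Z^2 + \PV$ and using $A_j\prec u_j I$ (so $ZA_jZ\preceq u_jZ^2$ by Lemma~\ref{fact:PSD-invariant}) yields $B_j \preceq (u_j-1)Z^2 + \PV$, and then one must absorb $(u_j-1)Z^2$ using the slack $\ell_j\ge\ell_0$, $u_j\ge u_0 = 2+\lambdaMax(X)$ and $\Lambda\ge k$.

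I expect the precise form of this last bound on $B_j|_{S'}$ to be the main obstacle, and where the bulk of the work lies. Unlike the spectral-sparsifier setting of \cite{LS15:linearsparsifier}, here the upper barrier governs $A_j$ while the lower barrier governs the conjugated matrix $B_j=Z(A_j-X)Z$, so control on $B_j|_{S'}$ must be transported from the invariant on $A_j$ through conjugation by $Z$, and the crude estimate $B_j\preceq(u_j-1)Z^2+\PV$ still carries the term $\Tr(Z^2)=\Tr[(\PV(I-X)\PV)^{\dagger}]$, which is only controlled once one uses how $S'$ (equivalently $\PV$) sits relative to the bottom eigenspace of $X$ via Lemma~\ref{thm:our projection informal}(2). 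The base case $j=0$ is reassuring and instructive: there $B_0=\mathbf{0}$, so $\Tr[\PV(B_0-\ell_0 I)\PV]^{\dagger}=k/(-\ell_0)$, and combined with $\Tr[\uRes{u_0}{X}\,\overline{M}]\ge(T-1)/u_0$ one already obtains $\rho_0(u_0-\ell_0)\ge (T-1)\tfrac{u_0-\ell_0}{u_0}+k\tfrac{u_0-\ell_0}{-\ell_0}\ge(T-1)+k\bigl(1+\tfrac{u_0\Lambda}{2k}\bigr)\ge T+2k-1$ using $u_0\ge2$ and $\Lambda\ge k$, so the claimed inequality has room to spare. Should the direct conjugation estimate prove too lossy for general $j$, a fallback is to replace $\PV(B_j-\ell_j I)\PV\preceq(u_j-\ell_0)\PV$ by the weaker trace bound $\Tr[\PV(B_j-\ell_j I)\PV]^{\dagger}\ge\tfrac{k^2}{\Tr(B_j|_{S'})-k\ell_j}$ (Cauchy--Schwarz/AM--HM) and bound $\Tr(B_j|_{S'})=\Tr[(A_j-X)Z^2]\le\Tr[(u_j I-X)Z^2]=(u_j-1)\Tr(Z^2)+k$ via $Z(u_j I-X)Z=(u_j-1)Z^2+\PV$.
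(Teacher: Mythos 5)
Your overall plan matches the paper's: split $\rho_j$ into the upper-potential term $\Tr[(u_j I-A_j)^{-1}\overline{M}]$ and the lower-potential term $\Tr[\PV(B_j-\ell_j I)\PV]^\dagger$, observe that the stated denominator equals $u_j-\ell_0$, and bound the two summands by $(T-1)/(u_j-\ell_0)$ and $k/(u_j-\ell_0)$. Your treatment of the upper-potential term is correct and essentially the paper's.

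The gap is exactly where you flag it, and the route you sketch cannot close it. The Loewner bound $B_j = ZA_jZ - ZXZ \preceq (u_j-1)Z^2+\PV$ is valid, but $Z^2 = (\PV(I-X)\PV)^{\dagger}$ satisfies $Z^2\succeq\PV$ with $\lambdaMax(Z^2|_{S'}) = (1-\lambdaMax(\PV X\PV))^{-1}$, which is $\ge 1$ and has no a priori bound in terms of $u_j,\ell_j,k,\Lambda$. Your fallback trace bound runs into the same wall, since $\Tr(Z^2|_{S'})\ge k$ with no usable upper bound --- Lemma~\ref{thm:our projection informal} controls $u^\rot X u$ only for $u\perp S'$, not $\PV X\PV$. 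The decomposition that isolates $Z^2$ throws away precisely the cancellation the claim depends on.

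The paper instead argues eigenvalue-by-eigenvalue rather than in the Loewner order. Taking $v$ to be the top unit eigenvector of $B_j|_{S'}$, it writes $v^\rot B_j v + (Zv)^\rot X (Zv) = (Zv)^\rot A_j (Zv)$, drops the nonnegative $X$-term, and concludes $\lambdaMax(B_j|_{S'})\le\lambdaMax(A_j)<u_j$; hence each $\lambda_i(B_j|_{S'})-\ell_j<u_j-\ell_j$ and $\Tr[\PV(B_j-\ell_j I)\PV]^\dagger\ge k/(u_j-\ell_j)$, which is even stronger than the $k/(u_j-\ell_0)$ you were targeting. The two summands are then merged using $(1+3\varepsilon)/(6\varepsilon)>1$ so the looser denominator dominates. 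The move you missed is evaluating the bilinear form at the eigenvector of $B_j$, so that the helpful quantity $(Zv)^\rot X(Zv)$ is subtracted off before estimating, instead of reappearing as the unbounded $(u_j-1)Z^2$. One caution worth raising even so: passing from $v^\rot B_j v\le (Zv)^\rot A_j(Zv)$ to $\le\lambdaMax(A_j)$ implicitly needs $\|Zv\|\le 1$, yet $\|Zv\|^2=v^\rot Z^2v\ge 1$ because $Z^2\succeq\PV$; so the paper's own chain at this step also deserves a second look before you rely on it.
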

\begin{proof}
By the definition of $\rho_j$, we have that 
  \[
    \rho_j = \Tr \lsp (u_j I - A_j)^{-1}\barr{M}\rsp + \Tr \lsp \PV (B_j - \ell_j I)\PV\rsp^{\dag}. 
  \]
  We will analyse the two terms of $\rho_j$ separately. To study the first term, we use the fact of $(u_j I - A_j)^{-1} \succeq (u_j I)^{-1}$ and  Lemma~\ref{fact:PSD-invariant} to obtain that  
  \begin{align*}
    \Tr \lsp (u_j I - A_j)^{-1}\barr{M}\rsp 
    &= \Tr \lsp \barr{M}^{1/2}(u_j I - A_j)^{-1}\barr{M}^{1/2}\rsp\geq \Tr \lsp \barr{M}^{1/2}(u_j I)^{-1}\barr{M}^{1/2}\rsp \\
    & = \frac{1}{u_j}\cdot \Tr[\barr{M}]
     \geq \frac{T-1}{u_j},
  \end{align*}
  where the last  inequality follows by the fact that $T=\left\lceil \Tr\left[\overline{M}\right]\right\rceil \leq \Tr\left[\overline{M}\right]+1$.
  Combining this with the fact of 
  \[
    \frac{\delta_{u, j} - \delta_{\ell, j}}{\delta_{u, j}} = \frac{6\varepsilon}{1+3\varepsilon},
  \]
  which implies that 
  \[
    \delta_{u, j} = \frac{1+3\varepsilon}{6\varepsilon} \cdot (\delta_{u, j} - \delta_{\ell, j}),
  \]
  we obtain that   
  \begin{equation}\label{eq:bound1}
    \Tr \lsp (u_j I - A_j)^{-1}\barr{M}\rsp 
    \geq \frac{T-1}{u_j}
    = \frac{T-1}{u_0 + \sum_{t=0}^{j-1}\delta_{u, t}}
    \geq \frac{T-1}{u_0 - \ell_0 + \sum_{t=0}^{j-1} \frac{1+3\varepsilon}{6\varepsilon}\cdot (\delta_{u, t} - \delta_{\ell, t})},
  \end{equation}
  where the last inequality uses the fact that $\ell_0<0$.
  
  Now we study the second term.  We will first argue that $\lambdaMax(B_j \big|_{S'}) < u_j$. Note that
    \[
        B_j = Z(A_j - X)Z = Z A_j Z - ZXZ.
    \]
    Let $v \in S'$  be the eigenvector corresponding to the largest eigenvalue of $B_j\big|_{S'}$. We have that
    \[
        v^{\rot} B_j v + (v^{\rot}Z)X(Zv) = (v^{\rot}Z)A_j(Zv). 
    \]
    Since the matrix $X$ is PSD, we have that 
   $
        \lambdaMax(B_j\big|_{S'}) = v^{\rot} B_j v \leq \lambdaMax(A_j) < u_j. 
    $
    Therefore, it holds that 
    \begin{equation}\label{eq:bound2}
        \Tr \lsp \PV (B_j - \ell_j I)\PV\rsp^{\dag} = \sum_{i=1}^k \frac{1}{\lambda_i(B_j\big|_{S'}) - \ell_j} 
        \geq \frac{k}{u_j - \ell_j}
        = \frac{k}{u_0 - \ell_0 + \sum_{t=0}^{j-1}(\delta_{u, t} - \delta_{\ell, t})}
    \end{equation}
    Combining (\ref{eq:bound1}) with (\ref{eq:bound2}), we have that 
    \begin{align*}
        \rho_j &\geq \frac{T-1}{u_0 - \ell_0 +  \sum_{t=0}^{j-1}\frac{1+3\varepsilon}{6\varepsilon}\cdot (\delta_{u, t} - \delta_{\ell, t})} + \frac{k}{u_0 - \ell_0 + \sum_{t=0}^{j-1}(\delta_{u, t} - \delta_{\ell, t})} \\
        &\geq \frac{T + k-1}{u_0 - \ell_0 + \sum_{t=0}^{j-1}\frac{1+3\varepsilon}{6\varepsilon}\cdot  \left(\delta_{u, t} - \delta_{\ell, t}\right)},
    \end{align*}
    where the last inequality follows by our choice of $\varepsilon$.
\end{proof}

\begin{lemma}\label{lem:upper bound rho_j}
It holds for any iteration $j$  that   
\[
    \rho_j \leq  \Phi_{u_j, \ell_j}^{1/q}(A_j, B_j) \cdot (T+k)^{1-1/q}.
\]
\end{lemma}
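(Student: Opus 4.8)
The plan is to bound $\rho_j$ from above by a sum of exactly $T+k$ nonnegative numbers whose $q$-th powers add up to $\Phi_{u_j,\ell_j}(A_j,B_j)$, and then to close the gap with the power-mean inequality. Concretely, first I would split
\[
\rho_j = \Tr\lsp (u_j I - A_j)^{-1}\barr{M}\rsp + \Tr\lsp \PV(B_j - \ell_j I)\PV\rsp^{\dag}
\]
and treat the two terms separately. For the first term, recall that after the WLOG reduction making $\barr{M}$ full rank we have $X+\barr{M}=I$ on the relevant space, so $\barr{M}=I-X\preceq I$ (as $X\succeq 0$) and $\Tr(\barr{M})\le \lceil \Tr(\barr{M})\rceil = T$; thus $\barr{M}$ satisfies the hypotheses of Lemma~\ref{lem: majorisation}. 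Applying that lemma with the PSD matrix $U=(u_j I-A_j)^{-1}$ (well-defined since the barrier invariant $u_j>\lambdaMax(A_j)$, needed for $\Phi^{u_j}(A_j)$ to be finite, holds throughout the execution) gives
\[
\Tr\lsp (u_j I - A_j)^{-1}\barr{M}\rsp \leq \sum_{i=n-T+1}^n \lambda_i\lp (u_j I-A_j)^{-1}\rp = \sum_{i=n-T+1}^n \frac{1}{u_j-\lambda_i(A_j)},
\]
where the last equality uses that $\lambda\mapsto 1/(u_j-\lambda)$ is increasing on $(-\infty,u_j)$, so the $T$ largest eigenvalues of $(u_j I-A_j)^{-1}$ are $1/(u_j-\lambda_i(A_j))$ for $i=n-T+1,\dots,n$. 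For the second term, the Remark after Lemma~\ref{thm:our projection informal} gives $Z\PV=\PV Z=Z$, hence $\PV B_j\PV=B_j$, i.e.\ $B_j$ is supported on $S'$; together with the invariant $\ell_j<\lambdaMin(B_j\big|_{S'})$ (needed for $\Phi_{\ell_j}(B_j)$ to be finite) this yields $\Tr\lsp \PV(B_j-\ell_j I)\PV\rsp^{\dag}=\sum_{i=1}^k 1/(\lambda_i(B_j\big|_{S'})-\ell_j)$.

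Combining the two bounds, $\rho_j\le \sum_{s=1}^{T+k}a_s$, where the $a_s>0$ are the $T$ numbers $1/(u_j-\lambda_i(A_j))$, $i=n-T+1,\dots,n$, together with the $k$ numbers $1/(\lambda_i(B_j\big|_{S'})-\ell_j)$, $i=1,\dots,k$. By the very definitions of $\Phi^{u_j}(A_j)$ and $\Phi_{\ell_j}(B_j)$, we have $\sum_{s=1}^{T+k}a_s^q = \Phi^{u_j}(A_j)+\Phi_{\ell_j}(B_j)=\Phi_{u_j,\ell_j}(A_j,B_j)$. I would then finish by Hölder's inequality with exponents $q$ and $q/(q-1)$ (equivalently, the power-mean inequality): $\sum_{s=1}^{T+k}a_s = \sum_s a_s\cdot 1 \le \lp\sum_s a_s^q\rp^{1/q}\cdot (T+k)^{(q-1)/q}$, which is exactly $\Phi_{u_j,\ell_j}^{1/q}(A_j,B_j)\cdot (T+k)^{1-1/q}$, as claimed.

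I do not anticipate any genuine obstacle: the argument is essentially a convexity (norm-monotonicity) inequality dressed up in spectral language. The only points that need care are (i) verifying the two hypotheses $\barr{M}\preceq I$ and $\Tr(\barr{M})\le T$ so that Lemma~\ref{lem: majorisation} is applicable, and (ii) being explicit that the barrier invariants $u_j>\lambdaMax(A_j)$ and $\ell_j<\lambdaMin(B_j\big|_{S'})$ are in force, so that all the reciprocals above are well-defined and strictly positive; both are standard consequences of the potential functions remaining finite throughout Algorithm~\ref{alg:subgraph sparsifier}.
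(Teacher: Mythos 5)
Your proof is correct and follows essentially the same route as the paper's: bound $\Tr\lsp(u_jI-A_j)^{-1}\barr{M}\rsp$ via the majorisation lemma (Lemma~\ref{lem: majorisation}), observe that the lower-barrier term equals $\sum_{i=1}^k 1/(\lambda_i(B_j\big|_{S'})-\ell_j)$ (the paper does this via $Z\barr{M}Z=\PV$, you via $\PV B_j\PV=B_j$ — same fact), and close with H\"older's inequality over $T+k$ terms.
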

\begin{proof}
  \begin{align*}
    \rho_j &  = \sum_{i=1}^m v_i^{\rot}\uRes{u_j}{A_j} v_i + v_i^{\rot}\lRes{\ell_j}{B_j}v_i \\
    &= \Tr \lp \uRes{u_j}{A_j} \sum_{i=1}^m v_iv_i^{\rot} \rp + \Tr \lp \lRes{\ell_j}{B_j} \sum_{i=1}^m v_iv_i^{\rot}\rp \\
    &= \Tr \lp \uRes{u_j}{A_j} \barr{M} \rp + \Tr \lp \lRes{\ell_j}{B_j} \barr{M}\rp \\
    &\leq \sum_{i=n-T+1}^n \lambda_i \uRes{u_j}{A_j} + \Tr \lp \lp \PV(B_j - \ell_j I)\PV\rp^\dag Z \barr{M} Z \rp \\
    &= \sum_{i=n-T+1}^n \lp \frac{1}{u_j - \lambda_i(A_j)} \rp + \sum_{i=1}^k \lp \frac{1}{\lambda_i(B_j\big|_{S'}) - \ell_j} \rp \\
    &\leq \lsp \sum_{i=n-T+1}^n \lp \frac{1}{u_j - \lambda_i(A_j)}\rp^q + \sum_{i=1}^k \lp \frac{1}{\lambda_i(B_j\big|_{S'}) - \ell_j} \rp^q \rsp^{1/q} \cdot (T+k)^{1-1/q}\\
    &= \Phi_{u_j, \ell_j}^{1/q}(A_j, B_j) \cdot \lp T+k\rp^{1-1/q},
  \end{align*}
  where the first inequality comes from Lemma~\ref{lem: majorisation} and the second
  inequality follows by the   H\"{o}lder's inequality. 
\end{proof}

We will first prove that there are a sufficient number of vectors sampled in each iteration.

\begin{lemma}\label{lem:lower bound N_j}
The total number of vectors sampled in each iteration $j$ satisfies 
    \[
        N_j \geq c_N \cdot \rho_j^{1-2\varepsilon/q} \cdot \Phi_{u_j, \ell_j}(A_j, B_j)^{-1/q},
    \]
    for some $c_N=\Omega\left( \left(1/(\mathrm{poly}(n)\right)^{2\varepsilon/q} \right)$.
\end{lemma}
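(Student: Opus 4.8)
The plan is to regard the claimed inequality as a purely deterministic statement about the quantities $N_j$, $\rho_j$ and $\Phi_{u_j,\ell_j}(A_j,B_j)$ attached to the state $(A_j,B_j,u_j,\ell_j)$ at the start of iteration $j$, and to unwind the definition of $N_j$ in terms of a handful of spectral quantities of $P\triangleq\uRes{u_j}{A_j}\barr{M}$ and $Q\triangleq\lp\PV(B_j-\ell_j I)\PV\rp^{\dag}$. With this notation $\rho_j=\Tr[P]+\Tr[Q]$ and the definition of $N_j$ reads
\[
N_j=\lp\frac{\varepsilon}{4\rho_j}\cdot\lambdaMin(P)\cdot\frac{\lambdaMax(P)}{\Tr[P]}\rp^{2\varepsilon/q}\cdot\rho_j\cdot\min\lp\frac{1}{\lambdaMax(P)},\frac{1}{\lambdaMax(Q)}\rp .
\]
First I would record the a-priori bounds valid throughout the run: $A_j\succeq 0$, $u_j I-A_j\succ 0$ and $B_j\big|_{S'}\succ\ell_j I$ (the standard invariants, which are maintained using Lemma~\ref{lem:potentials decrease in subphases}); $\|\barr M\|=O(1)$ and $\lambdaMin(\barr M)\ge 1/\poly(n)$ (the former since $X+\barr M\preceq O(1)\cdot I$ with $X\succeq 0$, the latter from the full-rank/self-loop assumption); and $u_j,\abs{\ell_j},\rho_j\le\poly(n)$ with the polynomial \emph{independent of $q$} (from the initialisation, the termination condition, and Lemmas~\ref{lem:Phi_0} and \ref{lem:upper bound rho_j} together with the invariant $\Phi_{u_j,\ell_j}(A_j,B_j)=O\lp\Phi_{u_0,\ell_0}(A_0,B_0)\rp$). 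From these, using Lemma~\ref{fact:PSD-invariant} and circular invariance of eigenvalues, I obtain the four facts $\lambdaMin(P)\ge\lambdaMin(\barr M)/(u_j-\lambdaMin(A_j))\ge 1/\poly(n)$, $\lambdaMax(P)\le\|\barr M\|/(u_j-\lambdaMax(A_j))$, $\lambdaMax(P)/\Tr[P]\ge 1/(n-1)$, and — since $Q$ is supported on $S'$ where it equals $\lp B_j\big|_{S'}-\ell_j I\rp^{-1}$ and $\lambdaMax(B_j\big|_{S'})<u_j$ (shown in the proof of Lemma~\ref{lem:lower bound rho_j}) — $\lambdaMax(Q)=1/(\lambdaMin(B_j\big|_{S'})-\ell_j)\ge 1/(u_j-\ell_j)\ge 1/\poly(n)$.

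Next, the prefactor $\lp\frac{\varepsilon}{4\rho_j}\lambdaMin(P)\frac{\lambdaMax(P)}{\Tr[P]}\rp^{2\varepsilon/q}$ is at least $(1/\poly(n))^{2\varepsilon/q}$ by the first, third, and the bound $\rho_j\le\poly(n)$. So it suffices to show $\rho_j/\max\{\lambdaMax(P),\lambdaMax(Q)\}\ge\Omega\lp(1/\poly(n))^{2\varepsilon/q}\rp\cdot\rho_j^{\,1-2\varepsilon/q}\,\Phi_{u_j,\ell_j}(A_j,B_j)^{-1/q}$, equivalently $\rho_j^{\,2\varepsilon/q}\,\Phi_{u_j,\ell_j}(A_j,B_j)^{1/q}\big/\max\{\lambdaMax(P),\lambdaMax(Q)\}\ge\Omega\lp(1/\poly(n))^{2\varepsilon/q}\rp$. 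Since $\rho_j\ge\Tr[P]\ge\lambdaMax(P)$ and $\rho_j\ge\Tr[Q]\ge\lambdaMax(Q)$, we have $\rho_j^{\,2\varepsilon/q}\ge\max\{\lambdaMax(P),\lambdaMax(Q)\}^{2\varepsilon/q}$, hence the left-hand side is at least $\max\{\lambdaMax(P),\lambdaMax(Q)\}^{2\varepsilon/q-1}\cdot\Phi_{u_j,\ell_j}(A_j,B_j)^{1/q}$, and note $2\varepsilon/q-1<0$ as $\varepsilon\le 1/20,q\ge 10$. I would then split into two cases, using the one-term lower bounds $\Phi_{u_j,\ell_j}(A_j,B_j)^{1/q}\ge\Phi^{u_j}(A_j)^{1/q}\ge 1/(u_j-\lambdaMax(A_j))$ and $\Phi_{u_j,\ell_j}(A_j,B_j)^{1/q}\ge\Phi_{\ell_j}(B_j)^{1/q}\ge\lambdaMax(Q)$: if $\lambdaMax(P)\ge\lambdaMax(Q)$, combining $\lambdaMax(P)\le\|\barr M\|/(u_j-\lambdaMax(A_j))$ with the negative exponent gives $\lambdaMax(P)^{2\varepsilon/q-1}\Phi^{u_j}(A_j)^{1/q}\ge\|\barr M\|^{2\varepsilon/q-1}(u_j-\lambdaMax(A_j))^{-2\varepsilon/q}=\Omega(1)\cdot(\poly(n))^{-2\varepsilon/q}$ since $\|\barr M\|=O(1)$ and $u_j\le\poly(n)$; if instead $\lambdaMax(Q)>\lambdaMax(P)$, then $\lambdaMax(Q)^{2\varepsilon/q-1}\Phi_{\ell_j}(B_j)^{1/q}\ge\lambdaMax(Q)^{2\varepsilon/q}\ge(1/\poly(n))^{2\varepsilon/q}$ by the last bound of the previous paragraph. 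In either case, multiplying this by the prefactor bound yields $N_j\ge c_N\,\rho_j^{\,1-2\varepsilon/q}\,\Phi_{u_j,\ell_j}(A_j,B_j)^{-1/q}$ with $c_N=\Omega\lp(1/\poly(n))^{2\varepsilon/q}\rp$.

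The hard part is not any single estimate but the bookkeeping of polynomial losses: every factor picked up along the way must be either a genuine constant or of the form $(1/\poly(n))^{2\varepsilon/q}$, because a bare $1/\poly(n)$ — let alone $1/\poly\log n$ — would not be absorbable into $c_N$ once $q$ is allowed to grow with $n$. Two structural ingredients make this go through: the bound $\|\barr M\|=O(1)$, which is exactly what forces $\lambdaMax(P)^{2\varepsilon/q-1}\Phi^{u_j}(A_j)^{1/q}$ to collapse to $(u_j-\lambdaMax(A_j))^{-2\varepsilon/q}$ up to a constant rather than dragging in a spurious $\|\barr M\|^{-1}$; and the elementary domination $\rho_j^{2\varepsilon/q}\ge\max\{\lambdaMax(P),\lambdaMax(Q)\}^{2\varepsilon/q}$, which is what lets the ``$-1$'' in the exponent $2\varepsilon/q-1$ be traded for a factor of $\Phi_{u_j,\ell_j}(A_j,B_j)^{1/q}$. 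The remainder is routine: verifying the a-priori bounds from the invariants and the termination condition, and checking signs so that the monotonicity steps (which rely on $2\varepsilon/q-1<0$) run in the right direction.
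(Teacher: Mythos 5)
Your case analysis in the second paragraph is, when unwound, exactly the paper's inequality
\[
\Phi_{u_j,\ell_j}(A_j,B_j)^{1/q}\geq\max\left\{\lambdaMax\lp(u_jI-A_j)^{-1}\barr M\rp,\ \lambdaMax\lp\PV(B_j-\ell_jI)\PV\rp^{\dag}\right\}
\]
in disguise: Case~1 shows $\Phi^{u_j}(A_j)^{1/q}\geq 1/(u_j-\lambdaMax(A_j))\geq\lambdaMax((u_jI-A_j)^{-1}\barr M)$ via $\barr M\preceq I$, and Case~2 shows $\Phi_{\ell_j}(B_j)^{1/q}\geq\lambdaMax(\PV(B_j-\ell_jI)\PV)^{\dag}$, which is what the paper proves in one shot to conclude the $\min$-term in $N_j$ is at least $\Phi_{u_j,\ell_j}(A_j,B_j)^{-1/q}$. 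Your prefactor bounds ($\lambdaMin$ at least $1/\poly(n)$, $\lambdaMax/\Tr\geq 1/n$) likewise match the paper.

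The gap is in how you bookkeep $\rho_j$. You split the $\rho_j^{-2\varepsilon/q}$ off the prefactor $\lp\frac{\varepsilon}{4\rho_j}\lambdaMin(\cdot)\frac{\lambdaMax(\cdot)}{\Tr(\cdot)}\rp^{2\varepsilon/q}$ and bound it from below by $(1/\poly(n))^{2\varepsilon/q}$, which requires the upper bound $\rho_j\leq\poly(n)$; you justify that via "the invariant $\Phi_{u_j,\ell_j}(A_j,B_j)=O(\Phi_{u_0,\ell_0}(A_0,B_0))$". That is not a deterministic invariant of the algorithm: Lemma~\ref{lem:potentials decrease in subphases} only establishes it in conditional expectation, and it is then upgraded to a high-probability statement by Markov — so an upper bound on $\rho_j$ obtained this way holds only on a probabilistic event, whereas Lemma~\ref{lem:lower bound N_j} is meant as a pointwise inequality that Lemma~\ref{lem:number of rounds} then consumes with a fixed $c_N$. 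The paper never needs any upper bound on $\rho_j$ at all, precisely because it does not perform this split: the $\rho_j^{-2\varepsilon/q}$ inside the prefactor cancels exactly against the explicit $\rho_j$ multiplier to give the $\rho_j^{1-2\varepsilon/q}$ in the target, and the only dynamic quantity it bounds is $u_j$, which is controlled deterministically by the while-loop condition $\widehat u-\widehat\ell\leq\alpha+u_0-\ell_0$. Once you drop the split (writing $N_j=\lp\frac{\varepsilon}{4}\lambdaMin(\cdot)\frac{\lambdaMax(\cdot)}{\Tr(\cdot)}\rp^{2\varepsilon/q}\rho_j^{1-2\varepsilon/q}\cdot\min\{\cdots\}$ directly), your own Cases~1 and~2 already finish the argument cleanly and without any appeal to $\rho_j\leq\poly(n)$.
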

\begin{proof}
We first recall that 
  \begin{align*}
  N_j &=  \lp \frac{\varepsilon}{4\rho_j}\cdot \lambdaMin\lsp(u_j I-A_j)^{-1}\barr{M}\rsp \cdot \frac{\lambdaMax\lp(u_j I - A_j)^{-1}\barr{M}\rp}{\Tr \lsp(u_j I - A_j)^{-1}\barr{M}\rsp}\rp ^{2\varepsilon/q} \cdot \rho_j \\
&\cdot \mathrm{min} \left\{ \frac{1}{\lambdaMax\lp (u_j I - A_j)^{-1}\barr{M}\rp}, \frac{1}{\lambdaMax\lp \PV (B_j - \ell_j I) \PV\rp^{\dag}} \right\}.
  \end{align*}
  To prove the claimed statement, we lower bound the terms involved above separately.  For simplicity, let us denote
  \[
    \Phi_j \triangleq \Phi_{u_j, \ell_j}(A_j, B_j).
  \]
  We will first prove that  
  \[
     \mathrm{min} \left\{ \frac{1}{\lambdaMax\lp (u_j I - A_j)^{-1}\barr{M}\rp}, \frac{1}{\lambdaMax\lp \PV (B_j - \ell_j I) \PV\rp^{\dag}} \right\}
     \geq \Phi_j^{-1/q},
  \]
  which is equivalent to showing
  \begin{equation}\label{eq:bound111}
    \Phi_j^{1/q} \geq \max \left\{ \lambdaMax\lp (u_j I - A_j)^{-1}\barr{M}\rp, \lambdaMax\lp \PV (B_j - \ell_j I) \PV\rp^{\dag} \right\}.
  \end{equation}
  Note that 
  \begin{align*}
      \Phi_j^{1/q} &= \lp \Tr \lsp \PL{A_j} \lp u_j I - A_j\rp^{-q}\PL{A_j}\rsp + \Tr \lsp \PV \lp B_j - \ell_j I\rp \PV\rsp^{\dag q} \rp^{1/q} \\
      &\geq \max \left\{ \Tr \lsp \PL{A_j} \lp u_j I - A_j\rp^{-q}\PL{A_j}\rsp, \Tr \lsp \PV \lp B_j - \ell_j I\rp \PV\rsp^{\dag q} \right\}^{1/q}\\
      &= \max \left\{ \sum_{i=n-T+1}^n \lp \frac{1}{u_j - \lambda_i(A_j)}\rp^q , \sum_{i=1}^k \lp \frac{1}{\lambda_i(B_j) - \ell_j}\rp^{q}\right\}^{1/q}\\
      &\geq \max \left\{ \lp \frac{1}{u_j - \lambdaMax(A_j)}\rp^q , \lp \frac{1}{\lambdaMin(B_j) - \ell_j}\rp^q \right\}^{1/q}\\
      &= \max \left\{ \frac{1}{u_j - \lambdaMax(A_j)} , \frac{1}{\lambdaMin(B_j) - \ell_j} \right\}\\
      &= \max \left\{ \lambdaMax\lp (u_j I - A_j)^{-1}\rp, \lambdaMax\lp \PV (B_j - \ell_j I) \PV\rp^{\dag} \right\}\\
      &\geq \max \left\{ \lambdaMax\lp (u_j I - A_j)^{-1}\barr{M}\rp, \lambdaMax\lp \PV (B_j - \ell_j I) \PV\rp^{\dag} \right\},
  \end{align*}
  where the last inequality comes from the fact that $\barr{M} \preceq I.$ With this we proved (\ref{eq:bound111}).

  We continue with lower bounding other terms involved in the definition of $N_j$.
We have that   \[
    \frac{\lambdaMax\lp(u_j I - A_j)^{-1}\barr{M}\rp}{\Tr \lsp(u_j I - A_j)^{-1}\barr{M}\rsp} \geq \frac{1}{n},
  \]
  and
  \begin{align*}
    \lambdaMin \lsp (u_j I - A_j)^{-1}\barr{M}\rsp
    &\geq \lambdaMin \lsp (u_j I)^{-1}\barr{M}\rsp
    = \frac{\lambdaMin\left(\barr{M}\right)}{u_j} \\
    &= \frac{\lambdaMin\left(\barr{M}\right)}{u_0 + \sum_{t=0}^{j-1}\delta_{u, t}}
    \geq \frac{\lambdaMin\left(\barr{M}\right)}{u_0 - \ell_0 + \frac{1+3\varepsilon}{6\varepsilon}\cdot \sum_{t=0}^{j-1}(\delta_{u, t} - \delta_{\ell, t})} \\
    &\geq \frac{\lambdaMin\left(\barr{M}\right)}{u_0 - \ell_0 + \frac{1+3\varepsilon}{6\varepsilon} \cdot \alpha} \geq \frac{\lambdaMin\left(\barr{M}\right)}{3 + \frac{1 + 3\varepsilon}{\varepsilon}}
    = \frac{\varepsilon}{1+6\varepsilon} \cdot \lambdaMin\left(\barr{M}\right).
  \end{align*}
 Combining  everything together, we have  that
  \begin{align*}
    N_j &\geq \lp\frac{\varepsilon}{4\rho_j} \cdot \frac{\varepsilon}{1+6\varepsilon}\cdot \lambdaMin\left(\barr{M}\right) \cdot \frac{1}{n}\rp^{2\varepsilon/q} \cdot \rho_j \cdot \Phi_j^{-1/q} \\
    &\geq \frac{1}{16} \cdot \lambdaMin\left(\barr{M}\right)^{2\varepsilon/q} \cdot n^{-2\varepsilon/q}\cdot \rho_j^{1-2\varepsilon/q} \cdot \Phi_j^{-1/q}.
  \end{align*} 
  To prove the claimed statement, we define
  \[
        c_N = \frac{1}{16}\cdot\lambdaMin(\barr{M})^{2\varepsilon/q} \cdot n^{-2\varepsilon/q},
    \] 
    and it suffices to prove that $\lambda_{\min}(\overline{M}) = \Omega(1/\mathrm{poly}(n))$.       By definition, we know that 
    \[
    \overline{M}=I-X = L_{G+W}^{\dagger/2} \left( L_{G+W} - L_G \right) L_{G+W}^{\dagger/2}.
    \]
    Combining this with the Courant-Fischer formulation of the eigenvalues, we know that 
    \begin{align*}
        \lambda_{\min} \left( \overline{M} \right) & =  \min_{\substack{ x\in\mathbb{R}^n \\ x\neq 0}}
        \frac{x^{\rot} L_{G+W}^{\dagger/2} \left( L_{G+W} - L_G \right) L_{G+W}^{\dagger/2} x}{x^{\rot} x}.
    \end{align*}
    By setting $ y = L_{G+W}^{\dagger/2} x$, we have   $x=L^{1/2}_{G+W} y$ and 
    \begin{align}
    \lefteqn{\lambda_{\min} \left( \overline{M} \right)} \nonumber \\ & = 
    \min_{\substack{ y\in\mathbb{R}^n \nonumber  \\ y\neq 0, y\neq 1}}
    \frac{y^{\rot} (L_{G+W} - L_G) y }{y^{\rot} L_{G+W} y} \nonumber \\
    & = \min_{\substack{ y\in\mathbb{R}^n \\ y\neq 0, y\neq 1}} \frac{\sum_{  \{u,v\}\in E(  W)} w_{u,v}\cdot \left(y_u-y_v\right)^2  + \sum_{u\in V} \gamma y_u^2}{\sum_{  \{u,v\}\in E(G \cup  W)} w_{u,v}\cdot\left(y_y-u_v\right)^2+\sum_{u\in V} \gamma y_u^2   }\nonumber \\
    & = \min_{\substack{ y\in\mathbb{R}^n \\ y\neq 0, y\neq 1}} \left(\frac{\sum_{  \{u,v\}\in E(  W)} w_{u,v}\cdot \left(y_u-y_v\right)^2  + \sum_{u\in V} \gamma y_u^2  }{ y^{\rot} y} \cdot \frac{ y^{\rot } y }{\sum_{  \{u,v\}\in E(G \cup  W)} w_{u,v}\cdot\left(y_y-u_v\right)^2 + \sum_{u\in V} \gamma y_u^2} \right), \nonumber 
    \end{align}
    where $\gamma$ corresponds to the $n$ self-loops we introduced artificially to ensure that $\overline{M}$ has full rank.
 To bound the term above,
 notice that
 \begin{equation}\label{eq:boundlambmin}
 \min_{\substack{ y\in\mathbb{R}^n \\ y\neq 0, y\neq 1}} \left(\frac{\sum_{  \{u,v\}\in E(  W)} w_{u,v}\cdot \left(y_u-y_v\right)^2  + \sum_{u\in V} \gamma y_u^2  }{ y^{\rot} y}   \right)\geq \gamma,
 \end{equation}
 and 
 \begin{align*}
\lefteqn{ \max_{\substack{ y\in\mathbb{R}^n \\ y\neq 0, y\neq 1}} \left( \frac{ \sum_{  \{u,v\}\in E(G \cup  W)} w_{u,v}\cdot\left(y_y-u_v\right)^2  + \sum_{u\in V} \gamma_u y_u^2}{ y^{\rot }y} \right) }\\
 \leq & \max_{\substack{ y\in\mathbb{R}^n \\ y\neq 0, y\neq 1}} \left( \frac{\sum_{\{u,v\}\in E(G+W)} 2\left(\deg(u)\cdot y_u^2 + \deg(v) y^2_v\right) + \sum_{ u\in V}\gamma y_u^2 }{\sum_{u\in V[G]} y_u^2 } \right)\\
 &\leq 2\Delta(G+W) +\gamma,
     \end{align*}
where $\Delta(G+W) $ is the maximum degree of $G+W$ and $\Delta(G+W) =O(\mathrm{poly}(n))$ since the edge weight of $G$ is polynomial bounded.    Combining these with (\ref{eq:boundlambmin}) proves the statement.
\end{proof}
Next we prove that the algorithm finishes in a sub-linear number of iterations.

\begin{lemma}\label{lem:number of rounds}
    Assume the   number of sampled vectors in iteration $j$   satisfies
    \[
        N_j \geq c_N \cdot \rho^{1-2\varepsilon/q}_j \cdot \Phi_{u_j, \ell_j}(A_j, B_j)^{-1/q},
    \]
    for some coefficient $c_N$ independent of $j$. Then, with probability at least $4/5$, Algorithm~\ref{alg:subgraph sparsifier} finishes in at most 
    \[
        \tau = \frac{10 \cdot \alpha\cdot q \cdot\rho_0^{2\varepsilon/q}\cdot \Phi_{u_0, \ell_0}(A_0, B_0)^{1/q}}{6\varepsilon^2 \cdot c_N} \leq \frac{80q}{3\varepsilon^2}\cdot \frac{1}{c_N} \cdot \Lambda^{(1+2\varepsilon)/q}
    \]
    iterations, where $c_N=\Omega\left( \left(1/(\mathrm{poly}(n)\right)^{2\varepsilon/q} \right)$.
\end{lemma}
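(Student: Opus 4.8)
The plan is to show that each full iteration $j$ of Algorithm~\ref{alg:subgraph sparsifier} enlarges the gap $\widehat u-\widehat\ell$ by a definite amount, and then to divide the target gap $\alpha$ by the smallest per‑iteration increase. Recall that between iterations $j$ and $j+1$ the gap increases by $\delta_{u,j}-\delta_{\ell,j}=N_j\lp\barr{\delta}_{u,j}-\barr{\delta}_{\ell,j}\rp=\tfrac{6\varepsilon^2 N_j}{q\rho_j}$, since $\barr{\delta}_{u,j}-\barr{\delta}_{\ell,j}=\tfrac{6\varepsilon^2}{q\rho_j}$. Combining this with the hypothesis $N_j\geq c_N\,\rho_j^{1-2\varepsilon/q}\,\Phi_{u_j,\ell_j}(A_j,B_j)^{-1/q}$ yields the per‑iteration progress bound
\[
\delta_{u,j}-\delta_{\ell,j}\ \geq\ \frac{6\varepsilon^2 c_N}{q}\cdot \rho_j^{-2\varepsilon/q}\cdot \Phi_{u_j,\ell_j}(A_j,B_j)^{-1/q}.
\]
If the algorithm performs at least $\tau$ iterations then the terminating condition $\widehat u-\widehat\ell>\alpha+u_0-\ell_0$ has not been triggered, so $\sum_{j=0}^{\tau-1}(\delta_{u,j}-\delta_{\ell,j})\leq\alpha$; hence $\tau\leq \tfrac{\alpha q}{6\varepsilon^2 c_N}\cdot\max_{0\leq j\leq\tau-1}\bigl(\rho_j^{2\varepsilon/q}\,\Phi_{u_j,\ell_j}(A_j,B_j)^{1/q}\bigr)$. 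It therefore suffices to show that, with constant probability, $\rho_j$ and $\Phi_{u_j,\ell_j}(A_j,B_j)$ never exceed a constant multiple of $\rho_0$ and $\Phi_{u_0,\ell_0}(A_0,B_0)$, and then to substitute the explicit estimates of Lemmas~\ref{lem:rho_0} and \ref{lem:Phi_0}.

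For the uniform control of $\rho_j$ and $\Phi_{u_j,\ell_j}(A_j,B_j)$, I would combine Lemma~\ref{lem:potentials decrease in subphases} with a maximal inequality. Let $\mathcal G$ be the event that $\mathbf 0\preceq W_j\preceq\tfrac12(u_jI-A_j)$ holds in every iteration; by Lemma~\ref{lem:bounded W} and a union bound over iterations, $\Pro[\mathcal G^{c}]\leq \tau\cdot\varepsilon/(2n)$, which is $o(1)$ once we know $\tau=n^{o(1)}\cdot q/\varepsilon^2$ (this is the final bound itself, so the mild circularity is handled by a short bootstrap: if the algorithm ran for more than the claimed $\tau^\ast$ iterations, restrict the union bound to the first $\tau^\ast$ of them). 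Let $S$ be the first iteration in which the event of Lemma~\ref{lem:bounded W} fails, and define the stopped processes $Y_j\triangleq\Phi_{u_{j\wedge S},\ell_{j\wedge S}}(A_{j\wedge S},B_{j\wedge S})$ and $\widehat Y_j\triangleq\rho_{j\wedge S}$. Chaining the subphase estimates of Lemma~\ref{lem:potentials decrease in subphases} through the tower property (and using that after $S$ both processes are frozen) shows that $Y_j$ and $\widehat Y_j$ are non‑negative supermartingales, so Doob's maximal inequality gives $\Pro[\sup_j Y_j\geq 20\,\Phi_{u_0,\ell_0}(A_0,B_0)]\leq 1/20$ and $\Pro[\sup_j\widehat Y_j\geq 20\rho_0]\leq 1/20$. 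On the intersection of these two events with $\mathcal G$ — which has probability at least $4/5$ — one has $S=\infty$, hence $\Phi_{u_j,\ell_j}(A_j,B_j)<20\,\Phi_{u_0,\ell_0}(A_0,B_0)$ and $\rho_j<20\rho_0$ for every $j$.

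Plugging these into the displayed bound on $\tau$ gives $\tau\leq\tfrac{\alpha q}{6\varepsilon^2 c_N}\cdot 20^{(1+2\varepsilon)/q}\cdot\rho_0^{2\varepsilon/q}\,\Phi_{u_0,\ell_0}(A_0,B_0)^{1/q}$, and since $q\geq 10$ and $\varepsilon\leq 1/20$ the factor $20^{(1+2\varepsilon)/q}\leq 10$, which is the first claimed inequality. For the second inequality I would insert $\alpha=4k/\Lambda$, the bound $\rho_0\leq(T+\Lambda)/2\leq\Lambda$ from Lemma~\ref{lem:rho_0}, and $\Phi_{u_0,\ell_0}(A_0,B_0)\leq T\cdot(u_0-\lambdaMax(X))^{-q}+k\cdot(-\ell_0)^{-q}$ from Lemma~\ref{lem:Phi_0}; with $u_0-\lambdaMax(X)=2$, $-\ell_0=2k/\Lambda$, and the elementary fact $T\leq\Lambda^{q}/k^{q-1}$ (check the two cases $\Lambda=k$ and $\Lambda=T$), this simplifies to $\Phi_{u_0,\ell_0}(A_0,B_0)^{1/q}\leq 2^{1/q-1}\Lambda/k^{1-1/q}$, whence $\alpha\,\rho_0^{2\varepsilon/q}\,\Phi_{u_0,\ell_0}(A_0,B_0)^{1/q}\leq 2^{1+1/q}\Lambda^{(1+2\varepsilon)/q}\leq\tfrac83\Lambda^{(1+2\varepsilon)/q}$, giving $\tau\leq\tfrac{80q}{3\varepsilon^2 c_N}\Lambda^{(1+2\varepsilon)/q}$ as stated.

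The main obstacle is the uniform‑in‑$j$ control of $\Phi_{u_j,\ell_j}(A_j,B_j)$ and $\rho_j$: Lemma~\ref{lem:potentials decrease in subphases} only provides a one‑step decrease \emph{conditioned on} the good event $W_j\preceq\tfrac12(u_jI-A_j)$, so one has to set up the stopped‑process/maximal‑inequality argument correctly and reconcile it with a union bound over a random number of iterations, together with the bootstrap resolving the circularity between the failure probability of the good events and the value of $\tau$. The remaining pieces — the per‑iteration progress estimate and the arithmetic substitution of the closed‑form bounds for $\rho_0$ and $\Phi_{u_0,\ell_0}(A_0,B_0)$ — are routine.
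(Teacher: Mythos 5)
Your proposal is correct but takes a genuinely different route from the paper, and both arrive at the same $\tau$. The paper converts the progress condition via the elementary inequality $\lp\sum_j a_j\rp\lp\sum_j a_j^{-1}\rp\geq\tau^2$, then applies Markov's inequality to $\sum_j\rho_j^{2\varepsilon/q}\Phi_j^{1/q}$; to bound the conditional expectation of that sum it chains Lemma~\ref{lem:potentials decrease in subphases} through the tower property and uses Cauchy--Schwarz for random variables followed by Jensen (for the concave maps $x\mapsto x^{4\varepsilon/q}$ and $x\mapsto x^{2/q}$) to obtain $\widetilde{\Ex}\lsp\sum_j\rho_j^{2\varepsilon/q}\Phi_j^{1/q}\rsp\leq\tau\,\rho_0^{2\varepsilon/q}\Phi_0^{1/q}$. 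You instead establish a uniform-in-$j$ bound $\rho_j<20\rho_0$ and $\Phi_{u_j,\ell_j}<20\,\Phi_{u_0,\ell_0}$ via a stopped supermartingale and Doob's maximal inequality, and then divide $\alpha$ by the minimum per-iteration progress. Your stopped process is set up correctly: because Lemma~\ref{lem:potentials decrease in subphases} gives a one-step decrease under the conditional law $\widetilde{\Ex}$ regardless of the $\mathcal F_j$-measurable state, the frozen process is indeed a non-negative supermartingale, and the loss factor $20^{(1+2\varepsilon)/q}$ is comfortably below the constant $10$ absorbed in the definition of $\tau$. What the paper's route buys is economy: it needs only linearity of expectation plus Markov, with no stopping times or maximal inequalities, and the circularity between $\tau$ and the union bound over the good events $W_j\preceq\tfrac12(u_jI-A_j)$ is handled in passing. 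What your route buys is a stronger, more reusable conclusion — uniform control of $\rho_j$ and $\Phi_{u_j,\ell_j}$ over all iterations — at the cost of the stopping-time bookkeeping and the explicit bootstrap to close the circularity, which you correctly flag. Two small remarks: you do need the supermartingale property for each of $\rho_j$ and $\Phi_{u_j,\ell_j}$ separately (this follows from the proof of Lemma~\ref{lem:potentials decrease in subphases}, and the paper itself asserts the individual decrease when proving Theorem~\ref{thm:main-formal}); and your closed-form arithmetic — in particular the inequality $T\leq\Lambda^q/k^{q-1}$ and the resulting estimate $\Phi_{u_0,\ell_0}(A_0,B_0)^{1/q}\leq 2^{1/q-1}\Lambda/k^{1-1/q}$ — checks out and yields a constant at least as good as the paper's $80/3$.
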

\begin{proof}
  For simplicity let us denote
  \[
    \Phi_j \triangleq \Phi_{u_j, \ell_j}(A_j, B_j).
  \]
  First, notice that for every round $j$, it holds that 
  \[
    \delta_{u, j} - \delta_{\ell, j} = \frac{6\varepsilon^2}{q} \cdot \frac{N_j}{\rho_j}
    \geq \frac{6\varepsilon^2}{q} \cdot c_N \cdot \rho_j^{-2\varepsilon/q} \cdot \Phi_j^{-1/q}, 
  \]
  where the last inequality comes from our assumption. Suppose $\tau$ is the last round in which we sample vectors. Then, it holds that 
  \begin{align}
     \Pro \lsp \text{the algorithm finishes after $\tau$ rounds}\rsp  
    &\geq \Pro \lsp \sum_{j=0}^{\tau-1}\delta_{u,j} - \delta_{\ell, j} > \alpha \rsp \nonumber\\
    &\geq \Pro \lsp \sum_{j=0}^{\tau-1} \frac{6\varepsilon^2}{q} \cdot c_N \cdot \rho_j^{-2\varepsilon/q} \cdot \Phi_j^{-1/q} \geq \alpha \rsp \nonumber\\
    &= \Pro \lsp \sum_{j=0}^{\tau-1} \rho_j^{-2\varepsilon/q} \cdot \Phi_j^{-1/q} \geq \frac{q\cdot \alpha}{6\varepsilon^2 \cdot c_N} \rsp \nonumber\\
    &\geq \Pro \lsp \sum_{j=0}^{\tau-1} \rho_j^{2\varepsilon/q} \cdot \Phi_j^{1/q} \leq \tau^2 \cdot \frac{6\varepsilon^2 \cdot c_N}{  \alpha\cdot q} \rsp \label{eq:alg ends2},
  \end{align}
  where \eqref{eq:alg ends2} comes from the following inequality
  \[
    \lp \sum_{j=0}^{\tau - 1} \lp\rho_j^{2\varepsilon}\cdot \Phi_j\rp^{1/q} \rp \lp \sum_{j=0}^{\tau - 1} \lp\rho_j^{2\varepsilon}\cdot \Phi_j\rp^{-1/q}\rp \geq \tau^2.
  \]
  
Now we bound the probability of the opposite event. By applying the Cauchy-Schwarz inequality for random variables, which states that 
\[
    \Ex \lsp XY \rsp \leq \sqrt{\Ex \lsp X^2\rsp \cdot \Ex \lsp Y^2\rsp}.  
  \]
for any random variables $X$ and $Y$, we have that 
\begin{align*}
    \tilde{\Ex} \lsp \sum_{j=0}^{\tau-1} \rho_j^{2\varepsilon/q} \cdot \Phi_j^{1/q}\rsp 
    &= \sum_{j=0}^{\tau-1} \tilde{\Ex} \lsp \rho_j^{2\varepsilon/q} \cdot\Phi_j^{1/q}\rsp
    \leq \sum_{j=0}^{\tau-1} \sqrt{\tilde{\Ex}\lsp \rho_j^{4\varepsilon/q}\rsp \cdot \tilde{\Ex}\lsp \Phi_j^{2/q}\rsp}\\
    &\leq \sum_{j=0}^{\tau-1} \sqrt{\lp\tilde{\Ex}\lsp \rho_j\rsp\rp^{4\varepsilon/q} \cdot \lp\tilde{\Ex}\lsp \Phi_j\rsp\rp^{2/q}}
    = \sum_{j=0}^{\tau-1} \lp\tilde{\Ex}\lsp \rho_j\rsp\rp^{2\varepsilon/q} \cdot \lp\tilde{\Ex}\lsp \Phi_j\rsp\rp^{1/q}\\
    &\leq \sum_{j=0}^{\tau-1} \rho_0^{2\varepsilon/q}\cdot \Phi_0^{1/q} 
    \leq \tau \cdot \rho_0^{2\varepsilon/q}\cdot \Phi_0^{1/q},
  \end{align*}
  where the second inequality follows by the Jensen's inequality.  By Lemma~\ref{lem:bounded W} and the union bound, it holds with constant probability that the sampled matrix $W_j$ in iteration $j$ satisfies $W_j\preceq (1/2)\cdot \left( u_j I - A_j \right)$ over the first $O(n)$ iterations. Therefore, we  have that 
  \begin{align*}
  &\Pro \lsp \sum_{j=0}^{\tau-1} \Phi_j^{1/q} \geq \tau^2 \cdot \frac{6\varepsilon^2 \cdot c_N}{  \alpha\cdot q} \rsp \\
  &\leq \Pro \lsp \sum_{j=0}^{\tau-1} \Phi_j^{1/q} \geq \tau^2 \cdot \frac{6\varepsilon^2 \cdot c_N}{  \alpha\cdot q} \Big| \forall j: W_j \preceq \frac{1}{2}\cdot (u_j I - A_j)\rsp + \Pro \lsp \exists j: W_j \npreceq \frac{1}{2} \cdot (u_j I - A_j)\rsp\\
  &\leq\frac{\tilde{\Ex} \lsp \sum_{j=0}^{\tau-1} \rho_j^{2\varepsilon/q}\Phi_j^{1/q} \rsp}{\tau^2 \cdot   6\varepsilon^2 \cdot c_N/ (\alpha\cdot q) }+ \frac{1}{10}
  \leq \frac{  \alpha  \cdot \rho_0^{2\varepsilon/q}\cdot \Phi_0^{1/q}}{6\varepsilon^2 \cdot c_N\cdot \tau} + \frac{1}{10} \leq \frac{1}{5},
  \end{align*}
  where the last inequality comes from the choice of $\tau$.
\end{proof}

By Lemma~\ref{lem:number of rounds}, with constant probability the algorithm finishes in a sub-linear number of iterations. Combining this with the union bound, we know that   the sampled matrix  $W_j$ satisfies $W_j\preceq (1/2)\cdot (u_j I - A_j) $ for all iterations.  This allows us to show that the algorithm terminates after choosing $\Theta(k/\varepsilon^2)$ vectors.

\begin{lemma}\label{lem:number of sampled vectors} 
With probability at least $3/4$, Algorithm~\ref{alg:subgraph sparsifier} terminates after choosing at most  
    \[
        K  = \frac{20\cdot  q\ k}{3\cdot \varepsilon^2}
    \]
vectors.  
\end{lemma}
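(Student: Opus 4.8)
The plan is to show that the barrier gap $\widehat u-\widehat\ell$ reaches its stopping threshold before too many rank-one updates are performed, and to control the gap through an upper bound on the $\rho_j$'s. The key elementary observation is that every vector sampled during iteration $j$ increases $\widehat u-\widehat\ell$ by exactly $\barr{\delta}_{u,j}-\barr{\delta}_{\ell,j}=6\varepsilon^2/(q\rho_j)$, and that Algorithm~\ref{alg:subgraph sparsifier} stops as soon as the cumulative increase of the gap exceeds $\alpha=4k/\Lambda$. Therefore, on the event that $\rho_j<10\Lambda$ for every iteration $j$ of the run, the first $K=20qk/(3\varepsilon^2)$ chosen vectors would already push the cumulative increase above $K\cdot 6\varepsilon^2/(q\cdot 10\Lambda)=4k/\Lambda=\alpha$, so the algorithm must have halted before picking $K$ vectors. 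Hence it suffices to prove that $\rho_j<10\Lambda$ holds throughout the run with probability at least $3/4$.

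To this end I would show that the $\rho_j$'s form (essentially) a non-negative supermartingale. By Lemma~\ref{lem:lower bound N_j} the hypothesis of Lemma~\ref{lem:number of rounds} is met, so with probability close to $1$ the algorithm terminates within $\tau=O\big(q\Lambda^{(1+2\varepsilon)/q}/(\varepsilon^2 c_N)\big)$ iterations, which is $o(n)$ since $c_N=\Omega\big((1/\mathrm{poly}(n))^{2\varepsilon/q}\big)$ and $q\ge 10$; moreover, by Lemma~\ref{lem:bounded W} and a union bound over these $o(n)$ iterations, the sampled matrices satisfy $W_j\preceq(1/2)(u_jI-A_j)$ for all iterations of the run except with probability $o(1)$. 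Conditioned on these inequalities, Lemma~\ref{lem:potentials decrease in subphases} gives $\widetilde{\Ex}[\rho^{(i+1)}]\le\rho^{(i)}$ within every subphase, and telescoping over the subphases of an iteration yields $\Ex[\rho_{j+1}\mid\mathcal F_j]\le\rho_j$. To turn this into a genuine supermartingale in spite of the fact that we condition on the events $\{W_j\preceq(1/2)(u_jI-A_j)\}$ (which are not measurable at the start of iteration $j$), I would pass to the process stopped at the first iteration at which such an event fails; this stopped sequence is a non-negative supermartingale with initial value $\rho_0\le(T+\Lambda)/2\le\Lambda$ by Lemma~\ref{lem:rho_0}, so Doob's maximal inequality gives that it ever reaches $10\Lambda$ with probability at most $\rho_0/(10\Lambda)\le 1/10$.

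Combining, the event $\{\exists j:\rho_j\ge 10\Lambda\}$ is contained in the union of the event that some $W_j$-inequality fails (probability $o(1)$ by the union bound above together with the termination-in-$\tau$-iterations guarantee, whose failure probability can be driven below any constant by enlarging the constant in $\tau$) and the event that the stopped supermartingale reaches $10\Lambda$ (probability at most $1/10$); the total is below $1/4$ for all sufficiently large $n$, which gives the claimed bound together with the first paragraph. The main obstacle is precisely the probabilistic bookkeeping: the identity $K\cdot 6\varepsilon^2/(q\rho_{\max})\ge\alpha$ forces the admissible value of $\rho_{\max}$ to be $\Theta(\Lambda)$, which fixes the Doob failure probability at a constant ($\approx 1/10$), so every other contribution — the $W_j$ union bound and the termination guarantee of Lemma~\ref{lem:number of rounds} — must be made $o(1)$ (by union-bounding the $W_j$'s only over the $o(n)$ iterations actually performed and by taking $\tau$ a constant factor larger than its stated value), and one must handle carefully the non-adaptedness of the conditioning events in the supermartingale step via the stopped process.
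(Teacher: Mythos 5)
Your route is genuinely different from the paper's and is arithmetically consistent, but it has a gap and is also heavier than necessary. The paper does not try to control $\max_j \rho_j$ at all. Instead, it observes that the termination condition (cumulative gap increase $\geq\alpha$) is equivalent to $\sum_{j=1}^K\rho_{\tau_j}^{-1}\geq\alpha q/(6\varepsilon^2)$, invokes the elementary inequality $\bigl(\sum_j\rho_{\tau_j}\bigr)\bigl(\sum_j\rho_{\tau_j}^{-1}\bigr)\geq K^2$ to convert this into the condition $\sum_{j=1}^K\rho_{\tau_j}\leq 6\varepsilon^2K^2/(\alpha q)$, and then applies Markov's inequality to the sum using only the first-moment bound $\widetilde{\Ex}\bigl[\sum_{j=1}^K\rho_{\tau_j}\bigr]\leq K\rho_0\leq K\Lambda$ that follows from Lemmas~\ref{lem:rho_0} and~\ref{lem:potentials decrease in subphases}. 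That yields exactly $1/10$ for the Markov term, matching your $1/10$ from Doob, plus $1/10$ for the $W_j$-events, giving $4/5\geq 3/4$. The crucial difference is that the paper's Markov-on-the-sum argument needs \emph{only} a per-iteration expectation bound, not a supermartingale structure, and it never invokes Lemma~\ref{lem:number of rounds}, so it avoids the circularity you flag.

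The gap in your argument is the stopped-process step. Stopping $\rho_j$ at the first iteration $\sigma$ at which $W_\sigma\npreceq\frac12(u_\sigma I-A_\sigma)$ does \emph{not} obviously yield a supermartingale: the one-step decrement from Lemma~\ref{lem:potentials decrease in subphases} is established only under $\widetilde{\Ex}$, i.e.\ conditioned on the $W$-event of that very iteration, and the value $\rho_{\sigma\wedge(j+1)}=\rho_\sigma$ at the stopping time is computed \emph{after} the bad sample and may exceed $\rho_{\sigma-1}$. To make Doob applicable you would need to replace $\rho_{j\wedge\sigma}$ by the process $Y_j=\rho_j\prod_{i<j}\mathbf{1}\bigl[W_i\preceq\frac12(u_iI-A_i)\bigr]$, which one can check is a non-negative supermartingale (on the surviving branch $\Ex[Y_{j+1}\mid\mathcal F_j]=\Ex[\rho_{j+1}\mathbf{1}[W_j\preceq\cdot]\mid\mathcal F_j]\leq\rho_j\Pro[W_j\preceq\cdot\mid\mathcal F_j]\leq Y_j$); one then recovers $\sup_j\rho_j=\sup_jY_j$ on the good event and adds the bad-event probability separately. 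A secondary issue: you say Lemma~\ref{lem:number of rounds}'s failure probability ``can be driven below any constant by enlarging $\tau$'', but its bound contains an additive $1/10$ (from the same $W_j$ union bound) that does not shrink with $\tau$; and as just noted, the paper's Markov-on-the-sum route sidesteps the need for that lemma entirely, which is the cleaner way to dispose of the difficulty you identify.
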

\begin{proof}
Without loss of generality, we assume that $v_1,\ldots, v_K$ are the vectors sampled by the algorithm and every vector $v_i$ is sampled in iteration $\tau_i$. By the algorithm description, we know that the probability that the algorithm terminates after sampling $K$ vectors is at least 
  \begin{align*}  
  \Pro \lsp \widehat{u} - \widehat{\ell} > \alpha + u_0 - \ell_0\rsp  
  &= \Pro \lsp \sum_{j=1}^{K} \frac{6\varepsilon^2}{q} \cdot \frac{1}{\rho_{\tau_j}} \geq \alpha \rsp  
  = \Pro \lsp\sum_{j=1}^{K} \rho_{\tau_j}^{-1} \geq \frac{  \alpha\cdot q}{6\varepsilon^2} \rsp \\ 
  &\geq \Pro \lsp \sum_{j=1}^{K} \rho_{\tau_j} \leq   \frac{6\varepsilon^2 K^2}{  \alpha\cdot q} \rsp 
  \end{align*}
  where the last inequality follows by the fact that 
  \[
    \lp \sum_{j=1}^{K} \rho_{\tau_j} \rp \lp \sum_{j=1}^{K} \rho_{\tau_j}^{-1}\rp \geq K^2.
  \]
  
  To prove the claimed statement, we upper bound the probability that $\sum_{j=1}^{K} \rho_{\tau_j} \leq   6\varepsilon^2 K^2 /  ( \alpha\cdot q)$ occurs. By Lemma~\ref{lem:bounded W} we know that with probability at least $1-\varepsilon/2$ it holds for all  the iterations $j$ that 
  $W_j \preceq \frac{1}{2} \cdot (u_j I - A_j)$, under which condition by Lemma~\ref{lem:potentials decrease in subphases} we have that   \[
    \widetilde{\Ex} \lsp \sum_{j=1}^{K} \rho_{\tau_j}\rsp 
    = \sum_{j=1}^{K} \tilde{\Ex} \lsp \rho_{\tau_j}\rsp
    \leq K \cdot \rho_0 \leq K \cdot \Lambda
  \]
  where the last  inequality follows by 
   Lemma~\ref{lem:rho_0}. Therefore, by Markov inequality we have that 
  \begin{align*}
  &\Pro \lsp \sum_{j=1}^{K} \rho_{\tau_j} \geq K^2 \cdot \frac{6\varepsilon^2}{  \alpha\cdot q} \rsp \\
  &\leq \Pro \lsp \sum_{j=1}^{K} \rho_{\tau_j} \geq K^2 \cdot \frac{6\varepsilon^2}{\alpha\cdot q} ~\Big|~ \forall j: W_j \preceq \frac{1}{2}\cdot (u_j I - A_j)\rsp + \Pro \lsp \exists j: W_j \npreceq \frac{1}{2} \cdot (u_j I - A_j)\rsp\\
  &\leq\frac{\tilde{\Ex} \lsp \sum_{j=1}^{K} \rho_{\tau_j} \rsp}{K^2 \cdot \frac{6\varepsilon^2}{\alpha\cdot q}}+ \frac{1}{10}
  \leq \frac{\alpha\cdot q \cdot \Lambda}{6\varepsilon^2 \cdot K} + \frac{1}{10} = \frac{1}{5},
  \end{align*}
  where the last equality comes from the choice of $K$.  Therefore, with probability at least $4/5$, the algorithm samples at most $20qk/(3\varepsilon^2)$ vectors. 
\end{proof}

  
\subsection{Runtime analysis}\label{sec:runtime analysis}

Now we discuss fast approximation of the quantities needed for our subgraph sparsification algorithm, and the impact of our approximation on the overall algorithm's performance.  
For simplicity, we drop the subscript $j$ representing the iteration. By the algorithm description, we know that the number of vectors sampled by the algorithm in the iteration is 
\begin{align*}
  N =  &\lp \frac{\varepsilon}{4\rho}\cdot \lambdaMin\lsp(u I-A)^{-1}\barr{M}\rsp \cdot \frac{\lambdaMax\lp(u I - A)^{-1}\barr{M}\rp}{\Tr \lsp(u I - A)^{-1}\barr{M}\rsp}\rp ^{2\varepsilon/q} \cdot \rho \\
  &\cdot \mathrm{min} \left\{ \frac{1}{\lambdaMax\lp (u I - A)^{-1}\barr{M}\rp}, \frac{1}{\lambdaMax\lp \PV (B - \ell I) \PV \rp^{\dag}} \right\},
\end{align*}
and within the iteration every vector $v_i$ is sampled with probability proportional to its relative effective resistance defined by  \[
  R_i(A, B, u, \ell) = v_i^{\rot}\uRes{u}{A} v_i + v_i^{\rot}Z\lp \PV \lp B - \ell I\rp \PV\rp^{\dag} Zv_i.
\]
Therefore, the efficiency of our subgraph sparsification algorithm  is based  on the fast approximation of the following   quantities: 
\begin{enumerate}\label{enum: approximated quantities}
   \item $v_i^{\rot}Z\lp \PV \lp B - \ell I\rp \PV\rp^{\dag} Zv_i$\label{enum: resistance lower}
   \item $\lambdaMax\lp \PV (B - \ell I) \PV\rp^{\dag}$\label{enum: lambda_max lower}
   
    \item $\lambdaMin\lsp(u I-A)^{-1}\barr{M}\rsp$ \label{enum: lambda_min upper}
    \item $\lambdaMax\lsp(u I - A)^{-1}\barr{M}\rsp$\label{enum: lambda_max upper}
    \item $\Tr \lsp(u I - A)^{-1}\barr{M}\rsp$\label{enum: Trace upper}
    \item $v_i^{\rot}\uRes{u}{A} v_i$\label{enum: resistance upper}
\end{enumerate}
Without loss of generality, we first assume the following   assumption  holds when computing the required quantities, and this assumption will be addressed when we prove  Theorem~\ref{thm:main-formal}.

\begin{assumption}\label{assumption:runtime assumption}
    Let $L \triangleq L_{G+W}$ and $L_A = L_G + \tilde{L}$ be the Laplacian matrices of the graph $G+W$ and its subgraph after reweighting. Let $A = L^{-1/2} L_A L^{-1/2}$, $X = L^{-1/2} L_G L^{-1/2}$ and $\barr{M} = L^{-1/2} L_W L^{-1/2}$. We assume that
    \[
    A \prec (1-\eta)u \cdot I,
\]
 and 
    \[
        \PV\lp A - X - \ell \barr{M}\rp \PV \succeq  |\ell| \eta \cdot \PV\barr{M}\PV
    \]
    hold  for some $0 <  \eta < 1$.
\end{assumption}

\subsubsection{Approximating the projection matrix}\label{sec:approximated projection}

In comparison with previous algorithms for spectral sparsification~\cite{LS15:linearsparsifier,LS17}, one of the challenges for our problem is the need to compute the quantities that involve a projection matrix, which makes it difficult to  apply a nearly-linear time Laplacian solver directly. To address this issue, let us have a close look at $P_S$, which is the projection onto the bottom $k$ eigenspace of $X$. Following the previous notation, we set   $L \triangleq L_{G+W}$, and assume   that the eigenvalues of $X = L^{-1/2}L_GL^{-1/2}$ are $\lambda_1(X) \leq \lambda_2(X) \leq \dots \leq \lambda_n(X) \leq 1$ with corresponding eigenvectors $u_1, \dots, u_n$.    Since $P_S$ is the projection on the space spanned by $\{u_1 \dots u_k \}$, we can write  $P_S$ as 
    \[
        P_S = UU^{\rot},
    \]
    where 
    \[
        U \triangleq [ u_1, u_2, \dots u_k]
    \]
    is the $n \times k$ matrix whose $i$-th column is the vector $u_i$.  On the other hand, by definition we know that   the matrix $\barr{M} = I - X$ has eigenvalues $1 \geq \mu_1 \geq \dots \geq \mu_n$ with corresponding eigenvectors $u_1, \dots, u_n$, such that each $\mu_i = 1 - \lambda_i(X)$. Therefore, the columns of $U$ are also the top $k$ eigenvectors of  $\barr{M}$. To put it differently, \emph{$P_S$ is the projection on the top $k$ eigenspace of $\barr{M}$}.

Since it is computationally expensive to compute $P_S$, what we use in our analysis is the projection matrix $\PV$ which behaves similar to $P_S$.  We remark that, while $\PV = VV^{\rot}$ for some unitary matrix $V$ is used in our previous analysis, we do not need to compute the matrices $V$ or $\PV$ explicitly. Instead, we will show that it suffices to compute the matrix $\mathbb{V} \triangleq L^{-1/2}V$ in order to approximate our required quantities (1), (2). In this subsubsection, we discuss an efficient method for getting the matrix $\mathbb{V}$.
    The following result will be used in our analysis.

    \begin{theorem}[Restatement of Theorem~4.2 and Theorem~4.4,  \cite{pmlr-v70-allen-zhu17b}]\label{thm:proj guarantees}
        Let $A, B \in \mathbb{R}^{n \times n}$ be two symmetric matrices satisfying $B \succ 0$ and $-B \preceq A \preceq B$. Suppose the eigenvalues of $B^{-1/2}AB^{-1/2}$ are $1 \geq \mu_1 \geq \dots \geq \mu_n \geq 0$. For fixed $\varepsilon, p > 0$, we can find an $n \times k$ matrix 
        \[
            V = [v_1, \dots, v_k],
        \]
        such that with probability at least $1-p$ the following statements hold: 
        \begin{enumerate}
            \item $V^{\rot} V = I_{k\times k}$; 
            \item   we have $\left|v_i^{\rot} B^{-1/2}AB^{-1/2}v_i\right| \in \lsp (1-\varepsilon) \mu_i, \frac{\mu_i}{1-\varepsilon}\rsp $ for any $1\leq i\leq k$;
            \item $\max_{\substack{u\in \mathbb{R}^n \\ u^{\rot}V = 0}} \left|\frac{u^{\rot}B^{-1/2}AB^{-1/2}u}{u^{\rot}u}\right| \leq \frac{\mu_{k+1}}{1-\varepsilon}$.
        \end{enumerate}
        Moreover, we can obtain an $n \times k$ matrix 
        \[
            \mathbb{V} = B^{-1/2} V
        \]
        in   $\tilde{O}\left(\frac{k \mathrm{nnz}(B) + nk^2 + k\Upsilon}{\sqrt{\varepsilon}}\right)$ time, where $\Upsilon$ is the time needed to compute $(B^{-1}A)u$ for some vector $u$ with error $\delta$ such that $\log (1/\delta) = \tilde{O}(1)$. Here, the $\tilde{O}$ notation hides polylogarithmic factors with respect to $1/\varepsilon, 1/p, \kappa_B, n$.
    \end{theorem}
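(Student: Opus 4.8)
The plan is to invoke the generalized eigenvector solver of~\cite{pmlr-v70-allen-zhu17b} as a black box and then translate its guarantees into the form claimed here. First I would recall that Theorem~4.2 and Theorem~4.4 of~\cite{pmlr-v70-allen-zhu17b}, applied to a pair of symmetric matrices $(A,B)$ with $B\succ 0$ and $-B\preceq A\preceq B$, return a matrix $V=[v_1,\dots,v_k]$ whose columns form an approximate top-$k$ eigenbasis of the symmetric matrix $\widehat{A}\triangleq B^{-1/2}AB^{-1/2}$, equivalently of the generalized eigenvalue problem $Ax=\mu Bx$. Property~1, $V^{\rot}V=I_{k\times k}$, is the output normalization built into that algorithm; Property~2 is its per-direction accuracy guarantee, that each returned $v_i$ satisfies $v_i^{\rot}\widehat{A}v_i\in[(1-\varepsilon)\mu_i,\,\mu_i/(1-\varepsilon)]$; and Property~3 is its gap-free residual bound controlling the Rayleigh quotient of any unit vector orthogonal to all columns of $V$, after folding all the constants of~\cite{pmlr-v70-allen-zhu17b} into the single parameter $\varepsilon$.

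The second step is to obtain the matrix $\mathbb{V}=B^{-1/2}V$ without ever forming $B^{-1/2}$ explicitly. Since the solver of~\cite{pmlr-v70-allen-zhu17b} internally operates in the $B$-geometry and maintains iterates already of the form $B^{-1/2}(\cdot)$, the quantity $\mathbb{V}$ arises as a by-product of the computation; alternatively, one applies $k$ calls to an approximate ``apply $B^{-1/2}$'' routine to inverse-polynomial accuracy. The running-time bound is then pure bookkeeping: the accelerated solver runs for $\widetilde{O}(1/\sqrt{\varepsilon})$ iterations, each iteration costs one multiplication by $\widehat{A}$ --- that is, one multiplication by $A$ plus one application of $B^{-1}$, the latter contributing the $\Upsilon$ term --- plus $O(\mathsf{nnz}(B))$ for the Krylov-type recurrence and $O(nk^2)$ for orthonormalizing $k$ vectors, which assembles into the claimed $\widetilde{O}\bigl((k\,\mathsf{nnz}(B)+nk^2+k\Upsilon)/\sqrt{\varepsilon}\bigr)$ bound.

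The hard part will be deriving Property~3 in exactly the stated multiplicative form. The guarantees in~\cite{pmlr-v70-allen-zhu17b} are often phrased with an additive error term or under a spectral-gap hypothesis between $\mu_k$ and $\mu_{k+1}$, whereas here one wants the clean bound $\mu_{k+1}/(1-\varepsilon)$ for \emph{every} unit vector $u$ with $u^{\rot}V=0$, with no gap assumption. I expect to spend most of the effort establishing the following dichotomy: Property~2 forces each $v_i$ to have $v_i^{\rot}\widehat{A}v_i\geq(1-\varepsilon)\mu_k$, so the span of $V$ is near-optimal among $k$-dimensional subspaces for the trace objective $U\mapsto\Tr(U^{\rot}\widehat{A}U)$; hence if some unit $u\perp V$ had $u^{\rot}\widehat{A}u>\mu_{k+1}/(1-\varepsilon)$, one could exchange $u$ for the worst column of $V$ to strictly increase the objective beyond what the near-optimality slack permits, a contradiction. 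Making this exchange argument quantitatively tight --- tracking precisely how much $\varepsilon$-slack it consumes --- is the crux; once it is done, all three properties and the running-time claim follow by direct transcription of~\cite{pmlr-v70-allen-zhu17b}.
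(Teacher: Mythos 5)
The paper does not prove Theorem~\ref{thm:proj guarantees} at all: it is labeled a \emph{restatement} of Theorems~4.2 and~4.4 of~\cite{pmlr-v70-allen-zhu17b} and is used as a black box, with no proof supplied. So there is no paper proof to compare against, and your high-level plan---invoke the generalized eigensolver of~\cite{pmlr-v70-allen-zhu17b} and transcribe its guarantees---is in fact exactly what the paper does, and is all that is required.

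Where your proposal goes astray is in declaring Property~3 the ``hard part'' and proposing an exchange argument to derive it. The gap-free residual bound on the orthogonal complement of $\mathrm{span}(V)$ is precisely what the Allen-Zhu--Li framework is designed to deliver \emph{directly}, with no spectral-gap hypothesis; it is stated as a guarantee of their algorithm, not a consequence you need to re-derive from Property~2. Your proposed route would not work cleanly anyway: Property~2 only gives near-optimality of the \emph{trace} objective over $k$-dimensional subspaces, and passing from an additive trace deficit to a per-direction multiplicative Rayleigh-quotient bound $\mu_{k+1}/(1-\varepsilon)$ is not a simple one-vector exchange (the slack from $k$ directions can concentrate adversarially, so the exchange argument as sketched does not close). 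The correct reading is that all three properties, the form $\mathbb{V}=B^{-1/2}V$ of the output, and the $\widetilde{O}\bigl((k\,\mathsf{nnz}(B)+nk^2+k\Upsilon)/\sqrt{\varepsilon}\bigr)$ running time are read off verbatim from the cited theorems after substituting the present $A,B,\varepsilon,p$; there is nothing further to prove.
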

    
    We summarise the properties of our approximate projection $\PV$ in the following result.  
     \begin{theorem}\label{thm:our projection}
        We can compute a matrix $\mathbb{V} = L^{-1/2} V$, for some matrix $V$ such that, with constant probability, the following two properties hold:
        \begin{enumerate}
            \item $\PV = VV^{\rot}$ is a projection matrix  on a $k$-dimensional subspace $S'$ of $\mathbb{R}^n$;
            \item for any $u\in\mathbb{R}^n$ satisfying $u^{\rot}V = 0$ we have that  
                \[
                    \frac{u^{\rot}Xu}{u^{\rot}u} \geq \frac{\lambda_{k+1}(X)}{2} = \frac{\lambda^*}{2}.
                \]
        \end{enumerate}
        Moreover, the running time is  $t_{\ref{thm:our projection}} = \min \Big\{O(n^{\omega}), \tilde{O}\lp\frac{mk + nk^2}{\sqrt{\lambda^*}}\rp \Big\}$.  
    \end{theorem}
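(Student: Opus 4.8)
The plan is to apply Theorem~\ref{thm:proj guarantees} with the matrices $B \triangleq L = L_{G+W}$ and $A \triangleq L_G$, so that $B^{-1/2} A B^{-1/2} = L^{-1/2} L_G L^{-1/2} = X$. First I would check the hypotheses: $B = L \succ 0$ holds since we added $n$ small self-loops to make $\overline M = I - X$ full rank, which also makes $L$ strictly positive definite on all of $\mathbb{R}^n$; and $0 \preceq A = L_G \preceq L_G + L_W = L$ gives $-B \preceq A \preceq B$, so the eigenvalues of $X$ lie in $[0,1]$ and can be written $1 \ge \lambda_n(X) \ge \dots \ge \lambda_1(X) \ge 0$. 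Note that the ordering of eigenvalues in Theorem~\ref{thm:proj guarantees} is decreasing ($\mu_1 \ge \dots \ge \mu_n$), so the ``top $k$'' eigenspace of $X = B^{-1/2}AB^{-1/2}$ in that statement corresponds to the \emph{bottom} $k$ eigenspace of $\overline M = I - X$; matching this with the earlier discussion that $P_S$ is the projection onto the bottom $k$ eigenvectors of $X$, we see the matrix $V = [v_1,\dots,v_k]$ output by Theorem~\ref{thm:proj guarantees} spans a $k$-dimensional space $S'$ close to $S$.

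Next I would invoke Theorem~\ref{thm:proj guarantees} with the fixed accuracy $\varepsilon = 1/2$ (not the $\varepsilon$ of the sparsification algorithm) and failure probability $p$ a small constant, obtaining an $n\times k$ matrix $V$ with $V^{\rot}V = I_{k\times k}$ and an $n\times k$ matrix $\mathbb{V} = L^{-1/2}V$. Property~(1) of the theorem immediately gives $V^{\rot}V = I$, hence $\PV \triangleq VV^{\rot}$ is an orthogonal projection onto the $k$-dimensional subspace $S' \triangleq \mathrm{Im}(V)$, proving the first claim. For the second claim, I would use property~(3) of Theorem~\ref{thm:proj guarantees}: for any $u$ with $u^{\rot}V = 0$,
\[
  \frac{u^{\rot} X u}{u^{\rot}u} = \frac{u^{\rot} B^{-1/2}AB^{-1/2} u}{u^{\rot}u}.
\]
Here I have to be a little careful, because property~(3) upper bounds this Rayleigh quotient by $\mu_{k+1}/(1-\varepsilon)$, where $\mu_{k+1}$ is the $(k+1)$-st \emph{largest} eigenvalue of $X$, whereas I want a \emph{lower} bound in terms of $\lambda_{k+1}(X)$, the $(k+1)$-st \emph{smallest}. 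The resolution is to apply Theorem~\ref{thm:proj guarantees} to $-X$, or equivalently to reinterpret $S'$ as an approximation of the bottom-$k$ eigenspace of $X$: running the theorem on $B^{-1/2}(B-A)B^{-1/2} = \overline M = I - X$ (whose eigenvalues in decreasing order are $1 - \lambda_1(X) \ge \dots \ge 1 - \lambda_n(X)$) so that the ``top $k$'' directions of $\overline M$ are exactly the ``bottom $k$'' directions of $X$, and then property~(3) yields, for $u \perp V$,
\[
  \frac{u^{\rot}(I-X)u}{u^{\rot}u} \le \frac{1 - \lambda_{k+1}(X)}{1-\varepsilon} = 2\bigl(1 - \lambda_{k+1}(X)\bigr),
\]
which rearranges to $u^{\rot}Xu/u^{\rot}u \ge 2\lambda_{k+1}(X) - 1$. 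This is not quite $\lambda_{k+1}(X)/2$; to get the stated bound I would instead choose the accuracy parameter $\varepsilon$ in Theorem~\ref{thm:proj guarantees} small enough (e.g. $\varepsilon \le \lambda_{k+1}(X)/(2+\lambda_{k+1}(X))$, or simply a sufficiently small absolute constant combined with the a priori bound $\lambda_{k+1}(X) \ge \lambda^\star = \Omega(1/\mathrm{poly}(n))$ from the self-loop construction) so that $1 - \mu_{k+1}/(1-\varepsilon) \ge \lambda_{k+1}(X)/2$; the $1/\sqrt{\varepsilon}$ dependence in the running time then only costs polylogarithmic-in-$n$ factors, which are absorbed by $\tilde O(\cdot)$.

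For the running time, I would instantiate the bound $\tilde O\!\left(\frac{k\,\mathsf{nnz}(B) + nk^2 + k\Upsilon}{\sqrt{\varepsilon}}\right)$ from Theorem~\ref{thm:proj guarantees}. Since $B = L_{G+W}$ is a Laplacian with $\mathsf{nnz}(B) = O(m)$ (recall $G$ has $O(n)$ edges and $W$ has $m$ edges), and since $\Upsilon$ — the cost of approximately applying $(B^{-1}A)u = L^{-1}L_G u$ to a vector — is $\tilde O(m)$ by a nearly-linear-time Laplacian solver, this gives $\tilde O\!\left(\frac{mk + nk^2}{\sqrt{\varepsilon}}\right)$; with $\varepsilon = \Theta(\lambda^\star)$ this is $\tilde O\!\left(\frac{mk + nk^2}{\sqrt{\lambda^\star}}\right)$. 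Alternatively, one can compute the bottom $k$ eigenvectors of $X$ exactly via a full eigendecomposition of an $n\times n$ matrix in $O(n^\omega)$ time, which dominates when $k$ is large. Taking the minimum of the two bounds yields the claimed $t_{\ref{thm:our projection}} = \min\{O(n^\omega), \tilde O((mk+nk^2)/\sqrt{\lambda^\star})\}$. The main obstacle I anticipate is the index-flipping bookkeeping between ``top-$k$'' and ``bottom-$k$'' eigenspaces of $X$ versus $\overline M$, and pinning down the exact choice of accuracy parameter needed to convert the multiplicative guarantee of Theorem~\ref{thm:proj guarantees} into the clean additive-looking bound $\lambda_{k+1}(X)/2$; everything else is a direct substitution into the cited theorem.
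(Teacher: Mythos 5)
Your proposal is correct and follows essentially the same route as the paper: apply Theorem~\ref{thm:proj guarantees} to $\overline M = L^{-1/2}L_W L^{-1/2}$ (not to $X$), choose the accuracy $\varepsilon = \Theta(\lambda^*)$ so that $\mu_{k+1}/(1-\varepsilon) \le 1-\lambda^*/2$ (the paper takes $\varepsilon = \lambda^*/(2-\lambda^*)$; your $\varepsilon \le \lambda^*/(2+\lambda^*)$ is slightly more conservative but still works, and note that a fixed absolute constant would \emph{not} suffice — the accuracy genuinely must scale with $\lambda^*$), and obtain the $n^{\omega}$ branch by exact eigendecomposition when $k/\sqrt{\lambda^*}$ is large. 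The only deviation is the initial detour with $A=L_G$, which you correctly identify as having the eigenvalue ordering backwards and then fix.
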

\begin{proof}
    
     The proof is by case distinction.  If $\frac{k}{\sqrt{\lambda^*}}= O(n^{\omega - 2})$, we apply Theorem~\ref{thm:proj guarantees} to get $\mathbb{V}$ and show that both listed properties are satisfied. However, in case $\frac{k}{\sqrt{\lambda^*}} = \Omega(n^{\omega - 2})$, it is more efficient to compute $\mathbb{V}$ directly from the spectral decomposition of $L^{-1}L_G$. 
     
    \textbf{Case 1:} $\frac{k}{\sqrt{\lambda^*}}= O\left(n^{\omega-2}\right)$
    
    We will apply Theorem~\ref{thm:proj guarantees} with $A = L_W$, $B = L$ and $\varepsilon = \frac{\lambda^*}{2 - \lambda^*}$. Since  
    \[
        B^{-1/2}AB^{-1/2} = L^{-1/2}L_WL^{-1/2} = \barr{M},
    \]
    we have  $\mu_i = 1 - \lambda_i(X)$, for all $i$.         The guarantees in Theorem~\ref{thm:proj guarantees} ensure that, with constant probability, the matrix $\PV \triangleq VV^{\rot}$ is a projection matrix. Moreover, it holds for  any   $u\in \mathbb{R}^n$ with $u^{\rot} V = 0$   that 
    \[
        \frac{u^{\rot} \barr{M}u}{u^{\rot}u} \leq \frac{1 - \lambda_{k+1}(X)}{1-\varepsilon} = 1 - \frac{\lambda^*}{2},
    \]
    where the last equality comes from the choice of our $\varepsilon$. Using the fact that $\barr{M} = I - X$ and rearranging the above inequality, we get that 
    \[
        \frac{u^{\rot}Xu}{u^{\rot}u} \geq \frac{\lambda^*}{2}.
    \]
    Finally, the running time $\tilde{O}\lp\frac{mk + nk^2}{\sqrt{\lambda^*}}\rp$ follows from Theorem~\ref{thm:proj guarantees} for the choice of our parameters and the fact that we have a nearly-linear time solver for Laplacian systems.
    
    \textbf{Case 2:} $\frac{k}{\sqrt{\lambda^*}} = \Omega\left(n^{\omega - 2}\right)$
    
    Recall that $P_S = UU^{\rot}$ is the projection onto the bottom $k$ eigenspace of $X = L^{-1/2}L_GL^{-1/2}$. For $V = U$, it is clear that $\PV = P_S$ is a projection matrix.
    Moreover, it's easy to see that $\mathbb{V} = L^{-1/2} U = [L^{-1/2} u_1, \dots, L^{-1/2}u_k]$ has columns the bottom $k$ eigenvectors of the matrix $L^{-1}L_G$. Hence, if $u^{\rot} V = 0$, then it must be the case that $ u \in S^{\perp}$, and therefore 
    \[
        \frac{u^{\rot} X u}{u^{\rot}u} \geq \lambda_{k+1}(X) > \frac{\lambda^*}{2}.
    \]
    This proves the second condition. For the running time, we first compute the spectral decomposition of $L$ and then use that to get the spectral decomposition of $L^{-1}L_G$. From this, we can get the bottom $k$ eigenvectors which form $\mathbb{V}$. The running time is $O(n^{\omega})$ \cite{Demmel2007}. 
        
\end{proof}

\begin{lemma}\label{lem:eigendecomposition}
    Under Assumption~\ref{assumption:runtime assumption}, let 
    \[
        \mathbb{V}^{\rot} \lp \tilde{L} - \ell L_W\rp \mathbb{V} = \sum_{i=1}^k \lambda_i f_i f_i^{\rot}
    \]
    be the spectral decomposition of the $k \times k$ matrix $\mathbb{V}^{\rot} \lp \tilde{L} - \ell L_W\rp \mathbb{V}$. Then, we can get $\{\lambda_i\}_{i=1}^k$ and $\left\{\mathbb{V}f_i\right\}_{i=1}^k$ in time
    $  t_{\ref{lem:eigendecomposition}} = O(\min\{n^{\omega}, mk + nk^2 + k^{\omega}\} )$.
\end{lemma}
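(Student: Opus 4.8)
The plan is to reduce the statement to a short sequence of sparse/dense matrix multiplications followed by a single $k\times k$ eigendecomposition. Throughout I treat the $n\times k$ matrix $\mathbb{V}=L^{-1/2}V$ as already available, since it is produced once (at the start of the algorithm) by Theorem~\ref{thm:our projection}, and I use that $\tilde L$ is maintained by Algorithm~\ref{alg:subgraph sparsifier} as a (reweighted) subgraph of $W$, so that $\mathsf{nnz}(\tilde L)\le\mathsf{nnz}(L_W)=O(m)$ and hence $\tilde L-\ell L_W$ has $O(m+n)$ non-zero entries and can be applied to a vector in $O(m+n)$ time. (In fact, since the algorithm only ever adds $K=O(qk/\varepsilon^2)$ rank-one edge terms to $\tilde L$ by Lemma~\ref{lem:number of sampled vectors}, one even has $\mathsf{nnz}(\tilde L)=O(\min\{k,m\})$.)

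First I would form the $n\times k$ matrix $N_1\triangleq(\tilde L-\ell L_W)\mathbb{V}$ by a sparse-times-dense multiplication, column by column, in $O((m+n)k)$ time, and then the $k\times k$ matrix $M'\triangleq\mathbb{V}^{\rot}N_1=\mathbb{V}^{\rot}(\tilde L-\ell L_W)\mathbb{V}$ in $O(nk^2)$ time; note that $M'$ is exactly the representation in the basis $V$ of the operator $\PV(A-X-\ell\barr{M})\PV$ restricted to $S'$, and it is the symmetric $k\times k$ matrix whose spectral decomposition we seek. Next I would compute $M'=\sum_{i=1}^k\lambda_if_if_i^{\rot}$ by a standard dense eigensolver~\cite{Demmel2007} in $O(k^\omega)$ time, which returns the scalars $\{\lambda_i\}_{i=1}^k$ together with an orthonormal system $\{f_i\}_{i=1}^k$. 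Finally, stacking $F\triangleq[f_1,\dots,f_k]$ and computing $\mathbb{V}F=[\mathbb{V}f_1,\dots,\mathbb{V}f_k]$ by one $(n\times k)\cdot(k\times k)$ multiplication in $O(nk^2)$ time yields the vectors $\{\mathbb{V}f_i\}_{i=1}^k$. Summing these costs gives a running time of $O(mk+nk^2+k^\omega)$ for this route.

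To obtain the other term in the minimum, I would observe that every matrix above has dimension at most $n$ (indeed $\tilde L-\ell L_W$ is $n\times n$, $\mathbb{V}$ is $n\times k$ with $k\le n$, and $M'$ is $k\times k$), so running the same three steps with dense $n\times n$ matrix multiplication, and with the dense $n\times n$ eigensolver of~\cite{Demmel2007} for the middle step, costs $O(n^\omega)$ independently of $m$ and $k$; this dominates exactly when $mk+nk^2+k^\omega$ exceeds $n^\omega$. Taking whichever route is cheaper yields $t_{\ref{lem:eigendecomposition}}=O(\min\{n^\omega,\,mk+nk^2+k^\omega\})$, as claimed. I do not anticipate a genuine obstacle here, as this lemma is essentially runtime bookkeeping; the only point needing care is the representation of $\tilde L$ — namely that it is stored sparsely, so that the product $(\tilde L-\ell L_W)\mathbb{V}$ truly costs $O((m+n)k)$ rather than $\Theta(n^2k)$ — which is immediate because $\tilde L$ is a reweighted subgraph of $W$.
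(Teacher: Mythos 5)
Your proof is correct and takes essentially the same approach as the paper: both proceed by computing $\mathbb{V}^{\rot}(\tilde L - \ell L_W)\mathbb{V}$ via sparse-times-dense and then dense matrix products, diagonalising the resulting $k\times k$ symmetric matrix, and forming $\mathbb{V}F$, taking the cheaper of the $O(mk+nk^2+k^\omega)$ route and the fully dense $O(n^\omega)$ route. You simply spell out the per-step cost accounting (in particular the sparsity bound $\mathsf{nnz}(\tilde L - \ell L_W)=O(m+n)$) that the paper leaves implicit with ``it is easy to see.''
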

\begin{proof}

    Let $        F = [f_1, \dots, f_k]$
     be the matrix whose columns are the eigenvectors $f_i$.      Then, the proof can be summarised in the following three  steps:
    \begin{enumerate}
        \item Perform matrix multiplication to get $\mathbb{V}^{\rot}(\tilde{L} - \ell L_W)\mathbb{V}$;
        \item Perform the spectral decomposition of $\mathbb{V}^{\rot}(\tilde{L} - \ell L_W)\mathbb{V}$ to get $\lambda_i$ and $f_i$, for all $1 \leq i \leq k$;
        \item Perform matrix multiplication to get $\mathbb{V} F$.
    \end{enumerate} 
    The set $\{\lambda_i\}_{i=1}^k$ is obtained in the second step, while the set $\{\mathbb{V}f_i\}_{i=1}^k$ consists of the columns of the matrix $\mathbb{V}F$, which we will get at the end of the third step.
        It is easy to see that the three steps can be computed in time   $O\lp \min \{n^{\omega}, mk + nk^2 \} \rp$, $O(\min \{ n^{\omega}, k^{\omega}\})$, and  $O(\min \{n^{\omega}, nk^2\})$ respectively. Therefore,  the total running time is $O(\min\{n^{\omega}, mk + nk^2 + k^{\omega}\} )$.
\end{proof}

 \subsubsection{Approximating the quantities involving the lower barrier value $\ell$}\label{sec:approx lower barrier}

We first  present efficient algorithms for approximately computing all the quantities that involve the lower barrier $\ell$, i.e., the quantities (1) and (2).
Our results are summarised in Lemma~\ref{lem:runtime l1} and Lemma~\ref{lem:runtime l2} below.
   \begin{lemma}\label{lem:runtime l1}
        Under the Assumption~\ref{assumption:runtime assumption}, we can compute numbers $\{r_i\}_{i=1}^m$ such that  
        \[
            (1-\varepsilon) r_i \leq v_i^{\rot}Z \lp \PV (B - \ell I) \PV\rp^{\dag} Z v_i \leq (1+\varepsilon) r_i,
        \] 
        for all $\{v_i\}$ in time $ t_{\ref{lem:runtime l1}} = \tilde{O}\lp \lp\min \{n^{\omega}, mk + nk^2 + k^{\omega}\}\rp/\varepsilon^2\rp$.
    \end{lemma}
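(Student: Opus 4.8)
The plan is to reduce the computation of $v_i^{\rot}Z\lp \PV(B-\ell I)\PV\rp^{\dag} Zv_i$ to a small number of linear-system solves involving only the $k\times k$ matrix $\mathbb{V}^{\rot}(\tilde L - \ell L_W)\mathbb{V}$ (plus nearly-linear Laplacian solves), and then apply a Johnson--Lindenstrauss sketch to batch the computation over all $m$ vectors $v_i$. First I would unfold the definitions: recall $v_i = L^{-1/2}b_{e_i}$, $Z=(\PV(I-X)\PV)^{\dag/2}$, $B=Z(A-X)Z$, $X=L^{-1/2}L_GL^{-1/2}$, $A = L^{-1/2}L_AL^{-1/2}$, and $\overline M = L^{-1/2}L_WL^{-1/2}=I-X$. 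Using $Z\PV=\PV Z=Z$ and $\sum_i Zv_iv_i^{\rot}Z = Z\overline M Z = \PV$, one has $\PV(B-\ell I)\PV = Z(A-X)Z - \ell\PV = Z\lp (A-X) - \ell\overline M\rp Z$ acting on $S'$. Since $\PV = VV^{\rot}$ and $\mathbb{V}=L^{-1/2}V$, the relevant $k\times k$ representation is $\mathbb{V}^{\rot}(L_A - L_G - \ell L_W)\mathbb{V} = \mathbb{V}^{\rot}(\tilde L - \ell L_W)\mathbb{V}$, which is exactly the matrix whose spectral decomposition Lemma~\ref{lem:eigendecomposition} provides as $\sum_{i=1}^k\lambda_i f_if_i^{\rot}$. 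By Assumption~\ref{assumption:runtime assumption}, this matrix is positive definite with $\lambda_i \geq |\ell|\eta\cdot\lambda_{\min}$ of the corresponding $\overline M$-block, so its pseudoinverse is a genuine inverse and is well conditioned.

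Next I would express the target quantity in terms of these computed eigenpairs. Writing $z_i \triangleq Zv_i = \PV(\PV(I-X)\PV)^{\dag/2}v_i$, a short computation using $\mathbb{V}^{\rot}L\mathbb{V}=V^{\rot}V=I$ and the change of variables through $\mathbb{V}$ shows that $v_i^{\rot}Z\lp\PV(B-\ell I)\PV\rp^{\dag}Zv_i = b_{e_i}^{\rot}\mathbb{V}\lp\mathbb{V}^{\rot}(\tilde L-\ell L_W)\mathbb{V}\rp^{-1}\mathbb{V}^{\rot}b_{e_i} = \sum_{j=1}^k \frac{1}{\lambda_j}\lp f_j^{\rot}\mathbb{V}^{\rot}b_{e_i}\rp^2 = \bigl\lVert \mathrm{Diag}(\lambda_j^{-1/2})\,(\mathbb{V}F)^{\rot}b_{e_i}\bigr\rVert^2$, where $F=[f_1,\dots,f_k]$ and $\mathbb{V}F$ is the $n\times k$ matrix whose columns are the $\mathbb{V}f_j$ supplied by Lemma~\ref{lem:eigendecomposition}. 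Since $b_{e_i}$ is a difference of two standard basis vectors, $(\mathbb{V}F)^{\rot}b_{e_i}$ is just a difference of two rows of $\mathbb{V}F$; hence the exact value is the squared Euclidean norm of a $k$-dimensional vector, and we could compute it exactly for each $i$ in $O(k)$ time given the scaled matrix $\mathrm{Diag}(\lambda_j^{-1/2})(\mathbb{V}F)^{\rot}$. For all $m$ edges this is $O(mk)$, which already fits the claimed bound; to shave the dependence on $k$ (and match the statement's $\varepsilon^{-2}$) I would instead apply a JL sketch $Q\in\mathbb{R}^{O(\log n/\varepsilon^2)\times k}$ to $\mathrm{Diag}(\lambda_j^{-1/2})(\mathbb{V}F)^{\rot}$ and read off $r_i = \lVert Q\,\mathrm{Diag}(\lambda_j^{-1/2})(\mathbb{V}F)^{\rot}b_{e_i}\rVert^2$; the Johnson--Lindenstrauss lemma gives the required $(1\pm\varepsilon)$ multiplicative guarantee simultaneously for all $m$ vectors with high probability.

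For the running time I would add up: Lemma~\ref{lem:eigendecomposition} gives $\{\lambda_j\}$ and $\{\mathbb{V}f_j\}$ in $O(\min\{n^{\omega}, mk+nk^2+k^{\omega}\})$; forming $\mathrm{Diag}(\lambda_j^{-1/2})(\mathbb{V}F)^{\rot}$ costs $O(nk)$; applying the $O(\log n/\varepsilon^2)\times k$ sketch to this $k\times n$ matrix costs $\tilde O(nk/\varepsilon^2)$; and evaluating all $m$ sketched inner products costs $\tilde O(m/\varepsilon^2)$. Everything is dominated by $\tilde O\lp(\min\{n^{\omega},mk+nk^2+k^{\omega}\})/\varepsilon^2\rp$, as claimed. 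The main obstacle I anticipate is the bookkeeping to verify the identity $Z\lp\PV(B-\ell I)\PV\rp^{\dag}Z$ acting on any $v_i$ equals $\mathbb{V}(\mathbb{V}^{\rot}(\tilde L-\ell L_W)\mathbb{V})^{-1}\mathbb{V}^{\rot}$ applied to $b_{e_i}$: one must carefully track the pseudoinverses and the ranges, using $\PV Z=Z$, $Z\overline M Z=\PV$, and the fact (from Assumption~\ref{assumption:runtime assumption}) that $\PV(B-\ell I)\PV$ is strictly positive definite on $S'$ so that no degenerate directions arise. Once that identity is pinned down, the rest is a routine JL argument.
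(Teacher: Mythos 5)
Your proposal is correct and follows essentially the same route as the paper: reduce $v_i^{\rot}Z(\PV(B-\ell I)\PV)^{\dag}Zv_i$ to a quadratic form $b_{e_i}^{\rot}\mathbb{V}(\mathbb{V}^{\rot}(\tilde L-\ell L_W)\mathbb{V})^{-1}\mathbb{V}^{\rot}b_{e_i}$ via the projection/pseudoinverse identity (the paper packages the step you flag as the "main obstacle" as Lemma~\ref{lem:from B to A in S}, using $\PV C=C\PV=C$ and $(ZCZ)^{\dag}=Z^{\dag}C^{\dag}Z^{\dag}$), then express it as $\sum_j \langle b_{e_i},\lambda_j^{-1/2}\mathbb{V}f_j\rangle^2$ using Lemma~\ref{lem:eigendecomposition}, and sketch with Johnson--Lindenstrauss. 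The paper differs only cosmetically in doing a case split on $k$ versus $\log n/\varepsilon^2$ (direct $O(mk)$ computation when $k$ is small, JL otherwise); as you note, the uniform JL argument already lands in the same $\tilde O(\min\{n^{\omega},mk+nk^2+k^{\omega}\}/\varepsilon^2)$ bound.
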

Before proving Lemma~\ref{lem:runtime l1}, we need the following technical result.
\begin{lemma}\label{lem:from B to A in S} It holds that 
    $\lRes{\ell}{B} = \PV \lp \PV \lp A - X - \ell \barr{M} \rp \PV \rp^{\dag} \PV $.
\end{lemma}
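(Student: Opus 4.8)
## Proof proposal for Lemma~\ref{lem:from B to A in S}

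The plan is to unfold the definitions of $B$ and $Z$ and then manipulate pseudoinverses carefully, keeping track of which subspace everything lives in. Recall that by definition $B = Z(A-X)Z$ where $Z = (\PV(I-X)\PV)^{\dag/2} = (\PV \barr{M}\PV)^{\dag/2}$, and that $Z$ satisfies $Z\PV = \PV Z = Z$ (stated in the Remark). The quantity on the left-hand side is $\lRes{\ell}{B} = Z(\PV(B-\ell I)\PV)^{\dag}Z$, so first I would substitute $B = Z(A-X)Z$ and compute
\[
\PV(B - \ell I)\PV = \PV Z (A-X) Z \PV - \ell \PV\PV = Z(A-X)Z - \ell \PV,
\]
using $\PV Z = Z$ and $\PV^2 = \PV$. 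The key observation is that this matrix is supported on $S'$, and on $S'$ the operator $Z$ acts as the inverse square root of $\PV\barr{M}\PV$, so intuitively $Z(A-X)Z - \ell\PV$ should be ``conjugate'' to $\PV(A - X - \ell\barr{M})\PV$ via the map $W \mapsto Z^{\dag}W Z^{\dag}$ restricted to $S'$, where $Z^\dag = (\PV\barr{M}\PV)^{1/2}$.

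The main technical step is to establish the algebraic identity
\[
\PV\bigl(Z(A-X)Z - \ell\PV\bigr)\PV \;=\; Z^{\dag}\,\PV\bigl(A - X - \ell\barr{M}\bigr)\PV\,Z^{\dag}
\]
when both sides are viewed as operators on $S'$ — here one uses $Z^{\dag} Z Z^{\dag} = \PV$ (the defining relation of the pseudoinverse of $Z|_{S'}$, since $Z$ is a PSD matrix whose image is exactly $S'$) and the fact that $Z^\dag \PV = \PV Z^\dag = Z^\dag$ and $Z^\dag \barr M Z^\dag$ restricted to $S'$ equals... actually one has to be a little careful: $Z^\dag (\PV \barr M \PV) Z^\dag = \PV$ but $Z^\dag \barr M Z^\dag \ne \PV$ in general. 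So the cleaner route is to note $\ell \PV = \ell Z^\dag (\PV \barr M \PV) Z^\dag = Z^\dag(\ell\PV\barr M\PV)Z^\dag$ and $Z(A-X)Z = Z^\dag\cdot\bigl(Z^\dag Z (A-X) Z Z^\dag\bigr)\cdot Z^\dag$... which again needs $Z Z^\dag = \PV$ on $S'$. With that identity in hand, I would then apply the rule for pseudoinverses of conjugations by an invertible-on-its-support operator: for a symmetric matrix $N$ supported on $S'$ and $Z^\dag$ invertible on $S'$, $(Z^\dag N Z^\dag)^\dag = Z N^\dag Z$. Applying this with $N = \PV(A-X-\ell\barr M)\PV$ gives
\[
\bigl(\PV(B-\ell I)\PV\bigr)^{\dag} \;=\; Z\,\bigl(\PV(A-X-\ell\barr M)\PV\bigr)^{\dag}\,Z,
\]
and finally multiplying on both sides by $Z$ and using $Z Z Z = \ldots$ — wait, rather using that $\lRes{\ell}{B} = Z(\cdots)^\dag Z$ and substituting — together with $\PV Z = Z$, we'd need instead to conjugate back. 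Let me restate: from the displayed pseudoinverse identity, $\lRes{\ell}{B} = Z \cdot Z(\PV(A-X-\ell\barr M)\PV)^\dag Z\cdot Z$, and since $Z^2$ is generally not $\PV$, the correct bookkeeping is that the outer $Z$'s in the definition of $\lRes{\cdot}{\cdot}$ must be absorbed differently — I would track this by working throughout with the substitution and the single identity $Z Z^\dag Z = Z$, $Z^\dag Z Z^\dag = Z^\dag$.

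The main obstacle will be exactly this pseudoinverse bookkeeping: $Z$ is singular on $\mathbb{R}^n$ (its kernel is $(S')^\perp$), so one cannot freely cancel $Z Z^{-1}$; every cancellation must be justified by a Moore–Penrose identity and by verifying that the relevant vectors lie in $S' = \mathrm{Im}(Z) = \mathrm{Im}(\PV)$. I would handle this cleanly by restricting all operators to $S'$ at the outset: on $S'$, $\PV = I$, $Z$ is genuinely invertible with inverse $Z^\dag|_{S'} = (\PV\barr M\PV)^{1/2}|_{S'}$, and then $B|_{S'} - \ell I = Z|_{S'}(A-X)|_{S'}Z|_{S'} - \ell I$ is an honest invertible matrix (invertible because $B-\ell I \succ 0$ by the algorithm's invariant, and note Assumption~\ref{assumption:runtime assumption}'s second inequality is precisely what guarantees positivity here), so ordinary matrix inverse manipulations apply: $(Z(A-X)Z - \ell I)^{-1} = Z^\dag ((A-X) - \ell (Z^\dag)^{-2})^{-1} Z^\dag = Z^\dag(A - X - \ell\barr M)^{-1}_{S'} Z^\dag$ on $S'$, where $(Z^\dag)^2|_{S'} = \PV\barr M\PV|_{S'}$. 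Then conjugating by $Z$ (which on $S'$ satisfies $Z Z^\dag = I$) gives $Z(\cdots)^{-1}Z = \PV(A-X-\ell\barr M)^{-1}_{S'}\PV$ back in $\mathbb{R}^n$, i.e.\ exactly $\PV(\PV(A-X-\ell\barr M)\PV)^\dag\PV$. Lifting back from $S'$ to $\mathbb{R}^n$ via $\PV$ then yields the claimed identity, and I would double-check that the left side $\lRes{\ell}{B} = Z(\PV(B-\ell I)\PV)^\dag Z$ matches by noting $Z(\PV(B-\ell I)\PV)^\dag Z$ has image in $S'$ and agrees with the $S'$-computation since $Z\PV = Z$.
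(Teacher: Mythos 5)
Your restarted argument --- restrict to $S'$, where $Z$ is invertible with inverse $Z^{\dag}|_{S'}$, pull out the conjugation, invert, and lift back via $\PV$ --- is correct and is essentially the paper's proof. The paper writes $\PV(B-\ell I)\PV = ZCZ$ with $C = \PV(A-X)\PV - \ell Z^{\dag 2} = \PV(A-X-\ell\barr{M})\PV$ (using $Z\PV = \PV Z = Z$ and $\ell\PV = Z(\ell Z^{\dag 2})Z$, since $ZZ^{\dag 2}Z = (ZZ^{\dag})(Z^{\dag}Z)=\PV$), verifies $(ZCZ)^{\dag} = Z^{\dag}C^{\dag}Z^{\dag}$ by checking $Z^{\dag}C^{\dag}Z^{\dag}\cdot ZCZ = \PV$, and collapses $Z\cdot Z^{\dag}C^{\dag}Z^{\dag}\cdot Z = \PV C^{\dag}\PV$. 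You should make explicit, as you indicate, that the identity $(ZCZ)^{\dag}=Z^{\dag}C^{\dag}Z^{\dag}$ needs $C|_{S'}$ to be nonsingular --- the verification uses $C^{\dag}C=\PV$, which otherwise fails --- and that Assumption~\ref{assumption:runtime assumption} supplies this during the algorithm.

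However, the first half of your proposal is genuinely wrong and should be deleted rather than left in as a false start. The claimed key identity $\PV\lp Z(A-X)Z - \ell\PV\rp\PV = Z^{\dag}\PV\lp A-X-\ell\barr{M}\rp\PV Z^{\dag}$ has the conjugation in the wrong direction: the correct factorisation conjugates by $Z$, not $Z^{\dag}$, giving $\PV(B-\ell I)\PV = Z\,\PV\lp A-X-\ell\barr{M}\rp\PV\,Z$. The root cause is your intermediate claim that $Z^{\dag}(\PV\barr{M}\PV)Z^{\dag}=\PV$, which is false --- it equals $Z^{\dag 4}$; the true identity is $Z(\PV\barr{M}\PV)Z = ZZ^{\dag 2}Z = \PV$. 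You detected the downstream symptom (ending up with $Z^{2}N^{\dag}Z^{2}$ rather than $\PV N^{\dag}\PV$), but the conjugation direction, not the later bookkeeping, is the actual error. Finally, the expression $(Z^{\dag})^{-2}$ in your restart is a slip; you mean $(Z^{\dag})^{2}=\PV\barr{M}\PV$, as you write two lines later.
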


\begin{proof}
    By the definition of $B$, we have that
    \[
        \lRes{\ell}{B} = \lRes{\ell}{Z(A-X)Z} = Z \lp Z \lp \PV (A-X) \PV - \ell Z^{\dag2} \rp Z\rp^{\dag} Z,
    \]
    where the last equality follows by the fact that $Z\cdot \PV  = \PV \cdot Z = Z$.
    We define $C = \PV (A-X) \PV - \ell Z^{\dag2}$, which gives us that   $\PV C = C \PV = C$. Also, since 
    \[
        Z^{\dag} C^{\dag} Z^{\dag} \cdot ZCZ 
        = Z^{\dag} C^{\dag} \cdot \PV \cdot CZ
        = Z^{\dag} C^{\dag} \cdot CZ
        = Z^{\dag} \cdot \PV \cdot Z
        = \PV,
    \]
    we have that   $(ZCZ)^{\dag} = Z^{\dag} C^{\dag} Z^{\dag}$.
    Therefore, it holds that
    \begin{align*}
       \lRes{\ell}{B} & = Z \lp ZCZ \rp^{\dag} Z  = \PV \lp \PV (A-X) \PV - \ell Z^{\dag2}\rp^{\dag} \PV \\
       &
        = \PV \lp \PV \lp A-X - \ell \barr{M} \rp \PV \rp^{\dag} \PV,
    \end{align*}
    where the last equality follows from the fact that $Z = \lp \PV \barr{M} \PV\rp^{\dag/2}.$
\end{proof}

\begin{proof}[Proof of Lemma~\ref{lem:runtime l1}] 

    First of all, we observe that  
        \begin{align}
           Z \lp \PV (B - \ell I) \PV\rp^{\dag} Z 
           &= \PV \lp \PV (A - X - \ell \barr{M}) \PV\rp^{\dag} \PV \label{eq:from B to A}\\
           &= \PV \lp VV^{\rot} L^{-1/2}(\tilde{L} - \ell L_W)L^{-1/2} VV^{\rot} \rp^{\dag} \PV \nonumber\\
           &= \PV \lp V \mathbb{V}^{\rot}(\tilde{L} - \ell L_W)\mathbb{V} V^{\rot} \rp^{\dag} \PV \nonumber\\
           &= \PV \lp V \lp \sum_{j=1}^k \lambda_i f_j f_j^{\rot}\rp V^{\rot} \rp^{\dag} \PV \nonumber\\
           &= \PV \lp \sum_{j=1}^k \lambda_i \lp Vf_j\rp \lp Vf_j\rp^{\rot} \rp^{\dag} \PV \nonumber\\
           &= \PV \lp \sum_{j=1}^k \lambda_i^{-1} \lp Vf_j\rp \lp Vf_j\rp^{\rot} \rp \PV \label{eq:spec decomp2},
        \end{align}
        where \eqref{eq:from B to A} follows by Lemma~\ref{lem:from B to A in S},    and  (\ref{eq:spec decomp2}) follows by  Lemma~\ref{lem:spec decomp preserved under unitary}.  
        Therefore, for any 
        $v_i$ with the corresponding $b_e$ we have  that 
        \begin{align*}
            v_i^{\rot} Z \lp \PV (B - \ell I) \PV\rp^{\dag} Z v_i
            &= v_i^{\rot} \PV \lp \sum_{j=1}^k \lambda_j^{-1} \lp Vf_j\rp \lp Vf_j\rp^{\rot} \rp \PV v_i \\
            &= v_i^{\rot} VV^{\rot} \cdot V \lp \sum_{j=1}^k \lambda_j^{-1} f_j f_j^{\rot}\rp V^{\rot} \cdot VV^{\rot} v_i\\ 
            &= b_e^{\rot}L^{-1/2}V \lp \sum_{j=1}^k \lambda_j^{-1} f_j f_j^{\rot}\rp V^{\rot} L^{-1/2} b_e\\
            &= b_e^{\rot} \mathbb{V} \lp \sum_{j=1}^k \lambda_j^{-1} f_j f_j^{\rot}\rp \mathbb{V}^{\rot} b_e\\
            &= \sum_{j=1}^k \left\langle b_e, \lambda_j^{-1/2}\mathbb{V} f_j\right\rangle^2.
        \end{align*}
        
        Next, we use Lemma~\ref{lem:eigendecomposition} to get $\{\lambda_j\}_{j=1}^k$ and   $\{ \mathbb{V}f_j\}_{j=1}^k$ in time $t_{\ref{lem:eigendecomposition}} = O(\min\{n^{\omega}, mk + nk^2 + k^{\omega}\})$. To compute $\sum_{j=1}^k \left\langle b_e, \lambda_j^{-1/2}\mathbb{V} f_j\right\rangle^2$ we make the following case distinction based on the value of $k$.
      
      \underline{\textbf{Case 1: $k = O(\log n/\varepsilon^2)$}}. 
       We set
       \[
            r_i = \sum_{j=1}^k \left\langle b_e, \lambda_j^{-1/2}\mathbb{V} f_j\right\rangle^2
       \]
       and compute $r_i$ by explicitly computing the right hand side. This will take time  $O(mk)$ for all edges $e$.
       
      \underline{\textbf{Case 2: $k = \Omega(\log n/\varepsilon^2)$}}. 
       Let
       \[
            J = [\mathbb{V}\lambda_1^{-1/2}f_1, \dots, \mathbb{V}\lambda_k^{-1/2}f_k ],
       \]
       and fix one edge $e= \{x, y\}$. We can view $\sum_{j=1}^k \langle b_e, \mathbb{V}\lambda_j^{-1/2} f_j\rangle^2$ as the squared norm of the difference between the $x$-row  and the $y$-row  of the matrix $J$. We invoke the Johnson-Lindenstrauss Lemma to reduce the matrix $J \in \mathbb{R}^{n \times k}$ to a matrix $JR \in \mathbb{R}^{n \times O(\log n/\varepsilon^2)}$ such that 
       \[
            \sum_{j=1}^k \langle b_e, \mathbb{V}\lambda_j^{-1/2} f_j\rangle^2 \approx_{\varepsilon} r_i,
       \]
        where $r_i$ is the squared norm of the difference between the $x$-row  and the $y$-row  of the matrix $JR$. The running time to compute $JR$ is $O(nk \log n/\varepsilon^2)$, and given $JR$ the time to compute $r_i$ for all edges $e$ is $O(m \log n/\varepsilon^2)$. Therefore the total running time for Case 2 is  $O\lp (m + nk) \log n/\varepsilon^2\rp$.
    
        Thus, we can upper bound the running time for Case 1 and Case 2 by 
        \[
             t_{\mathrm{cases}} \triangleq O\lp (m + nk) \log n/\varepsilon^2\rp,
        \]
        and the   overall runtime is given by 
        \begin{align*}
          t_{\ref{lem:eigendecomposition}} + t_{\mathrm{cases}}
            &= O(\min\{n^{\omega}, mk + nk^2 + k^{\omega}\}) +
            O\lp (m + nk) \log n/\varepsilon^2\rp \\
            &= \tilde{O} \lp (\min\{n^{\omega}, mk + nk^2 + k^{\omega}\})/\varepsilon^2\rp. \qedhere
        \end{align*}
             
\end{proof}

      \begin{lemma}\label{lem:runtime l2}
        Under   Assumption~\ref{assumption:runtime assumption}, we can compute a number $\alpha$ such that
        \[
            (1-\varepsilon)\alpha \leq \lambdaMax\lp \PV (B - \ell I) \PV\rp^{\dag} \leq (1+\varepsilon) \alpha,
        \]
        in time 
        $
            t_{\ref{lem:runtime l2}} = \tilde{O}\lp (\min\{n^{\omega}, mk + nk^2 + k^{\omega}\})/\varepsilon^3\rp.
        $
    \end{lemma}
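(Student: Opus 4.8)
The plan is to reduce $\lambdaMax\lp \PV (B - \ell I) \PV\rp^{\dag}$ to the smallest eigenvalue of an explicit $k\times k$ matrix, in the same spirit as the proof of Lemma~\ref{lem:runtime l1}. Since $\PV = VV^{\rot}$ with $V^{\rot}V = I_k$, the operator $\PV(B-\ell I)\PV$ acts on the $k$-dimensional subspace $S'$, and in the orthonormal basis $V$ it is represented by the $k\times k$ matrix
\[
V^{\rot}(B - \ell I)V = \widehat{M}^{-1/2}\lp \mathbb{V}^{\rot}\lp \tilde{L} - \ell L_W\rp\mathbb{V}\rp\widehat{M}^{-1/2},
\]
where $\widehat{M} \triangleq V^{\rot}\barr{M}V = \mathbb{V}^{\rot}L_W\mathbb{V}$ is $k\times k$ and positive definite (as $\barr{M}$ has full rank), and where I have used the identities $ZV = V\widehat{M}^{-1/2}$ and $V^{\rot}(A-X)V = \mathbb{V}^{\rot}\tilde{L}\mathbb{V}$. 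Since $\PV(B-\ell I)\PV = V\lp V^{\rot}(B-\ell I)V\rp V^{\rot}$ and, by the barrier invariant (equivalently, by Assumption~\ref{assumption:runtime assumption}), $V^{\rot}(B-\ell I)V$ is positive definite, it follows that
\[
\lambdaMax\lp \PV (B - \ell I) \PV\rp^{\dag} = \frac{1}{\lambdaMin\lp V^{\rot}(B - \ell I)V\rp} = \frac{1}{\lambdaMin\lp \widehat{M}^{-1/2}\lp \mathbb{V}^{\rot}( \tilde{L} - \ell L_W)\mathbb{V}\rp\widehat{M}^{-1/2}\rp},
\]
so it suffices to approximate the smallest eigenvalue of the $k\times k$ matrix on the right up to a multiplicative $(1\pm \varepsilon)$ factor.

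The algorithm then consists of three steps. First, use Theorem~\ref{thm:our projection} to obtain $\mathbb{V} = L^{-1/2}V$ in time $t_{\ref{thm:our projection}}$. Second, form the two $k\times k$ matrices $\widehat{M} = \mathbb{V}^{\rot}L_W\mathbb{V}$ and $\mathbb{V}^{\rot}(\tilde{L} - \ell L_W)\mathbb{V}$ by direct matrix multiplication; as $\tilde{L}$ and $L_W$ together have $O(m)$ non-zero entries, this costs $O\lp\min\{n^{\omega}, mk + nk^2\}\rp$, exactly as in the first step of Lemma~\ref{lem:eigendecomposition}. Third, compute an eigendecomposition of $\widehat{M}$ to get $\widehat{M}^{-1/2}$, form the explicit $k\times k$ matrix $\widehat{M}^{-1/2}\lp \mathbb{V}^{\rot}(\tilde{L} - \ell L_W)\mathbb{V}\rp\widehat{M}^{-1/2}$, and read off its smallest eigenvalue; all of this is on $k\times k$ matrices and costs $O(k^{\omega})$. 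Setting $\alpha$ to the reciprocal of that eigenvalue and adding up the running times yields $t_{\ref{lem:runtime l2}} = \tilde{O}\lp\lp \min\{n^{\omega}, mk + nk^2 + k^{\omega}\}\rp/\varepsilon^3\rp$, the $\mathrm{poly}(1/\varepsilon)$ factor being inherited from the approximate computation of $\mathbb{V}$ in Theorem~\ref{thm:proj guarantees} and from the $\varepsilon$-dependent nearly-linear-time Laplacian solves used inside Theorem~\ref{thm:our projection} and in forming $\mathbb{V}^{\rot}(\tilde{L}-\ell L_W)\mathbb{V}$.

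The main obstacle is the perturbation analysis. The matrix $V$ produced by Theorem~\ref{thm:proj guarantees} is only an approximate top-$k$ eigenspace of $\barr{M}$, and the Laplacian solves are inexact, so $\mathbb{V}$, $\widehat{M}$, and $\mathbb{V}^{\rot}(\tilde{L} - \ell L_W)\mathbb{V}$ are all recovered only up to small spectral errors; one must show that a relative error of $\varepsilon$ in these matrices induces only a relative error of $O(\varepsilon)$ in $\lambdaMin\lp V^{\rot}(B-\ell I)V\rp$. Here Assumption~\ref{assumption:runtime assumption} does the crucial work: the hypothesis $\PV\lp A - X - \ell\barr{M}\rp\PV \succeq |\ell|\eta\cdot \PV\barr{M}\PV$ rewrites, in the $V$-basis, as $V^{\rot}(B - \ell I)V \succeq |\ell|\eta\cdot I_k$, so the target eigenvalue is bounded away from $0$ and a multiplicative approximation is well-posed; moreover the conjugation by $\widehat{M}^{-1/2}$ is numerically stable because $\barr{M} \preceq I$ and $\lambda_{\min}(\barr{M}) = \Omega(1/\mathrm{poly}(n))$, the latter already established in the proof of Lemma~\ref{lem:lower bound N_j}. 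Carefully tracking these bounds and choosing the internal accuracy parameters of Theorem~\ref{thm:proj guarantees} and of the Laplacian solver accordingly is the only step that is not routine; everything else is bookkeeping of matrix multiplications of the stated sizes.
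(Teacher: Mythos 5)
Your proof is correct, and it takes a genuinely different route from the paper's. You reduce the quantity $\lambdaMax\lp \PV (B - \ell I) \PV\rp^{\dag}$ to the reciprocal of the smallest eigenvalue of the explicit $k\times k$ matrix $\widehat{M}^{-1/2}\lp\mathbb{V}^{\rot}(\tilde{L}-\ell L_W)\mathbb{V}\rp\widehat{M}^{-1/2}$, which you then diagonalise in $O(k^{\omega})$ time; the key algebraic observations ($Z = V\widehat{M}^{-1/2}V^{\rot}$, hence $ZV = V\widehat{M}^{-1/2}$, hence $V^{\rot}(B-\ell I)V = \widehat{M}^{-1/2}\mathbb{V}^{\rot}(\tilde{L}-\ell L_W)\mathbb{V}\widehat{M}^{-1/2}$, and the positive-definiteness $V^{\rot}(B-\ell I)V\succeq |\ell|\eta I_k$ granted by Assumption~\ref{assumption:runtime assumption}) all check out. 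The paper instead goes back up to dimension $n$: it uses Lemma~\ref{lem:from B to A in S} and Lemma~\ref{lem:eigendecomposition} to write the target as $\lambdaMax(JL_W)$ for an $n\times n$ product, and then estimates that via $(\Tr(JL_W)^{2t})^{1/(2t+1)}$ with $t=\Theta(\log n/\varepsilon)$ and a Johnson--Lindenstrauss sketch, which is where its $1/\varepsilon^3$ factor actually arises. Your argument buys a shorter, essentially exact computation (once $\mathbb{V}$ is given, forming the two $k\times k$ matrices is a sparse matrix product costing $O(mk+nk^2)$ without any Laplacian solves, and the eigendecomposition is exact), so the $\mathrm{poly}(1/\varepsilon)$ factor you attribute to ``Laplacian solves in forming $\mathbb{V}^{\rot}(\tilde{L}-\ell L_W)\mathbb{V}$'' does not in fact appear there; it would be cleaner to state that your per-iteration cost is just $O(\min\{n^{\omega}, mk+nk^2+k^{\omega}\})$, within the lemma's claimed bound. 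Finally, note that the paper charges $t_{\ref{thm:our projection}}$ only once in Theorem~\ref{thm:main-formal}, not per call to this lemma; folding it into the running time here would (for small $\lambda^*$) exceed the stated bound, so you should treat $\mathbb{V}$ as precomputed rather than recomputed each time.
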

\begin{proof}

    By Lemma~\ref{lem:from B to A in S}, we have that  
    \begin{align*}
    \lambdaMax(\PV (B - \ell I) \PV)^{\dag}  & = \lambdaMax\left(Z^{\dag}\lp \PV(A - X -\ell \barr{M})\PV \rp^{\dag}Z^{\dag}\right) \\
    & = \lambdaMax\left(\lp \PV(A - X - \ell \barr{M})\PV \rp^{\dag}\cdot \PV \barr{M}\PV\right),
    \end{align*}
    where the last equality uses the fact that the eigenvalues are preserved through circular permutations. Therefore, we have that 
    \begin{align*}
        \lefteqn{\lambdaMax\left(\lp \PV \lp A - X - \ell \barr{M}\rp \PV\rp^{\dag} \cdot \PV \barr{M}\PV\right)}\\
        &=\lambdaMax\left(\lp V \mathbb{V}^{\rot}\lp \tilde{L} - \ell L_W\rp \mathbb{V} V^{\rot}  \rp^{\dag} \PV \barr{M} \PV\right)\\
        &= \lambdaMax \lp \lp \sum_{i=1}^k \lambda_i^{-1} (Vf_i) (Vf_i)^{\rot} \rp VV^{\rot} L^{-1/2} L_W L^{-1/2} VV^{\rot}\rp \\
        &= \lambdaMax \lp \lp \sum_{i=1}^k \lambda_i^{-1} f_i f_i^{\rot}\rp \mathbb{V}^{\rot} L_W \mathbb{V}\rp \\
        &= \lambdaMax \lp \lp \sum_{i=1}^k \lambda_i^{-1} \lp \mathbb{V}f_i\rp \lp \mathbb{V}f_i\rp^{\rot}\rp L_W\rp \\
        &= \lambdaMax \lp J L_W\rp,
    \end{align*}  
    where the matrix $J$ is defined as  
    \[
    J \triangleq  \sum_{i=1}^k (1/\lambda_i) \lp \mathbb{V}f_i\rp \lp \mathbb{V}f_i\rp^{\rot}.
    \]
    Now we observe that for any $t\in\mathbb{N}$ it holds that
    \[
        \lambdaMax(JL_W) \leq \lp \Tr (JL_W)^{2t} \rp^{1/2t} \leq n^{1/2t} \lambdaMax(J L_W). 
    \]
    Thus, for $t = \Theta(\log n / \varepsilon)$, we have that $$ \lp \Tr (JL_W)^{2t}\rp^{1/2t} \approx_{\varepsilon/2} \lambdaMax(JL_W).$$ 
    Based on this, in order to approximate  $\lambdaMax\left(JL_W\right)$ it suffices to approximate  $\Tr \left(JL_W\right)^{2t}$ instead. To that extent, we have that 
    \begin{align*}
        \Tr \lsp (JL_W)^{2t} \rsp 
        &= \tr \lsp (JL_W)^{t-1} JL_W^{1/2} L_W^{1/2}J (L_W J)^{t-1} L_W\rsp \\
        &= \sum_{e \in W} \Tr \lsp  (JL_W)^{t-1} JL_W^{1/2} L_W^{1/2}J (L_W J)^{t-1} \cdot b_e b_e^{\rot} \rsp\\
        &= \sum_{e \in W} \left\| L_W^{1/2} J (L_W J)^{t-1} b_e\right\|^2\\
        &= \sum_{e \in W} \left\| B J (L_W J)^{t-1} b_e\right\|^2,
    \end{align*}
    where we used that $L_W = BB^{\rot}$ for some  incidence matrix $B$. Now we apply the standard technique of approximating the distances via the Johnson-Lindenstrauss Lemma. To that extent, let $R$ be the $O(\log n / \varepsilon^2) \times n$ random projection matrix, and it holds    for every edge $e \in E_W$ that
    \[
        \| R BJ (L_W J)^{t-1} b_e \|^2 \approx_{\varepsilon/10}  \| BJ (L_W J)^{t-1} b_e \|^2. 
    \]
    We focus on computing the matrix $RBJ(L_WJ)^{t-1}$. To that extent, let $r^{\rot}$ be some row of $R$ and using the symmetry of all matrices, it suffices to compute $(J L_W)^{t-1} J B r$. This can be done sequentially by performing the following steps:
    \begin{enumerate}
        \item Perform the multiplication $B u$ for some vector $u$;
        \item Perform the multiplication $L_W u$ for some vector $u$;
        \item Perform the multiplication $J u$ for some vector $u$.
    \end{enumerate}
    We notice that computing $B u$ and $L_W u$ takes  $O(m +n)$ time. For the last step, recall that   
    \[
        J = \sum_{i=1}^k \lp\lambda_i^{-1/2}\mathbb{V}f_i\rp \lp\lambda_i^{-1/2}\mathbb{V}f_i\rp^{\rot},
    \]
   and by Lemma~\ref{lem:eigendecomposition} we can get $\{\lambda_i\}_{i=1}^k$ and  $\{\mathbb{V}f_i\}_{i=1}^k$ in    $t_{\ref{lem:eigendecomposition}} = O(\min\{n^{\omega}, mk + nk^2 + k^{\omega}\})$ time. 
   
   Then,  computing $G u$ for any vector $u$ can be done in $O(nk)$ time. Since $t = \Theta(\log n/\varepsilon)$, the total time needed to compute $(J L_W)^{t-1} J B r$ is $O((m + nk)\log n/\varepsilon)$. Since there are 
    $O(\log n/\varepsilon^2)$ rows of $R$,
     the total time needed to 
      compute   $RBJ(L_WJ)^{t-1}$ is $O((m + nk)\log^2 n/\varepsilon^3)$. Therefore,  the total running time for computing $\| R BJ (L_W J)^{t-1} b_e \|^2$ for all edges $e$ is  $
       O \lp  (m + nk)\log^2n/\varepsilon^3 \rp$, and the  overall runtime is 
    \[
        t_{\ref{lem:eigendecomposition}} +  O \lp  (m + nk)\log^2n/\varepsilon^3 \rp 
        = \tilde{O}\lp (\min\{n^{\omega}, mk + nk^2 + k^{\omega}\})/\varepsilon^3\rp . \qedhere
    \]
\end{proof}

\subsubsection{Approximating the quantities 
involving the upper barrier value $u$}

Next we  present efficient algorithms for computing all the quantities that involve the upper barrier value $u$, i.e., the quantities (3), (4), (5)  and (6).   We remark that the discussions here   follow the analysis of  \cite{zhu15,LS15:linearsparsifier}, and our main point here  is to show that their techniques can be adapted in our setting which involves $\overline{M}$ in the computation.

We will first show that $\Tr \left[ (uI-A)^{-1}\barr{M}\right]$ and the values $v_i^{\rot}(uI-A)^{-1}v_i$ for all the $v_i$ can be approximately computed in almost-linear time.

\begin{lemma}[\cite{LS15:linearsparsifier}] \label{lem: approximate matrices}
  Under Assumption~\ref{assumption:runtime assumption}, the following statement holds:
  we can construct a matrix $S_u$ such that
  \[
    S_u\approx_{\eps/10} (uI-A)^{-1/2},
  \]
  and $S_u=p(A)$ for a polynomial $p$ of degree $O\left(\frac{\log(1/\eps\eta)}{\eta}\right)$.
\end{lemma}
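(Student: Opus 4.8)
The plan is to approximate the function $x\mapsto (u-x)^{-1/2}$ on the spectrum of $A$ by a polynomial, and then note that substituting $A$ for $x$ yields the required matrix $S_u=p(A)$. First I would use Assumption~\ref{assumption:runtime assumption}, which guarantees $A\prec (1-\eta)u\cdot I$, together with the fact that $A\succeq 0$ (it is of the form $L^{-1/2}L_AL^{-1/2}$ with $L_A$ a Laplacian, hence PSD), to conclude that every eigenvalue $\lambda$ of $A$ satisfies $\lambda/u\in[0,1-\eta]$. After rescaling by $u$, the task reduces to approximating $g(t)\triangleq (1-t)^{-1/2}$ uniformly on $t\in[0,1-\eta]$ by a low-degree polynomial $\tilde p$ with a \emph{multiplicative} error $\eps/10$, i.e.\ $(1-\eps/10)g(t)\le \tilde p(t)\le (1+\eps/10)g(t)$ for all such $t$; then $S_u\triangleq u^{-1/2}\tilde p(A/u)$ is a polynomial in $A$ and $S_u\approx_{\eps/10}(uI-A)^{-1/2}$ follows by the spectral theorem, since the eigenvalues of $S_u$ and of $(uI-A)^{-1/2}$ are $u^{-1/2}\tilde p(\lambda/u)$ and $u^{-1/2}g(\lambda/u)$ respectively, sharing the same eigenvectors.

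For the polynomial approximation itself, the standard route is to truncate a rapidly converging series expansion of $(1-t)^{-1/2}$. One clean option is to write $(1-t)^{-1/2}$ via the substitution $t = 1-\eta(1+s)$ (mapping the interval $[0,1-\eta]$ into $s\in[-1,\eta^{-1}-1]$ is awkward), so instead I would use the Chebyshev-based approximation of the reciprocal square root: it is classical (see e.g.\ the analysis underlying conjugate-gradient-type iterations, or the explicit construction in \cite{zhu15,LS15:linearsparsifier}) that for the function $x^{-1/2}$ on an interval $[\mu,1]$ with condition number $1/\mu$, there is a polynomial of degree $O\big(\sqrt{1/\mu}\,\log(1/(\mu\delta))\big)$ achieving relative error $\delta$; applied here with the shifted variable so that $uI-A$ has eigenvalues in $[\eta u, u]$, i.e.\ condition number $1/\eta$, this already gives degree $O\big(\eta^{-1/2}\log(1/(\eta\eps))\big)$, which is within the claimed $O\!\big(\frac{\log(1/\eps\eta)}{\eta}\big)$ bound (indeed better). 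Alternatively, and matching the stated degree exactly, one can truncate the binomial/Taylor series of $(1-t)^{-1/2}=\sum_{j\ge0}\binom{-1/2}{j}(-t)^j$ after $O(\eta^{-1}\log(1/\eps\eta))$ terms: since on $[0,1-\eta]$ the tail is geometrically controlled — the ratio of consecutive terms is at most $1-\eta$ up to lower-order factors — the truncation error at degree $d$ is $O((1-\eta)^d)\cdot (\text{value of }g)$, which is $\le (\eps/10)\,g(t)$ once $d=\Omega(\eta^{-1}\log(1/\eps\eta))$; here the relative (rather than absolute) error bound is what makes the $\approx_{\eps/10}$ statement work, and it holds because both the truncated polynomial and the tail are nonnegative on $[0,1-\eta]$, so the error is bounded by the tail which is itself a fixed fraction of $g$.

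The main obstacle I expect is not the existence of the polynomial — that is essentially textbook — but getting the \emph{relative} error guarantee $\approx_{\eps/10}$ uniformly, since $g(t)=(1-t)^{-1/2}$ blows up as $t\to 1$ and the quality of approximation degrades near the endpoint $1-\eta$; one must be careful that the truncation tail, evaluated at $t=1-\eta$, is still only an $\eps/10$ fraction of $g(1-\eta)=\eta^{-1/2}$, which is exactly where the $\log(1/\eta)$ inside the degree bound comes from (we need $(1-\eta)^d\cdot \eta^{-O(1)}\le \eps/10$). A secondary point to verify carefully is that the polynomial can be taken so that $p(A)\succeq 0$ and more specifically so that $S_u\succ 0$, so that the two-sided multiplicative bound $(1-\eps/10)(uI-A)^{-1/2}\preceq S_u\preceq (1+\eps/10)(uI-A)^{-1/2}$ makes sense as a PSD ordering; this is immediate from the scalar relative-error bound applied eigenvalue-by-eigenvalue via Lemma~\ref{lem:spec decomp preserved under unitary} (or directly the spectral theorem), since $S_u$ and $(uI-A)^{-1/2}$ are simultaneously diagonalizable. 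I would then simply cite \cite{zhu15,LS15:linearsparsifier} for the precise polynomial construction and degree, as the lemma statement already attributes the result to \cite{LS15:linearsparsifier}.
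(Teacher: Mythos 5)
The paper does not prove this lemma itself; it is stated with a citation to \cite{LS15:linearsparsifier} and used as a black box, so there is no in-paper proof to compare against. Your reconstruction via truncated Taylor series is correct and matches the route taken in the cited references: since $0\preceq A\prec (1-\eta)uI$, rescaling reduces the problem to approximating $(1-t)^{-1/2}=\sum_{j\ge 0}\binom{2j}{j}4^{-j}t^j$ on $[0,1-\eta]$; the degree-$d$ truncation $p_d$ has nonnegative coefficients (so $p_d(A/u)\succeq I\succ 0$), satisfies $p_d\le g$, and the tail obeys $g(t)-p_d(t)\le t^{d+1}/(1-t)$, hence $\bigl(g(t)-p_d(t)\bigr)/g(t)\le t^{d+1}(1-t)^{-1/2}\le (1-\eta)^{d+1}\eta^{-1/2}$, which is at most $\eps/10$ once $d=\Theta\bigl(\eta^{-1}\log(1/(\eps\eta))\bigr)$. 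Setting $S_u=u^{-1/2}p_d(A/u)$ and diagonalising simultaneously gives $(1-\eps/10)(uI-A)^{-1/2}\preceq S_u\preceq (uI-A)^{-1/2}$, which is stronger than $\approx_{\eps/10}$. One small caveat: your side remark that a Chebyshev-type construction would achieve degree $O\bigl(\eta^{-1/2}\log(1/(\eps\eta))\bigr)$ for $x^{-1/2}$ is plausible but not something you verify, and it is not needed since the lemma only claims the $\eta^{-1}$ bound; I would drop it or cite a specific reference for it.
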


\begin{lemma}\label{lem:aptrace}
Under  Assumption~\ref{assumption:runtime assumption},
there is an algorithm that  computes  $\{r_i\}_{i=1}^m$ in $t_{\ref{lem:aptrace}} = \tilde{O}\lp\frac{m}{\eta\eps^2}\rp$ time such that
\[
  (1-\varepsilon) \cdot r_i\leq  v_i^{\rot}(uI-A)^{-1}v_i \leq (1+\varepsilon) \cdot r_i.
\]
Moreover, it holds that 
\[
  (1-\varepsilon) \cdot \sum_{i=1}^m r_i \leq \Tr \lsp (uI-A)^{-1}\barr{M}\rsp \leq (1+\varepsilon) \cdot \sum_{i=1}^m r_i.
\]
\end{lemma}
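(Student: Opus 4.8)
The plan is to compute, in one shot, an explicit low-dimensional matrix whose squared column norms (and squared norms of column differences) give all the $r_i$, following the spectral-sketching approach of \cite{zhu15,LS15:linearsparsifier}, with $\barr{M}$ entering only through the identity $\barr{M}=\sum_i v_iv_i^{\rot}$. Concretely, write $L\triangleq L_{G+W}$ and $v_i=L^{\dag/2}b_{e_i}$. By Lemma~\ref{lem: approximate matrices}, Assumption~\ref{assumption:runtime assumption} supplies a matrix $S_u=p(A)$ with $S_u\approx_{\eps/10}\uResPow{u}{A}{1/2}$ and $\deg p=O\!\left(\log(1/(\eps\eta))/\eta\right)$. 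Since $S_u$ and $\uResPow{u}{A}{1/2}$ are both polynomials in the symmetric matrix $A$ they commute, so squaring the relation gives $S_u^2\approx_{\eps/5}\uRes{u}{A}$, and hence $v_i^{\rot}\uRes{u}{A}v_i\approx_{\eps/5}v_i^{\rot}S_u^2v_i=\norm{S_uv_i}^2=\norm{S_uL^{\dag/2}b_{e_i}}^2$ for every $i$.

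Next I would apply the Johnson--Lindenstrauss lemma. Let $Q\in\mathbb{R}^{d\times n}$ be a random Gaussian matrix with $d=O(\log n/\eps^2)$; with high probability $\norm{Qy_i}^2\approx_{\eps/10}\norm{y_i}^2$ for all $m$ vectors $y_i\triangleq S_uL^{\dag/2}b_{e_i}$ simultaneously. Define $r_i\triangleq\norm{(QS_uL^{\dag/2})b_{e_i}}^2$. Chaining the two approximations, $r_i$ agrees with $v_i^{\rot}\uRes{u}{A}v_i$ up to a multiplicative $(1\pm\eps/10)(1\pm\eps/5)$ factor, which after adjusting the constants hidden in Lemma~\ref{lem: approximate matrices} and in the sketch is within $(1\pm\eps)$; this is the first inequality. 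Summing over $i$ and using $\Tr\lsp\uRes{u}{A}\barr{M}\rsp=\Tr\lsp\uRes{u}{A}\sum_iv_iv_i^{\rot}\rsp=\sum_i v_i^{\rot}\uRes{u}{A}v_i$ gives the second inequality with the same constants.

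It remains to bound the running time. I would form $\Psi\triangleq QS_uL^{\dag/2}\in\mathbb{R}^{d\times n}$ explicitly. A single row of $\Psi$ is $L^{\dag/2}S_uq=L^{\dag/2}p(A)q$ for a row $q^{\rot}$ of $Q$ (using the symmetry of $S_u$ and $L^{\dag/2}$), which I evaluate by Horner's scheme: $\deg p$ matrix--vector products with $A$, followed by one application of $L^{\dag/2}$. Each product with $A=L^{\dag/2}L_AL^{\dag/2}$ is an application of $L^{\dag/2}$, then of the sparse Laplacian $L_A=L_G+\tilde{L}$, then of $L^{\dag/2}$ again; applying $L^{\dag/2}$ approximately (to relative error $1/\poly(n)$) costs $\tilde O(m)$ via a nearly-linear-time Laplacian solver, so each $A$-product costs $\tilde O(m)$. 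Thus $\Psi$ is built in $\tilde O\!\left(d\cdot\deg p\cdot m\right)=\tilde O\!\left(m/(\eta\eps^2)\right)$ time, after which each $r_i$ is the squared norm of the difference of two columns of $\Psi$, costing $O(d)$ time per edge, i.e.\ $\tilde O(m/\eps^2)$ for all edges together, with $\sum_i r_i$ obtained in the same time; the total is $\tilde O(m/(\eta\eps^2))=t_{\ref{lem:aptrace}}$.

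The main obstacle is the numerical-error bookkeeping: the true matrix $A$ is dense, so every $A$-product is only approximate, and these errors compound through the degree-$O(\log(1/(\eps\eta))/\eta)$ polynomial $p$ and through the sketch. The point is that $\mathrm{spec}(A)\subseteq[0,(1-\eta)u)$ by Assumption~\ref{assumption:runtime assumption}, so $p$ approximates the bounded, smooth function $(u-x)^{-1/2}$ on an interval bounded away from $u$ and hence has controlled coefficients; setting the solver accuracy to $1/\poly(n)$ — which costs only polylogarithmic factors — keeps the accumulated error well below $\eps/100$ times the quantities of interest. Granting this control, the remaining steps are the standard ones of \cite{zhu15,LS15:linearsparsifier}; the only new ingredient is that $\barr{M}$ never has to be touched directly, since it is absorbed into the sum $\sum_i v_iv_i^{\rot}$.
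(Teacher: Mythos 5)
Your proof is correct and takes essentially the same approach as the paper: the paper cites Lemma~4.9 of \cite{LS15:linearsparsifier} for the first inequality (which is precisely the $S_u=p(A)$-plus-Johnson--Lindenstrauss computation you reconstruct, with the accounting $d\cdot\deg p\cdot\widetilde O(m)=\widetilde O(m/(\eta\eps^2))$) and then derives the second inequality by the same trace identity $\Tr\lsp(uI-A)^{-1}\barr{M}\rsp=\sum_{i}v_i^{\rot}(uI-A)^{-1}v_i$ that you use.
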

\begin{proof}
The first statement is from Lemma~4.9 of \cite{LS15:linearsparsifier}. 
The second statement follows by the first statement and the fact that
\[
\sum_{i=1}^m v_i^{\rot} \left(u I-A \right)^{-1} v_i = \sum_{i=1}^m  \tr\left[\left(u I-A \right)^{-1} v_i v_i^{\rot}\right] = \tr\left[\left(u I-A \right)^{-1} \overline{M}\right].\qedhere
\]
\end{proof}

The next lemma shows that both of $\lambdaMax \lsp (uI-A)^{-1}\barr{M}\rsp $
 and $\lambdaMin \lsp (uI-A)^{-1}\barr{M}\rsp $ can be approximately computed in almost-linear time.

\begin{lemma}\label{lem:running time lambda max 1}
    Under   Assumption~\ref{assumption:runtime assumption}, there is an algorithm that computes   values $\alpha_1$, $\alpha_2$ in $t_{\ref{lem:running time lambda max 1}} = \tilde{O}\lp\frac{m}{\eta\varepsilon^3}\rp$  time such that
    \[
        (1-\varepsilon)\cdot \alpha_1 \leq \lambdaMax \lsp (uI-A)^{-1}\barr{M}\rsp \leq (1+\varepsilon)\cdot \alpha_1,
    \]
    and 
    \[
        (1-\varepsilon)\cdot \alpha_2 \leq \lambdaMin \lsp(uI-A)^{-1}\barr{M}\rsp \leq (1+\varepsilon)\cdot \alpha_2.
    \]
\end{lemma}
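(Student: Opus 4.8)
\textbf{Proof proposal for Lemma~\ref{lem:running time lambda max 1}.}

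The plan is to reduce the computation of $\lambdaMax$ and $\lambdaMin$ of $(uI-A)^{-1}\barr{M}$ to trace computations of high powers of a related matrix, exactly as in the proof of Lemma~\ref{lem:runtime l2}, and then to reuse the machinery of Lemma~\ref{lem: approximate matrices} and Lemma~\ref{lem:aptrace} to evaluate those traces in almost-linear time. First I would observe that, since $(uI-A)^{-1}\barr{M}$ is similar to the symmetric PSD matrix $\barr{M}^{1/2}(uI-A)^{-1}\barr{M}^{1/2}$ via a circular permutation, its eigenvalues are real and non-negative, and moreover they coincide with those of $(uI-A)^{-1/2}\barr{M}(uI-A)^{-1/2}$. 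Writing $Q \triangleq (uI-A)^{-1/2}\barr{M}(uI-A)^{-1/2}$, for any integer $t\geq 1$ we have
\[
    \lambdaMax(Q) \leq \lp \Tr(Q^{2t})\rp^{1/2t} \leq n^{1/2t}\lambdaMax(Q),
\]
so choosing $t=\Theta(\log n/\varepsilon)$ gives $\lp \Tr(Q^{2t})\rp^{1/2t}\approx_{\varepsilon/4}\lambdaMax(Q)$. Hence it suffices to approximate $\Tr(Q^{2t})$ within a constant factor and take the $2t$-th root; the same argument applied to $Q^{-1}$ restricted to the image of $\barr{M}$ (which is all of $\mathbb{R}^n$ by the full-rank assumption on $\barr{M}$ after adding self-loops) handles $\lambdaMin$, although for $\lambdaMin$ it may be cleaner to write $\lambdaMin(Q) = 1/\lambdaMax(Q^{-1})$ and note $Q^{-1} = (uI-A)^{1/2}\barr{M}^{-1}(uI-A)^{1/2}$, whose high powers can be handled by the same polynomial-approximation strategy since $\barr{M}^{-1}$ acts only through its action on vectors and $\lambda_{\min}(\barr{M})=\Omega(1/\poly(n))$ by the bound established in Lemma~\ref{lem:lower bound N_j}.

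Next I would expand the trace as a sum of squared norms over edges. Writing $L_W = B^{\rot}W_W B = \sum_{e\in E_W} w_e b_e b_e^{\rot}$ and recalling $\barr{M} = L^{-1/2}L_W L^{-1/2}$ so that $(uI-A)^{-1/2}\barr{M}(uI-A)^{-1/2} = (uI-A)^{-1/2}L^{-1/2}L_W L^{-1/2}(uI-A)^{-1/2}$, we can set $K \triangleq L^{-1/2}(uI-A)^{-1/2}$ and $v_e \triangleq \sqrt{w_e}\, b_e$, so that $Q = K^{\rot}\lp\sum_e v_e v_e^{\rot}\rp K$ and, cycling the trace,
\[
    \Tr(Q^{2t}) = \sum_{e\in E_W} \left\| \lp \sum_{f} v_f v_f^{\rot} K K^{\rot}\rp^{t-1} v_f^{\text{-free expression}} \right\|^2
    = \sum_{e\in E_W} \left\| P \lp \barr{M}_K\rp^{t-1} K^{\rot} v_e\right\|^2
\]
for appropriate matrices; concretely, mirroring the computation in Lemma~\ref{lem:runtime l2}, one writes $\Tr(Q^{2t}) = \sum_{e\in E_W}\| N (M' N)^{t-1} K^{\rot} v_e\|^2$ where $M'$ and $N$ are built from $L^{-1}$, $(uI-A)^{-1}$ and $L_W$, all of which admit nearly-linear-time matrix-vector products: $L^{-1}$ via a Laplacian solver, $(uI-A)^{-1/2}$ via the degree-$O\lp\frac{\log(1/\varepsilon\eta)}{\eta}\rp$ polynomial $S_u$ of Lemma~\ref{lem: approximate matrices}, and $L_W$ trivially. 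Applying the Johnson–Lindenstrauss lemma with $O(\log n/\varepsilon^2)$ rows to flatten the $n$-vector of edge-images, each row requires $t=\Theta(\log n/\varepsilon)$ rounds of these matrix-vector products, each round costing $\tilde O(m/\eta)$, for a total of $\tilde O\lp m\log n/(\eta\varepsilon)\rp$ per row and $\tilde O\lp m/(\eta\varepsilon^3)\rp$ overall; after projection the per-edge squared-norm sum costs $\tilde O(m/\varepsilon^2)$. Taking $2t$-th roots of the two resulting constant-factor approximations of $\Tr(Q^{2t})$ and of the corresponding quantity for $Q^{-1}$ yields $\alpha_1$ and $\alpha_2$ with the claimed multiplicative guarantees after rescaling $\varepsilon$ by a constant.

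The main obstacle I anticipate is the $\lambdaMin$ half: the trick of bounding $\lambdaMax$ by $\lp\Tr(Q^{2t})\rp^{1/2t}$ does not directly give a lower eigenvalue, so one must either work with $Q^{-1}$ — which forces the introduction of $(uI-A)^{1/2}$ (available, since $u > \lambdaMax(A)$ and one can again use a polynomial approximation) and $\barr{M}^{-1}$ (whose matrix-vector product requires an $\varepsilon$-accurate linear solve against $\barr{M}$, i.e.\ against $L^{-1/2}L_W L^{-1/2}$, which is not a Laplacian but can be solved by conjugate gradient in $\tilde O(m\sqrt{\kappa})$ time with $\kappa = \poly(n)$ hence $\tilde O(m/\eta\varepsilon^c)$ after accounting for the condition number) — or, alternatively, one argues that it is enough to approximate $\lambdaMin$ to within a constant factor (since $N_j$ only needs $\lambdaMin$ raised to the small power $2\varepsilon/q$, a constant-factor error is absorbed) and runs a handful of inverse-power or Lanczos iterations using only the nearly-linear-time products for $(uI-A)^{-1}$ and $\barr{M}$. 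Verifying that the accumulated JL errors, polynomial-approximation errors from $S_u\approx_{\varepsilon/10}(uI-A)^{-1/2}$, and the $n^{1/2t}$ slack compose to a single $(1\pm\varepsilon)$ factor, and that all intermediate condition numbers stay $\poly(n)$ under Assumption~\ref{assumption:runtime assumption}, is the routine-but-delicate bookkeeping that remains.
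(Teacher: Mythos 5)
Your handling of $\lambdaMax$ mirrors the paper almost exactly: use $S_u$ from Lemma~\ref{lem: approximate matrices} as a polynomial approximation of $(uI-A)^{-1/2}$, convert the eigenvalue estimate into a trace of a large power with $t=\Theta(\log n/\varepsilon)$, expand the trace as a sum of squared norms over the edges of $W$, and apply Johnson--Lindenstrauss. This half is fine.

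For $\lambdaMin$ you correctly take $\lambdaMin(Q) = 1/\lambdaMax(Q^{-1})$, but you then stop at $Q^{-1}$ involving $\barr{M}^{-1}$ and treat solving against $\barr{M}$ as the bottleneck. You even acknowledge the difficulty and propose two workarounds, neither of which actually meets the lemma's requirements: the conjugate-gradient route against $\barr{M}$ gives $\tilde{O}(m\sqrt{\kappa})$ with $\kappa=\poly(n)$, which is a polynomial rather than the claimed $\tilde{O}(m/(\eta\varepsilon^3))$ bound (your phrase ``hence $\tilde O(m/\eta\varepsilon^c)$ after accounting for the condition number'' does not follow); and the inverse-power/Lanczos idea only gives a constant-factor estimate, which is weaker than the $(1\pm\varepsilon)$ guarantee the lemma actually states, even if that would suffice for the downstream use in $N_j$. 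The idea you are missing is a clean similarity transformation: with $A=L^{-1/2}L_AL^{-1/2}$ and $\barr{M}=L^{-1/2}L_WL^{-1/2}$, one computes
\[
(uI-A)\barr{M}^{-1}
= \bigl(L^{-1/2}(uL - L_A)L^{-1/2}\bigr)\bigl(L^{1/2}L_W^{-1}L^{1/2}\bigr),
\]
and since this is similar to $(uL - L_A)L_W^{-1}$, we get $\lambdaMax\bigl[(uI-A)\barr{M}^{-1}\bigr]=\lambdaMax\bigl[(uL - L_A)L_W^{-1}\bigr]$. Now $L_W$ is a genuine graph Laplacian, so $L_W^{-1}$ (pseudoinverse) admits a nearly-linear-time solver directly, and $uL - L_A$ is explicit and sparse; the same trace-to-power-plus-JL machinery then applies without ever touching $\barr{M}^{-1}$, $L^{1/2}$, or $(uI-A)^{1/2}$. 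This step is what lets the $\lambdaMin$ computation inherit the same $\tilde{O}(m/(\eta\varepsilon^3))$ bound as the $\lambdaMax$ computation.
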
 

\begin{proof}
    By Lemma~\ref{lem: approximate matrices}, we have a matrix $S_u$ such that  $S_u \approx_{\varepsilon/10} (uI - A)^{-1/2}$. This implies that 
    \[
        \barr{M}^{1/2}S_u^2\barr{M}^{1/2} \approx_{3\varepsilon/10} \barr{M}^{1/2}(uI-A)^{-1}\barr{M}^{1/2},
    \]
    and hence
    \[
        \lambdaMax \left( S_u^2 \barr{M}\right) \approx_{3\varepsilon/10} \lambdaMax \lsp (uI-A)^{-1}\barr{M} \rsp.
    \]
    Therefore, it suffices to approximate $\lambdaMax \lp S_u^2 \barr{M} \rp$. To achieve this, notice that
    \[
        \lambdaMax \lp S_u^2 \barr{M}\rp 
        \leq \lp \Tr \left(S_u^2\barr{M}\right)^{2t+1}\rp^{1/(2t+1)} 
        \leq n^{1/(2t+1)}\lambdaMax\left(S_u^2\barr{M}\right),
    \]
    by choosing $t = \Theta(\log n /\varepsilon)$ we have that 
    \[
        \lp \Tr \left(S_u^2\barr{M}\right)^{2t+1}\rp^{1/(2t+1)} \approx_{\varepsilon/2}\lambdaMax\left(S_u^2\barr{M}\right).
    \]
    Based on this, in order to approximate  $\lambdaMax\left(S_u^2\barr{M}\right)$ it suffices for us to approximate  $\Tr \left(S_u^2\barr{M}\right)^{2t+1}$ instead,
    which is achieved by 
    exploiting the structure of $S_u$ and $\barr{M}$. Our proof technique is similar with the proof of Lemma~G.3 of  \cite{zhu15}, and we present the proof here for completeness.  
    By definition, we have that 
    \[
        \barr{M} = L_{G+W}^{-1/2}L_WL_{G+W}^{-1/2} = L_{G+W}^{-1/2}\lp\sum_{e \in W}b_eb_e^{\rot} \rp L_{G+W}^{-1/2},
    \]
    For simplicity, we write $L\triangleq L_{G+W}$ in the remaining part of the proof. 
    We have that 
    \begin{align*}
        \Tr \lp S_u^2 \barr{M} \rp^{2t+1} 
        &= \Tr \lsp p^2(L^{-1/2} L_A L^{-1/2}) \cdot L^{-1/2} L_W L^{-1/2}\rsp^{2t+1} \\
        &= \Tr \lsp L^{1/2}p^2(L^{-1}L_A) L^{-1} L_W L^{-1/2}\rsp^{2t+1}\\
        &= \Tr \lsp p^2(L^{-1}L_A) L^{-1} L_W \rsp^{2t+1} \\
        &= \Tr \lsp \lp p^2(L^{-1}L_A) L^{-1} L_W \rp^{2t} \cdot p^2(L^{-1}L_A) L^{-1} L_W \rsp \\
        &= \Tr \lsp \lp p^2(L^{-1}L_A) L^{-1} L_W \rp^{t} \cdot \lp p^2(L^{-1}L_A) L^{-1} L_W \rp^{t} \cdot p^2(L^{-1}L_A) L^{-1} L_W \rsp \\
        &= \Tr \lsp \lp p^2(L^{-1}L_A) L^{-1} L_W \rp^{t} \cdot p^2(L^{-1}L_A) L^{-1} \cdot \lp L_W p^2(L^{-1}L_A) L^{-1} \rp^{t}L_W\rsp \\
        &= \Tr \lsp \lp p^2(L^{-1}L_A) L^{-1} L_W \rp^{t} \cdot p^2(L^{-1}L_A) L^{-1} \cdot \lp L_W L^{-1} p^2(L^{-1}L_A)   \rp^{t}L_W\rsp.
    \end{align*}
    Let   $D\triangleq  \lp L_W L^{-1} p^2(L^{-1}L_A)   \rp^{t}$, and we rewrite the equality  above as 
    \begin{align*}
        \Tr \lp S_u^2 \barr{M} \rp^{2t+1}
        &= \Tr \lsp D^{\rot}   p^2(L^{-1}L_A) L^{-1}   D L_W\rsp \\
        &= \Tr \lsp D^{\rot}   L^{-1/2}p^2(L^{-1/2}L_AL^{-1/2})L^{-1/2}  D L_W\rsp\\
        &= \sum_{e \in W} \Tr \lsp D^{\rot}   L^{-1/2}p^2(L^{-1/2}L_AL^{-1/2})L^{-1/2}  D b_eb_e^{\rot}\rsp\\
        &= \sum_{e \in W} \left\| p(L^{-1/2}L_AL^{-1/2})L^{-1/2}   D   b_e\right\|^2 \\
        &= \sum_{e \in W} \left\| L^{-1/2}p(L_A L^{-1})   D  b_e\right\|^2\\
        &= \sum_{e \in W} \left\| L^{1/2}L^{-1}p(L_A L^{-1})   D \ b_e\right\|^2.
    \end{align*}
    Now we use the fact that $L = B^{\rot}B$ for the edge-incidence matrix $B$ (assuming without loss of generality that the graph is unweighted), and we also invoke the Johnson-Lindenstrauss lemma to obtain a matrix $Q\in\mathbb{R}^{O(\log n/\eps^2)\times m}$ such that the above quantity can be approximated by 
        \[
          \Tr \lp S_u^2 \barr{M} \rp^{2t+1} \approx_{\varepsilon/10}
          \sum_{e \in W} \left\| QBL^{-1}p(L_A L^{-1})   D   b_e\right\|^2,
        \]
    which can be approximately computed in    $\tilde{O}\lp \frac{m}{\eta\varepsilon^3}\rp$ time  using a  nearly-linear time Laplacian solver. We refer the reader to Lemma $G.3$ in the appendix of \cite{zhu15} for a more detailed discussion on the fast computation of the above quantity.

        Now we prove the second statement. Notice that  
    \[
        \lambdaMin \lsp (uI-A)^{-1}\barr{M}\rsp = \frac{1}{\lambdaMax \lsp (uI-A)\barr{M}^{-1}\rsp},
    \]
    and we can rewrite $\lambdaMax \lsp (uI-A)\barr{M}^{-1}\rsp$ as   
    \begin{align*}
      \lambdaMax \lsp (uI-A)\barr{M}^{-1}\rsp 
      &= \lambdaMax \lsp \lp uI - L^{-1/2}L_AL^{-1/2}\rp \cdot \lp L^{-1/2}L_WL^{-1/2}\rp^{-1} \rsp \\
      &= \lambdaMax \lsp \lp L^{-1/2}(uL - L_A)L^{-1/2} \rp \cdot \lp L^{1/2}L_W^{-1}L^{1/2}  \rp\rsp \\
      &= \lambdaMax \lsp (uL - L_A)\cdot L_W^{-1}\rsp,
    \end{align*}
    which can be approximately computed  in $\tilde{O}\left(\frac{m}{\eta \varepsilon^3}\right)$ time in a similar way as before.
\end{proof}

\subsection{Approximation guarantee}\label{sec:approx guarantee}

Now we study the  approximation ratio of the algorithm's returned sparsifier.  
The following statement will be used in our analysis.

\begin{lemma}\label{lem:sparsification guarantee}
   
    Suppose that the algorithm returns the matrix $M = A_{K}$ such that 
    $
      \lambdaMax(A_K) \leq \theta_{\mathrm{max}}$ and $ \lambdaMin\left(B_K\big|_{S'}\right) \geq \theta_{\mathrm{min}}$. Then it holds that         \[
      \lambdaMin\left(A_K\right) \geq \frac{\theta_{\mathrm{min}} \lambda^{*}/2}{\lp \lp\lambda^{*}/2\rp^{1/2} + \theta_{\mathrm{min}}^{1/2} + \theta_{\mathrm{max}}^{1/2}\rp^2},
    \]
    where $\lambda^*=\lambda_{k+1}(X)$.

\end{lemma}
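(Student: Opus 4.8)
## Proof Proposal for Lemma~\ref{lem:sparsification guarantee}

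The plan is to lower bound $\lambdaMin(A_K) = \min_{\|x\|=1} x^{\rot} A_K x$ by splitting an arbitrary unit vector $x$ into its components inside and outside the subspace $S'$. Write $x = x_1 + x_2$ where $x_1 = \PV x$ lies in $S'$ and $x_2 = (I - \PV)x$ lies in $S'^{\perp}$. The key structural facts to exploit are: (i) $A_K = X + \sum_i c_i v_i v_i^{\rot}$ with all $c_i \geq 0$, so $A_K \succeq X$ and also $A_K \succeq \sum_i c_i v_iv_i^{\rot}$; (ii) by Theorem~\ref{thm:our projection} (property 2), any vector orthogonal to $V$ satisfies $u^{\rot} X u \geq (\lambda^*/2)\, u^{\rot} u$, so $x_2^{\rot} A_K x_2 \geq x_2^{\rot} X x_2 \geq (\lambda^*/2)\|x_2\|^2$; and (iii) the lower barrier guarantee $B_K = Z(A_K - X)Z \succeq \theta_{\min} \PV$ on $S'$, combined with $Z = (\PV(I-X)\PV)^{\dag/2}$ and $Z\PV = \PV Z = Z$, lets us transfer a lower bound on $B_K$ into a lower bound on $x_1^{\rot}(A_K - X)x_1 \geq x_1^{\rot} A_K x_1 - x_1^{\rot} X x_1$ restricted to $S'$.

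First I would establish the ``cross-term'' control. Since $A_K \preceq \theta_{\max} I$ by hypothesis, for the positive semidefinite quadratic form $q(y) = y^{\rot} A_K y$ we have the Cauchy–Schwarz-type bound $|x_1^{\rot} A_K x_2| \leq \sqrt{x_1^{\rot} A_K x_1}\,\sqrt{x_2^{\rot} A_K x_2} \leq \theta_{\max}^{1/2} \|x_1\| \cdot \sqrt{x_2^{\rot} A_K x_2}$, or symmetrically bounded by the two diagonal contributions. The clean way is:
\[
x^{\rot} A_K x = x_1^{\rot} A_K x_1 + 2 x_1^{\rot} A_K x_2 + x_2^{\rot} A_K x_2 \geq \left(1 - t\right) x_1^{\rot} A_K x_1 + \left(1 - t^{-1}\right) x_2^{\rot} A_K x_2
\]
for any $t > 0$ (applying $2ab \geq -t a^2 - t^{-1} b^2$ to the square roots of the two quadratic forms). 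Actually I would instead write $x^{\rot}A_Kx \geq (1-s)\, x_1^{\rot}A_Kx_1$ minus a term controlled by $x_2^{\rot}A_Kx_2$ using the operator-norm bound $\|A_K\| \le \theta_{\max}$; the precise optimization of the split parameter is what produces the denominator $\bigl((\lambda^*/2)^{1/2} + \theta_{\min}^{1/2} + \theta_{\max}^{1/2}\bigr)^2$, so I would keep the parameter symbolic until the end.

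Next I would lower bound the two diagonal terms. For $x_2$: as noted, $x_2^{\rot} A_K x_2 \geq (\lambda^*/2)\|x_2\|^2$. For $x_1 \in S'$: we want $x_1^{\rot} A_K x_1 \gtrsim \theta_{\min}$. We have $x_1^{\rot} A_K x_1 = x_1^{\rot} X x_1 + x_1^{\rot}(A_K - X) x_1$. Now $A_K - X = \sum_i c_i v_iv_i^{\rot} \succeq 0$, and on $S'$ we can write, using $Z^{\dag} Z = \PV$ on $S'$ (since $\PV$ is the projection and $Z$ is defined via the pseudoinverse restricted to $S'$), that $x_1^{\rot}(A_K-X)x_1 = (Z^{\dag}x_1)^{\rot} Z(A_K-X)Z (Z^{\dag}x_1) = (Z^{\dag}x_1)^{\rot} B_K (Z^{\dag}x_1) \geq \theta_{\min} (Z^{\dag}x_1)^{\rot}\PV(Z^{\dag}x_1) = \theta_{\min}\|Z^{\dag}x_1\|^2$. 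Then $\|Z^{\dag}x_1\|^2 = x_1^{\rot}(Z^{\dag})^2 x_1 = x_1^{\rot}\PV(I-X)\PV x_1 = \|x_1\|^2 - x_1^{\rot}Xx_1$. Hence $x_1^{\rot} A_K x_1 \geq x_1^{\rot}Xx_1 + \theta_{\min}(\|x_1\|^2 - x_1^{\rot}Xx_1) = \theta_{\min}\|x_1\|^2 + (1-\theta_{\min})x_1^{\rot}Xx_1 \geq \theta_{\min}\|x_1\|^2$, since $\theta_{\min} \leq 1$ (which holds because $B_K = Z(A_K-X)Z$ and $Z(I-X)^{1/2}$-type considerations force the top eigenvalue restricted to $S'$ to be at most $1$ — more carefully, $B_K|_{S'} \preceq Z(A_K)Z|_{S'}$... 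I would verify $\theta_{\min}\le 1$ directly, or note the lemma is vacuous otherwise). Combining, with $\|x_1\|^2 + \|x_2\|^2 = 1$, we get $x^{\rot}A_Kx \geq g(\|x_1\|^2)$ for an explicit concave-combination function, and minimizing over $\|x_1\|^2 \in [0,1]$ with the correct split parameter yields exactly the stated bound.

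The main obstacle I anticipate is getting the algebra of the two free parameters — the Cauchy–Schwarz split parameter $t$ and the mass parameter $\|x_1\|^2$ — to collapse into precisely the claimed closed form $\frac{\theta_{\min}\lambda^*/2}{((\lambda^*/2)^{1/2} + \theta_{\min}^{1/2} + \theta_{\max}^{1/2})^2}$. The three square roots in the denominator strongly suggest that the right inequality to use is not a two-way but a \emph{three-way} decomposition: write $A_K \succeq X$ and simultaneously bound $x^{\rot}A_Kx$ from below using contributions from $X$ on $x_2$ (giving $(\lambda^*/2)^{1/2}$), from the added mass on $x_1$ (giving $\theta_{\min}^{1/2}$), while the cross terms and the operator-norm cap contribute $\theta_{\max}^{1/2}$; this is the structure of a standard ``pinching'' argument where $\lambdaMin$ of a sum is bounded via $(\sqrt{a}+\sqrt{b}+\sqrt{c})^{-2}$-type estimates. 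I would look for the precise identity by testing $x$ lying entirely in $S'$, entirely in $S'^{\perp}$, and a worst-case mixture, reverse-engineering the optimal split; this is routine but fiddly, and is the step where a sign error or a misplaced factor is most likely.
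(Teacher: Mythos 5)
Your proposal correctly identifies the decomposition $v = v_{\mathcal{V}} + v_{\mathcal{V}^{\perp}}$ along $S'$ and its complement, and your derivation of the lower bound $x_1^{\rot}A_K x_1 \geq \theta_{\min}\|x_1\|^2$ via $B_K \succeq \theta_{\min}\PV$ on $S'$, together with $Z^{\dag}B_KZ^{\dag} = \PV(A_K-X)\PV$ and $\barr{M} = I - X$, is exactly the core computation in the paper's proof (which is itself an adaptation of Theorem~3.3 of \cite{KMST10:subsparsification}). You also correctly invoke $A_K \succeq X$ and Theorem~\ref{thm:our projection} for the complementary subspace. So the ingredients are right.

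The genuine gap is in the combination step, which you yourself flag as unresolved. Your plan is to control the cross term $2x_1^{\rot}A_Kx_2$ via a Young-type split with a free parameter $t$ and then optimize over $t$ and $\|x_1\|^2$ jointly. This is not how the bound assembles, and it is unlikely to collapse cleanly to the claimed constant: the Young split with $t>1$ needs an \emph{upper} bound on $x_1^{\rot}A_Kx_1$, with $t<1$ needs an upper bound on $x_2^{\rot}A_Kx_2$, and in neither case does the lower bound $x_2^{\rot}Xx_2 \geq (\lambda^*/2)\|x_2\|^2$ enter the same inequality in the right sign. The paper's argument instead produces \emph{two separate lower bounds on $\sqrt{v^{\rot}A_Kv}$ as functions of $t := \|v_{\mathcal{V}^{\perp}}\|$, one decreasing and one increasing in $t$, and takes the max, then minimizes over $t$}. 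The first bound handles the cross term without any split parameter: since $v \mapsto \sqrt{v^{\rot}A_Kv}$ is a seminorm, the reverse triangle inequality gives
\[
\sqrt{v^{\rot}A_Kv} \;\geq\; \sqrt{v_{\mathcal{V}}^{\rot}A_Kv_{\mathcal{V}}} - \sqrt{v_{\mathcal{V}^{\perp}}^{\rot}A_Kv_{\mathcal{V}^{\perp}}} \;\geq\; \theta_{\min}^{1/2}\|v_{\mathcal{V}}\| - \theta_{\max}^{1/2}\,t \;\geq\; \theta_{\min}^{1/2} - \lp\theta_{\min}^{1/2}+\theta_{\max}^{1/2}\rp t,
\]
using $\|v_{\mathcal{V}}\| \geq 1-t$. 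The second bound is $\sqrt{v^{\rot}A_Kv} \geq \sqrt{v^{\rot}Xv} \geq (\lambda^*/2)^{1/2}\,t$ from $A_K \succeq X$ and the projection guarantee. Equalizing the two at $t^\star = \theta_{\min}^{1/2}/\lp(\lambda^*/2)^{1/2} + \theta_{\min}^{1/2} + \theta_{\max}^{1/2}\rp$ and squaring gives the stated constant exactly. This ``opposite-monotone pair of bounds, then equalize'' device is the missing idea; the three square roots in the denominator come from the three slopes and intercepts of the two linear functions of $t$, not from a three-way decomposition of the quadratic form or a pinching inequality. I would also flag that both your argument and the paper's implicitly use $\theta_{\min} \leq 1$ in the last step of the $S'$-restriction bound (to drop $(1-\theta_{\min})\PV X\PV \succeq 0$); you noticed this, which is good, and it is indeed true here since $B_K\big|_{S'} \preceq \PV$.
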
 
\begin{proof}

    The proof is a direct adaptation of the proof of Theorem~3.3 in \cite{KMST10:subsparsification}. 
    Let $v$ be an arbitrary unit vector, and we write   $v = v_{\mathcal{V}} + v_{\mathcal{V}^{\perp}}$ such that $v_{\mathcal{V}} \in S'$ and $v_{\mathcal{V}^{\perp}} \perp S'$. We will give two lower bounds for $v^{\rot}A_{K}v$, one increasing and one decreasing with $\norm{v_{\mathcal{V}^{\perp}}}$. The statement will follow when we equalise the two bounds. 
    
    First of all,  by the Lemma's preconditions we have that 
    $
        A_K \preceq \theta_{\mathrm{max}} I$ and $ \PV B_K \PV \succeq \theta_{\mathrm{min}} \PV$.
    To derive the first bound, we   have that 
    \begin{align*}
        v_{\mathcal{V}}^{\rot} A_K v_{\mathcal{V}} 
        &= v_{\mathcal{V}}^{\rot} \lp \PV X \PV + Z^{\dag/2}B_K Z^{\dag/2} \rp v_{\mathcal{V}}\\
        &= v_{\mathcal{V}}^{\rot} \lp \PV X \PV + \lp \PV \barr{M} \PV \rp^{1/2} B_K \lp \PV \barr{M} \PV \rp^{1/2} \rp v_{\mathcal{V}}\\
        &\geq v_{\mathcal{V}}^{\rot} \lp \PV X \PV + \lp \PV \barr{M} \PV \rp^{1/2} \theta_{\mathrm{min}} \cdot \PV \lp \PV \barr{M} \PV \rp^{1/2} \rp v_{\mathcal{V}}\\
        &= v_{\mathcal{V}}^{\rot} \lp \PV X \PV + \theta_{\mathrm{min}} \PV \barr{M} \PV \rp v_{\mathcal{V}}\\
        &= v_{\mathcal{V}}^{\rot} \lp \theta_{\mathrm{min}} \PV + (1-\theta_{\mathrm{min}})\PV X \PV \rp v_{\mathcal{V}}\\
        &\geq \theta_{\mathrm{min}} \norm{v_{\mathcal{V}}}^2.
    \end{align*}
    On the other hand, we have that
    \[
        v_{\mathcal{V}^{\perp}}^{\rot} A_K v_{\mathcal{V}^{\perp}} \leq \theta_{\mathrm{max}}\norm{v_{\mathcal{V}^{\perp}}}^2.
    \]
    Hence,   by the triangle inequality we have that 
    \[
        \lp v^{\rot} A_K v\rp^{1/2} 
        \geq \theta_{\mathrm{min}}^{1/2}\norm{v_{\mathcal{V}}} - \theta_{\mathrm{max}}^{1/2} \norm{v_{\mathcal{V}^{\perp}}}
        \geq \theta_{\mathrm{min}}^{1/2} - \lp \theta_{\mathrm{max}}^{1/2} + \theta_{\mathrm{min}}^{1/2}\rp \norm{v_{\mathcal{V}^{\perp}}}.
    \]
    
    To derive the second bound, we have by  Theorem~\ref{thm:our projection} that
    \[
        \lp v^{\rot} A_K v\rp^{1/2} 
        \geq \lp v^{\rot} X v\rp^{1/2} 
        \geq \lp v_{\mathcal{V}^{\perp}}^{\rot} X v_{\mathcal{V}^{\perp}}\rp^{1/2} \geq \lp \frac{\lambda^*}{2} \rp^{1/2} \norm{v_{\mathcal{V}^{\perp}}}.
    \]
    Equalising our two lower bounds gives the desired result.  
\end{proof}

The lemma below summaries the spectral properties of the resulting sparsifier, which is essentially the same as the one from \cite{KMST10:subsparsification} up to a constant factor.   
\begin{lemma} \label{lem:approximation guarantee}  
The condition number of the returned matrix $A_{\tau}$ after $\tau$ iterations is at most $ 1+ O(\varepsilon)\cdot \max\{1, T/k \}$.  Moreover, it holds that 
   \[
        \lambdaMin\left(A_{\tau} \right) \geq c\cdot (1-O(\varepsilon))\cdot\lambda^*\min\{1, k/T\},
    \]
    for some  constant $c$. 
\end{lemma}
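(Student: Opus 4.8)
The plan is to track the two barrier values through the entire run of Algorithm~\ref{alg:subgraph sparsifier} and then feed the resulting bounds into Lemma~\ref{lem:sparsification guarantee}. First I would argue that, conditioned on the high-probability event that every sampled matrix $W_j$ satisfies $W_j \preceq \tfrac12 (u_jI - A_j)$ (which holds over all iterations by Lemma~\ref{lem:bounded W} and a union bound, since Lemma~\ref{lem:number of rounds} guarantees a sublinear number of iterations), the potential functions stay bounded: by Lemma~\ref{lem:potentials decrease in subphases} the quantity $\Phi_{\hat u_i,\hat\ell_i}(A^{(i)},B^{(i)})$ is non-increasing in expectation across subphases, and combined with Lemma~\ref{lem:Phi_0} giving $\Phi_{u_0,\ell_0}(A_0,B_0)^{1/q}\le T^{1/q}(u_0-\lambda_{\max}(X))^{-1}+k^{1/q}(-\ell_0)^{-1}$, together with the choices $u_0 = 2+\lambda_{\max}(X)$ and $\ell_0 = -2k/\Lambda$, a Markov/union-bound argument shows that with constant probability $\Phi_{u_j,\ell_j}(A_j,B_j)\le O(1/\alpha^q)$ for every $j$ up to termination. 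The $q$-th power in the definition of the potentials is exactly what converts this bounded-potential statement into the two-sided spectral gap: $\Phi^{u}(A)\le C$ forces $u - \lambda_{\max}(A)\ge (C^{-1/q})$-ish, i.e. no eigenvalue of $A$ comes within $\Omega(1/\rho)$ of $u$, and similarly $\Phi_\ell(B)\le C$ forces $\lambda_{\min}(B|_{S'}) - \ell \ge \Omega(1/\rho)$.

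Next I would pin down the final barrier values. By the termination condition, when the algorithm stops we have $\widehat u - \widehat\ell$ just exceeding $\alpha + u_0 - \ell_0$, so $u_\tau - \ell_\tau = \alpha + u_0 - \ell_0 + O(\bar\delta_{u,\tau})$ where the overshoot $O(\bar\delta_{u,\tau})$ is negligible; recall $\alpha = 4k/\Lambda$. Separately, the ratio $\delta_{u,j}/\delta_{\ell,j} = (1+3\varepsilon)/(1-3\varepsilon)$ is fixed across all subphases, so the gap between $u_j - u_0$ and $\ell_0 - \ell_j$ grows at a controlled rate; combined with the lower bound $\rho_j\ge (T+k-1)/(u_0-\ell_0+\tfrac{1+3\varepsilon}{6\varepsilon}\sum(\delta_{u,t}-\delta_{\ell,t}))$ from Lemma~\ref{lem:lower bound rho_j}, one reads off that $\ell_\tau = \Omega(\alpha) = \Omega(k/\Lambda)$ more precisely, $\ell_\tau \ge \ell_0 + \delta_{\ell,\mathrm{tot}}$ where $\delta_{\ell,\mathrm{tot}} = \tfrac{1-3\varepsilon}{1+3\varepsilon}(u_\tau - u_0)$ and $u_\tau - u_0 + \ell_0 - \ell_\tau = \alpha$. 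Putting $u_\tau - \ell_\tau \le \alpha + u_0 - \ell_0 + o(1)$ and $\ell_\tau\ge \Omega(k/\Lambda)$ together with $u_0-\ell_0 = 4 + 2k/\Lambda + \lambda_{\max}(X)$, I get $\lambda_{\max}(A_\tau)\le u_\tau\le u_0 + O(k/\Lambda) = O(1) + O(k/\Lambda)$ and $\lambda_{\min}(B_\tau|_{S'})\ge \ell_\tau = \Omega(k/\Lambda)$, up to the $(1\pm O(\varepsilon))$ slack coming from the potential bound and the $\varepsilon$-perturbations in the approximate computations (Lemmas~\ref{lem:runtime l1}--\ref{lem:running time lambda max 1}).

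Then I would instantiate Lemma~\ref{lem:sparsification guarantee} with $\theta_{\max} = (1+O(\varepsilon))(\lambda_{\max}(X) + O(1) + O(k/\Lambda))$ and $\theta_{\min} = (1-O(\varepsilon))\cdot\Omega(k/\Lambda)$. Since $\lambda_{\max}(X)\le 1$ and $\Lambda = \max\{k,T\}$, in the regime $T\ge k$ we have $k/\Lambda = k/T\le 1$, so $\theta_{\max} = \Theta(1)$ and $\theta_{\min} = \Theta(k/T)$; in the regime $T\le k$ we have $\Lambda = k$, so $\theta_{\min} = \Theta(1)$ as well. The formula
\[
  \lambda_{\min}(A_\tau)\ge \frac{\theta_{\min}\,\lambda^*/2}{\bigl((\lambda^*/2)^{1/2} + \theta_{\min}^{1/2} + \theta_{\max}^{1/2}\bigr)^2}
\]
with $\theta_{\max} = \Theta(1)$ and $\theta_{\min} = \Theta(\min\{1,k/T\})\le \theta_{\max}$ gives denominator $\Theta(1)$ (using $\lambda^*\le 1$), hence $\lambda_{\min}(A_\tau)\ge \Omega(\lambda^*\theta_{\min}) = \Omega(\lambda^*\min\{1,k/T\})$, which is the claimed bound. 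For the condition number, $\kappa(A_\tau) = \lambda_{\max}(A_\tau)/\lambda_{\min}(A_\tau) \le \theta_{\max}/\lambda_{\min}(A_\tau)$; here one wants the sharper claim $1+O(\varepsilon)\max\{1,T/k\}$, so instead of using the crude Lemma~\ref{lem:sparsification guarantee} bound I would argue directly that $A_\tau \succeq X$ already gives $\lambda_{\min}(A_\tau)\ge\lambda_1(X)$, but more to the point use that on $S'^\perp$ the matrix $A_\tau\succeq X \succeq (\lambda^*/2) I$ while on $S'$ the lower bound $\theta_{\min}$ applies, and rescaling $A_\tau$ by $1/\ell_\tau$ when $T\le k$ (resp. keeping it when $T\ge k$) yields condition number $u_\tau/\ell_\tau = O(\Lambda/k) = O(\max\{1,T/k\})$ up to $(1+O(\varepsilon))$ factors. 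The main obstacle I anticipate is bookkeeping the $(1\pm O(\varepsilon))$ and $o(1)$ error terms consistently: the potentials only give the barrier separation up to $\varepsilon$-dependent factors, the approximate oracles introduce further $\varepsilon$-slack, and the termination overshoot contributes an $O(\bar\delta)$ additive term — making sure all of these collapse into a single clean $1+O(\varepsilon)$ (rather than, say, $1+O(\varepsilon\max\{1,T/k\})$ blowing up) requires care about which quantities the errors multiply. A secondary subtlety is that Lemma~\ref{lem:sparsification guarantee} is stated for $M = A_K$ with $K$ the number of vectors, so I should make sure the barrier bounds I derive are for the same terminal matrix, i.e. reconcile the ``$\tau$ iterations'' indexing with the ``$K$ vectors'' indexing via Lemma~\ref{lem:number of sampled vectors}.
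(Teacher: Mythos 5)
Your plan follows the same route as the paper: condition on the event $W_j \preceq \tfrac12(u_jI-A_j)$ holding throughout, pin down the terminal barrier values $u_\tau,\ell_\tau$ from the termination rule, and feed $\theta_{\max}=u_\tau$, $\theta_{\min}=\ell_\tau$ into Lemma~\ref{lem:sparsification guarantee}. The first paragraph about potential bounds and $\Phi_{u_j,\ell_j}\le O(1/\alpha^q)$ is more than the paper invokes here (Lemma~\ref{lem:approximation guarantee} takes the barrier invariant for granted), and the precise constant $O(1/\alpha^q)$ is off, but that part is harmless scaffolding.

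The genuine problem is the estimate $u_\tau \le u_0 + O(k/\Lambda)$. Since $\bar\delta_{u,j}/(\bar\delta_{u,j}-\bar\delta_{\ell,j}) = (1+3\varepsilon)/(6\varepsilon)$ in every subphase, termination at $\sum(\delta_{u,t}-\delta_{\ell,t})\approx \alpha$ forces
\[
u_\tau - u_0 \;=\; \frac{1+3\varepsilon}{6\varepsilon}\,\alpha \;=\; \Theta\!\left(\frac{k}{\varepsilon\Lambda}\right),
\qquad
\ell_\tau - \ell_0 \;=\; \frac{1-3\varepsilon}{6\varepsilon}\,\alpha \;=\; \Theta\!\left(\frac{k}{\varepsilon\Lambda}\right),
\]
so both barriers move by $\Theta(k/(\varepsilon\Lambda))$, not $O(k/\Lambda)$. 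You dropped a $1/\varepsilon$ factor. For the $\lambda_{\min}(A_\tau)$ claim this happens to be harmless: the expression in Lemma~\ref{lem:sparsification guarantee} is essentially scale-invariant once $\theta_{\max}\gg\lambda^*$, so $\theta_{\min}/\theta_{\max}$ is what controls the answer and the common $1/\varepsilon$ cancels. But for the condition-number claim the missing factor is decisive. With your (incorrect) values $u_\tau=\Theta(1)+\Theta(k/\Lambda)$ and $\ell_\tau=\Theta(k/\Lambda)$, the best you can get is $u_\tau/\ell_\tau = O(\max\{1,T/k\})$; you yourself fall back to exactly this weaker form. The claimed $1+O(\varepsilon)\max\{1,T/k\}$ comes precisely because both barriers are inflated by $1/\varepsilon$, which makes $u_\tau/\ell_\tau$ approach $1$: concretely,
\[
\frac{u_\tau-\ell_\tau}{u_\tau} \;\le\; \frac{u_0-\ell_0+\alpha}{u_0 + (6\varepsilon)^{-1}\alpha} \;\le\; \frac{3+6k/\Lambda}{2 + (6\varepsilon)^{-1}\,4k/\Lambda} \;=\; O(\varepsilon)\cdot\max\{1,T/k\},
\]
and then $u_\tau/\ell_\tau = \bigl(1-\tfrac{u_\tau-\ell_\tau}{u_\tau}\bigr)^{-1} = 1+O(\varepsilon)\max\{1,T/k\}$. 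Fixing your $u_\tau$ estimate to include the $(6\varepsilon)^{-1}$ factor gives the paper's bound; as written, your route for the condition number does not close.
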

\begin{proof}

Notice that it holds for any iteration $j$ that
\[
\frac{\overline{\delta}_{u,j} - \overline{\delta}_{\ell,j}}{\overline{\delta}_{u,j}} = \frac{6\varepsilon}{1+3\varepsilon},
\]
which implies that 
\[
\overline{\delta}_{u,j} = \frac{1+ 3\varepsilon}{ 6\varepsilon} \left( \overline{\delta}_{u,j} - \overline{\delta}_{\ell,j} \right)  \geq \frac{1}{ 6\varepsilon} \left( \overline{\delta}_{u,j} - \overline{\delta}_{\ell,j} \right).
\]
Let $u_{\tau}$ and $\ell_{\tau}$ be the barrier values when the algorithm terminates, and our goal is to show that 
\[
\frac{u_{\tau}}{\ell_{\tau}} = \left( 1 -\frac{u_{\tau} - \ell_{\tau}}{ u_{\tau}} \right)^{-1} = 1+ O(\varepsilon)\cdot \max\{1, T/k \},
\]
which suffices to prove that 
\[
 \frac{u_{\tau} - \ell_{\tau}}{ u_{\tau}} = O(\varepsilon)\cdot \max\{1, T/k \}.
\]
By definition, we know that
\begin{align*}
 \frac{u_{\tau} - \ell_{\tau}}{ u_{\tau}} & \leq \frac{u_0 -\ell_0 + \alpha}{ u_0 + (6\varepsilon)^{-1} \alpha} \leq \frac{3 + 6k/\Lambda}{2 + (6\varepsilon)^{-1} 4k/\Lambda} \leq 1+ O(\varepsilon)\cdot \max\{1, T/k\},
\end{align*}
where the last inequality holds by the definition of $\Lambda$. Now we set $\theta_{\min} = \ell_{\tau}$, $\theta_{\max}= u_{\tau}$, and apply  Lemma~\ref{lem:sparsification guarantee}. This gives us that  \[
        \lambdaMin(A_{\tau}) \geq c\cdot (1-O(\varepsilon))\cdot\lambda^*\min\{1, k/T\},
    \]
    for some  constant $c$. This proves the statement.
\end{proof}
 
Finally, the lemma below analyses   $\sum_{i=1}^m w_i cost_i$.

\begin{lemma}\label{lem:cost}
It holds with constant probability that 
$
          \sum_{i=1}^m c_i \cdot cost_i  = O (1/\varepsilon^2)\cdot \min\{ 1, k/T\}
             $.
\end{lemma}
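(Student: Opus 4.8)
The plan is to mirror the argument used for bounding the total number of sampled vectors (Lemma~\ref{lem:number of sampled vectors}), but with the relative effective resistances replaced by cost-weighted sums. First I would assume, as in the other lemmas, that every vector $v_i$ carries a cost $cost_i\geq 0$ with $\sum_{i=1}^m cost_i = 1$, so that $\sum_{i=1}^m cost_i\cdot v_iv_i^{\rot}$ is dominated by $\overline{M}$ in the obvious pointwise sense. For a single sampled vector in iteration $j$, its contribution to $\sum_i c_i\cdot cost_i$ is $\tfrac{\varepsilon}{q\cdot R_i(A_j,B_j,u_j,\ell_j)}\cdot cost_i$, which in expectation over the sampling distribution $p(v_i)=R_i/\rho_j$ equals $\tfrac{\varepsilon}{q\cdot\rho_j}\sum_{i=1}^m cost_i = \tfrac{\varepsilon}{q\cdot\rho_j}$, exactly analogous to Lemma~\ref{lem:expectation w_i}. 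Summing over the $N_j$ subphases of iteration $j$ and then over all iterations, the expected total cost is $\sum_j N_j\cdot\tfrac{\varepsilon}{q\cdot\rho_j} = \sum_j\tfrac{\delta_{u,j}}{(1+3\varepsilon)\,q}\cdot\tfrac{q}{\varepsilon}\cdot\tfrac{\varepsilon}{q}$ — more cleanly, each subphase contributes $\barr{\delta}_{u,j}/(1+3\varepsilon)$ to the expected cost, so the expected total cost is at most $\tfrac{1}{1+3\varepsilon}\sum_j\delta_{u,j} = \tfrac{1}{1+3\varepsilon}(u_\tau - u_0)$.

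Next I would bound $u_\tau - u_0$. By the terminating condition and the relation $\delta_{u,j} = \tfrac{1+3\varepsilon}{6\varepsilon}(\delta_{u,j}-\delta_{\ell,j})$ (used already in Lemma~\ref{lem:lower bound rho_j} and Lemma~\ref{lem:approximation guarantee}), we have $u_\tau - u_0 \leq \tfrac{1+3\varepsilon}{6\varepsilon}\cdot\bigl((u_\tau-\ell_\tau)-(u_0-\ell_0)\bigr) + (u_0-\ell_0)$ up to the $\ell_0<0$ slack, and the algorithm stops as soon as $\widehat u-\widehat\ell$ exceeds $\alpha+u_0-\ell_0$ by at most one average increment, so $(u_\tau-\ell_\tau)-(u_0-\ell_0)\leq \alpha + O(\barr{\delta}_{u,j})$. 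Using $\alpha = 4k/\Lambda$ and $\Lambda = \max\{k,T\}$, this gives $u_\tau - u_0 = O(1/\varepsilon)\cdot\min\{1,k/T\}$, hence expected total cost $O(1/\varepsilon)\cdot\min\{1,k/T\}$. To upgrade this to $O(1/\varepsilon^2)\cdot\min\{1,k/T\}$ with constant probability — the claimed bound — I would apply Markov's inequality (losing a constant factor), exactly as in the proofs of Lemma~\ref{lem:number of rounds} and Lemma~\ref{lem:number of sampled vectors}, conditioning on the high-probability event $\{W_j\preceq\tfrac12(u_jI-A_j)\text{ for all }j\}$ so that the conditional expectation $\widetilde{\Ex}[\cdot]$ is within a $(1\pm\varepsilon/2)$ factor of $\Ex[\cdot]$ by Lemma~\ref{lem:bound expectation}, and combining via a union bound with the sub-linear-iteration guarantee of Lemma~\ref{lem:number of rounds}. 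The extra factor of $1/\varepsilon$ (relative to the cleaner $O(1/\varepsilon)$ expectation) is absorbed because $\alpha$ and the various $O(1/\varepsilon)$ slacks compound, and one wants a bound that holds simultaneously with all the other constant-probability events, so a crude Markov step suffices.

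The main obstacle I anticipate is handling the conditioning and the per-subphase accounting carefully: the cost increment in a given subphase depends on which vector was sampled, and the expectation must be taken with respect to $\widetilde{\Ex}$, not $\Ex$, to stay inside the good event where $W_j$ is controlled. This is exactly the technical maneuver carried out in Lemma~\ref{lem:potentials decrease in subphases} and Lemma~\ref{lem:number of sampled vectors}, so it should go through verbatim, but one must be careful that the "sanity check'' in line~15 of Algorithm~\ref{alg:subgraph sparsifier} — which may stop an iteration mid-way — does not break the telescoping $\sum_j\delta_{u,j} = u_\tau - u_0$; this holds because $\delta_{u,j}$ was defined as $N_j\cdot\barr{\delta}_{u,j}$ with $N_j$ the \emph{actual} number of completed subphases, so the identity is automatic. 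The second mild subtlety is verifying $\sum_i cost_i\cdot v_iv_i^{\rot}\preceq \overline{M}$ (or the relevant scalar inequality) so that the per-subphase expected cost is genuinely $\varepsilon/(q\rho_j)$ and not larger; this follows since each $cost_i\leq 1$ and $\sum_i cost_i = 1$, so $\sum_i cost_i\cdot v_iv_i^{\rot}$ is a convex combination of the $v_iv_i^{\rot}$, each of which is dominated by $\overline{M}=\sum_i v_iv_i^{\rot}$ only after the right normalization — I would instead just compute $\Ex[\,c\cdot cost\,] = \sum_i \tfrac{R_i}{\rho_j}\cdot\tfrac{\varepsilon}{q R_i}\cdot cost_i = \tfrac{\varepsilon}{q\rho_j}\sum_i cost_i = \tfrac{\varepsilon}{q\rho_j}$ directly, which avoids any matrix inequality altogether.
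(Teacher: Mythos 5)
Your proposal is correct and takes a genuinely different route from the paper. The paper's proof writes $\Ex[C] = \Ex\bigl[\sum_{i=1}^K \sigma_{i,\tau_i}\bigr] \leq \sum_{i=1}^K \frac{\varepsilon}{q\rho_{\tau_i}}$ (same first step as yours), but then bounds the sum crudely by plugging in the explicit worst-case lower bound on $\rho_j$ from Lemma~\ref{lem:lower bound rho_j} and the explicit bound $K = O(qk/\varepsilon^2)$ from Lemma~\ref{lem:number of sampled vectors}, compounding two $O(1/\varepsilon)$-ish factors to arrive at $O(1/\varepsilon^2)\cdot\min\{1,k/T\}$. You instead notice the exact algebraic identity $\frac{\varepsilon}{q\rho_j} = \barr{\delta}_{u,j}/(1+3\varepsilon)$, so that $\sum_j N_j\cdot\frac{\varepsilon}{q\rho_j}$ telescopes to $(u_\tau - u_0)/(1+3\varepsilon)$, which the termination condition caps at roughly $\frac{1}{6\varepsilon}\alpha = \frac{2}{3\varepsilon}\min\{1,k/T\}$ deterministically (up to one extra subphase's increment). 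This is cleaner, sidesteps Lemma~\ref{lem:number of sampled vectors} entirely, and in fact yields the stronger bound $\Ex[C] = O(1/\varepsilon)\cdot\min\{1,k/T\}$. One small correction to your write-up: there is no ``extra factor of $1/\varepsilon$'' to absorb — Markov's inequality loses only a constant, so your $O(1/\varepsilon)$ expectation already implies the lemma's $O(1/\varepsilon^2)$ constant-probability bound with room to spare; the compounding-slacks sentence at the end of your second paragraph is misleading even though the conclusion is fine. Your remark about the sanity check at line~15 is the right thing to worry about and is correctly resolved: the per-subphase increments to $\widehat u,\widehat\ell$ are fixed once $\rho_j$ is computed at the top of the iteration, so the number of completed subphases is determined independently of which vectors get sampled, and the telescoping is over the realized increments in any case.
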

\begin{proof}
Without loss of generality, let $v_i$ be the vector sampled in iteration $j$. Then the contribution of $v_i$ towards the total cost function in iteration $j$, denoted by $\sigma_{i,j}$, can be written as 
\[
        \sigma_{i,j} = \frac{\varepsilon}{q \cdot R_i(A_j, B_j, u_j, \ell_j)} \cdot cost_i. 
    \]
By the algorithm's sampling scheme, we know that
\[
        \Ex [\sigma_{i,j}] = \sum_{i=1}^{N_j} \frac{R_i(A_j, B_j, u_j, \ell_j)}{\rho_j} \cdot \frac{\varepsilon}{q \cdot R_i(A_j, B_j, u_j, \ell_j)} \cdot cost_i
        \leq \frac{\varepsilon}{q \cdot \rho_j}.
    \]
    We assume that   the algorithm finishes after sampling $K$ vectors $v_1, \dots v_K$ and each $v_i$ is sampled in iteration $\tau_i$. Then, it holds that 
    \[
        C = \sum_{i=1}^m c_i \cdot cost_i = \sum_{i=1}^{K} \sigma_{i, \tau_i},
    \]
    which implies that 
    \begin{align}
        \Ex \lsp C\rsp 
        &= \sum_{i=1}^{K} \Ex \lsp \sigma_{i, \tau_i}\rsp
        \leq \sum_{i=1}^K \frac{\varepsilon}{q \rho_{\tau_i}}\nonumber \\
        &\leq \frac{\varepsilon}{q} \cdot \frac{K}{T+k-1} \cdot \lp \frac{1+3\varepsilon}{6\varepsilon} \cdot \frac{4k}{\Lambda} + 2 + \lambdaMax(X) + \frac{2k}{\Lambda}\rp \label{eq:cost1}\\
        &\leq \frac{1+3\varepsilon}{6q} \cdot \frac{K}{T+k-1} \cdot \lp \frac{6k}{\Lambda} + 3 \rp \label{eq:cost2}\\
        &= \frac{1+3\varepsilon}{6q} \cdot \frac{K}{\Lambda} \cdot \lp \frac{6k}{T+k-1} + \frac{3\Lambda}{T+k-1}\rp \nonumber \\
        &\leq \frac{3(1+3\varepsilon)}{2q} \cdot \frac{K}{\Lambda} \nonumber\\
        & = O(1/\varepsilon^2) \cdot \min\{ 1, k/T\}, \nonumber 
    \end{align}
    where \eqref{eq:cost1} comes from Lemma~\ref{lem:lower bound rho_j}, and \eqref{eq:cost2} holds by the fact that $(1+3\varepsilon)/{6\varepsilon} > 1$ and $\lambdaMax(X) \leq 1$.  Therefore, the statement follows by applying the Markov inequality.
\end{proof}

\subsection{Proof of Theorem~\ref{thm:main-formal}}\label{sec:proof of main theorem}

Now we are ready to combine everything together, and prove Theorem~\ref{thm:main-formal}.

\begin{proof}[Proof of Theorem~\ref{thm:main-formal}]
We first prove that  Assumption~\ref{assumption:runtime assumption} is always satisfied for each iteration, i.e., there is some parameter $\eta>0$ such that 
 $A_j \preceq (1- \eta) u_jI$ and \[\PV(A_j - X - \ell \barr{M})\PV \succeq |\ell| \eta \cdot \PV \barr{M} \PV.\]
   By Lemma~\ref{lem:bounded W}, Lemma~\ref{lem:number of rounds} and the union bound, with probability at least $3/4$, all matrices picked in
\[
    \tau \leq \frac{80q}{3\varepsilon^2} \cdot \frac{1}{c_N} \cdot \Lambda^{(1+2\varepsilon)/q} \leq \frac{80q}{3\varepsilon^2} \cdot n^{c/q}
\]
iterations for some small constant $c < q$
  satisfy
\[
    W_j \preceq \frac{1}{2} (u_j I - A_j)
\]
for all iterations $j$.
  Moreover, by   the proof of Lemma~\ref{lem:potentials decrease in subphases}
  we know that both the   upper and lower potential functions decrease in expectation individually, i.e., it holds for any $j$ that
\[
    \tilde{\Ex} \lsp \Phi^{u_{j+1}}(A_{j+1})\rsp \leq \Phi^{u_j}(A_j)
    \quad \text{and} \quad
    \tilde{\Ex} \lsp \Phi_{\ell_{j+1}}(B_{j+1})\rsp \leq \Phi_{\ell_j}(B_j).
\]
Therefore, conditioning on the event that $\forall i : W_i \preceq 1/2 \cdot (u_i I - A_i)$ we have that 
\[
    \Ex \lsp \Phi^{u_j}(A_j) \big| \forall i : W_i \preceq (1/2) \cdot (u_i I - A_i)\rsp
    \leq \Phi^{u_0} (A_0) \leq \frac{T}{2^q}
\]
and 
\[
    \Ex \lsp \Phi_{\ell_j}(B_j) \big| \forall i : W_i \preceq (1/2) \cdot (u_i I - A_i)\rsp
    \leq \Phi_{\ell_0} (B_0) \leq k \cdot \lp \frac{\Lambda}{2k} \rp^q.
\]
By Markov's inequality, it holds with high probability  that
\[
    \lp \Phi^{u_j}(A_j)\rp^{1/q} = O \lp  T^{1/q}  \cdot \tau^{1/q} \rp
    \quad \text{and} \quad
    \lp \Phi_{\ell_j}(B_j) \rp^{1/q} = O \lp k^{1/q}\cdot \frac{\Lambda}{ k} \cdot \tau^{1/q} \rp.
\]
For any eigenvalue of $A_j$, say $\lambda_i$, we have 
\[
    (u_j-\lambda_i)^{-q} \leq (u_j-\lambdaMax(A_j))^{-q} < \sum_{t=n-T+1}^n \lp u_j-\lambda_t(A_j)\rp^{-q} = \Phi^{u_j}(A_j). 
\]
Therefore, it holds that 
\[
    \lambda_i < u_j - \lp \Phi^{u_j}(A_j)\rp^{-1/q} \leq u_j - O \lp \frac{1}{T^{1/q}} \cdot \frac{1}{\tau^{1/q}} \rp \leq u_j - O \lp \frac{2}{T^{1/q}} \cdot \lp \frac{\varepsilon^2}{q n^{c/q}} \rp^{1/q} \rp. 
\]
Since $u_j$ is $O(1/\varepsilon^2)$ and $T \leq n$, we can choose \[
    \eta = O \lp \frac{\varepsilon^{2+2/q}}{n^{2/q}}\rp
\]
such that
\[
    A_j \prec (1-\eta) u_j I.
\]

The second statement of Assumption~\ref{assumption:runtime assumption} can be shown in a similar way, i.e., we  show that for any nonzero eigenvalue $\lambda_i$ of $B_j$, it holds that $\lambda_i \geq \ell + \Phi_{\ell_j}^{-1/q}$. Hence
\[
    \lambda_i \geq \ell + \Omega \lp \frac{1}{k^{1/q}} \cdot \frac{k}{\Lambda} \cdot \lp \frac{\varepsilon^2}{q n^{c/q}} \rp^{1/q}\rp.
\]
Since $|\ell| = O \lp \frac{k}{\Lambda} \cdot 1/\varepsilon \rp$ and $k \leq n$, we can choose 
\[
    \eta = O \lp \frac{\varepsilon^{2 + 2/q}}{n^{2/q}} \rp
\]
 such that
\[
    \PV B_j \PV \succeq (\ell + |\ell|\eta) \PV.
\]
Multiplying on the left and right by $Z^{\dag}$ we have that
\[
    \PV (A - X) \PV \succeq (\ell + |\ell|\eta) \PV \barr{M} \PV,
\]
which implies that both claims are satisfied.

 Now we will analyse the running time of our algorithm. First of all, by Theorem~\ref{thm:our projection} we can compute the matrix $\mathbb{V}$ in time $t_{\ref{thm:our projection}} = \min \Big\{O(n^{\omega}), \tilde{O}\lp\frac{mk + nk^2}{\sqrt{\lambda^*}}\rp \Big\}$. This is computed only once and will be used throughout every iteration. 
For the computational cost in each iteration, we combine Lemma~\ref{lem:runtime l1}, Lemma~\ref{lem:runtime l2}, Lemma~\ref{lem:aptrace} and Lemma~\ref{lem:running time lambda max 1} to get the overall running time of  
\begin{align*}
    t_{\mathrm{iteration}} &\triangleq t_{\ref{lem:runtime l1}} + t_{\ref{lem:runtime l2}} + t_{\ref{lem:aptrace}} + t_{\ref{lem:running time lambda max 1}} \\
    &= \tilde{O}\lp \lp\min \{n^{\omega}, mk + nk^2 +  k^{\omega}\}\rp/\varepsilon^2\rp  
    + \tilde{O}\lp (\min\{n^{\omega}, mk + nk^2 + k^{\omega}\})/\varepsilon^3\rp \\
    &\qquad \qquad+ \tilde{O}\lp\frac{m}{\eta\eps^2}\rp 
    + \tilde{O}\lp\frac{m}{\eta\varepsilon^3}\rp\\
    &= \tilde{O}\lp \lp \frac{mn^{2/q}}{\varepsilon^{2+2/q}} + \min \left\{ n^{\omega}, mk + nk^2 + k^{\omega}\right\} \rp/\varepsilon^3 \rp.
\end{align*}
Combining this with Lemma~\ref{lem:number of rounds}, which states that the number of iterations is 
\[
    \tau = O \lp q \cdot n^{O(1/q)}/\varepsilon^2\rp,
\]
the algorithm's overall running time is
\begin{align*}
    t_{\mathrm{alg}} 
    &\triangleq t_{\ref{thm:our projection}} + \tau \cdot t_{\mathrm{iteration}}\\
    &= \widetilde{O} \lp \min \left\{ n^{\omega},  \frac{mk + nk^2}{\sqrt{\lambda^*}}\right\} + q\cdot n^{O(1/q)}
    \lp \frac{mn^{2/q}}{\varepsilon^{2+2/q}} + \min \left\{ n^{\omega}, mk + nk^2 + k^{\omega}\right\} \rp/\varepsilon^5 \rp.
\end{align*}
Finally, the total number of edges picked by the algorithm follows by Lemma~\ref{lem:number of sampled vectors};  the lower bound on the minimum eigenvalue of the returned matrix follows by Lemma~\ref{lem:approximation guarantee}; the total cost of the returned edges follows by Lemma~\ref{lem:cost}.  
\end{proof}

\subsection{Further discussion}\label{sec:generalisation of LS15}

Finally, at the end of this section  we will show that our algorithm generalises the almost-linear time algorithm for constructing spectral sparsifiers presented in  \cite{LS15:linearsparsifier}. To see this, notice that for the problem of constructing a spectral sparsifier we have that $X=\mathbf{0}$,   which implies that $T=k=n$ and   $P_S=\PL{\cdot}=I$, the identity matrix in the entire space. Furthermore, we have that 
$ 
u_0 = 2, \ell_0=-2$, and $\alpha=4$.  
Hence, by  Lemma~\ref{lem:number of rounds} it holds that with probability at least $4/5$ the algorithm finishes in $O\left(qn^{3/q}/\varepsilon^2 \right)$ iterations, and 
by Lemma~\ref{lem:number of sampled vectors} it holds that with probability at least $4/5$ our algorithm finishes after choosing $O\left( qn/\varepsilon^2 \right)$ vectors. That is, up to a constant factor, the total number of iterations needed before the algorithm terminates   and the total number of vectors sampled by our algorithm are exactly the same as the one in \cite{LS15:linearsparsifier}.

It remains to show that the output sparsifier is a    $(1 \pm O(\varepsilon))$-approximation of the identity.  To show this, we use the same approach presented in 
  \cite{LS15:linearsparsifier}: notice that it holds for any iteration    $j$ that
\[
    \frac{\barr{\delta}_{u,j} - \barr{\delta}_{\ell, j}}{\barr{\delta}_{u, j}} 
    = \frac{6\varepsilon}{1+3\varepsilon},
\]
and hence
$
    \barr{\delta}_{u, j} \geq (6\varepsilon)^{-1} \cdot \lp\barr{\delta}_{u,j} - \barr{\delta}_{\ell, j} \rp. 
$ Since the condition number of the resulting matrix is upper bounded by 
\[
    \frac{\hat{u}}{\hat{\ell}} = \lp 1 - \frac{\hat{u} - \hat{\ell}}{\hat{u}}\rp^{-1},
\]
the algorithm's returned matrix is a $(1+O(\varepsilon))$-sparsifier because of the fact that
\begin{align*}
    \frac{\hat{u} - \hat{\ell}}{\hat{u}} 
    &= \frac{u_0 - \ell_0 + \sum_{j=1}^{K} \left(\barr{\delta}_{u, j} - \barr{\delta}_{\ell, j}\right)}{u_0 + \sum_{j=1}^{K} \barr{\delta}_{u, j}} 
     \leq \frac{u_0 - \ell_0 + \alpha}{u_0 + (6\varepsilon)^{-1} \cdot \alpha} 
    = \frac{8}{2 + 4(6\varepsilon)^{-1}} \leq 12 \varepsilon. 
\end{align*}
\section{Proof of the main theorem}\label{sec:ProofMainResult}
Finally  we apply our fast \textsf{SDP} solver and the subgraph sparsification algorithm to design an algorithm for the spectral-augmentability problem, and  prove Theorem~\ref{thm:mainAlgConn}. 
We first  give an overview of the main algorithm:   for any input  $G=(V,E)$, the  set $E_W$ of candidate edges, and parameter $k$, our algorithm applies the doubling technique to enumerate all the possible $\gamma$ under which the input instance is $(k,\gamma)$-spectrally augmentable:
starting with the initial $\gamma$, which is set to be $1/n^{1/q}$ and increases by a factor of $2$ each time,  the algorithm runs the \textsf{SDP} solver, a subgraph sparsification algorithm, and a Laplacian solver to verify the algebraic connectivity of the output of our subgraph sparsification algorithm.  The algorithm terminates if the algebraic connectivity is greater than some threshold at some iteration, or it is below the initial threshold.
See Algorithm~\ref{alg:algebraicconn} for formal description.

\begin{algorithm}[!h]
	\caption{Algorithm for augmenting the algebraic connectivity}
	\label{alg:algebraicconn}
	\begin{algorithmic}[1]
	 	\Require the base graph $G=(V,E)$, and the set $E_W$ of $m$ edges defined on $V$, and   $k\in \mathbb{Z}^+$.
		\State $\gamma_0\gets 1/n^{\frac{1}{q}}$;
		\State $\gamma\gets \gamma_0$; 
		\State $\alpha \gets 0$;
		\State $F\gets \emptyset$; \Comment{the set of edges added to $G$}
		\While{$\gamma <1$}
		    \State $\gamma\gets 2\cdot \gamma$, and run the SDP solver from Theorem~\ref{thm:mainSDPlambda2} for $\PSDP(G,W,k,\gamma)$
		    \If {the solver certifies that $\PSDP(G,W,k,\gamma)$ is infeasible}
		    \If{$\alpha=0$}
		    \State{Abort and output \textbf{Reject}.}
		    \Else{}
		    \State{ \Return graph $H=(V,E(G)\cup F)$.} \Comment{$\lambda_2(L_H)\geq c_1\alpha^2 \Delta$}
		    \EndIf
		    \Else~{the solver finds a feasible solution for $\PSDP(G,W,k,0.9\gamma)$ with weights $\{w_e\}_{e\in E_W}$ }
\State{$\alpha\gets \gamma$}
		    \State Let $H=(V,E(G)\cup F)$ be the output of our subgraph sparsification algorithm with edge weights $\{w_e\}_{e\in E_W}$, $q,k$,  and a sufficiently small constant $\varepsilon$.
	        \State $\eta_2\gets $ a $1.1$-approximation of $\lambda_2(L_H)$
	        \Comment{apply the Laplacian solver to compute $\eta_2$}
	        \If{$\eta_2\leq O\left(\Delta\cdot n^{-2/q}\right)$}
	        \State {Abort and output \textbf{Reject}.}
	       \EndIf
	   \EndIf
\EndWhile
	\end{algorithmic}
\end{algorithm}

We will need the following lemma in our analysis.

\begin{lemma}\label{lemma:augmentablespectra}
Let $\gamma>0$. If $G=(V,E)$ is $(k,\gamma \Delta)$-spectrally-augmentable with respect to $W=(V,E_W)$, then the SDP solver finds a feasible solution $(\widehat{\lambda},{w})$ to $\PSDP(G,W,k,(1-\delta')\gamma)$, and the subgraph sparsification algorithm with input $G,E_W,k,\varepsilon,q$ and weights $\{w_e:e\in E\cup E_W\}$ will find a graph $H=(V,E\cup F)$ with $F\subseteq E_W$, $\lambda_2(L_H)\geq c_1 \gamma^2 \cdot \Delta$, $|F|\leq O(qk/\varepsilon^2)$ and total new weights of edges in $F$ at most $O(k/\varepsilon^2)$.
\end{lemma}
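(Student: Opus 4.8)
The plan is to chain together the two main results of the paper—Theorem~\ref{thm:mainSDPlambda2} for the SDP and Theorem~\ref{thm:main-formal} for subgraph sparsification—and then translate the spectral guarantee on the normalised matrix back to a guarantee on $\lambda_2(L_H)$. First, since $G$ is $(k,\gamma\Delta)$-spectrally-augmentable with respect to $W$, we observed in the discussion preceding the statement of Theorem~\ref{thm:mainSDPlambda2} that there is a feasible solution to $\PSDP(G,W,k,\gamma)$: take $w_e=1$ on the augmenting edge set $F^\star$ of size $k$ and $w_e=0$ elsewhere, giving $L_G+\sum_{e\in E_W} w_e L_e\succeq \gamma\Delta P_\bot$. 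Hence the SDP solver of Theorem~\ref{thm:mainSDPlambda2}, invoked with approximation parameter $\delta'$, must output a feasible solution $(\widehat\lambda,w)$ to $\PSDP(G,W,k,(1-\delta')\gamma)$, meaning $\widehat\lambda\geq (1-\delta')\gamma$, $\sum_{e\in E_W} w_e\leq k$, $0\leq w_e\leq 1$, and most importantly $L_G+\sum_{e\in E_W} w_e L_e\succeq (1-\delta')\gamma\,\Delta\,P_\bot$.

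Next I would feed the weighted candidate graph $W'=(V,E_W,\{w_e\})$ into the subgraph sparsification algorithm of Theorem~\ref{thm:main-formal}, with the base graph $G$, parameters $q$, $\varepsilon$, and—crucially—with the cost of each candidate edge $e$ set to $cost_e = w_e/\big(\sum_{e'} w_{e'}\big)$ so that $\sum cost_e=1$. Write $\mathsf{G}=G+W'$ (the graph whose Laplacian is $L_G+\sum w_e L_e$), $X=(L_{\mathsf{G}}^{\dagger/2}L_G L_{\mathsf{G}}^{\dagger/2})|_{\mathrm{Im}(L_{\mathsf{G}})}$, and $T=\lceil\mathrm{tr}(\overline M)\rceil$ with $\overline M=I-X$. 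Since $\sum w_e\leq k$, the graph $W'$ has at most $k$ ``units'' of total edge weight, and one checks that $\mathrm{tr}(\overline M)=\mathrm{tr}\big(L_{\mathsf G}^{\dagger/2}L_{W'}L_{\mathsf G}^{\dagger/2}\big)=\sum_{e\in E_W} w_e\, b_e^{\rot} L_{\mathsf G}^{\dagger} b_e\leq \sum_{e\in E_W} w_e\leq k$ because each effective-resistance term $b_e^\rot L_{\mathsf G}^\dagger b_e\le 1$ (the edge $e$ carries weight $w_e$ in $\mathsf G$). Therefore $T=O(k)$, so $\min\{1,K/T\}=\Theta(1)$ and $\min\{1,k/T\}=\Theta(1)$; Theorem~\ref{thm:main-formal} then yields nonnegative $\{c_e\}$ with $K=O(qk/\varepsilon^2)$ nonzeros satisfying
\[
C(1-O(\varepsilon))\,\lambda_{k+1}(X)\cdot I\preceq X+\sum_e c_e v_e v_e^\rot\preceq (1+O(\varepsilon))\cdot I,
\]
together with $\sum_e c_e\, cost_e = O(1/\varepsilon^2)$, which after rescaling by $\sum w_{e'}\le k$ gives total new weight $\sum_e c_e w_e' = O(k/\varepsilon^2)$. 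Here $\lambda_{k+1}(X)=\lambda^\star$, and since $\mathsf G$ is $((1-\delta')\gamma\Delta)$-spectrally-augmentable-type (i.e.\ $L_{\mathsf G}\succeq (1-\delta')\gamma\Delta P_\bot$), we should be able to lower bound $\lambda_{k+1}(X)$ by a constant: the eigenvalues of $X$ on $\mathrm{Im}(L_{\mathsf G})$ are bounded away from $0$ except on a $k$-dimensional piece corresponding to the $k$ augmenting edges, so the $(k+1)$-st smallest is $\Omega(1)$.

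Finally I would undo the normalisation. Setting $F=\{e: c_e\neq 0\}$ and $H=(V,E\cup F)$ with edge weights $c_e$ on $F$, we have $L_H=L_G+\sum_{e\in F} c_e b_e b_e^\rot$, and conjugating the displayed two-sided bound by $L_{\mathsf G}^{1/2}$ gives $L_H\succeq C(1-O(\varepsilon))\lambda^\star\, L_{\mathsf G}$ on $\mathrm{Im}(L_{\mathsf G})=\1^\perp$. Combined with $L_{\mathsf G}\succeq (1-\delta')\gamma\Delta\, P_\bot$ this yields $\lambda_2(L_H)\geq C(1-O(\varepsilon))(1-\delta')\gamma\cdot\lambda^\star\,\gamma\Delta = c_1\gamma^2\Delta$ for a suitable constant $c_1$ (absorbing $C$, $\lambda^\star=\Omega(1)$, the $(1-O(\varepsilon))$ factor, and $(1-\delta')$, all of which are constants). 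The bounds $|F|\leq O(qk/\varepsilon^2)$ and total weight $O(k/\varepsilon^2)$ come directly from the previous paragraph. The main obstacle I anticipate is the clean justification that $\lambda^\star=\lambda_{k+1}(X)=\Omega(1)$: one must argue that only a $k$-dimensional subspace of directions is ``poorly covered'' by $G$ inside $\mathsf G$, which uses that $W'$ contributes effectively at most $k$ rank-one terms of bounded weight; getting the constant right (and ensuring the $\min\{1,K/T\}$ and $\min\{1,k/T\}$ factors are genuinely $\Theta(1)$ rather than merely $\Omega(1/\poly)$) is where the care is needed, and also requires tracking how the SDP's slack $(1-\delta')$ interacts with the sparsifier's $(1-O(\varepsilon))$ slack so that the final constant $c_1$ is positive.
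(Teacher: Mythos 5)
Your overall strategy is the same as the paper's (run the SDP solver, feed the fractional solution and the induced costs into Theorem~\ref{thm:main-formal}, and undo the normalisation), and the first two stages are essentially correct, including the observation that $T=\lceil\tr(\overline M)\rceil\leq k$ because the effective resistances are at most~$1$. But the proof has a genuine gap exactly where you flag your own unease: the lower bound on $\lambda^\star=\lambda_{k+1}(X)$.

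You assert $\lambda_{k+1}(X)=\Omega(1)$ on the grounds that "the eigenvalues of $X$ are bounded away from $0$ except on a $k$-dimensional piece." That does not follow from $\tr(\overline M)\leq k$. The trace bound controls the \emph{sum} of the eigenvalues of $\overline M=I-X$, not the \emph{number} of large ones; for instance $\overline M$ can have $k^2$ eigenvalues equal to $1/k$, which would give $\lambda_{k+1}(X)=1-1/k$, fine, but it can also have $100k$ eigenvalues near $1/100$, giving $\lambda_{k+1}(X)\approx 1/100$, or worse distributions if the rank of $L_{W'}$ (which is typically $\gg k$) is exploited. In general you only get $\lambda_{k+1}(X)\geq 1/(k+1)$ from the trace bound, which is not $\Omega(1)$. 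Moreover, your own final display is internally inconsistent: if $\lambda^\star=\Omega(1)$ is absorbed into the constant, then $L_H\succeq C\lambda^\star L_{\mathsf G}$ combined with $L_{\mathsf G}\succeq (1-\delta')\gamma\Delta P_\bot$ yields only $\lambda_2(L_H)\geq\Omega(\gamma\Delta)$, one factor of $\gamma$, yet you write $c_1\gamma^2\Delta$ with a second $\gamma$ appearing from nowhere.

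The paper gets the right bound, $\lambda_{k+1}(X)=\Omega(\gamma)$, by a different route. It first observes the general inequality $\lambda_i(X)\geq\lambda_{i+1}(L_G)/(4\Delta)$ for $i\geq 1$, and then proves a separate claim: $\lambda_{k+2}(L_G)\geq\lambda_{\OPT}$, the best achievable algebraic connectivity by adding $k$ edges. The proof of the claim is a dimension count: the Laplacian $L_R$ of any $k$ added edges has $\dim\ker(L_R)\geq n-k$, so its kernel must intersect the $(k+1)$-dimensional span of the eigenvectors of $L_G$ with eigenvalues $\lambda_2,\dots,\lambda_{k+2}$ in a vector $v\perp\1$, and on that vector $v^{\rot}(L_G+L_R)v\leq\lambda_{k+2}(L_G)$. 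Since $G$ is $(k,\gamma\Delta)$-spectrally-augmentable, $\lambda_{\OPT}\geq\gamma\Delta$, so $\lambda_{k+1}(X)\geq\gamma/4$, and the product $\widehat\lambda\cdot\lambda_{k+1}(X)\cdot\Delta$ delivers the $\gamma^2\Delta$. This dimension-counting claim is the missing ingredient in your proposal: without it there is no way to get the second factor of $\gamma$, and the $\Omega(1)$ assertion is simply false in general.
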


\begin{proof} 
If $G$ is $(k,\gamma \Delta)$-spectrally-augmentable with respect to $W$, then there exists a feasible solution to $\PSDP(G,W,k,{\gamma})$ and our SDP solver will find a solution $(\widehat{\lambda},{w})$ to $\PSDP(G,W,k,(1-\delta')\gamma)$, 
for any constant $\delta'>0$. Note that $\widehat{\lambda}\geq (1-\delta')\gamma$. 
Now we use the subgraph sparsification algorithm to sparsify the SDP solution. 

We apply Theorem~\ref{thm:main-formal} to graphs $G,W$, by setting $V=\textrm{im}(L_{G+W})=\ker(L_{G+W})^{\bot}$,  
$X=\left(L^{\dagger/2}_{G+W} L_G L^{\dagger/2}_{G+W}\right)_{|V}$ and 
$Y_e=w_e \left(L^{\dagger/2}_{G+W} L_e L^{\dagger/2}_{G+W}\right)_{|V}$, and $\overline{M}=\sum_{e\in E_W} Y_e$, $K=O(qk/\varepsilon^2)$, $\lambda^*=\lambda_{k+1}(X)$ and $cost_e=\frac{w_e}{\sum_{f\in E_W} w_f}$. Note that  $T=\lceil\textrm{tr}(\overline{M})\rceil \leq k$.  This is true as $\sum_{e\in E_W}w_e\leq k$ , $L^{\dagger/2}_{G+W} L_e L^{\dagger/2}_{G+W} \preceq I$,  $L^{\dagger/2}_{G+W} L_e L^{\dagger/2}_{G+W}$ is a rank one matrix and thus has trace at most $1$. We get a set of coefficients $\{c_e\}$ supported on at most $K$ edges, such that 
\begin{eqnarray*}
C(1-O(\varepsilon))\cdot \min\{1,K/T\}\cdot \lambda_{k+1}(X) &\leq& \lambda_{\min}\left(X+\sum_{e\in E_W}c_e Y_e\right)\\
&\leq& \lambda_{\max}\left(X+\sum_{e\in E_W}c_e Y_e\right) \leq 1+O(\varepsilon)
\end{eqnarray*}

From the above and the fact that $T\leq k$, $K=O(qk/\varepsilon^2)$, we get that
\begin{eqnarray*}
\lambda_2\left(L_G+\sum_{e}c_e w_e L_e\right) &\geq &C(1-O(\varepsilon))\cdot \min\{1,K/T\}\cdot \lambda_{k+1}(X)\cdot \lambda_2\left(L_G + \sum_e w_e L_e\right) \\
&\geq& C'\cdot \frac{\lambda_{k+2}(L_G)}{4\Delta}\cdot\widehat{\lambda}\cdot \Delta = \frac{C'}{4} \cdot \widehat{\lambda} \cdot \lambda_{k+2}(L_G) 
\end{eqnarray*}
for some constant $C'>0$, where the last inequality follows from the fact that $\lambda_i(X)=\lambda_i\left(\left(L^{\dagger/2}_{G+W} L_G L^{\dagger/2}_{G+W}\right)_{|V}\right)\geq \frac{\lambda_{i+1}(L_G)}{4D}$, for any $i\geq 1$.

\begin{claim}
It holds that  $\lambda_{k+2}(L_G)\geq \lambda_{\OPT}$, the  maximum algebraic connectivity of adding a subset set of $k$ edges from $E_W$ to $G$. 
\end{claim}
\begin{proof}
Let $L_R$ be the Laplacian matrix of the graph which is formed by the optimum solution $R$. Then $\dim\ker(L_R)\geq n-k$ as $\textrm{rank}(L_R)\leq |E|\leq k$. Consider the space $S$ spanned by all the eigenvectors of $L_G$ corresponding to $\lambda_2(L_G),\cdots,\lambda_{k+2}(L_G)$. Since $\dim(S)+\dim\ker(L_R)> n$, there exists a unit vector $v\in \ker(L_R)\cap \dim(S)$ such that $v\bot \1$, and $v^{\rot}(L_G+L_R)v\leq \lambda_{k+2}(L_G) + 0 =\lambda_{k+2}(L_G)$. This further implies that $\lambda_{\OPT}=\lambda_2(L_G+L_R)\leq \lambda_{k+2}(L_G)$.  
\end{proof}

Therefore, if we let $F=\{e:e\in E_W, c_e>0 \}$ and set the edge weights to be $\{c_e\cdot w_e: e\in F \}$, then the resulting graph $H=(V,E+F)$ with the corresponding weights satisfies that
\[
    \lambda_2(L_H)=\lambda_2\left(L_G+\sum_{e}c_e w_e L_e\right)\geq 
    c\cdot \gamma\cdot  \lambda_{\OPT} \geq c\cdot \gamma^2 \Delta
\]
for some constant $c>0$, where the last inequality follows from the assumption $G$ is $(k,\gamma\Delta)$-spectrally-augmentable with respect to $W$ and thus $\lambda_{\OPT}\geq \gamma \Delta$.  
Since $$\sum_{e\in E_W} cost_e\cdot c_e \leq O(1/\varepsilon^2) \min\{1,K/T\}=O(1/\varepsilon^2),$$ the total weights of added edges become 
\[
\sum_{e\in E_W} c_e w_e = \left(\sum_{e\in E_W} w_e\right) \cdot \left(\sum_{e\in E_W} cost_e\cdot c_e\right) O(1/\varepsilon^2)\cdot k=O(k/\varepsilon^2). \qedhere
\]
\end{proof} 

Now we are ready to prove Theorem \ref{thm:mainAlgConn}.


\begin{proof}[Proof of Theorem~\ref{thm:mainAlgConn}]
Let $G$ and $W$ be the input to Algorithm~\ref{alg:algebraicconn}. Note that the algorithm only returns a subgraph $H$ with $\lambda_2(L_H)\geq c_1 \gamma_0^2\Delta $, and $H$ contains at most $K=O(kq)$ edges from $E_W$. Hence, if $G$ is not $(O(kq), c_1\gamma_0^2 \Delta)$-spectrally-augmentable with respect to $W$, then the algorithm will reject the input instance.

Without loss of generality, in the following analysis we  assume that $G$ is $\left(k,\lambda_{\star}\Delta\right)$-augmentable for some $\lambda_{\star}>\gamma_0 $, where $\lambda_{\star}\Delta$ is the optimum solution. In this case, by the geometric search over $\gamma$ in the algorithm, when $\gamma \in (\frac{\lambda_{\star}}{2}, \lambda_{\star})$,   the \textsf{SDP} solver will find a feasible solution for $\PSDP(G,W,k,0.9\gamma)$ and the graph $H$ returned by the subgraph sparsification algorithm with input $G,W,q,k$ and constant $\varepsilon$ satisfies that $\lambda_2(L_H)\geq c_1\gamma^2\Delta \geq c_1'\lambda_{\star}^2 \Delta$. If $\gamma\geq \lambda_{\star}$, then the algorithm will either return the graph $H$ that we constructed corresponding to the value $\gamma \in (\frac{\lambda_{\star}}{2}, \lambda_{\star})$, or finds a graph $H$ with $\lambda_2(L_H)\geq c_1\gamma^2\Delta \geq c_1'\lambda_{\star}^2 \Delta$. By Lemma~\ref{lemma:augmentablespectra}, the number of added edges and the total sum of their weights are $O(qk)$ and $O(k)$, respectively.

Furthermore, since $\lambda_{\star}\geq \gamma_0$, it only takes $O(\log n)$ iterations to reach $\gamma$ with $\gamma\in (\frac{\lambda_{\star}}{2},\lambda_{\star})$. In each iteration, by Theorem~\ref{thm:mainSDPlambda2}, the running time for solving $\PSDP(G,W,k,0.9\gamma)$ 
for $\gamma \geq \gamma_0$ is $\tilde{O}(m+n)/\gamma^2)=\tilde{O}((m+n)n^{O(1/q)})$; by Theorem~\ref{thm:main-formal}, the time for applying subgraph sparsification with input $G,W$ and constant $\varepsilon$ is $\widetilde{O} (\min \left\{ qn^{\omega+O(1/q)}, q(m+n)n^{O(1/q)}k^2\right\})$. For the latter, we note that whenever we apply the subgraph sparsification from Theorem~\ref{thm:main-formal}, the corresponding matrix $X$ satisfies that $$\lambda_{k+1}(X)\geq \frac{\lambda_{k+2}(L_G)}{4\Delta}\geq \frac{\lambda_{\OPT}}{4\Delta}=\frac{\lambda_{\star}\Delta}{4\Delta}\geq \frac{\gamma_0 \Delta}{4\Delta}=\Omega(n^{-1/q})$$ and thus we obtain the claimed running time. Furthermore, we can compute an estimate $\eta_2$ of $\lambda_2(L_H)$ by the algorithm given in \cite{Vishnoi13}, which takes $\tilde{O}(|E(H)|+n)=\tilde{O}(n+k)$ time. Thus, the total running time is $\widetilde{O} (\min \left\{ qn^{\omega+O(1/q)}, q(m+n)n^{O(1/q)}k^2\right\})$. This completes the proof of the theorem.
\end{proof}
\newpage

\bibliographystyle{alphabetic}
\bibliography{effresistance,reference}

\newcommand{\etalchar}[1]{$^{#1}$}
\makeatletter
\@ifundefined{beginbibabs}{\def\beginbibabs{\begin{quotation}\noindent}
\def\endbibabs{\end{quotation}}}{}
\makeatother

\begin{thebibliography}{AZLSW17}

\bibitem[ADK{\etalchar{+}}16]{AbrahamDKKP16}
Ittai Abraham, David Durfee, Ioannis Koutis, Sebastian Krinninger, and Richard
  Peng.
\newblock On fully dynamic graph sparsifiers.
\newblock In {\em \FOCS{57th}{16}}, pages 335--344, 2016.

\bibitem[AZL17]{pmlr-v70-allen-zhu17b}
Zeyuan Allen-Zhu and Yuanzhi Li.
\newblock Doubly accelerated methods for faster {CCA} and generalized
  eigendecomposition.
\newblock In {\em 34th International Conference on Machine Learning~(ICML'17)},
  pages 98--106, 2017.

\bibitem[AZLSW17]{allen2017near}
Zeyuan Allen-Zhu, Yuanzhi Li, Aarti Singh, and Yining Wang.
\newblock Near-optimal design of experiments via regret minimization.
\newblock In {\em 34th International Conference on Machine Learning~(ICML'17)},
  pages 126--135, 2017.

\bibitem[AZLO15]{zhu15}
Zeyuan Allen-Zhu, Zhenyu Liao, and Lorenzo Orecchia.
\newblock Spectral sparsification and regret minimization beyond multiplicative
  updates.
\newblock In {\em \STOC{47th}{15}}, pages 237--245, 2015.

\bibitem[AK16]{AK16:combinatorial}
Sanjeev Arora and Satyen Kale.
\newblock A combinatorial, primal-dual approach to semidefinite programs.
\newblock {\em Journal of the ACM}, 63(2):12, 2016.

\bibitem[Aud07]{A07:audenaert2007araki}
Koenraad~MR Audenaert.
\newblock On the {A}raki-{L}ieb-{T}hirring inequality.
\newblock {\em arXiv preprint math/0701129}, 2007.

\bibitem[BSS12]{BSS}
Joshua~D. Batson, Daniel~A. Spielman, and Nikhil Srivastava.
\newblock Twice-{R}amanujan sparsifiers.
\newblock {\em {SIAM} Journal on Computing}, 41(6):1704--1721, 2012.

\bibitem[BDX04]{boyd2004fastest}
Stephen Boyd, Persi Diaconis, and Lin Xiao.
\newblock Fastest mixing {M}arkov chain on a graph.
\newblock {\em SIAM review}, 46(4):667--689, 2004.

\bibitem[CCK08]{chakraborty2008network}
Tanmoy Chakraborty, Julia Chuzhoy, and Sanjeev Khanna.
\newblock Network design for vertex connectivity.
\newblock In {\em \STOC{40th}{08}}, pages 167--176, 2008.

\bibitem[CSWZ16]{CSWZ}
Jiecao Chen, He~Sun, David Woodruff, and Qin Zhang.
\newblock Communication-optimal distributed clustering.
\newblock In {\em \NIPS{30th}{16}}, pages 3720--3728, 2016.

\bibitem[CV14]{cheriyan2014approximating}
Joseph Cheriyan and L{\'a}szl{\'o}~A V{\'e}gh.
\newblock Approximating minimum-cost $k$-node connected subgraphs via
  independence-free graphs.
\newblock {\em SIAM Journal on Computing}, 43(4):1342--1362, 2014.

\bibitem[CK09]{chuzhoy2009k}
Julia Chuzhoy and Sanjeev Khanna.
\newblock An ${O}(k^3 \log n)$-approximation algorithm for vertex-connectivity
  survivable network design.
\newblock In {\em \FOCS{50th}{09}}, pages 437--441, 2009.

\bibitem[CNW16]{CohenNW16}
Michael~B. Cohen, Jelani Nelson, and David~P. Woodruff.
\newblock Optimal approximate matrix product in terms of stable rank.
\newblock In {\em \ICALP{43rd}{16}}, pages 1--14, 2016.

\bibitem[DDH07]{Demmel2007}
James Demmel, Ioana Dumitriu, and Olga Holtz.
\newblock Fast linear algebra is stable.
\newblock {\em Numerische Mathematik}, 108(1), 2007.

\bibitem[DZ16]{dinitz2016approximating}
Michael Dinitz and Zeyu Zhang.
\newblock Approximating low-stretch spanners.
\newblock In {\em \SODA{27th}{16}}, pages 821--840, 2016.

\bibitem[DK99]{dodis1999design}
Yevgeniy Dodis and Sanjeev Khanna.
\newblock Design networks with bounded pairwise distance.
\newblock In {\em \SODA{31st}{99}}, pages 750--759, 1999.

\bibitem[FL12]{fakcharoenphol2012log}
Jittat Fakcharoenphol and Bundit Laekhanukit.
\newblock An ${O}(\log^2 k)$-approximation algorithm for the $k$-vertex
  connected spanning subgraph problem.
\newblock {\em SIAM Journal on Computing}, 41(5):1095--1109, 2012.

\bibitem[Fie73]{fiedler1973algebraic}
Miroslav Fiedler.
\newblock Algebraic connectivity of graphs.
\newblock {\em Czechoslovak mathematical journal}, 23(2):298--305, 1973.

\bibitem[GB06]{GB06:growing}
Arpita Ghosh and Stephen Boyd.
\newblock Growing well-connected graphs.
\newblock In {\em Proceedings of the 45th IEEE Conference on Decision and
  Control}, pages 6605--6611, 2006.

\bibitem[GBS08]{GBS08:effresist}
Arpita Ghosh, Stephen Boyd, and Amin Saberi.
\newblock Minimizing effective resistance of a graph.
\newblock {\em SIAM review}, 50(1):37--66, 2008.

\bibitem[INW94]{ImpagliazzoNW94}
Russell Impagliazzo, Noam Nisan, and Avi Wigderson.
\newblock Pseudorandomness for network algorithms.
\newblock In {\em \STOC{26th}{94}}, pages 356--364, 1994.

\bibitem[JJUW11]{jain2011qip}
Rahul Jain, Zhengfeng Ji, Sarvagya Upadhyay, and John Watrous.
\newblock Q{IP}= {PSPACE}.
\newblock {\em Journal of the ACM (JACM)}, 58(6):30, 2011.

\bibitem[KLM{\etalchar{+}}17]{KapralovLMMS17}
Michael Kapralov, Yin~Tat Lee, Cameron Musco, Christopher Musco, and Aaron
  Sidford.
\newblock Single pass spectral sparsification in dynamic streams.
\newblock {\em {SIAM} Journal on Computing}, 46(1):456--477, 2017.

\bibitem[KL13]{kelner2013spectral}
Jonathan~A Kelner and Alex Levin.
\newblock Spectral sparsification in the semi-streaming setting.
\newblock {\em Theory of Computing Systems}, 53(2):243--262, 2013.

\bibitem[KMST10]{KMST10:subsparsification}
Alexandra Kolla, Yury Makarychev, Amin Saberi, and Shang-Hua Teng.
\newblock Subgraph sparsification and nearly optimal ultrasparsifiers.
\newblock In {\em \STOC{42nd}{10}}, pages 57--66, 2010.

\bibitem[KKL04]{kortsarz2004hardness}
Guy Kortsarz, Robert Krauthgamer, and James~R Lee.
\newblock Hardness of approximation for vertex-connectivity network design
  problems.
\newblock {\em SIAM Journal on Computing}, 33(3):704--720, 2004.

\bibitem[Lae14]{laekhanukit2014parameters}
Bundit Laekhanukit.
\newblock Parameters of two-prover-one-round game and the hardness of
  connectivity problems.
\newblock In {\em \SODA{25th}{14}}, pages 1626--1643, 2014.

\bibitem[LS17]{LS17}
Yin~Tat Lee and He~Sun.
\newblock An {SDP}-based algorithm for linear-sized spectral sparsification.
\newblock In {\em \STOC{49th}{17}}, pages 678--687, 2017.

\bibitem[LS18]{LS15:linearsparsifier}
Yin~Tat Lee and He~Sun.
\newblock Constructing linear-sized spectral sparsification in almost-linear
  time.
\newblock {\em {SIAM} Journal on Computing}, 47(6):2315--2336, 2018.

\bibitem[MS01]{MeilaShiNIPS01}
Marina Meila and Jianbo Shi.
\newblock Learning segmentation by random walks.
\newblock In {\em Advances in Neural Information Processing Systems}, pages
  873--879, 2001.

\bibitem[MDG{\etalchar{+}}20]{mellette2019expanding}
William~M Mellette, Rajdeep Das, Yibo Guo, Rob McGuinness, Alex~C Snoeren, and
  George Porter.
\newblock Expanding across time to deliver bandwidth efficiency and low
  latency.
\newblock {\em Proceedings of the 17th ACM/USENIX Symposium on Networked
  Systems Design and Implementation (NSDI)}, 2020.

\bibitem[Mih89]{mihail1989conductance}
Milena Mihail.
\newblock Conductance and convergence of {M}arkov chains-a combinatorial
  treatment of expanders.
\newblock In {\em \FOCS{30th}{89}}, pages 526--531, 1989.

\bibitem[MA08]{mosk2008maximum}
Damon Mosk-Aoyama.
\newblock Maximum algebraic connectivity augmentation is {NP}-hard.
\newblock {\em Operations Research Letters}, 36(6):677--679, 2008.

\bibitem[NJW02]{nips02}
Andrew~Y. Ng, Michael~I. Jordan, and Yair Weiss.
\newblock On spectral clustering: Analysis and an algorithm.
\newblock {\em Advances in neural information processing systems}, 2:849--856,
  2002.

\bibitem[NST19]{nikolov2019proportional}
Aleksandar Nikolov, Mohit Singh, and Uthaipon~Tao Tantipongpipat.
\newblock Proportional volume sampling and approximation algorithms for
  a-optimal design.
\newblock In {\em \SODA{30th}{19}}, pages 1369--1386, 2019.

\bibitem[Ore11]{Orecchia:11}
Lorenzo Orecchia.
\newblock {\em Fast Approximation Algorithms for Graph Partitioning using
  Spectral and Semidefinite-Programming Techniques}.
\newblock PhD thesis, EECS Department, University of California, Berkeley, May
  2011.

\bibitem[OV11]{OV11:towards}
Lorenzo Orecchia and Nisheeth~K Vishnoi.
\newblock Towards an sdp-based approach to spectral methods: A
  nearly-linear-time algorithm for graph partitioning and decomposition.
\newblock In {\em \SODA{22th}{11}}, pages 532--545, 2011.

\bibitem[OGT14]{conf/soda/GharanT14}
Shayan Oveis~Gharan and Luca Trevisan.
\newblock Partitioning into expanders.
\newblock In {\em \SODA{25th}{14}}, pages 1256--1266, 2014.

\bibitem[Pen13]{Peng2013}
Richard Peng.
\newblock {Algorithm Design Using Spectral Graph Theory}.
\newblock {\em PhD Thesis}, 2013.

\bibitem[PSZ17]{PSZ}
Richard Peng, He~Sun, and Luca Zanetti.
\newblock Partitioning well-clustered graphs: Spectral clustering works!
\newblock {\em {SIAM} Journal on Computing}, 46(2):710--743, 2017.

\bibitem[SJ89]{SinclairJ89}
Alistair Sinclair and Mark Jerrum.
\newblock Approximate counting, uniform generation and rapidly mixing markov
  chains.
\newblock {\em Inf. Comput.}, 82(1):93--133, 1989.

\bibitem[SS96]{SipserS96}
Michael Sipser and Daniel~A. Spielman.
\newblock Expander codes.
\newblock {\em {IEEE} Trans. Information Theory}, 42(6):1710--1722, 1996.

\bibitem[ST11]{spielman_teng:SS11}
Daniel~A. Spielman and Shang-Hua Teng.
\newblock Spectral sparsification of graphs.
\newblock {\em SIAM Journal on Computing}, 40(4):981--1025, 2011.

\bibitem[Tro15]{Tropp15}
Joel~A. Tropp.
\newblock An introduction to matrix concentration inequalities.
\newblock {\em CoRR}, abs/1501.01571, 2015.

\bibitem[Val76]{Valiant76}
Leslie~G. Valiant.
\newblock Graph-theoretic properties in computational complexity.
\newblock {\em J. Comput. Syst. Sci.}, 13(3):278--285, 1976.

\bibitem[Vis13]{Vishnoi13}
Nisheeth~K. Vishnoi.
\newblock ${L}x = b$.
\newblock {\em Foundations and Trends in Theoretical Computer Science},
  8(1-2):1--141, 2013.

\bibitem[Z{\'{e}}m01]{Zemor01}
Gilles Z{\'{e}}mor.
\newblock On expander codes.
\newblock {\em {IEEE} Trans. Information Theory}, 47(2):835--837, 2001.

\end{thebibliography}

\newpage

\appendix

\section{Omitted details from Section~\ref{sec:proofmainSDPlambda2}}
\label{sec:OmittedSDP}
\begin{proof}[Proof of Fact \ref{fact:infeasibledual}]
	Suppose that $\langle Z, v,\beta \rangle$ were feasible for $\DSDP(G,W,k,\gamma)$, then it holds that \[
\mat{Z}\bullet \mat{L}_G + k v + \sum_{e\in E_W} \beta_e <\gamma, \quad \mat{Z}\bullet \Delta\mat{P}_{\bot}  =1
	\]
	and 
\[Z \bullet L_e\leq v+\beta_e, \qquad \textrm{for any $e\in E_W$}.\]
Let $(\lambda, {w})$ be the output of the oracle. Then, it holds that 
	\begin{eqnarray*}
		&& A(\lambda, {w})  \bullet Z + B(\lambda, {w}) \cdot v + c(\lambda, {w}) \cdot \beta \\
		&=& \left(\mat{L}_G +\sum_{e\in E_W} w_e \mat{L}_e - \lambda \Delta\mat{P}_{\bot}\right) \bullet Z + v\left(k-\sum_{e\in E_W}w_e\right) + \sum_{e\in E_W} \beta_e \left(1-w_e\right) \\
		&=&  \mat{Z}  \bullet \mat{L}_G  + k v + \sum_{e\in E_W} \beta_e + \sum_{e\in E_W} w_e \cdot (\mat{Z} \bullet \mat{L}_e  -v - \beta_e ) - \lambda \Delta \mat{P}_{\bot}  \bullet Z\\
		&< &\gamma - \lambda <0,
	\end{eqnarray*}
	which contradicts to  the definition of an $(\ell,\rho)$-oracle. Thus, $\langle Z, v,\beta \rangle$ must be infeasible for $\DSDP(G,W,k,\gamma)$.
\end{proof}	

In the following, we give the proof of Theorem~\ref{thm:sdp_MWU}. In order to do so, we introduce a useful theorem. For a matrix $C$, let $\lambda_{\min,N}(C)=\lambda_{\min}(N^{-1/2}CN^{-1/2})$. The following directly follows from Theorem 3.3.3 in \cite{Orecchia:11}. 
\begin{theorem}\label{thm:generalmwu}
Let $\varepsilon \in (0,1/2)$, and let $\{M^{(t)}\}$ be a sequence of loss functions such that $-\ell N \preceq M^{(t)} \preceq \rho N$, where $\rho \geq \ell\geq 0$ for all $t$. We define the update by
	\begin{eqnarray*}
		X^{(t)} = U_{\varepsilon} \left(\frac{1}{2\rho} \sum_{s=1}^{t-1}M^{(s)} \right)
	\end{eqnarray*}
	Then, it holds  for any $T\geq 1$ that
	\begin{eqnarray*}
		\sum_{t=1}^T X^{(t)}\bullet M^{(t)}  \leq \frac{2\rho \log n}{\varepsilon} + T\varepsilon \ell + (1+\varepsilon) \cdot \lambda_{\min, N}\left(\sum_{t=1}^T M^{(t)} \right)
	\end{eqnarray*}
\end{theorem}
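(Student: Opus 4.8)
The plan is to prove Theorem~\ref{thm:generalmwu} by the standard matrix multiplicative–weights potential argument (this is Theorem~3.3.3 of \cite{Orecchia:11} specialised to our normalisation). Write $\widetilde{M}^{(t)}\triangleq E^{-1/2}M^{(t)}E^{-1/2}$ and $Y^{(t)}\triangleq \frac{1}{2\rho}\sum_{s=1}^{t-1}\widetilde{M}^{(s)}$, so that by the definition of $U_{\varepsilon}$ we have $X^{(t)}=E^{-1/2}(1-\varepsilon)^{Y^{(t)}}E^{-1/2}/\Psi_t$ with potential $\Psi_t\triangleq \Pi\bullet (1-\varepsilon)^{Y^{(t)}}=\Tr[\Pi\,(1-\varepsilon)^{Y^{(t)}}]$. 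The first observation is a reduction to the range of $\Pi$: the width hypothesis $-\ell N\preceq M^{(t)}\preceq \rho N$ with $N=E^{1/2}\Pi E^{1/2}$ gives $-\ell\Pi\preceq \widetilde{M}^{(t)}\preceq \rho\Pi$, and in our instances $M^{(t)}$ is supported on $\mathrm{range}(N)$ (its block $A(\lambda,w)$ annihilates $\1$, the remaining blocks being unrestricted), so $\Pi\widetilde{M}^{(t)}\Pi=\widetilde{M}^{(t)}$ and all of $Y^{(t)}$, $(1-\varepsilon)^{Y^{(t)}}$ may be regarded as operators on the space $\mathrm{range}(\Pi)$, where $\Pi$ acts as the identity. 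In particular $\Psi_1=\Tr[\Pi]\le n$ (here $n$ denotes the matrix dimension, as in the statement).

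Second, I would establish the per-step decay. By the Golden--Thompson inequality, $\Psi_{t+1}=\Tr\big[\Pi\,(1-\varepsilon)^{Y^{(t)}+\frac{1}{2\rho}\widetilde{M}^{(t)}}\big]\le \Tr\big[(1-\varepsilon)^{Y^{(t)}}(1-\varepsilon)^{\frac{1}{2\rho}\widetilde{M}^{(t)}}\big]$, the trace taken on $\mathrm{range}(\Pi)$. The eigenvalues of $\frac{1}{2\rho}\widetilde{M}^{(t)}$ lie in $[-\tfrac{\ell}{2\rho},\tfrac12]\subseteq[-\tfrac12,\tfrac12]$ (using $\ell\le\rho$), so, working in the eigenbasis of $\widetilde{M}^{(t)}$ and splitting it into its positive part $P\succeq 0$ and negative part $Q\succeq 0$, the scalar bounds $(1-\varepsilon)^{x}\le 1-\varepsilon x$ on $[0,\tfrac12]$ and $(1-\varepsilon)^{-x}\le 1+\tfrac{\varepsilon}{1-\varepsilon}x$ on $[0,\tfrac12]$ give the operator inequality $(1-\varepsilon)^{\frac{1}{2\rho}\widetilde{M}^{(t)}}\preceq I-\tfrac{\varepsilon}{2\rho}\widetilde{M}^{(t)}+\tfrac{\varepsilon^{2}}{(1-\varepsilon)2\rho}Q$, where $Q\preceq \ell\Pi$. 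Taking the trace against $(1-\varepsilon)^{Y^{(t)}}\succeq 0$, using $\Tr[(1-\varepsilon)^{Y^{(t)}}\Pi]=\Psi_t$ and the identity $\Tr[(1-\varepsilon)^{Y^{(t)}}\widetilde{M}^{(t)}]=\Psi_t\,(X^{(t)}\bullet M^{(t)})$, this yields $\Psi_{t+1}\le \Psi_t\big(1-\tfrac{\varepsilon}{2\rho}X^{(t)}\bullet M^{(t)}+\tfrac{\varepsilon^{2}\ell}{(1-\varepsilon)2\rho}\big)\le \Psi_t\exp\!\big(-\tfrac{\varepsilon}{2\rho}X^{(t)}\bullet M^{(t)}+\tfrac{\varepsilon^{2}\ell}{(1-\varepsilon)2\rho}\big)$.

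Third, I would telescope and match with a lower bound on the final potential. Multiplying the per-step bounds and taking logarithms, $\ln\Psi_{T+1}\le \ln n-\tfrac{\varepsilon}{2\rho}\sum_{t=1}^{T}X^{(t)}\bullet M^{(t)}+\tfrac{\varepsilon^{2}\ell T}{(1-\varepsilon)2\rho}$. On the other side, since $(1-\varepsilon)^{x}$ is decreasing, $\Psi_{T+1}=\sum_i (1-\varepsilon)^{\mu_i}\ge (1-\varepsilon)^{\min_i\mu_i}$ where the $\mu_i$ are the eigenvalues of $Y^{(T+1)}$ on $\mathrm{range}(\Pi)$; one checks that $\min_i\mu_i=\tfrac{1}{2\rho}\lambda_{\min,N}\big(\sum_{t}M^{(t)}\big)$, so $\ln\Psi_{T+1}\ge \tfrac{1}{2\rho}\lambda_{\min,N}\big(\sum_t M^{(t)}\big)\ln(1-\varepsilon)$, which (when the right‑hand side is negative, the other case being trivial) is at least $-\tfrac{\varepsilon(1+\varepsilon)}{2\rho}\lambda_{\min,N}\big(\sum_t M^{(t)}\big)$ for $\varepsilon\le\tfrac12$. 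Combining the two displays, rearranging, and multiplying through by $\tfrac{2\rho}{\varepsilon}$ gives $\sum_{t=1}^{T}X^{(t)}\bullet M^{(t)}\le \tfrac{2\rho\log n}{\varepsilon}+T\varepsilon\ell+(1+\varepsilon)\lambda_{\min,N}\big(\sum_t M^{(t)}\big)$, up to the bookkeeping of constants.

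The main obstacle is precisely that bookkeeping: to land on the stated coefficients $\tfrac{2\rho\log n}{\varepsilon}$, $T\varepsilon\ell$, and $(1+\varepsilon)$ rather than $O(\cdot)$ versions, one must select the sharp scalar exponential inequalities and control how the factors $1/(1\pm\varepsilon)$ and $\ln(1\pm\varepsilon)$ propagate through the telescoping and the final division; this is the step that \cite{Orecchia:11} carries out in detail. A secondary technical point is handling the fact that $\Pi$ (hence $N$) need not be full rank: the reduction to $\mathrm{range}(\Pi)$, the meaning of $N^{-1/2}$ in $\lambda_{\min,N}$, and the identity $\min_i\mu_i=\tfrac{1}{2\rho}\lambda_{\min,N}(\sum_t M^{(t)})$ all hinge on $M^{(t)}$ being supported on $\mathrm{range}(N)$, which must be verified for the loss matrices to which the theorem is applied. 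Given that everything reduces to Theorem~3.3.3 of \cite{Orecchia:11}, an acceptable alternative is to invoke that result directly after checking the hypothesis $-\ell N\preceq M^{(t)}\preceq \rho N$.
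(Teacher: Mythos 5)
The paper never proves this statement: it is imported verbatim as Theorem~3.3.3 of \cite{Orecchia:11}, so your fallback option (check the width hypothesis and cite) is literally what the paper does. Your main proposal instead reconstructs the standard matrix-MWU potential argument — Golden--Thompson, linearisation of $(1-\varepsilon)^{x}$ separately on the positive and negative parts of the loss, telescoping of $\Psi_t=\Pi\bullet(1-\varepsilon)^{Y^{(t)}}$ against the lower bound $\Psi_{T+1}\geq(1-\varepsilon)^{\lambda_{\min}(Y^{(T+1)})}$ — which is exactly the argument behind Orecchia's theorem; so your route buys a self-contained proof of what the paper treats as a black box, at the cost of the constant bookkeeping you acknowledge. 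One of your worries dissolves on inspection: the requirement that $M^{(t)}$ be supported on $\mathrm{range}(N)$ is automatic from the hypothesis, since $-\ell N\preceq M^{(t)}\preceq\rho N$ gives $x^{\rot}(M^{(t)}+\ell N)x=0$ for $x\in\ker N$ with $M^{(t)}+\ell N\succeq 0$, hence $M^{(t)}x=0$; no property of the specific blocks $A,B,C$ is needed.

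The one step that is not right as written is the final case split: when $\lambda^{\star}\triangleq\lambda_{\min,N}\bigl(\sum_t M^{(t)}\bigr)<0$, the case you call trivial is the problematic one, because the inequality $\lambda^{\star}\ln(1-\varepsilon)\geq-\varepsilon(1+\varepsilon)\lambda^{\star}$ reverses for negative $\lambda^{\star}$ (it would require $\ln(1-\varepsilon)\leq-\varepsilon(1+\varepsilon)$). In that case your chain only yields $\ln\Psi_{T+1}\geq-\tfrac{\varepsilon}{2\rho}\lambda^{\star}$, which after rearranging contributes $\lambda^{\star}$ rather than $(1+\varepsilon)\lambda^{\star}$ — a weaker bound, since $\lambda^{\star}>(1+\varepsilon)\lambda^{\star}$ when $\lambda^{\star}<0$. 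The repair stays inside your framework: note that $\sum_t M^{(t)}\succeq-\ell T N$ forces $\lambda^{\star}\geq-\ell T$, so the deficit $-\varepsilon\lambda^{\star}\leq\varepsilon\ell T$ can be absorbed into the $T\varepsilon\ell$ term (which also requires the slightly sharper scalar estimates you mention, to avoid carrying the extra $1/(1-\varepsilon)$ factor on that term). With that fix, or simply with the citation to \cite{Orecchia:11} as in the paper, the argument is complete.
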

Given the above theorem, we are ready to prove Theorem \ref{thm:sdp_MWU}.
\begin{proof}[{Proof of Theorem~\ref{thm:sdp_MWU}}]
By the oracle, we are guaranteed that $-\ell N \preceq M^{(t)}\preceq \rho N$. We apply our Theorem \ref{thm:generalmwu} to obtain that after $T$ rounds, 
	\begin{eqnarray*}
		\sum_{t=1}^T X^{(t)}\bullet M^{(t)}  \leq \frac{2\rho \log n}{\varepsilon} + T\varepsilon \ell + (1+\varepsilon) \cdot \lambda_{\min, N}\left(\sum_{t=1}^T M^{(t)} \right)
	\end{eqnarray*}
	
	As \textsc{Oracle} is $(\ell, \rho)$-oracle, so for any $t\geq 1$, it holds that  $$X^{(t)}\bullet M^{(t)}=A^{(t)}\bullet Z^{(t)} + B^{(t)} \cdot v^{(t)}  + c^{(t)} \cdot \beta^{(t)}\geq 0,$$ where $A^{(t)},B^{(t)},c^{(t)}$ correspond to the decomposition of $M^{(t)}$ and $Z^{(t)}, v^{(t)}, \beta^{(t)}$ correspond to the decomposition of $X^{(t)}$. Thus,
	$$(1+\varepsilon) \cdot \lambda_{\min, N}\left(\sum_{t=1}^T M^{(t)}\right) \geq -\frac{2\rho \log n}{\varepsilon} - T\varepsilon \ell.$$
	By dividing the above inequality by $(1+\varepsilon)T$ we have that 
		$$\lambda_{\min, N}\left(M(\bar{\lambda},\bar{w})\right) \geq -\frac{2\rho \log n}{\varepsilon (1+\varepsilon)T} - \frac{\varepsilon \ell}{1+\varepsilon}\geq -\frac{2\rho \log n}{\varepsilon T}-\varepsilon \ell.$$
By setting $T=\frac{4\rho\log n}{\delta \varepsilon}$ and the assumption that $\varepsilon \leq \frac{\delta}{2\ell}$, we have $$\varepsilon\ell \leq \delta /2, \quad \frac{2\rho\log n}{\varepsilon T}\leq \frac{\delta}{2}.$$
	Thus, 
	$$\lambda_{\min, N}\left(M(\bar{\lambda},\bar{w})\right) \geq -\delta,$$
which gives that 
	$$M(\bar{\lambda},\bar{w})\succeq -\delta N.$$
Thus, for any $x \in \mathbb{R}^{n+1+m}$, it holds that  $x^{\rot}M(\bar{\lambda},\bar{w}) x \geq -\delta x^{\rot} N x$. 

Now we write $M(\bar{\lambda},\bar{w})=\Diag(A(\bar{\lambda},\bar{w}),B(\bar{\lambda},\bar{w}),C(\bar{\lambda},\bar{w}))$. For any $x\in \mathbb{R}^n$, we extend it to $\bar{x}\in \mathbb{R}^{n+1+m}$ by adding $0$ at corresponding entries. Then, it holds that  
\begin{eqnarray*}
x^{\rot} A\left(\bar{\lambda},\bar{w}\right)x &=& \bar{x}^{\rot} M(\bar{\lambda},\bar{w}) \bar{x} \geq -\delta \bar{x}^{\rot} N \bar{x} 
= -\delta 
\Delta x^{\rot} P_\bot x,
\end{eqnarray*} 
which implies that $A(\bar{\lambda},\bar{w})\succeq -\delta {\Delta} P_\bot$, and thus
$$\mat{L}_G +\sum_{e\in E_W} \bar{w}_e \mat{L}_e - \bar{\lambda} \Delta \mat{P}_{\bot} \succeq -\delta\Delta P_{\bot}.$$
Similarly, by restricting $N$ to the sub-matrices corresponding to $B$ and $C$, we get that 
$$B(\bar{\lambda},\bar{w})=k-\sum_{e}\bar{w}_e \geq -\delta m$$ and $$C(\bar{\lambda},\bar{w})\succeq -\delta I,$$
where the second inequality holds as for any $x\in \mathbb{R}^{m}$, $x^{\rot}C(\bar{\lambda},\bar{w}) x\geq -\delta x^{\rot} I x$.
The latter implies that we have  for any $e\in E_W$ that  $$1-\bar{w}_e\geq -\delta.$$
Furthermore, we have  $V(\bar{\lambda},\bar{w})=\bar{\lambda}\geq \gamma$. 

Now we consider the primal SDP  with candidate solution $(\bar{\lambda}-{3\delta },\bar{w}-{\delta})$. We have that $V(\bar{\lambda}-{3\delta },\bar{w}-{\delta})=\bar{\lambda}-{3\delta}\geq \gamma-3\delta$. Moreover, it holds that   $k-\sum_e(\bar{w}_e-\delta)\geq -\delta m + \delta m=0$, $1-(\bar{w}_e-{\delta})\geq 0$ and by the assumption that $(V,E)$ and $(V,E_W)$ have the degree at most $\Delta$, we have  that
$$\mat{L}_G +\sum_{e\in E_W} \left(\bar{w}_e -{\delta}\right) \mat{L}_e - \left(\bar{\lambda}-{3\delta }\right) \Delta\mat{P}_{\bot} \succeq -2\delta {\Delta} P_\bot-\delta {\Delta} P_\bot + {3\delta\Delta} P_\bot \succeq 0.$$
Therefore, $(\bar{\lambda}-{3\delta },\bar{w}-{\delta})$ is a feasible solution to $\PSDP(G,W,k,\gamma-{3\delta })$.
\end{proof}




\end{document}